\definecolor{orcidlogocol}{HTML}{A6CE39}
\tikzset{
  orcidlogo/.pic={
    \fill[orcidlogocol] svg{M256,128c0,70.7-57.3,128-128,128C57.3,256,0,198.7,0,128C0,57.3,57.3,0,128,0C198.7,0,256,57.3,256,128z};
    \fill[white] svg{M86.3,186.2H70.9V79.1h15.4v48.4V186.2z}
                 svg{M108.9,79.1h41.6c39.6,0,57,28.3,57,53.6c0,27.5-21.5,53.6-56.8,53.6h-41.8V79.1z M124.3,172.4h24.5c34.9,0,42.9-26.5,42.9-39.7c0-21.5-13.7-39.7-43.7-39.7h-23.7V172.4z}
                 svg{M88.7,56.8c0,5.5-4.5,10.1-10.1,10.1c-5.6,0-10.1-4.6-10.1-10.1c0-5.6,4.5-10.1,10.1-10.1C84.2,46.7,88.7,51.3,88.7,56.8z};
  }
}
\newcommand\orcidicon[1]{\href{https://orcid.org/#1}{\mbox{\scalerel*{
\begin{tikzpicture}[yscale=-1,transform shape]
\pic{orcidlogo};
\end{tikzpicture}
}{|}}}}
\newcommand{\raisemath}[1]{\mathpalette{\raisem@th{#1}}}
\newcommand{\raisem@th}[3]{\raisebox{#1}{$#2#3$}}
\newcommand{\act}[2]{\leftidx{^{\raisemath{-0.5pt}{#1}}}{#2}{}}
\let\Im\relax
\DeclareMathOperator{\Im}{Im}
\let\Re\relax
\DeclareMathOperator{\Re}{Re}
\newcommand{\CC}{\mathbb{C}}
\newcommand{\RR}{\mathbb{R}}
\newcommand{\HH}{\mathcal{H}}
\newcommand{\KK}{\mathcal{K}}
\newcommand{\MM}{\mathcal{M}}
\newcommand{\Af}{\mathscr{A}}
\newcommand{\Bf}{\mathscr{B}}
\newcommand{\Cf}{\mathscr{C}}
\newcommand{\Tf}{\mathscr{T}}
\newcommand{\Uf}{\mathscr{U}}
\newcommand{\zen}{%
    \mathscr{Z}
}
\DeclareMathOperator{\Bor}{Bor}
\DeclareMathOperator{\Ind}{Ind}
\DeclareMathOperator{\id}{id}
\DeclareMathOperator{\Ad}{Ad}
\DeclareMathOperator{\Aut}{Aut}
\DeclareMathOperator{\Reg}{Reg}
\newcommand{\Alg}{\mathrm{Alg}}
\newcommand{\Loc}{\mathrm{Loc}}
\newcommand{\diff}{\textup{d}}
\numberwithin{equation}{section}
\theoremstyle{thmstyleone}%
\newtheorem{theorem}{Theorem}[section]
\newtheorem{proposition}[theorem]{Proposition}%
\newtheorem{lemma}[theorem]{Lemma}
\newtheorem{corollary}[theorem]{Corollary}
\theoremstyle{thmstyletwo}%
\newtheorem{remark}[theorem]{Remark}%
\theoremstyle{thmstylethree}%
\newtheorem{definition}[theorem]{Definition}%
\begin{document}

\title[Article Title]{Quantum reference frames, measurement schemes and the type of local algebras in quantum field theory}


\author[1,2]{\fnm{Christopher J.} \sur{Fewster} \orcidicon{0000-0001-8915-5321}}\email{chris.fewster@york.ac.uk}

\author[1]{\fnm{Daan W.} \sur{Janssen} \orcidicon{000-0001-7809-5044}}\email{daan.janssen@york.ac.uk}

\author[3]{\fnm{Leon Deryck} \sur{Loveridge} \orcidicon{0000-0002-2650-9214}}\email{leon.d.loveridge@usn.no}

\author*[1,2]{\fnm{Kasia} \sur{Rejzner} \orcidicon{0000-0001-7101-5806}}\email{kasia.rejzner@york.ac.uk}

\author[4]{\fnm{James} \sur{Waldron}}\email{james.waldron@ncl.ac.uk}

\affil[1]{\orgdiv{Department of Mathematics}, \orgname{University of York}, \orgaddress{
\city{Heslington, York}, \postcode{YO10 5DD}, 
\country{United Kingdom}}}

\affil[2]{\orgdiv{York Centre for Quantum Technologies}, \orgname{University of York}, \orgaddress{
\city{Heslington, York}, \postcode{YO10 5DD}, 
\country{United Kingdom}}}

\affil[3]{\orgdiv{Department of Science and Industry Systems}, \orgname{University of South-Eastern Norway}, \orgaddress{
\city{Kongsberg}, \postcode{3616},
 \country{Norway}}}
 
\affil[4]{\orgdiv{School of Mathematics, Statistics and Physics}, \orgname{Newcastle University}, \orgaddress{
\city{Newcastle-Upon-Tyne}, \postcode{NE1 7RU}, 
\country{United Kingdom}}}


\abstract{We develop an operational framework, combining relativistic quantum measurement theory with quantum reference frames (QRFs), in which local measurements of a quantum field on a background with symmetries are performed relative to a QRF. This yields a joint algebra of quantum-field and reference-frame observables that is invariant under the natural action of the group of spacetime isometries. For the appropriate class of quantum reference frames, this algebra is parameterised in terms of crossed products. Provided that the quantum field has good thermal properties (expressed by the existence of a KMS state at some nonzero temperature), one can use modular theory to show that the invariant algebra admits a semifinite trace. If furthermore the quantum reference frame has good thermal behaviour (expressed in terms of the properties of a KMS weight) at the same temperature, this trace is finite. We give precise conditions for the invariant algebra of physical observables to be a type $\textnormal{II}_1$ factor. Our results build upon recent work of Chandrasekaran, Longo, Penington and Witten [JHEP {\bf 2023}, 82 (2023)], providing both a significant mathematical generalisation of these findings and a refined operational understanding of their model.}

\keywords{quantum reference frames, measurement schemes, quantum field theory, von Neumann type reduction}



\maketitle
\newpage 

\section{Introduction}

A remarkable feature of quantum field theory (QFT) as described in the algebraic approach (also known as AQFT)~\cite{Haag:book,HK,Araki81} is that all the von Neumann algebras describing local observables in well-behaved theories turn out to be isomorphic. That is, every QFT from the free scalar field to -- in principle -- the standard model, has local algebras given by the unique hyperfinite factor of type $\textnormal{III}_1$ \cite{fredenhagen1985modular,Haa87} (see also Appendix~\ref{appx:factor_types} for a brief summary of the classification of von Neumann factors). The distinction between different QFTs therefore lies in the variety of possible relations between algebras associated to different spacetime regions.

A recent and radical departure from this paradigm has been proposed in~\cite{chandrasekaran2023algebra}. There, it was argued on physical grounds that, in certain quantum gravitational scenarios, the algebras of physical observables are actually von Neumann algebras of type $\textnormal{II}_1$. The significance of this observation is that type $\textnormal{II}_1$ algebras admit finite trace functionals which can be used to compute physically relevant entropies. By contrast, type $\textnormal{III}_1$ algebras lack any finite traces, and naive attempts to compute entropies result in divergent quantities.

A key component in the arguments of~\cite{chandrasekaran2023algebra}
is the role played by an observer, described by a simple quantum mechanical system, which has the effect of replacing the type $\textnormal{III}_1$ algebra of the QFT by one of type $\textnormal{II}_1$. In this paper, we develop and explore a new perspective on the question of such type reduction, taking seriously the idea that an operational theory of measurement is required to place the discussion on a firm foundation. We start from a recently developed framework for describing measurement in QFT~\cite{FewVer_QFLM:2018,BostelmannFewsterRuep:2020,FewsterJubbRuep:2022,FewsterVerch_encyc:2023}. In this setting, both the system to be measured and the probe that measures it are QFTs, which are coupled together in a compact spacetime region so that the coupled theory is also a QFT. Measurement of the probe in a natural `out' region of the spacetime then provides information about system observables. By comparison, while~\cite{chandrasekaran2023algebra} invokes an observer, it is not specified how the observer is coupled to the system QFT.

From the perspective of quantum gravity (or, for that matter, general relativity), the singling-out of a specific spacetime region in which coupling occurs raises problems, as it appears to break diffeomorphism invariance. In situations where gravity is weakly coupled, one is effectively working with a fixed background spacetime and the relevant symmetry group is the identity-connected component of the spacetime isometry group. For instance~\cite{chandrasekaran2023algebra} study the static patch of de Sitter spacetime, which may be regarded as the maximal experimental  domain for an observer on a complete timelike geodesic in full de Sitter, taking into account the need to both initiate and receive results from experiments. The static patch has an isometry group with identity connected component $\RR\times\textnormal{SO}(n-1)$ describing translations along, and rotations about, the geodesic.

As mentioned, a specific coupling of a system and probe theory will not respect the symmetry group $\RR\times\textnormal{SO}(n-1)$. Instead, the group acts on the family of all measurement couplings, restoring symmetry at that level. But then one is left with the problem of how a particular measurement can be singled out -- indeed we emphasise that the same problem is present in any generally covariant QFT on a fixed background. One can overcome this problem by introducing a reference frame,
which itself can be modelled as a quantum system. 
In this way, the discussion of measurement of QFTs on symmetric background spacetimes naturally leads to the introduction of a quantum reference frame (QRF)
---a concept that has been well studied from a number of angles in non-relativistic quantum theory and quantum information (e.g., \cite{Aharonov:1967zza,Aharonov:1984zz,Bartlett2007,Vanrietvelde:2018pgb,giacomini2019quantum,de2020quantum,Loveridge2017a,carette2023operational,lake2023quantum})---but which requires further development for use in relativistic settings. Our contention is that the `observer' in~\cite{chandrasekaran2023algebra} is a QRF, and we will show that 
the phenomenon of type reduction has a natural description in terms of the theory of QRFs, as given in our present work. 

Put simply, a QRF is a quantum system equipped with an observable that is covariant with respect to a given symmetry group $G$. More technically, the observable is a positive operator-valued measure (POVM) on a measure space, to be called the \emph{value space}. 
Suppose one is interested in measurement of another quantum system admitting the same group of symmetries. One adopts the view that typical observables of the QRF or system have no intrinsic meaning, and that the only physically relevant observables are joint observables of the system and reference frame that are invariant under the joint action of $G$. A simple example in quantum mechanics arises when considering a free particle on the line. Owing to translation invariance, there is no preferred origin and therefore no absolute position observable; however, if a second particle is used as a QRF, their relative position is physically meaningful.

The problem of identifying physically relevant observables is thus reduced to the mathematical problem of finding fixed points under the group action in the tensor product $\MM_S\otimes B(\HH_R)$ where $\MM_S$ is the von Neumann algebra of the QFT observables and $\HH_R$ is the Hilbert space of the QRF. Under certain assumptions on the QRF, we will show (in Theorem~\ref{thm:invariants_crossed_prod}) that the invariant subalgebra is parameterised by the crossed product algebra $(\MM_S\rtimes_\alpha G)\otimes B(\KK)$ where $\alpha$ is the $G$-action on $\MM_S$ and $\KK$ is an auxiliary Hilbert space determined by the QRF. (See Section~\ref{sec:inv_op_crossed} and particularly Definition~\ref{def:crossed_prod} for the definition of a crossed product.) This is a generalisation of the parameterisation employed in~\cite{chandrasekaran2023algebra}, which corresponds to the case $G=\RR$ and $\KK=\CC$. The next task is to analyse the type structure of this cross-product algebra, which we will accomplish for $G=\RR\times H$ with $H$ compact (and further technical assumptions described below) and $\RR$ acting by modular automorphisms on $\MM_S$. Note that this includes the isometry group of the de Sitter static patch, and when $\MM_S$ acts in the GNS representation of a thermal (KMS) state at some inverse temperature $\beta\in (0,\infty)$. As in~\cite{chandrasekaran2023algebra}, the analysis of the type structure relies on deep results from modular theory~\cite{vanDaele:1978, Takesaki1973duality,takesaki1970tomita}.
However, we go beyond~\cite{chandrasekaran2023algebra} because we employ very general QRFs and also consider the larger symmetry groups $\RR\times H$ as well as $\RR$. 

Based on our treatment of the crossed product algebra, our main result (Theorem~\ref{thm:type_change}) furnishes a sufficient condition 
for the crossed product to be a finite von Neumann algebra. This condition can be phrased in terms of a spectral multiplicity function measuring the degrees of freedom available to the QRF at particular energy scales (see Definition~\ref{def:spec_mult}), and has a very natural physical interpretation in terms of the QRF's thermodynamic properties. Namely, a naturally defined KMS weight on the QRF, at the same inverse temperature $\beta$ as the KMS state on $\MM_S$, is finite on a specific dense subalgebra of $B(\HH_R)$. See Theorem~\ref{thm:KMS_weight} for the precise statement. Here, we recall that a weight is, informally, a generalisation of the concept of a state, when we drop the requirement that it be everywhere defined and normalised. Weights arise naturally in constructions of QRFs of the sort needed here (see, for instance \cite{BFH09}). The precise definition of a weight will be recalled in Appendix~\ref{appx:factor_types}; we will also explain why the weight of interest to us is not a state.

Our work requires and extends techniques from AQFT, quantum measurement theory and quantum reference frames. The main developments are summarised at the end of this introduction, after first describing the structure of the paper. In Section~\ref{sec: measurement in QFT}, we begin with a review of measurement theory in QFT based on~\cite{FewVer_QFLM:2018}, providing necessary background on AQFT. In particular, the notion of a measurement scheme is recalled: namely the way in which a probe theory, locally coupled to a system QFT, can be used to measure induced system observables. A particular focus is on the situation where the background spacetime admits a isometry group. 

As mentioned above, the coupling between the system and probe breaks isometry invariance, but we will show that the isometries act in a natural way on the collection of measurement schemes, so that the induced observables transform covariantly (Theorem~\ref{thm:measurement_cov}).
As discussed above, two measurement schemes related by an isometry are not intrinsically distinguishable, and 
we employ a quantum reference frame to identify the true physical observables. The basic properties of QRFs are summarised in Section~\ref{Sec:QRFs}, along with some specific examples. As we recall in Section~\ref{subsec:relativisation1}, previous work on QRFs (e.g., \cite{miyadera2016approximating,Loveridge2017a,glowacki2023quantum}) has provided a specific construction of invariant observables for various classes of QRF through a \emph{relativisation map} which parameterises a subclass of the invariants using the system observables. Later in the present paper, this notion is extended to the setting in which the frame observable's value space is a certain homogeneous space of the symmetry group $G$ rather than $G$ itself. Previously, such an extension has only been constructed for finite groups \cite{glowacki2023quantum}.

Next, in Section~\ref{sec:qrf and measurement}, we combine the theory of QRFs with measurement theory, and single out a preferred class of \emph{compactly stabilised} QRFs that is particularly suitable for the discussion of QFT on the de Sitter static patch. In Theorem~\ref{thm:qref_embed}, we provide a full characterisation of all such QRFs using results of Cattaneo~\cite{Cattaneo:1979}
concerning covariant Naimark dilations and systems of imprimitivity, building on Mackey's theory~\cite{mackey1949imprimitivity,mackey1952induced} as described in Appendix~\ref{appx:imprimitvity_thm}. This has the consequence that all compactly stabilised QRFs may be expressed as compressions of well-understood systems of imprimitivity. Using this characterisation, we then parameterise the invariant observables in the joint algebra of the QFT and the QRF in terms of a crossed product algebra. In turn, this parameterisation is used to provide a relativisation map to the compactly stabilised QRFs (see Definition~\ref{def:relativisation_compact_stab} and Proposition~\ref{prop:yen_properties}). For our analysis of type reduction, we assume that the isometry group $G$ contains a subgroup of time translations; for simplicity we assume it to be of the form $G=\RR\times H$ for $H$ compact. We discuss in Section~\ref{sec:RHcase} how our results on the characterisation of quantum reference frames simplify in this case. In particular, we show that the (energy) spectrum associated with the unitarily implemented action of time translations on compactly stabilised quantum reference frame observables is continuous and that this spectrum admits a Borel-measurable function that can be interpreted as encoding the amount of QRF degrees of freedom available at a given energy, the spectral multiplicity function (see Proposition~\ref{prop:spect_decomp} and Definition~\ref{def:spec_mult}).

Section \ref{sec:type_change} discusses our general results on type reduction. We start by recalling key facts from Tomita-Takesaki theory focussing on results involving crossed products, in particular, that the crossed product of a von Neumann algebra by its modular action (the modular crossed product) is semifinite. Combining this with our results on the parameterisation of invariant observables, we show that this algebra of observables is also semifinite under natural conditions (Theorem~\ref{thm:type_change}). We also provide a condition on the quantum reference frame that guarantees finiteness of the algebra and may be interpreted as a growth condition on the degrees of freedom accessible to the QRF as a function of energy (Equation~\eqref{eq:type_change_condition}). We show that this condition implies good thermal properties of the QRF -- specifically the existence of a KMS weight taking finite values on a dense subalgebra of QRF observables -- at inverse temperature $\beta$ (Theorem~\ref{thm:KMS_weight}). This demonstrates that our type reduction condition for QRFs plays a similar role to nuclearity in QFT \cite{Buchholz:1988fk}.

Lastly, in Section \ref{sec: deSitter}, we discuss the concrete example of the invariant algebra associated with observables in the static patch of de Sitter spacetime. We show that our results agree with and generalise those of \cite{chandrasekaran2023algebra}. We discuss some natural quantum reference frames that can be used to distinguish measurement schemes on the static patch to further illustrate the role of our type reduction condition and its thermal interpretation.

To conclude this introduction, we briefly summarise the main achievements of this paper. First, it brings 
together measurement schemes and QRFs, which seems not to have been done systematically before now, despite their clear mutual relevance (with the exception of \cite{Loveridge2020a}, which is in the context of the Wigner--Araki--Yanase (WAY) theorem \cite{E.Wigner1952,Busch2010,Araki1960,Loveridge2011}). Second, we have identified and characterised the new class of compactly stabilised QRFs, which seems to be of independent utility. Third, we have developed the theory of QRFs, by giving a more detailed understanding of the relationship between the invariant algebra and the relativisation map, and also extending the theory 
so as to be applicable to the case of QFTs on de Sitter and general backgrounds with symmetry.
Fourth, we have given precise conditions sufficient for type reduction from type $\textnormal{III}_1$ to semifinite (e.g., type $\textnormal{II}$) and finite factors (e.g., type $\textnormal{II}_1$) in terms of physical properties of the QRF. Finally, specialising to the de Sitter static patch, we have a significant generalisation of the results obtained in~\cite{chandrasekaran2023algebra}. Further remarks are given in the Conclusion.

\section{Measurement schemes for quantum fields}\label{sec: measurement in QFT}

The discussion of measurement in QFT, or more generally in relativistic quantum information theory, has been a long-standing source of puzzles and apparent paradoxes, leading Sorkin,
for instance, to write that QFT is left `with no definite measurement theory'~\cite{sorkin1993impossible}. Relatively recently, this serious gap has been addressed in \cite{FewVer_QFLM:2018}, where a framework for measurement in QFT, understood in the algebraic formulation (AQFT), has been proposed, leading to a series of developments and new insights -- see e.g.,~\cite{BostelmannFewsterRuep:2020,FewsterJubbRuep:2022}. A recent review of the current state of the art has been provided in \cite{FewsterVerch_encyc:2023}. For completeness, we review the most important features of this framework here, after a brief introduction into AQFT.

\subsection{Background on algebraic quantum field theory}\label{sec:AQFTbackground}

Let $M$ be a fixed globally hyperbolic spacetime (we suppress the metric and time-orientation in the notation). Algebraic quantum field theory (AQFT) describes a QFT on $M$ in terms of its algebras of local observables and their relationships~\cite{Haag:book,AdvAQFT}. In more detail, a \emph{region} of $M$ is any open subset $N$ that is causally convex, i.e., $N$ contains any causal curve whose endpoints lie in $N$, and we denote the set of all regions in $M$ by $\Reg(M)$. For each region $N$ in $M$, one specifies a unital $*$-algebra $\Af(M;N)$ that contains the observables of the theory that are associated with $N$. In the first instance, this interpretation should be regarded informally, and it is the task of a measurement framework to determine what is or is not actually observable.
A typical element of $\Af(M;N)$ might be a quantum field smeared with a test function that vanishes outside $N$. However local algebras also often contain elements that might not be directly observable such as smeared fermion fields or non-gauge-invariant objects. In general any element of $\Af(M;N)$ is said to be \emph{localisable} in $N$, and the algebra of all local observables is $\Af(M)=\Af(M;M)$. One could demand additionally that the algebras $\Af(M;N)$ are $C^*$-algebras or von Neumann algebras; we will not do that at this stage, because it introduces additional technical complications. Later on, we will pass to von Neumann algebras formed in specific Hilbert space representation of $\Af(M)$.  

The local algebras must satisfy a number of relations. First, if $N_1$ and $N_2$ are regions with $N_1\subset N_2$, then $\Af(M;N_1)\subset \Af(M;N_2)$ (this condition is called \emph{isotony}). In particular, $\Af(M;N)\subset \Af(M)$ for all regions $N$; it is assumed that all $\Af(M;N)$'s share a common unit with $\Af(M)$. Furthermore, the \emph{timeslice axiom} requires that if $N_1\subset N_2$ contains a Cauchy surface of $N_2$ then $\Af(M;N_1)=\Af(M;N_2)$. Next, if $N_1$ and $N_2$ are causally disjoint, \emph{Einstein causality} asserts that every element of $\Af(M;N_1)$ commutes with every element of $\Af(M;N_2)$. Finally, we assume the \emph{Haag property}~\cite{FewVer_QFLM:2018} that if $K$ is a compact subset contained in a connected region $L$, and  $A\in \Af(M)$ commutes with every element of $\Af(M;K^\perp)$, then $A\in \Af(M;L)$. Here, $K^\perp=M\setminus (J^+(K)\cup J^-(K))$ is the causal complement of $K$. For brevity the theory as a whole will be denoted $\Af$. In fact, $\Af$ can be regarded as a functor, and
many other objects in this section are either functors or natural transformations. This is spelled out further in Appendix~\ref{appx:functorial}. A final general definition that will be needed is that a \emph{global gauge symmetry} 
of $\Af$ is any automorphism $\zeta\in\Aut(\Af(M))$ that preserves the local algebras, i.e., $\zeta(\Af(M;N))=\Af(M;N)$ for all $N\in\Reg(M)$. Under composition, the global gauge symmetries of $\Af$ form a group, denoted $\Aut(\Af)$. By restriction, each $\zeta\in\Aut(\Af)$ determines automorphisms $\zeta_N\in\Aut(\Af(M;N))$ so that $\zeta_{N_1}$ and
$\zeta_{N_2}$ agree on $\Af(M;N_1)$ whenever $N_1\subset N_2$ are nested regions.
 
As mentioned, our discussion in this section focusses on local algebras given as unital $*$-algebras. However, when we come to consider quantum reference frames, it will be necessary to work with local von Neumann algebras. Let us briefly recall how these may be obtained from $*$-algebras. 
First, a state $\omega$ on $\Af(M)$ is a normalised, positive, linear functional, i.e., $\omega:\Af(M)\to\CC$ is a linear map obeying $\omega(1_{\Af(M)})=1$ and $\omega(A^*A)\ge 0$ for all $A\in\Af(M)$. We furthermore say that $\omega$ is faithful if $\omega(A^*A)=0$ implies $A=0$. The role of a state is to assign expectation values to observables.  
 Given a (faithful) state $\omega:\Af(M)\to \CC$, one can consider the GNS representation $(\HH,\pi,\Omega,\mathscr{D})$ associated with this state, which is in particular an (injective) unital *-homomorphism $\pi$ mapping elements of $\Af(M)$ into (possibly unbounded) operators on a Hilbert space $\HH$ with dense domain $\mathscr{D}$, such that there is a $\Omega\in \mathscr{D}$ with the property that for all $A\in \Af(M)$ 
\begin{equation}
    \omega(A)=\langle \Omega,\pi(A)\Omega\rangle,
\end{equation}
and that $\pi(\Af(M))\Omega$ is dense in $\HH$, see e.g.\ \cite[Ch.~5]{AdvAQFT}. This allows us to define for open causally convex $N\subset M$ a von Neumann algebra given in terms of the double commutant (as a subalgebra of $B(\HH)$)
\begin{equation}
    \mathfrak{A}(M;N)=\pi(\Af(M;N))''.
\end{equation}
If the algebra $\Af(M)$ is a $C^*$-algebra, meaning that it can be faithfully represented as a norm closed algebra of bounded operators on a Hilbert space, one sees that $\Af(M;N)$ can be embedded in the von Neumann algebra $\mathfrak{A}(M;N)$ associated with a particular faithful representation. $C^*$-algebraic models of quantum field theory include the Weyl algebra description of the free field and the dynamical $C^*$-algebra description of interacting field theories \cite{Buchholz_2020}.

For a generic quantum field theory $\Af$, state $\omega$ and sufficiently regular bounded region $N$, the von Neumann algebras $\mathfrak{A}(M;N)$ are expected to be of type III${}_1$, i.e.\ to be decomposable into type III${}_1$ factors, see e.g.\ \cite{fredenhagen1985modular,Buchholz1987Universal,Verch:1996wv,YNGVASON2005135}.  Some aspects 
 of the type classification of von Neumann algebras are recalled in Appendix \ref{appx:factor_types}. 

\subsection{Review of the measurement framework}

Turning to measurement, the viewpoint adopted in~\cite{FewVer_QFLM:2018} is that measurements on one QFT (the ``system'') are made through interactions with another QFT (the ``probe''). Denoting the system QFT by $\Af$ and the probe by $\Bf$, the two theories can be combined by taking a tensor product $\Uf=\Af\otimes\Bf$ so that the local algebras are $\Uf(M;N)=\Af(M;N)\otimes\Bf(M;N)$. The tensor product theory $\Uf$ contains $\Af$ and $\Bf$ as independent subsystems, and is called their \emph{uncoupled combination}.
A measurement interaction is modelled by a \emph{coupled combination} $\Cf$, which is another QFT that agrees with $\Uf$ outside a compact \emph{coupling zone} $K\subset M$. 
Specifically, one assumes that there are $*$-algebra isomorphisms $\chi_N:\Uf(M;N)\to\Cf(M;N)$ labelled by regions $N$ outside the causal hull $J^+(K)\cap J^-(K)$ of $K$ so that
for all such regions $N_1$ and $N_2$ with $N_1\subset N_2$, the maps $\chi_{N_2}$ and $\chi_{N_1}$ agree on $\Uf(M;N_1)$. The precise nature of the interaction is unspecified, beyond the assumption that $\Cf$ is itself a QFT obeying the conditions set out above. We will denote the system of isomorphisms as a whole by $\chi$, and denote the coupled combination formally by the pair $(\Cf,\chi)$.

As $K$ is compact and $M$ is globally hyperbolic, the sets $M^\pm=M\setminus J^\mp(K)$ are regions and there are isomorphisms $\chi_{M^\pm}:\Uf(M;M^\pm)\to \Cf(M;M^\pm)$. As $M^\pm$ contain Cauchy surfaces of $M$, $\chi_{M^\pm}$ induce isomorphisms $\tau^\pm:\Uf(M)\to\Cf(M)$ called the retarded ($+$) and advanced ($-$) response maps, and $\Theta=(\tau^-)^{-1}\circ\tau^+$ is an automorphism of $\Uf(M)$ called the \emph{scattering map}.

Now suppose that the system and probe are prepared in algebraic states $\omega$ and $\sigma$ at early times, corresponding to the state $\omega\otimes\sigma$ on $\Uf(M)$ or $((\tau^-)^{-1})^*(\omega\otimes\sigma)$ on $\Cf(M)$ -- here, the use of $\tau^-$ reflects the `early time' nature of the preparation. Suppose also that a measurement is made, at late times, of a probe observable $B\in\Bf(M)$, corresponding to $1\otimes B\in \Uf(M)$ or $\tau^+(1\otimes B)\in\Cf(M)$. Here, the use of $\tau^+$ reflects the `late time' nature of the measurement. The expectation value of the experiment that measures $B$ in the given state is 
\begin{equation}
    (((\tau^-)^{-1})^*(\omega\otimes\sigma))(\tau^+(1\otimes B)) = 
    (\omega\otimes\sigma)(\Theta 1\otimes B).
\end{equation}
We can describe the measurement at the level of the system alone, as the measurement of an observable $A\in\Af(M)$ in state $\omega$, where $A$ is chosen so that its expectation value $\omega(A)$ matches the expectation value of the actual experiment on the coupled system, i.e., $\omega(A)=(\omega\otimes\sigma)(\Theta 1\otimes B)$. The observable $A$ is given explicitly by 
\begin{equation}\label{eq:induct}
    A = \varepsilon_\sigma(B):=\eta_\sigma (\Theta (1\otimes B)),
\end{equation}
where the linear map $\eta_\sigma:\Uf(M)\to\Af(M)$ traces out the probe, 
so that $\eta_\sigma (A\otimes B) =\sigma(B)A$. We refer to 
$\varepsilon_\sigma(B)$ as the \emph{induced system observable} corresponding to probe observable $B$. Taken together, $\Bf$, $\Cf$, $\chi$, $\sigma$, and $B$ provide a \emph{measurement scheme} for $\varepsilon_\sigma(B)$, so that one has $\omega(\varepsilon_\sigma(B))=(\omega\otimes\sigma)(\Theta 1\otimes B)$. 

Although it will not be needed here, one may derive rules for updating states after selective or nonselective measurements~\cite{FewVer_QFLM:2018} which are fully compatible with relativity and avoid pathologies such as those described in~\cite{sorkin1993impossible} -- see~\cite{BostelmannFewsterRuep:2020} and \cite{FewsterVerch_encyc:2023} for further discussion. Furthermore, at least in some models, it has been shown that every local observable can be measured at least asymptotically by a sequence of measurement schemes~\cite{FewsterJubbRuep:2022}.

Interaction terms in QFT are usually required to be gauge-invariant, and a similar condition can be imposed on the coupled combination of two theories. In detail, suppose that there are homomorphisms $\varphi_\Af:H\to\Aut(\Af)$ and $\varphi_\Bf:H\to\Aut(\Bf)$ for some group $H$. Then there is a homomorphism
$\varphi_\Uf:H\to\Aut(\Uf)$ given by $\varphi_\Uf(h)=\varphi_\Af(h)\otimes\varphi_\Bf(h)$. Given a coupled combination $(\Cf,\chi)$ of $\Af$ and $\Bf$, we will describe the coupling as gauge-invariant if there is also a homomorphism $\varphi_\Cf:H\to \Aut(\Cf)$ with 
\begin{equation}\label{eq:CfUf}
    \varphi_\Cf(h)_N\circ\chi_N=\chi_N\circ\varphi_\Uf(h)_N
\end{equation} 
for all $h\in H$ and regions $N$ outside the causal hull of $K$,
where $\varphi_\Uf(h)_N:\Uf(M;N)\to\Uf(M;N)$ is the isomorphism
obtained by restriction of $\varphi_\Uf(h)$, and $\varphi_\Cf(h)_N$ is defined similarly. 
Under these circumstances the induced observable maps transform equivariantly, as shown in Theorem~3.7 of~\cite{FewVer_QFLM:2018}, namely one has
\begin{equation}
    \varphi_\Af(h)(\varepsilon_\sigma(B)) = \varepsilon_{\varphi_\Bf(h^{-1})^*\sigma} (\varphi_\Bf(h)(B))
\end{equation}
for all $h\in H$, $B\in\Bf(M)$ and probe preparation states $\sigma$. Consequently, if the probe preparation state is gauge-invariant, i.e., $\varphi_\Bf(h^{-1})^*\sigma=\sigma$ for all $h$, then gauge-invariant probe observables induce gauge-invariant system observables. 

\subsection{Measurement on backgrounds with symmetry} \label{sec: measurement with symm}

Now suppose that the background spacetime $M$ admits a group $G$ of isometries that preserve orientation and time-orientation,  
with identity element $1_G$. (The restriction to (time)-orientation preserving isometries is partly made for convenience, and for direct compatibility with the framework of locally covariant QFT~\cite{BrFrVe03}. However, we note that measurement schemes naturally break time-reversal symmetry, because one prepares in the past and measures in the future; it also seems reasonable that an observer has a fixed notion of orientation.) An element $g\in G$ acts on $M$ and on the set of regions $\Reg(M)$, with actions denoted $(g,x)\mapsto g.x$ and $(g,N)\mapsto g.N = \{g.x:x\in N\}$ for $g\in G$, $x\in M$ and $N\in\Reg(M)$. 

Given a QFT $\Af$, each $g\in G$ determines a modified theory $\act{g}{\Af}$
with local algebras $\act{g}{\Af}(M;N)=\Af(M;g^{-1}.N)$, so that $G$ acts from the left:  
\begin{equation}
\act{g}{(\act{h}{\Af})}(M;N) = \act{h}{\Af}(M;g^{-1}.N)=\Af(M; h^{-1}g^{-1}.N)=\act{gh}{\Af}(M;N).
\end{equation}
It is straightforward to show that $\act{g}{\Af}$ inherits isotony, the timeslice property, Einstein causality and Haag property from $\Af$, and therefore defines a QFT in its own right.
We will say that $\Af$ is \emph{$G$-covariant} if there are automorphisms $\alpha(g)$ of $\Af(M)$ labelled by $g\in G$ such that $\alpha(g) (\act{g}{\Af}(M;N)) =  \Af(M;N)$ for all $N\in\Reg(M)$ and $\alpha(1_G)=\id_{\Af(M)}$, and, if so, the $G$-covariance is said to be implemented by $\alpha$. It is natural to assume that (a) $\alpha$ is a group homomorphism of $G$ to $\Aut(\Af(M))$ and (b) in the spirit of the Coleman--Mandula theorem (\cite{ColeMand:1967,Fewster:2018}), the action of the spacetime symmetries should commute with that of
the global gauge symmetries in $\Aut(\Af)$. Thus we require $\zeta\circ\alpha(g)=\alpha(g)\circ \zeta$ for
all $\zeta\in\Aut(\Af)$, $g\in G$. Restricting to local regions, this gives the identity $ \zeta_N \circ\alpha(g)_N = \alpha(g)_{N} \circ\zeta_{g^{-1}.N}$, or equivalently
\begin{equation}\label{eq:local_b}
    \zeta_N = \alpha(g)_{N} \circ\zeta_{g^{-1}.N}\circ\alpha(g)_N^{-1} ,
\end{equation}
where $\alpha(g)_N:\act{g}{\Af}(M;N)\to \Af(M;N)$ is defined by restriction of $\alpha(g)$ to $\act{g}{\Af}(M;N)=\Af(M;g^{-1}.N)$. The notion of $G$-covariance can be expressed functorially, as explained in Appendix~\ref{appx:functorial}, where the motivation for assumptions (a) and (b) is explained in greater detail.

Returning to the discussion of measurements, let us now suppose that both the system and probe theories are $G$-covariant, with implementations $\alpha$ and $\beta$ obeying both assumptions (a) and (b).
Then $\Uf$ is clearly $G$-covariant, with implementation $\upsilon(g) = \alpha(g)\otimes\beta(g)$ that obeys assumption (a), and a partial form of assumption (b). Namely, $\upsilon(g)$ commutes with
gauge symmetries of $\Uf$ that are of the form $\zeta\otimes\xi$, for
$\zeta\in\Aut(\Af)$, $\xi\in\Aut(\Bf)$. It is possible that $\Uf$ may
have other gauge symmetries not of this form,\footnote{For instance, if $\Bf=\Af$ then $\Uf$ has an additional `flip' symmetry $A\otimes B\mapsto B\otimes A$.} and so condition~(b) is not necessarily guaranteed to hold in full. 

As described above, the measurement framework defines a coupled combination 
in terms of a theory $\Cf$ that agrees with $\Uf$ outside a compact coupling zone $K\subset M$.
Assuming that the dynamics of $\Cf$ differs nontrivially from that of $\Uf$, the theory $\Cf$ 
will not respect $G$-covariance. In fact, if $\chi$ is the system of isomorphisms making $\Cf$
a coupled combination of $\Af$ and $\Bf$, then $\act{g}{\chi}$ makes $\act{g}{\Cf}$ another coupled combination, where $\act{g}{\chi}$ is the system of isomorphisms
\begin{equation}\label{eq:gchi}
    (\act{g}{\chi})_N = \chi_{g^{-1}.N} \circ \upsilon(g)_N^{-1}  : \Uf(M;N)\to \Cf(M;g^{-1}.N)=\act{g}{\Cf}(M;N)
\end{equation}
defined for all $N\in\Reg(M)$ such that $g^{-1}.N$ lies outside the causal hull of $K$. See Figure~\ref{fig:cd1} for a commutative diagram illustrating this definition and equations~\eqref{eq:gtaupm} and~\eqref{eq:gTheta} below.
\begin{figure}
    \centering 
    \begin{tikzcd}
    \Uf(M;N) \arrow[d,"\upsilon(g)^{-1}_N"']\arrow[r,"(\act{g}{\chi})_N"] &\act{g}{\Cf}(M;N)\arrow[equal,d] \\
    \Uf(M;g^{-1}.N) \arrow[r," \chi_{g^{-1}.N}"] &\Cf(M;g^{-1}.N)
    \end{tikzcd}\hfil
    \begin{tikzcd}
    \Uf(M) \arrow[d,"\upsilon(g)^{-1}"']\arrow[r,"\act{g}{\tau}^+"]
    \arrow[rr,bend left=40,"\act{g}{\Theta}"] &\act{g}{\Cf}(M)\arrow[equal,d] \arrow[r, "(\act{g}{\tau}^-)^{-1}"] & \Uf(M) \\
    \Uf(M) \arrow[r," \tau^+"]\arrow[rr,bend right=40,"\Theta"] &\Cf(M) \arrow[r, "(\tau^-)^{-1}"] & \Uf(M)\arrow[u,"\upsilon(g)"']
    \end{tikzcd}
    \caption{Commuting diagrams illustrating equations~\eqref{eq:gchi},~\eqref{eq:gtaupm} and~\eqref{eq:gTheta}.}
    \label{fig:cd1}
\end{figure}
Here, $\upsilon(g)_N:\Uf(M;g^{-1}.N)\to \Uf(M;N)$ denotes the isomorphism obtained by restriction of $\upsilon(g)$. Thus $\act{g}{\Cf}$ has coupling zone $g.K$, 
`in' and `out' regions $g.M^\pm$, and response maps induced by 
$(\act{g}{\chi})_{g.M^\pm}=\chi_{M^\pm}\circ \upsilon(g)^{-1}_{g.M^\pm}$  that satisfy
\begin{equation}\label{eq:gtaupm}
    \act{g}{\tau}^\pm = \tau^\pm \circ \upsilon(g)^{-1}. 
\end{equation}
Consequently, the scattering map of $\act{g}{\Cf}$ with respect to $\Uf$ is
\begin{equation}\label{eq:gTheta}
    \act{g}{\Theta} = \upsilon(g)\circ \Theta\circ \upsilon(g)^{-1}.
\end{equation}

Meanwhile, if $\Cf$ represents a gauge-invariant coupling of $\Af$ and $\Bf$, then -- as we now describe -- the same is true of $\act{g}{\Cf}$, due to condition (b) above. As in the previous subsection, suppose that there are
homomorphisms $\varphi_\Af:H\to\Aut(\Af)$, $\varphi_\Bf:H\to\Aut(\Bf)$, and
$\varphi_\Cf:H\to\Aut(\Cf)$ obeying~\eqref{eq:CfUf} for some group $H$, where $\varphi_\Uf(h)=\varphi_\Af(h)\otimes \varphi_\Bf(h)$. As the gauge symmetries $\varphi_\Uf(h)$ all take the product form, they commute with the implementation of the $G$-covariance $\upsilon(g)$ due to assumption (b) on both $\Af$ and $\Bf$ -- using~\eqref{eq:local_b} one sees that
\begin{equation}
    \varphi_\Uf(h)_N=\upsilon(g)_{N}\circ \varphi_{\Uf}(h)_{g^{-1}.N}\circ \upsilon(g)^{-1}_N
\end{equation} 
holds for $g\in G$, $h\in H$ and $N\in\Reg(M)$. This identity can be used to show that $\act{g}{\Cf}$ is a gauge-invariant coupling of $\Af$ and $\Bf$, with respect to the homomorphism  $\varphi_{\act{g}{\Cf}}:H\to\Aut(\act{g}{\Cf})=\Aut(\Cf)$ given by 
$\varphi_{\act{g}{\Cf}}(h)  = \varphi_\Cf(h)$. To see this, one notes that
$\varphi_{\act{g}{\Cf}}(h)_N = \varphi_\Cf(h)_{g^{-1}.N}$ and consequently
\begin{align}
    \varphi_{\act{g}{\Cf}}(h)_N\circ\act{g}{\chi}_N 
    &=\varphi_{\Cf}(h)_{g^{-1}.N}\circ\chi_{g^{-1}.N} \circ \upsilon(g)^{-1}_N \nonumber \\
   &= 
    \chi_{g^{-1}.N} \circ\varphi_{\Uf}(h)_{g^{-1}.N}\circ \upsilon(g)^{-1}_N  
   \nonumber \\
   &= \act{g}{\chi}_N\circ \upsilon(g)_{N}\circ \varphi_{\Uf}(h)_{g^{-1}.N}\circ \upsilon(g)^{-1}_N  
    \nonumber \\
    &=\act{g}{\chi}_N\circ\varphi_\Uf(h)_N
\end{align}
holds for all $N\in\Reg(M)$ outside the causal hull of $g.K$, and all $g\in G$, $h\in H$, which is the required property. 

The induced observable map for $\act{g}{\Cf}$ is easily computed, noting that
the partial trace maps $\eta_\sigma:\Uf(M)\to\Af(M)$ satisfy
the identity 
\begin{equation}
    \eta_\sigma\circ\upsilon(g) = \alpha(g)\circ \eta_{\beta(g)^*\sigma}
\end{equation}
as is seen from the calculation
\begin{equation}
    \eta_\sigma(\upsilon(g) (A\otimes B)) = \eta_\sigma(\alpha(g)A\otimes\beta(g)B) = \sigma(\beta(g)B)\alpha(g)A = \alpha(g)\eta_{\beta(g)^*\sigma} (A\otimes B)
\end{equation}
for all $A\in\Af(M)$ and $B\in\Bf(M)$. Consequently, the induced observable map $\act{g}{\varepsilon}_\sigma$ for measurements using the coupled combination $\act{g}{\Cf}$ is
\begin{align}
\act{g}{\varepsilon}_\sigma (\beta(g)B) &= 
\eta_\sigma(\act{g}{\Theta} 1\otimes \beta(g)B) = 
\eta_\sigma(\upsilon(g) \Theta 1\otimes B) = 
\alpha(g) \eta_{\beta(g)^*\sigma}(\Theta 1\otimes B) \nonumber \\&= 
\alpha(g) \varepsilon_{\beta(g)^*\sigma}(B).
\end{align}
for all $B\in\Bf(M)$ -- a formula that was stated in~\cite{FewVer_QFLM:2018} but left to the reader.
We see immediately that if $A=\varepsilon_\sigma(B)$
is the system observable measured using $\Bf$, $\Cf$, $\chi$, $\sigma$ and $B$, then the transformed measurement scheme $\Bf$, $\act{g}{\Cf}$, $\act{g}{\chi}$, $\act{g}{\sigma}:=\beta(g^{-1})^*\sigma$ and $\beta(g)B$ measures $\alpha(g)A$. 
Note that $(g,\sigma)\mapsto \act{g}{\sigma}$ defines a left action of $G$ on the states of $\Bf(M)$, so that $\act{g}{\sigma}(\beta(g)B)=\sigma(B)$. At first sight, it might appear strange that $\Bf$, rather than $\act{g}{\Bf}$ should appear in the transformed measurement scheme, but the point is that we wish to translate the interaction region in the coupled theory while keeping the uncoupled combination $\Uf$ unchanged.

We may summarise the above as follows.
\begin{theorem} 
\label{thm:measurement_cov}
Let $G$ be a group of time-orientation preserving isometries of the background spacetime $M$.
Suppose $\Af$ and $\Bf$ are $G$-covariant QFTs with implementations $\alpha$ and $\beta$
obeying assumptions (a) and (b), and suppose that $(\Cf,\chi)$ is a (gauge-invariant) coupled combination of $\Af$ and $\Bf$ with coupling zone $K$ and scattering map $\Theta$. Then, for all $g\in G$,  $(\act{g}{\Cf},\act{g}{\chi})$ is a (gauge-invariant) coupled combination of $\Af$ and $\Bf$ with coupling zone $g.K$ and scattering map $\act{g}{\Theta} = \upsilon(g) \circ\Theta \circ\upsilon(g)^{-1}$, where $\upsilon(g) = \alpha(g)\otimes\beta(g)$. Furthermore, if $(\Bf,\Cf,K,\chi,\sigma, B)$ is a measurement scheme for 
$A\in\Af(M)$, then $(\Bf, \act{g}{\Cf},g.K, \act{g}{\chi}, \act{g}{\sigma},\beta(g)B)$ is a measurement scheme for $\alpha(g)A$. 
\end{theorem}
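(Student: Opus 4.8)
The plan is to assemble the assertions from the explicit constructions developed in the preceding discussion, verifying each in turn. First I would confirm that $(\act{g}{\Cf},\act{g}{\chi})$ is a coupled combination of $\Af$ and $\Bf$. By definition $\act{g}{\Cf}(M;N)=\Cf(M;g^{-1}.N)$, and the task is to check that the maps $\act{g}{\chi}_N=\chi_{g^{-1}.N}\circ\upsilon(g)_N^{-1}$ of~\eqref{eq:gchi} form a coherent system of $*$-isomorphisms $\Uf(M;N)\to\act{g}{\Cf}(M;N)$ for every region $N$ with $g^{-1}.N$ outside the causal hull of $K$ — equivalently, for $N$ outside the causal hull of $g.K$, since an isometry preserves the causal structure and hence maps the causal hull of $K$ to that of $g.K$. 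Each $\act{g}{\chi}_N$ is a composite of isomorphisms, hence an isomorphism, and compatibility under restriction to nested regions $N_1\subset N_2$ follows from the corresponding compatibility of $\chi$ together with the coherence of the restrictions $\upsilon(g)_N$ (which are all restrictions of the single automorphism $\upsilon(g)$). This simultaneously identifies the transformed coupling zone as $g.K$.

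For the gauge-invariant case I would reproduce the displayed chain of equalities establishing $\varphi_{\act{g}{\Cf}}(h)_N\circ\act{g}{\chi}_N=\act{g}{\chi}_N\circ\varphi_\Uf(h)_N$, which combines the gauge-invariance~\eqref{eq:CfUf} of the original coupling with assumption~(b) in the form that $\upsilon(g)$ commutes with the product gauge symmetries $\varphi_\Uf(h)$; this shows $\act{g}{\Cf}$ is gauge-invariant with respect to $\varphi_{\act{g}{\Cf}}(h)=\varphi_\Cf(h)$. Next I would read off the response maps: since $M^\pm$ contain Cauchy surfaces, so do $g.M^\pm$, and the isomorphisms $(\act{g}{\chi})_{g.M^\pm}=\chi_{M^\pm}\circ\upsilon(g)^{-1}_{g.M^\pm}$ induce $\act{g}{\tau}^\pm=\tau^\pm\circ\upsilon(g)^{-1}$ as in~\eqref{eq:gtaupm}, whence $\act{g}{\Theta}=(\act{g}{\tau}^-)^{-1}\circ\act{g}{\tau}^+=\upsilon(g)\circ\Theta\circ\upsilon(g)^{-1}$, recovering~\eqref{eq:gTheta}.

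Finally, for the measurement-scheme claim I would invoke the intertwining identity $\eta_\sigma\circ\upsilon(g)=\alpha(g)\circ\eta_{\beta(g)^*\sigma}$ for the partial trace, verified directly on product elements, to compute the induced observable map $\act{g}{\varepsilon}_\sigma(\beta(g)B)=\eta_\sigma(\act{g}{\Theta}\,1\otimes\beta(g)B)=\alpha(g)\,\varepsilon_{\beta(g)^*\sigma}(B)$. Taking the transformed preparation state $\act{g}{\sigma}=\beta(g^{-1})^*\sigma$, the homomorphism property of $\beta$ gives $\beta(g)^*\act{g}{\sigma}=\sigma$, so that
\[
\act{g}{\varepsilon}_{\act{g}{\sigma}}(\beta(g)B)=\alpha(g)\,\varepsilon_{\beta(g)^*\act{g}{\sigma}}(B)=\alpha(g)\,\varepsilon_\sigma(B)=\alpha(g)A,
\]
which is exactly the statement that $(\Bf,\act{g}{\Cf},g.K,\act{g}{\chi},\act{g}{\sigma},\beta(g)B)$ is a measurement scheme for $\alpha(g)A$. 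The only step demanding genuine care is the first: establishing that $\act{g}{\chi}$ is a coherent system of isomorphisms and correctly tracking how the coupling zone and in/out regions transform under $g$; once this is in place, the remaining claims are algebraic consequences of the homomorphism property of $\upsilon$ together with the definitions of the response, scattering and partial-trace maps.
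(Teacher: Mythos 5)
Your proposal is correct and follows essentially the same route as the paper, whose ``proof'' of Theorem~\ref{thm:measurement_cov} is precisely the preceding discussion: the coherence of $\act{g}{\chi}$ via~\eqref{eq:gchi}, the transformed response and scattering maps~\eqref{eq:gtaupm}--\eqref{eq:gTheta}, the gauge-invariance chain using assumption~(b), and the intertwining identity $\eta_\sigma\circ\upsilon(g)=\alpha(g)\circ\eta_{\beta(g)^*\sigma}$ leading to $\act{g}{\varepsilon}_\sigma(\beta(g)B)=\alpha(g)\varepsilon_{\beta(g)^*\sigma}(B)$. Your closing computation with $\act{g}{\sigma}=\beta(g^{-1})^*\sigma$, giving $\beta(g)^*\act{g}{\sigma}=\sigma$ and hence $\alpha(g)A$, is exactly the paper's final step.
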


The conclusion to be drawn from this discussion is that, while
any individual measurement scheme necessarily breaks $G$-covariance by introducing a specific coupling zone, the action of the symmetry group is operationally meaningful, when considering the transformation of the measurement scheme as a whole. 

From the perspective of general relativity, individual points in manifolds have no physical meaning, and therefore the specification of a coupling zone as a subset of $M$ is likewise physically meaningless. As we consider a background spacetime with a given metric, we need only consider transformations by isometries rather than arbitrary diffeomorphisms. A more detailed discussion on this viewpoint will be given in \cite{type_reductionII}. 
By this reasoning, the coupled theory
$\Cf$ can only be distinguished from $\act{g}{\Cf}$ by specifying a physical system to act as a reference frame that itself transforms under an action of $G$. Taking the further step that the reference frame is a quantum system, we may conclude that
(a) measurement of quantum fields in spacetimes with symmetry requires the specification of a quantum reference frame, with a $G$-covariant observable, whereupon (b) physical observables should be elements of the combined observable algebra of the system and reference frames that are invariant under the combined actions of $G$. To express these ideas precisely we now review and extend the theory of quantum reference frames.

\section{Quantum reference frames}\label{Sec:QRFs}

Quantum reference frames have been studied since the 1960's \cite{Aharonov:1967zza}, were refined in the 80's \cite{Aharonov:1984zz}, and the subject has developed in a number of directions in the intervening decades. Recently, ideas surrounding QRFs have found new domains of application, particularly in (quantum) gravitational and spacetime settings (e.g., \cite{goeller2022diffeomorphism,carrozza2022edge,kabel2024identification,de2023quantum1,de2022quantum}). Various distinct QRF frameworks have now been established (e.g., \cite{Bartlett2007,Vanrietvelde:2018pgb,giacomini2019quantum}---the \emph{quantum information, perspective neutral} and \emph{purely perspectival} approaches); here we follow and develop further the \emph{operational} framework described in e.g. \cite{Loveridge2017a,carette2023operational}. The understanding of quantum reference frames promulgated in the operational approach (and elsewhere) is that describing quantum systems in a manner that respects some given symmetry requires that a second, `reference' system/frame be included explicitly as part of the description, and that the symmetry is imposed by demanding invariance of observable quantities, as motivated in the previous section. In addition, the operational approach utilises a notion of relative, or relativised, observables, which will be discussed shortly.
 
 The algebra of truly observable quantities in the operational approach is given as the commutant of the symmetry group, which differs from e.g. the perspective-neutral approach \cite{Vanrietvelde:2018pgb,de2021perspective,vanrietvelde2023switching,ahmad2022quantum,Hoehn:2019owq} in which the observables are given on a Hilbert space described by the kernel of a quantised constraint. The demand that what is observable must be invariant arises also as a theorem in the quantum theory of measurement -- if a system has a conserved quantity then all observables of that system must be invariant under the symmetry it generates \cite{Loveridge2017a}, constituting a `strong' version of the famous Wigner-Araki-Yanase theorem \cite{E.Wigner1952,Busch2010,Araki1960,Loveridge2011}.

Before giving a definition of quantum reference frames suitable for the rest of this work, we briefly introduce some elements of operational quantum mechanics on which many of the ideas are based (see e.g. \cite{busch1997operational, Busch2016a}). Here the notion of observable is relaxed from self-adjoint operator to positive operator valued measure (POVM), which is motivated through the utilisation of the full probabilistic structure of the Hilbert space framework. Precisely, a normalised POVM is a mapping
 $F:\mathcal{F} \to B(\mathcal{H})$, where $\mathcal{H}$ is a Hilbert space and $\mathcal{F}$ is a $\sigma$-algebra of subsets of the \emph{value space} $\Sigma$ of $F$; for each $X \in \mathcal{F}$, $F(X)$ is positive, $F(\Sigma)=1_\HH$, and $F$ is $\sigma$-additive on disjoint elements of $\mathcal{F}$, i.e., $F(\bigcup_{i\in I}X_i) = \sum_{i\in I}F(X_i)$ for $I$ countable and $X_i\cap X_j=\emptyset$ when $i\neq j$, where in the case of an infinite sum the convergence is understood in the weak operator topology. All POVMs will henceforth satisfy the normalisation $F(\Sigma)=1_\HH$, and therefore we will omit explicit statement of this fact. For any normal state $\rho$, identified here with a positive, trace class operator with unit trace,
 the mapping $X \mapsto \mathrm{tr}[\rho F(X)]$ is a probability measure on $\Sigma$. In the case that $F(X)$ is a projection for each $X \in \mathcal{F}$, which is equivalent to $F(X \cap Y) = F(X)F(Y)$ for any pair $X$ and $Y$ in $\mathcal{F}$, $F$ is called a projection valued measure (PVM).
 If $\mathcal{F}=\Bor(\mathbb{R})$, a projection valued measure is often called a spectral measure, which corresponds to a (generally unbounded) self-adjoint operator through the spectral theorem, from which the more standard text-book description of observables is recovered. In the literature on operational quantum mechanics, POVMs which are not PVMs are often called `unsharp'. Finally, operators in the unit operator interval $[0,1_{\HH}]\subset B(\HH)$ are called \emph{effects}, which can be alternatively described as POVMs on the space $\{0,1\}$, i.e., the effects are binary observables, whose expectation value gives the probability of a binary measurement (`yes'/`no') resulting in success. In particular, if $E$ is a POVM with value space $\Sigma$ then each $E(X)$ with $X\in\mathcal{F}$ is an effect, representing a test that the value of the measured observable lies in $X\subset\Sigma$. In what follows, the value space $\Sigma$ will typically be chosen to be a topological space and $\mathcal{F}=\Bor(\Sigma)$, its Borel $\sigma$-algebra.
 
Turning back to quantum reference frames, we recall that two requirements must be satisfied - the frame system must possess a $G$-covariant POVM, and observables of the combined system and frame must be $G$-invariant. We return to the invariance in Subsec. \ref{subsec:relativisation1}, and first delineate the covariance requirement:

\begin{definition} \label{def:qrf1} Consider a locally compact topological space $\Sigma$ with Borel $\sigma$-algebra $\Bor(\Sigma)$, and a complex, separable Hilbert space $\mathcal{H}$. Let $G$ be a locally compact, second countable Hausdorff group $G$ for which $\Sigma$ is a homogeneous left $G$-space with continuous left action written $(g,x) \mapsto g.x$, and consider a strongly continuous unitary representation $U$ of $G$ in $\mathcal{H}$. A positive operator valued measure $E:\Bor(\Sigma)\to B(\mathcal{H})$ is called covariant if for each $g \in G$ and $X \in \Bor(\Sigma)$, $E(g.X) = U(g)E(X)U(g)^*$. The triple $(U,E,\mathcal{H})$ is called a
quantum reference frame.
 \end{definition}

Quantum reference frames are therefore \emph{systems of covariance}, which generalise (by allowing the operator valued measures to be unsharp) the notion of systems of imprimitivity familiar from, inter alia, the representation theory of groups (see e.g.~\cite{mackey1949imprimitivity}). We will often write $(U_R,E,\mathcal{H}_R)$ to emphasise that the given system of covariance is understood as a QRF. We note that in general $\Sigma\cong G/H$ for some closed $H\subset G$, where as a $G$-space isomorphism the left action on $G/H$ is left multiplication.

 In keeping with standard usage in the operational framework of quantum mechanics described above, in which observables are understood as POVMs, $E$ is therefore understood as the frame observable. The properties and nomenclature of different POVMs $E$, and their value spaces, are reflected in those of the reference frame (see also \cite{carette2023operational}).
 
 \begin{definition}\label{def:QRF types} A quantum reference frame $(U_R,E,\mathcal{H_R})$ is called 
 \begin{itemize}
     \item Sharp if $E$ is a PVM (and unsharp otherwise)
     \item Principal if $\Sigma$ is a principal homogeneous space
     \item Ideal if it is both principal and sharp
     \item Compactly stabilised if $\Sigma \cong G/H$ with $H \subset G$ compact
     \item Localisable if $E$ satisfies the norm-$1$ property, that is, for each $X \in \Bor(\Sigma)$ for which $E(X)\neq 0$, it holds that $||E(X)||=1$
     \item Complete if there is no (non-trivial) subgroup $H_0 \subset G$ acting trivially on all the effects of $E$ (and incomplete otherwise)
 \end{itemize}
 \end{definition}

     Compactly stabilised quantum reference frames are particularly tractable and sufficiently general for many physical applications; indeed, they arise naturally for measurement schemes with compact coupling regions in the static patch of de Sitter; see
      Sec. \ref{sec:qrf and measurement} for further motivation and commentary. A similar compactness condition arises at the level of the representation acting on the frame effects; compact $H_0$ plays an important role in \cite{de2021perspective} for POVMs defined by systems of coherent states. The norm-$1$ property can be stated in two equivalent forms: (i) $E$ satisfies the norm-$1$ property if and only if for each non-zero $E(X)$, there is a sequence of unit vectors $(\varphi_n)\subset \mathcal{H}$ for which 
     $\lim_{n \to \infty}\langle \varphi_n, E(X) \varphi_n \rangle = 1$, and (ii) for any non-zero $E(X)$ and any $\epsilon > 0$, there is a unit vector $\varphi \in  \mathcal{H}$ for which 
     $\langle \varphi, E(X) \varphi  \rangle > 1-\epsilon$. Both (i) and (ii) capture the idea that POVMs with the norm-$1$ property admit probability distributions which arbitrarily well concentrated inside any set which may be obtained in a measurement of $E$ - a property which is satisfied by all PVMs, but by no means all POVMs (for example, covariant POVMs in low dimensional Hilbert spaces typically do not have this property \cite{Loveridge2017a,carette2023operational}). Finally, for any normal state $\omega_R$ on $B(\HH_R)$, $\Bor(\Sigma) \ni X \mapsto \omega_R(E(X))$ is a probability measure, which can be interpreted as a probabilistic value specification of the orientation/configuration of the frame, and picks out, to arbitrary precision, a particular $s \in \Sigma$ when the probability defined through $\omega_R$ and $E$ is highly localised around such a $s$.

 If the algebra of observables of a quantum system is given by the von Neumann algebra $\mathcal{M}_S$ with $\sigma$-weakly continuous action $\alpha_S$ of $G$ on $\mathcal{M}_S$, the invariant algebra $(\mathcal{M}_S \otimes B(\mathcal{H}_R))^{\alpha_S \otimes \Ad U_R}$ represents the properties/observables that are defined independently of an external frame \cite{Loveridge2017a,carette2023operational}. In some cases, discussed below, certain elements of the invariant algebra can be understood as \emph{relative observables}, which can be obtained through a `relativisation' procedure. For example, if $\mathcal{M}_S$ and $B(\mathcal{H}_R)$ are both given as $B(L^2(\mathbb{R}))$, and $G = \mathbb{R}$ acts by translations on both factors, the (spectral measure of the) self-adjoint position operator $Q_S$ relativises to the (spectral measure of the) self-adjoint relative position $Q_S \otimes \id  - \id \otimes Q_R$, which is translation-invariant. We return to this, giving the relativisation map and some examples in Subsec. \ref{subsec:relativisation1}.

\subsection{Examples of quantum reference frames}\label{sec:qrf examples}

\paragraph{Example 1.} Fix $G$ and $\Sigma$ as in Def. \ref{def:qrf1}, equipped with $\sigma$-finite invariant measure $\mu$, and write $(\Sigma,\mu)$ for the measure space. Set $\mathcal{H}_R = L^2(\Sigma, \mu)$, $(U_R(g)f)(x):= f(g^{-1}.x)$, and $(E(X)f)(x):=\chi_X(x)f(x)$. Then $(U_R,E,\mathcal{H_R})$ a sharp frame, and is principal if $\Sigma$ can be identified with $G$, and in this case therefore also ideal.

\paragraph{Example 2.} Let $(\Sigma_1, \mathcal{F}_1)$ and $(\Sigma_2, \mathcal{F}_2)$ be measurable spaces and $p: \mathcal{F}_2 \times \Sigma_1  \to [0,1]$ a Markov kernel, i.e., a mapping for which (i) for each $X\in \mathcal{F}_2$,  $p(X,\cdot):\Sigma_1 \to [0,1]$ is a measurable function, and (ii) for each $x \in \Sigma_1$, $p(\cdot, x):\mathcal{F}_2 \to [0,1]$ is a probability measure. If $E_1:\mathcal{F}_1 \to B(\mathcal{H})$ is a POVM, then 
\begin{equation}\label{eq:markov}
    X \mapsto E(X):=\int_{\Sigma_1}p(X,x)dE_1(x)
\end{equation}
is also a POVM in $B(\mathcal{H})$, understood as a `smearing' of $E_1$, written $E=p * E_1$. If $E_1$ is a PVM, $E$ is commutative, and  conversely, any commutative POVM can be obtained as a smearing of a PVM realised through Eq. \eqref{eq:markov} \cite{holevo2003statistical}. Setting $B(\mathcal{H}_R) \equiv B(\mathcal{H}) = L^2(\mathbb{R})$, $E_1$ the spectral measure of the sharp `position' (i.e., multiplication by $\chi$) covariant under shifts on $\mathbb{R}$ given through the standard $U_R$, and $\mu$ any positive, normalised Borel measure on $\mathbb{R}$, we may fix the Markov kernel $p(X,y)=\mu(X-y)$. With $E$ defined as above, $(U_R,E,L^2(\mathbb{R}))$ yields a system of covariance, defining an unsharp (or smeared) position observable, with the unsharpness quantified by the spread of $\mu$. Smeared momentum observables can be generated directly from the analogous construction, or via the Fourier-Plancherel transform on $L^2(\mathbb{R})$. Indeed, setting $M$ as the canonical spectral measure on $L^2(\mathbb{R})$ (defined by $M(X)f = \chi_X f$ for $X \in \Bor(\mathbb{R})$), and $E:\Bor(\mathbb{R})\to L^2(\mathbb{R})$ any POVM, if $E$ is covariant under shifts (conjugation by $U$ defined by $(U(q)f)(x) = f(x-q)$) and invariant under `boosts' (defined by $(V(p)f)(x)=e^{ipx}f(x)$), then $E=\mu * M$ for some positive normalised Borel measure $\mu$ \cite{carmeli2004position}. The analogue holds for momentum observables, which are invariant under shifts and covariant under boosts.

\paragraph{Example 3.}
For physical reasons it is often demanded that the (self-adjoint) Hamiltonian operator $H$ generating the time evolution $U(t)=\exp(iHt)$ is bounded from below. In this setting, the famous Pauli theorem (e.g. \cite{pauli2012general}) dictates that there is no spectral measure $P:\Bor(\mathbb{R}) \to B(\mathcal{H})$ for which $e^{itH}P(X)e^{-itH}=P(X+t)$, thereby ruling out the existence of a self-adjoint time observable conjugate to energy.
(Throughout this paper, the POVM will transform in the same fashion as Heisenberg-picture quantum fields, which is why we use $U(t)=e^{iHt}$.)
Nevertheless, there are many examples of POVMs satisfying the covariance requirement (e.g. \cite{busch1994time,PhysRevA.66.044101,Busch2016a}), understood as unsharp (non-projection valued) time observables. An example arises in \cite{chandrasekaran2023algebra}, in which the `observer' Hamiltonian is given as the position operator $q_+$ in $L^2(\mathbb{R}_+)$ (i.e., multiplication by the argument, in a suitable dense domain). Here, one should think of  $L^2(\mathbb{R}_+)$ as the spectral representation of the observer Hamiltonian, with $q_+$ having dimensions of energy -- it is not necessary to assume that $q_+$ represents a physical position.
It is then possible to construct a covariant time observable $E^T: \Bor(\mathbb{R})\to B(L^2(\mathbb{R}_+))$, i.e., 
\begin{equation}\label{eq:unshrpp}
    e^{itq_+}E^T(X)e^{-itq_+}=E^T(X+t),
\end{equation}
which is not projection valued. 

We sketch here the construction of such a covariant POVM, which is inspired by the Naimark dilation theorem \cite{riesz2012functional,Busch2016a}. 
\begin{theorem}
\label{thm:naimark}
    Let $(\Omega,\mathcal{F})$ be a measurable space. Given a POVM $F:\mathcal{F} \to B(\HH)$, there exists a Hilbert space $\mathcal{K} \supset \HH$, an isometry $W:\HH\to \mathcal{K}$ and a PVM $P:\mathcal{F} \to B(\mathcal{K})$ such that for all $X \in \mathcal{F}$, 
    \begin{equation}
        F(X)=W^*P(X)W.
    \end{equation}
\end{theorem}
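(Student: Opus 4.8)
The plan is to prove the Naimark dilation theorem by constructing the dilation space $\KK$ directly from the POVM $F$ via an inner-product construction, following the standard GNS-like approach. First I would form the free vector space of formal finite linear combinations of symbols of the form $F(X)\psi$ with $X\in\mathcal{F}$ and $\psi\in\HH$ — more precisely, I would work on the algebraic tensor product of the space of simple $\mathcal{F}$-measurable functions with $\HH$, so that a typical element is $\sum_i \chi_{X_i}\otimes\psi_i$. On this space I would define the sesquilinear form
\begin{equation}\label{eq:naimark_form}
    \Big\langle \sum_i \chi_{X_i}\otimes\psi_i,\ \sum_j \chi_{Y_j}\otimes\phi_j \Big\rangle_{\KK}
    := \sum_{i,j} \langle \psi_i,\ F(X_i\cap Y_j)\phi_j\rangle_{\HH}.
\end{equation}
The positivity of this form is exactly where the POVM structure is used: for $\sum_i \chi_{X_i}\otimes\psi_i$ one must check that $\sum_{i,j}\langle\psi_i, F(X_i\cap X_j)\psi_j\rangle\ge 0$, which follows by refining the $X_i$ to a common disjoint partition and using $\sigma$-additivity together with positivity of each $F(X)$.

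Next I would pass to the quotient by the null space of the form and complete to obtain a genuine Hilbert space $\KK$. I would then define the candidate PVM by $P(X)[\chi_Y\otimes\psi]:=[\chi_{X\cap Y}\otimes\psi]$, extended linearly, and verify that (i) each $P(X)$ is well-defined on the quotient and bounded, so extends to $\KK$; (ii) $P(X)$ is a self-adjoint idempotent, i.e.\ $P(X)^2=P(X)=P(X)^*$, using $\chi_{X\cap X}=\chi_X$; (iii) $P(X\cap Y)=P(X)P(Y)$ and $P(\Omega)=\id_\KK$, so that $P$ is indeed projection valued; and (iv) $\sigma$-additivity of $P$ in the strong operator topology, inherited from that of $F$. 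The isometry would be defined by $W\psi:=[\chi_\Omega\otimes\psi]$, and one checks $\langle W\psi,W\phi\rangle_\KK=\langle\psi,F(\Omega)\phi\rangle_\HH=\langle\psi,\phi\rangle_\HH$ using normalisation $F(\Omega)=1_\HH$, so $W$ is isometric. The dilation identity is then immediate: $\langle W\psi, P(X)W\phi\rangle_\KK=\langle\psi,F(X)\phi\rangle_\HH$, giving $W^*P(X)W=F(X)$, and identifying $\HH$ with the subspace $W\HH\subset\KK$ yields $\KK\supset\HH$.

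The main obstacle I anticipate is establishing positivity of the form~\eqref{eq:naimark_form} rigorously, since this is the one place where the definition of a POVM is genuinely needed and where a naive argument can fail. The clean route is: given finitely many $X_i$, let $\{A_k\}$ be the atoms of the finite Boolean algebra they generate (a disjoint refinement), write each $\chi_{X_i}=\sum_{k: A_k\subset X_i}\chi_{A_k}$, and re-express the vector as $\sum_k \chi_{A_k}\otimes\big(\sum_{i: A_k\subset X_i}\psi_i\big)$. Because the $A_k$ are disjoint, $F(A_k\cap A_l)=\delta_{kl}F(A_k)$, so the form collapses to $\sum_k \langle \xi_k, F(A_k)\xi_k\rangle\ge 0$ with $\xi_k=\sum_{i:A_k\subset X_i}\psi_i$, and each term is nonnegative by positivity of $F(A_k)$. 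A secondary technical point is the $\sigma$-additivity of $P$: I would verify it against matrix elements of the dense set of vectors $[\chi_Y\otimes\psi]$, reducing it to the weak $\sigma$-additivity of $F$, and then extend by density and uniform boundedness. Finally I would remark that the separability of $\KK$ follows from separability of $\HH$ together with countable generation of $\mathcal{F}$ in the cases of interest, though this is not strictly asserted in the statement.
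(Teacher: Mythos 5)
Your construction is correct and is exactly the standard GNS/Kolmogorov-type dilation argument; the paper does not prove Theorem~\ref{thm:naimark} itself but quotes it from the classical literature (\cite{riesz2012functional,Busch2016a}), where essentially this proof is given, so there is nothing to contrast. The one step you leave as an unverified checklist item --- well-definedness and contractivity of $P(X)$ on the quotient --- follows from the same disjoint-refinement trick you use for positivity: writing $v=\sum_k\chi_{A_k}\otimes\xi_k$ over disjoint atoms $A_k$, one has $\Vert P(X)v\Vert_\KK^2=\sum_k\langle\xi_k,F(X\cap A_k)\xi_k\rangle\le\sum_k\langle\xi_k,F(A_k)\xi_k\rangle=\Vert v\Vert_\KK^2$ since $F(X\cap A_k)\le F(A_k)$, and your construction moreover automatically yields the minimal dilation mentioned after the theorem, because the vectors $P(X)W\psi=[\chi_X\otimes\psi]$ span a dense subspace of $\KK$.
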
 
The triple $(\mathcal{K},P,W)$ is called the Naimark dilation of $F$, and the projection $WW^*:\mathcal{K} \to \HH$ is called the Naimark projection.
We note that such a $P$ is not unique, but there is, up to unitary equivalence, a unique minimal dilation, given by setting $\mathcal{K}$ as the closure of $\rm{span}\{P(X)\varphi ~|~\varphi \in \HH,~X \in \mathcal{F}\}$.

We may now apply this to the setting of \cite{chandrasekaran2023algebra}. 
The observer Hamiltonian $q_+$ has spectral measure $P_+:\Bor(\mathbb{R}_+) \to L^2(\mathbb{R}_+)$ defined by $(P_+(X)f)(x)=\chi^+_{X}(x)f(x)$. A POVM $E^T:\Bor(\mathbb{R})\to B(L^2(\mathbb{R}_+))$ which satisfies Eq. \eqref{eq:unshrpp} can be 
 given as the restriction of the spectral measure $P^p$ of the
standard self-adjoint momentum observable $p$ acting in $L^2(\mathbb{R})$, which is the Fourier transform of the spectral measure of the position operator $q$ acting in $L^2(\mathbb{R})$.  Take the canonical isometric embedding $W:L^2(\mathbb{R}_+) \to L^2(\mathbb{R})$, then $E^T(X)=W^*P^p(X)W$ is a covariant POVM with dilation $(L^2(\mathbb{R}),P^p,W)$. We note that the dilation $P^p$ is also covariant under the extension of the unitary representation $U_+(t)= e^{itq_+}$ in $L^2(\mathbb{R}_+)$ to $U(t)=e^{itq}$ in $L^2(\mathbb{R})$. This extension follows the treatment in \cite{chandrasekaran2023algebra} of `ignoring' the positivity of the Hamiltonian for the sake of the analysis, but appears here systematically through the dilation theory of the unsharp conjugate time/momentum observable. We also note that covariant dilations are always possible and will play an important role throughout this paper; see also \ref{appsec:catt} for further details.

\paragraph{Example 4.}
Fix a complex, separable Hilbert space $\mathcal{H}$ with orthonormal Hilbert basis $\{e_n\}_{n \in N_0}$, with $N_0$ a non-empty subset of $\mathbb{Z}$, and set $N = \sum_{n\in N_0}n P[e_n]$ acting in $\mathcal{D}(N):=\{\varphi \in \mathcal{H}~|~\sum_{n \in N_0} n^2 |\langle e_n,\varphi \rangle|^2 < \infty\}$. Then $N$ generates a strongly continuous unitary representation of the circle group through $\theta \mapsto e^{i \theta N}$; $\theta \in [0, 2\pi)$. $N$ represents a number observable -- if $N_0 = \mathbb{N}$, $N$ is the standard number observable for an optical field, or harmonic oscillator. As in example 3., the positivity of $N$ precludes the existence of a canonically conjugate self-adjoint operator; nevertheless, there are phase-shift covariant POVMs, i.e., those which satisfy $ e^{i \theta N}E(X) e^{-i \theta N} = E(X + \theta)$ (addition understood modulo $2 \pi$), $X \in \Bor((0,2 \pi])$. These are known to take the form \cite{lahti1999covariant}
\begin{equation}
    E^{\rm{phase}}(X)= \sum_{n,m=0}^{\infty}c_{n,m}\int_X e^{i(n-m)\theta}|{n}\rangle \langle{m} | d \theta~,~X \in \Bor((0,2\pi]),
\end{equation}
where $(c_{n,m})$ is a positive matrix with $c_{n,n}=1$ for all $n$. $E^{\rm{phase}}$ is not projection-valued, but if $(c_{n,m})=1$ for all $n,m \in \mathbb{N}$, $E^{\rm{phase}}$ is called the canonical phase observable \cite{lahti2000characterizations} which enjoys various optimality properties, including the norm-$1$ property, thereby providing a localisable quantum reference frame if used in that capacity. The minimal Naimark dilation of the canonical phase is the spectral measure of the self-adjoint azimuthal angle operator, which is conjugate to the $z$ component of the angular momentum observable, whose spectrum is of course all of $\mathbb{Z}$.

\subsection{Relativisation for principal frames}\label{subsec:relativisation1}
In previous work (see e.g. \cite{miyadera2016approximating,Loveridge2017a,carette2023operational}), a class of invariants in $B(\mathcal{H}_S \otimes \mathcal{H}_R)$ was given through a \emph{relativisation} map, giving rise to observables which are interpreted as being relative to a given principal quantum reference frame (with an extension to finite homogeneous spaces given in \cite{glowacki2023quantum}; see also \cite{glowacki2024relativization} for a recent categorical development). For any $A \in B(\HH_S)$ and any 
principal quantum reference frame $(U_R, E, \mathcal{H}_R)$, there is a `relativised version' of $A$ given by 
\begin{equation}\label{eq:relprinc}
    \yen (A):= \int_G U_S(g)AU_S(g)^* \otimes dE(g).
    \end{equation}
See \cite{loveridge2017relativity} for a construction. Note that this can be extended to any POVM $E_S: \Bor(\Sigma)\to B(\mathcal{H}_S)$ effect-wise, i.e., $(\yen \circ E_S) (X) = \yen(E_S(X))$ for each $X \in \Sigma$. In particular, this also shows that unbounded operators can be relativised through their spectral measures.
The invariance of the right hand side under conjugation by $(U_S \otimes U_R)(g)$ follows from the covariance of $E$; the map $\yen$ is completely positive and normal, preserves effects (hence POVMs), the identity and the adjoint, and is multiplicative exactly when $E$ is projection valued. In simple settings $\yen$ gives rise to quantities typically viewed as `relative', for instance relative position, angle, phase and time arise through $\yen$ for appropriate choices of quantum reference frame. For example, the (spectral measure of) the position observable in $\mathcal{H}_S$ relativises to the (spectral measure of) relative position when $G = \mathbb{R}$ in example 1. above. The generalisation of \eqref{eq:relprinc} in the setting in which the value space of the quantum reference frame observable $E$ is a homogeneous left $G$-space is given in Subsec. \ref{subsec:relativisation}.

 Given that the standard 
framework of quantum theory, based on the description of single 
systems, functions adequately with regards to observation, 
it must be the case that, at least in certain regimes, the 
relativised description and the standard description are in approximate agreement, at least at a probabilistic level. Put more simply: 
the probabilities generated by e.g. the position operator are in agreement with what is observed in experiment, yet, as we have 
argued, the true observable is actually the relative position of 
system relative to a frame, raising the question of the empirical agreement of the two different descriptions. In order to compare the standard and relativised description, we may use the map $\eta_{\sigma}:B(\mathcal{H}_S \otimes \mathcal{H}_R)\to B(\mathcal{H}_S)$ defined in \eqref{eq:induct}, which we recall is given on tensor products as $\eta_{\sigma}(A \otimes B)= A \sigma(B)$, extended linearly and continuously to the whole space. $\eta_{\sigma}$ is a completely
positive normal conditional expectation (for a normal state $\sigma$) (e.g. \cite{Loveridge2017a}; in the QRF literature $\eta_{\sigma}$ is called a \emph{restriction} map, often denoted $\Gamma_{\sigma}$). Interpreting $\sigma$ as the state of the frame, we can compare the probabilities which arise from a POVM $E_S:\Bor(G)\to B(\mathcal{H}_S)$ with those of the POVM $\eta_{\sigma} \circ \yen \circ E_S :\Bor(G)\to B(\mathcal{H}_S)$. It is known \cite{Loveridge2017a,carette2023operational} that if $E$ satisfies the norm-$1$ property, for a system POVM $E_S$ and any $X \in \Bor(G)$, there is a sequence of normal states $(\sigma_n)$ for which $\lim_{n \to \infty} (\eta_{\sigma_n} \circ \yen)(E_S(X))=E_S(X)$, where the limit is in the weak/$\sigma$-weak topology, i.e. the topology of pointwise convergence of expectation values, and the above holds for any bounded operator in place of $E_S(X)$. The proof follows from noting that if $G$ is metrisable, for any $x \in G$ one can fix the given 
sequence $(\sigma_n)$ so that the sequence of scalar measures 
$\mathrm{tr} [E(\cdot)\sigma_n]$ converges to the Dirac measure at $x$; 
setting $x=e$ then gives the result. Conversely \cite{miyadera2016approximating}, for a badly localisable frame, there is typically a lower bound between a system observable and its restricted relativised version.

\section{Quantum reference frames for relativistic measurement schemes}
\label{sec:qrf and measurement}

\subsection{General considerations}
\label{sec:qrf_measurement_general}

As described in Sec.\ \ref{sec: measurement with symm}, on a spacetime $M$ with a group $G$ of (time-)orientation preserving isometries, one requires a physical system to act as a reference frame to distinguish between measurement schemes 
\begin{equation}
    m=(\Bf,\Cf,K,\chi,\sigma, B),
\end{equation} and 
\begin{equation}
    g.m:=(\Bf,\allowbreak{}^g\Cf,g.K,\allowbreak{}^g\chi,\allowbreak\beta(g^{-1})^*\sigma,\allowbreak\beta(g)B),
\end{equation} for any $g\in G$, where for both schemes the system QFT $\Af$ and probe QFT $\Bf$ are $G$-covariant fields on $M$, on which the group $G$ acts via the group homomorphisms $\alpha:G\to \Aut(\Af(M))$ and $\beta:G\to \Aut(\Bf(M))$ respectively as in Thm.~\ref{thm:measurement_cov}. Such a frame then also allows one to distinguish between measured/induced system observables $A=\varepsilon_\sigma(B)\in\Af(M)$ and $\alpha(g)A$. Having recalled the notion of a quantum reference frame in Sec.\ \ref{Sec:QRFs}, we here give a more operational description of the idea of selecting measurement schemes based on the configuration of a quantum reference frame.
In what follows, we assume $G$ to be a topological group that is locally compact, second countable and Hausdorff.

An experimenter may have access to some set of measurement schemes $\mathfrak{M}$, which due to symmetry considerations is taken to be closed under the action of $G$. The space of orbits of $\mathfrak{M}$ under the action of $G$, denoted by $\mathfrak{M}/G$, gives a partition of measurement schemes into sets of schemes that are, a priori, physically indistinguishable owing to the symmetries of the classical background. In order for an experimenter to decide on performing any particular measurement in some orbit, they 
employ an additional physical system as a reference, and as the system under consideration is a QFT, it seems most natural to 
use a quantum reference frame for this purpose (this also avoids the need to specify a hybrid quantum-classical dynamical system). For such a frame, we require that the value space $\Sigma$ (a homogeneous $G$-space) has the following property: 
\begin{definition}\label{def:suf_res}
    We say that a homogeneous $G$-space $\Sigma$ \textup{resolves} $\mathfrak{M}$ if for any $[m]\in \mathfrak{M}/G$ there exists an $s\in \Sigma$ and an $m\in[m]$ such that the stabiliser subgroup $G_s\subset G$ of $s$ is contained in the stabiliser subgroup $G_m\subset G$ of $m$, i.e., $G_s\subset G_m$.  
\end{definition}
Here the stabiliser subgroup $G_s$ of $s\in\Sigma$ is defined by
\begin{equation}
    G_s=\{g\in G:g.s=s\},
\end{equation}
and similarly for the stabiliser $G_m$ of $m\in\mathfrak{M}$. As $G$-spaces, one has 
\begin{equation}
    \Sigma\cong G/G_s,
\end{equation}
for all $s\in\Sigma$, where $G$ acts on $G/G_s$ by left multiplication,  $g.(g'G_s):=gg'G_s$  for all $g,g'\in G$. For fixed $s\in \Sigma$, this isomorphism is given by $g.s\mapsto gG_s$, using homogeneity to write any $s'\in \Sigma$ as $s'=g.s$ for $g\in G$. A similar mapping can be used to give an alternative characterisation of $\Sigma$ resolving $\mathfrak{M}$. 
\begin{proposition}
\label{prop:qrf_measurement_assignment}
A homogeneous $G$-space $\Sigma$ resolves $\mathfrak{M}$ if and only if there exists at least one map
\begin{equation}
    \mathfrak{f}:\Sigma\times \mathfrak{M}/G\to \mathfrak{M},
\end{equation}
such that
\begin{equation}
    \label{eq:qrf_measurement_assignment}
    \mathfrak{f}(s,[m])\in [m],\qquad\mathfrak{f}(g.s,[m])=g.\mathfrak{f}(s,[m])
\end{equation}
 for any $[m]\in \mathfrak{M}/G$, $s\in \Sigma$, $g\in G$.
\end{proposition}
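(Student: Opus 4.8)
The plan is to prove the two implications of the biconditional separately, noting that the reverse direction (existence of $\mathfrak{f}$ forces resolution) is essentially immediate, while the forward direction (resolution produces an $\mathfrak{f}$) requires an explicit construction. For the reverse direction, I would assume such a map $\mathfrak{f}$ exists, fix an arbitrary orbit $[m]\in\mathfrak{M}/G$ and any point $s\in\Sigma$, and set $m':=\mathfrak{f}(s,[m])$. The first property in~\eqref{eq:qrf_measurement_assignment} gives $m'\in[m]$ directly. To verify $G_s\subset G_{m'}$, I would take $g\in G_s$, so $g.s=s$, and apply equivariance: $g.m'=g.\mathfrak{f}(s,[m])=\mathfrak{f}(g.s,[m])=\mathfrak{f}(s,[m])=m'$, whence $g\in G_{m'}$. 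Thus $s$ together with $m'$ witnesses the condition of Definition~\ref{def:suf_res}.

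For the forward direction, I would use a choice function to select, for each orbit $[m]\in\mathfrak{M}/G$, a base point $s_{[m]}\in\Sigma$ and a representative $m_{[m]}\in[m]$ with $G_{s_{[m]}}\subset G_{m_{[m]}}$, as guaranteed by resolution. By homogeneity each $s\in\Sigma$ can be written as $s=g.s_{[m]}$ for some $g\in G$, and I would then \emph{define} $\mathfrak{f}(s,[m]):=g.m_{[m]}$. Once well-definedness is secured, the two required properties follow at once: $g.m_{[m]}\in[m]$ because orbits are $G$-invariant, and from $h.s=(hg).s_{[m]}$ one reads off $\mathfrak{f}(h.s,[m])=(hg).m_{[m]}=h.\mathfrak{f}(s,[m])$, giving equivariance.

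The one genuine obstacle is the well-definedness of this assignment, and it is exactly here that the stabiliser containment in Definition~\ref{def:suf_res} is indispensable: if $g.s_{[m]}=g'.s_{[m]}$, then $g^{-1}g'\in G_{s_{[m]}}\subset G_{m_{[m]}}$, so $(g^{-1}g').m_{[m]}=m_{[m]}$ and hence $g'.m_{[m]}=g.m_{[m]}$. Without the inclusion $G_s\subset G_m$ the value $g.m_{[m]}$ would depend on the representative $g$, and no equivariant section could be produced. Everything else is routine bookkeeping of the two defining properties of $\mathfrak{f}$, so I expect the proof to reduce to carefully stating this consistency check and then reading off both clauses of~\eqref{eq:qrf_measurement_assignment}.
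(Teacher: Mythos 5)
Your proof is correct and follows essentially the same route as the paper's: the same per-orbit choice of base point $s_{[m]}$ and representative $m_{[m]}$ with $G_{s_{[m]}}\subset G_{m_{[m]}}$, the same equivariant extension via homogeneity, and the same stabiliser-containment argument for well-definedness. The only (harmless) difference is that you spell out explicitly why equivariance of $\mathfrak{f}$ forces $G_s\subset G_{\mathfrak{f}(s,[m])}$, a one-line computation the paper states without elaboration.
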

\begin{proof} Given the existence of such a map $\mathfrak{f}$, one has for each $s\in\Sigma$ and $[m]\in\mathfrak{M}$ that
\begin{equation}
    G_s\subset G_{\mathfrak{f}(s,[m])}.
\end{equation}
In particular, this means $\Sigma$ resolves $\mathfrak{M}$. 

Conversely, if $\Sigma$ resolves $\mathfrak{M}$, we may construct a map $\mathfrak{f}$ as follows: for each $[m]\in\mathfrak{M}/G$, choose $s\in \Sigma$ and $m\in[m]$ so that $G_s\subset G_m$, and set $\mathfrak{f}(s,[m])=m$; then extend $\mathfrak{f}$ to $\Sigma\times \mathfrak{M}/G$ by imposing the relation $\mathfrak{f}(g.s,[m])=g.m$. This assignment is well-defined, as whenever $g.s=g'.s$ for $g,g'\in G$, this means $g^{-1}g'\in G_s\subset G_m$, and hence $g.m=g'.m$. Due to homogeneity of $\Sigma$, i.e.~by the fact that $G$ acts transitively on $\Sigma$, this defines the function $\mathfrak{f}:\Sigma\times \mathfrak{M}/G\to\mathfrak{M}$ completely.
\end{proof}
Condition \eqref{eq:qrf_measurement_assignment} expresses that $\mathfrak{f}$ defines a covariant parameterisation of measurement schemes by the value space $\Sigma$. One can therefore read Prop.~\ref{prop:qrf_measurement_assignment} as stating that $\Sigma$ resolving $\mathfrak{M}$ is equivalent to $\Sigma$ being `sufficiently large' to parameterise the elements of measurement scheme orbits. Generally this parameterisation is not unique. In the construction of a map $\mathfrak{f}$ in the proof above, this is reflected by the freedom of the initial choice of $s\in \Sigma$ and $m\in[m]$ for each $[m]\in\mathfrak{M}/G$.

A quantum reference frame $(U_R,E,\HH_R)$ with POVM $E:\Bor(\Sigma)\to B(\HH_R)$ and a state $\omega_R$ on $B(\HH_R)$ define a probability distribution $\mu_R:=\omega_R\circ E:\Bor(\Sigma)\to [0,1]$ on $\Sigma$ giving the outcome distribution for the POVM $E$ in state $\omega_R$. Let us now describe how such probability distributions, together with an equivalence class $[m]$ of measurement schemes and a map $\mathfrak{f}$ as above, determine an invariant induced observable relative to the QRF. 

Given a $\mu_R$-measurable function $f:\Sigma\to\CC$, the probability distribution $\mu_R$ defines a complex-valued random variable $f(\mathbf{S})$ on the sample space $\Sigma$ with for $X\in \Bor(\CC)$ the probability
\begin{equation}
    \mathds{P}[f(\mathbf{S})\in X]=\mu_R(f^{-1}(X)).
\end{equation}
Here $\mathbf{S}$ is randomly distributed in $\Sigma$ according to $\mu_R$. For $f\in L^\infty(\Sigma,\mu_R)$, the random variable $f(\mathbf{S})$ has expectation value
\begin{equation}
    \mathds{E}[f(\mathbf{S})]=\int_\Sigma f(s)\diff\mu_R(s).
\end{equation}

Recall that for each measurement scheme $m\in\mathfrak{M}$, there is an induced observable $A_m\in\Af(M)$ such that, given a state $\omega_S$ on $\Af(M)$, $\omega_S(A_m)$ yields the expected outcome of the measurement $m$, see Eq.~\eqref{eq:induct}. By choosing a map $\mathfrak{f}:\Sigma\times \mathfrak{M}/G\to \mathfrak{M}$ satisfying Eq.\ \eqref{eq:qrf_measurement_assignment}, 
we may use the statistics of the reference frame outcomes to determine which measurement scheme is actually performed within a given class $[m]\in\mathfrak{M}/G$. As the reference outcome is random, so is the selected measurement $m=\mathfrak{f}(\mathbf{S},[m])$. This results in a randomly chosen observable $A_{\mathfrak{f}(\mathbf{S},[m])}$ which depends on the reference state $\omega_R$
via the distribution of $\mathbf{S}$. For system states $\omega_S$ such that $\Sigma\ni s\mapsto \omega_S(A_{\mathfrak{f}(s,[m])})$ is
in $L^\infty(\Sigma,\mu_R)$, the complex-valued random variable 
$\omega_S(A_{\mathfrak{f}(\mathbf{S},[m])})$ on $\Sigma$ has expectation value
\begin{equation}
    \label{eq:rel_exp}
    \mathds{E}[\omega_S(A_{\mathfrak{f}(\mathbf{S},[m])})]=\int_\Sigma \omega_S(A_{\mathfrak{f}(s,[m])})\diff(\omega_R\circ E)(s),
\end{equation}
which gives the expected outcome of a measurement from class $[m]$
relative to the QRF in state $\omega_R$ for system state $\omega_S$.

Due to Thm.\ \ref{thm:measurement_cov}, the natural action of $G$ on $\mathfrak{M}$ gives $A_{g.m}=\alpha(g)A_m$, hence also $A_{\mathfrak{f}(g.s,[m])}=\alpha(g)A_{\mathfrak{f}(s,[m])}$. For mathematical convenience we shall assume that all measurable observables $A_m$ belong to some von Neumann algebra $\MM_S$. This is certainly true if $\Af(M)$ is a $C^*$-algebra and more generally this is a natural mathematical idealisation, as similarly to as in Sec.~\ref{subsec:relativisation1} one may consider the following discussion to be at the level of spectral measures in the case that $A_m$ can only be represented as an unbounded operator. We furthermore assume that the *-automorphisms $\alpha(g)$ extend to $\MM_S$ to define a weakly continuous action of *-automorphisms,   i.e.~for each $x\in \MM_S$, the map $g\mapsto \alpha(g)x$ is a continuous map from $G$ to $\MM_S$ w.r.t.~the weak operator topology on $\MM_S$.  Under these conditions, and a further technical assumption on the value space $\Sigma$ which we will introduce shortly, one can show for normal states $\omega_S$ on $\MM_S$ and $\omega_R$ on $B(\HH_R)$ that the integral \eqref{eq:rel_exp} is indeed well defined. Furthermore, one can in fact construct a $G$-invariant operator $A_{[m]}\in (\MM_S\otimes B(\HH_R))^{\alpha\otimes\Ad U_R}$ for which 
\begin{equation}
    \label{eq:relative_measured_exp}
    (\omega_S\otimes \omega_R)(A_{[m]})=\mathds{E}[\omega_S(A_{\mathfrak{f}(\mathbf{S},[m])})].
\end{equation}
A (formal) expression for this operator is given by
\begin{equation}
\label{eq:relative_measured}
    A_{[m]}=\int_\Sigma  A_{\mathfrak{f}(s,[m])}\otimes \diff E(s),
\end{equation}
to be understood in the same way as the integral defining the relativisation map \eqref{eq:relprinc}. We shall however avoid working directly with operator valued integrals of the form above by giving a construction of the relativisation map via a relation between quantum reference frames and crossed product algebras, see Sec.\ \ref{subsec:relativisation}. We can then apply this construction to define the operator $A_{[m]}$ solving Eq.~\eqref{eq:relative_measured_exp}. See Eq.\ \eqref{eq:yen_measured} and Cor.~\ref{cor:yen_measured}.

Eq.~\eqref{eq:relative_measured_exp} asserts that the
expectation values of $A_{[m]}$ in states of form $\omega_S\otimes\omega_R$ give the expected outcome (in state $\omega_S$) of a random measurement sampled from $[m]$, where the probability measure on the orbit of measurement schemes is pushed forward from a probability measure on $\Sigma$ (determined by $\omega_R$) by the map $\mathfrak{f}$. Roughly speaking, one could imagine that, in order to select a measurement $m\in[m]$, an experimenter first samples (or more loosely, makes a measurement of) the quantum reference frame, where the outcome of this reference measurement is some $s\in \Sigma$ and the probability distribution of potential outcomes is given by the measure $\mu_R=\omega_R\circ E$. Based on this outcome $s\in \Sigma$, they perform the measurement scheme $\mathfrak{f}(s,[m])$. Overall, this procedure yields a probability distribution on the potential measurement schemes as described above. 

Two remarks are in order. First, in general it will not be possible to perform a measurement of the covariant observable that would yield an outcome $s\in\Sigma$ with perfect precision, especially when $\Sigma$ is a continuum. Moreover any realistic probing of a reference frame observable would introduce additional noise into the procedure. Hence we view the description given above in terms of random variables as a mathematical idealisation that may only be realised in a limiting sense. Second, ideally, one would model the whole procedure of probing the reference observable and the quantum field theory in a single extended measurement scheme. However, as one typically cannot view such a scheme as localised in any compact region, this would require a significant generalisation of the framework described in Sec.\ \ref{sec: measurement in QFT}. We shall leave such efforts to future discussion.

In what follows we shall restrict ourselves to a particular subclass of quantum reference frames, namely those that are \textit{compactly stabilised} (see Def.\ \ref{def:QRF types}). Besides their technical advantages, which shall become apparent in the remainder of this section, compact stabilisation is a necessary requirement for QRFs to resolve measurement schemes on certain relevant spacetimes. This relies on the fact that non-trivial measurement schemes described in Sec.~\ref{sec: measurement in QFT} have (non-empty) compact coupling zones.
\begin{proposition}
    \label{prop:compact_stab}Given the set-up of Thm.~\ref{thm:measurement_cov}, assume $G$ to be locally compact, second countable and Hausdorff, and $M$ such that for each non-empty compact $K\subset M$ the stabiliser subgroup 
    \begin{equation}
        G_K:=\{g\in G:g.K=K\}
    \end{equation}
    is compact. Let $\mathfrak{M}$ a set of measurement schemes for system field $\Af$ by probe field $\Bf$ coupled in a non-empty compact coupling zone, such that $\mathfrak{M}$ is closed under the action of $G$ as given in Thm.~\ref{thm:measurement_cov}. Let $\mathcal{R}=(U_R,E,\HH_R)$ be a QRF for $G$ with value space $\Sigma$ that resolves $\mathfrak{M}$. Then $\mathcal{R}$ is compactly stabilised.
\end{proposition}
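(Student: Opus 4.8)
The plan is to exploit the chain of subgroup inclusions $G_s \subset G_m \subset G_K$ and thereby reduce compactness of the point stabiliser $G_s$ to the hypothesis that coupling-zone stabilisers $G_K$ are compact. Recall from Def.~\ref{def:QRF types} that a QRF is \emph{compactly stabilised} precisely when $\Sigma \cong G/H$ for some compact $H \subset G$. Since $\Sigma$ is a homogeneous $G$-space we have $\Sigma \cong G/G_s$ for every $s \in \Sigma$, and all the point stabilisers $G_s$ are mutually conjugate. Hence it suffices to produce a single $s \in \Sigma$ whose stabiliser $G_s$ is compact.

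First I would fix any orbit $[m] \in \mathfrak{M}/G$, which exists because $\mathfrak{M}$ is a nonempty set of measurement schemes coupled in nonempty compact coupling zones. Applying the resolution hypothesis (Def.~\ref{def:suf_res}) to this orbit yields an $s \in \Sigma$ and a representative $m \in [m]$ with $G_s \subset G_m$. Next I would unpack the $G$-action on measurement schemes from Thm.~\ref{thm:measurement_cov}, under which the coupling zone $K$ of $m$ is sent to $g.K$; thus any $g \in G_m$, fixing $m$, in particular satisfies $g.K = K$, so that $G_m \subset G_K$. Combining the two inclusions gives $G_s \subset G_K$, and $G_K$ is compact by the standing assumption on $M$ (the coupling zone $K$ being nonempty and compact).

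It remains to upgrade the statement ``$G_s$ is a subgroup of the compact group $G_K$'' to ``$G_s$ is compact'', and this is the one genuinely topological step rather than a purely set-theoretic one: a subgroup of a compact group need not be compact unless it is closed. Here I would invoke that the $G$-action on $\Sigma$ is continuous and $\Sigma$ is Hausdorff, so the orbit map $g \mapsto g.s$ is continuous and $\{s\}$ is closed, whence $G_s$ is the closed preimage of $\{s\}$. A closed subset of the compact set $G_K$ is compact, so $G_s$ is compact. Feeding this back into $\Sigma \cong G/G_s$ shows that $\mathcal{R}$ is compactly stabilised, completing the argument.

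I expect the main (and essentially only) obstacle to be precisely this closedness point: one must resist conflating ``contained in a compact group'' with ``compact'', and so the continuity of the action together with the Hausdorff property of $\Sigma$ has to be invoked explicitly to conclude that $G_s$ is closed, hence compact. Everything else is bookkeeping with the definition of the $G$-action on $\mathfrak{M}$ and with the notion of resolution.
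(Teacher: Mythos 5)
Your proof is correct and follows essentially the same route as the paper's: the inclusion chain $G_s\subset G_m\subset G_K$ from the resolution hypothesis and the action on coupling zones, then compactness of $G_K$. The only difference is that you spell out the closedness of $G_s$ (via continuity of the orbit map and $\Sigma$ Hausdorff), which the paper asserts only parenthetically as ``(a closed subgroup)''---a worthwhile clarification, but not a different argument.
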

\begin{proof}
For any $m\in\mathfrak{M}$ invariant under the action of a group element $g\in G_m$, this means that $g.K=K$ for $K\subset M$ the coupling zone of $m$. Therefore $G_m\subset G_K$, the stabiliser subgroup associated with the compact set $K$. If $\Sigma$ resolves $\mathfrak{M}$, then there exist an $s\in \Sigma$ with $G_s\subset G_m$ for some $m\in\mathfrak{M}$, thus there is a non-empty compact $K\subset M$ such that $G_s\subset G_K$. Hence if $G_K\subset G$ is compact, this means $G_s$ (a closed subgroup) is compact. As $\Sigma\cong G/G_s$, 
this means that $\mathcal{R}$ is compactly stabilised.
\end{proof}

In the case of the background $M$ being the de Sitter static patch (such as considered in Sec.~\ref{sec: deSitter}), where we take $G=\RR\times \textnormal{SO}(3)$, one can see that the compactness assumption on $G_K$ as above is indeed satisfied. This is also true for the exterior of a Schwarzschild black hole (with group of isometries isomorphic to $\RR\times \textnormal{SO}(3)$). In fact, on these spacetimes the isometry groups act properly (see \cite[Thm.~1.2.9]{palaisExistenceSlicesActions1961} for various equivalent definitions), which implies the compactness assumption above. Note however that for the full de Sitter or Schwarzschild spacetimes, which in particular have bifurcate Killing horizons, there exist compact sets $K\subset M$ with noncompact stabilisers (namely any bifurcation surface). In these latter examples, the compact sets $K$ with non-compact stabilisers have empty interiors. It should be pointed out that in concrete implementations of the abstract measurement schemes described in Sec.~\ref{sec: measurement in QFT}, the coupling zones have non-empty interiors (see e.g.~\cite{FewVer_QFLM:2018}). This is due to the fact that in QFT, couplings between fields are often required to satisfy certain regularity properties (see e.g.~\cite{hawkinsNovelClassFunctionals2024} for such requirements in the context of perturbative QFT). With that in mind, one can modify Prop.~\ref{prop:compact_stab} accordingly; by restricting to measurement schemes with compact coupling zones with non-empty interiors, the conclusion of Prop.~\ref{prop:compact_stab} holds on a more general class of spacetimes $M$, namely those on which each compact $K\subset M$ \emph{with non-empty interior} have compact stabilisers $G_K$. Such backgrounds now include maximally extended Schwarzschild and de Sitter spacetimes, though are not expected to include arbitrary globally hyperbolic backgrounds. Nevertheless, for the purposes of this paper, it will be sufficient to restrict ourselves to backgrounds on which the assumptions of Prop.~\ref{prop:compact_stab} are satisfied.

In the remainder of this section, we shall give a complete description of all compactly stabilised quantum reference frames, and see what this implies for the von Neumann algebra of invariants $(\MM_S\otimes B(\HH_R))^{\alpha\otimes \Ad U_R}$ in which we will construct the operators $A_{[m]}$ obeying Eq.~\eqref{eq:relative_measured_exp}. Following this fairly general treatment, we specialise to the case where the symmetry group is given by $G=\RR\times H$, for $H$ some compact second countable Hausdorff group, as this brings us to setting of Sec.\ \ref{sec:type_change}. There we prove that, under certain conditions on the action $\alpha$ on $\MM_S$ and of $U_R$ on $\HH_R$, the von Neumann algebra $(\MM_S\otimes B(\HH_R))^{\alpha\otimes \Ad U_R}$ is finite.

\subsection{Compactly stabilised quantum reference frames}

Let $\Sigma$ be the value space of a compactly stabilised quantum reference frame $(U_R,E,\HH_R)$ for a locally compact second countable Hausdorff group $G$. Without loss of generality, we can assume $\Sigma=G/H$ for some compact subgroup $H\subset G$ (see also the discussion following Def.~\ref{def:suf_res}). For simplicity, we shall assume that $G/H$ admits a nonzero left $G$-invariant Radon measure $\mu_{G/H}$ (see Appendix \ref{appx:imprimitvity_thm}). This is sufficiently general for the applications that are of interest to us. Below we give a description of compactly stabilised quantum reference frames in terms of a subrepresentation of well-understood unitary representations of $G$. This result relies on the Naimark dilation theorem, see Thm.\ \ref{thm:naimark}, and Mackey's imprimitivity theorem, see e.g.\ \cite{kaniuth2012induced}, as applied to systems of covariance as in \cite{Cattaneo:1979}. Some aspects of these results are described in Appendix \ref{appx:imprimitvity_thm}. Section \ref{appx:compact_stab_qrf_charact} concludes with a proof of the following theorem.
\begin{theorem}
        \label{thm:qref_embed}
        Let $G$ be a locally compact second countable Hausdorff group, $H\subset G$ a compact subgroup such that $G/H$ admits a nonzero left $G$-invariant measure, $\lambda$ the natural action of left translations of $G$ on $L^2(G)$ as defined in Appendix \ref{appx:G_integrals} and $P:\Bor(G/H)\to B(L^2(G))$ the PVM given by $P(X)=T_{\chi_{q^{-1}(X)}}$ with $q:G\to G/H$ the canonical quotient map and $T_{\chi_{U}}\in B(L^2(G))$ denoting the multiplication operator by the characteristic function $\chi_U\in L^\infty(G)$ for any $U\subset G$. Then the following statements hold:
        \begin{enumerate}
            \item Let $\mathcal{R}=(U,E,\HH)$ be a compactly stabilised quantum reference frame for $G$ with value space $G/H$. Then there exists a separable Hilbert space $\KK$ and an isometry $W:\HH\to\KK\otimes L^2(G)$ that dilates $\mathcal{R}$ to the QRF $(1_\KK\otimes \lambda, 1_\KK\otimes P, \KK\otimes L^2(G))$. In particular, one has for all $g\in G$, $X\in\Bor(G/H)$
        \begin{equation}
            \label{eq:compact_QRF_compr}
            WU(g)=(1_\KK\otimes \lambda(g))W,\qquad E(X)=W^*(1_\KK\otimes P(X))W.
        \end{equation}
            \item Conversely, any isometry $W:\HH\to\KK\otimes L^2(G)$ for $\HH$ and $\KK$ separable Hilbert spaces satisfying for all $g\in G$
        \begin{equation}
            \label{eq:proj_invar}
            (1_\KK\otimes \lambda(g))WW^*=WW^*(1_\KK\otimes \lambda(g)),
        \end{equation}
        defines a compactly stabilised quantum reference frame $(W^*(1_\KK\otimes \lambda)W,W^*(1_\KK\otimes P)W,\HH)$ for $G$ with value space $G/H$.
        \end{enumerate}
    \end{theorem}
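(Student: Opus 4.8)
I would treat the two directions separately, disposing of the converse (part~2) by direct verification and reserving the substantive work for part~1. For part~2, set $\Pi:=WW^*$, the orthogonal projection onto the range of $W$; the hypothesis~\eqref{eq:proj_invar} says exactly that $\Pi$ commutes with every $1_\KK\otimes\lambda(g)$. Writing $U_R:=W^*(1_\KK\otimes\lambda)W$ and $E:=W^*(1_\KK\otimes P)W$, the identities $W^*W=1_\HH$, $W^*\Pi=W^*$ and $\Pi W=W$ give at once $U_R(g)U_R(g')=W^*(1_\KK\otimes\lambda(gg'))W=U_R(gg')$ and $U_R(g)^*=U_R(g^{-1})$, so $U_R$ is a strongly continuous unitary representation (continuity inherited from $\lambda$). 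That $E$ is a POVM on $G/H$ is immediate from positivity of the projections $P(X)$, from $P(G/H)=1_{L^2(G)}$, and from weak-operator $\sigma$-additivity. For covariance I would first record the elementary fact that the canonical PVM satisfies $\lambda(g)P(X)\lambda(g)^{-1}=P(g.X)$ (a one-line computation from $(\lambda(g)f)(x)=f(g^{-1}x)$ and $P(X)=T_{\chi_{q^{-1}(X)}}$), and then observe that in $U_R(g)E(X)U_R(g)^*$ the two interior products $WW^*=\Pi$ may be commuted past the factors $1_\KK\otimes\lambda(g^{\pm1})$ and absorbed using $W^*\Pi=W^*$, $\Pi W=W$; the expression collapses to $W^*(1_\KK\otimes P(g.X))W=E(g.X)$. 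Since the value space is $G/H$ with $H$ compact, $(U_R,E,\HH)$ is a compactly stabilised QRF.

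For part~1 the plan is to \emph{dilate covariantly, apply imprimitivity, and then use compactness of $H$ to land in the canonical system}. First I would apply Cattaneo's covariant Naimark dilation~\cite{Cattaneo:1979} to the system of covariance $(U,E,\HH)$: this produces a genuine system of imprimitivity $(\tilde U,\tilde P,\mathcal{K}')$ based on $G/H$, meaning $\tilde P$ is a PVM with $\tilde P(g.X)=\tilde U(g)\tilde P(X)\tilde U(g)^*$, together with an isometry $V:\HH\to\mathcal{K}'$ satisfying $VU(g)=\tilde U(g)V$ and $E(X)=V^*\tilde P(X)V$; passing to the minimal dilation keeps $\mathcal{K}'$ separable. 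By Mackey's imprimitivity theorem (Appendix~\ref{appx:imprimitvity_thm}) this system is then unitarily equivalent to the induced system $\Ind_H^G(\pi)$ for some separable unitary representation $\pi$ of $H$.

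The decisive step, and the only place the \emph{compactness} of the stabiliser is used, is to embed $\Ind_H^G(\pi)$ into the canonical system. By the Peter--Weyl theorem every irreducible representation of the compact group $H$ is contained in its regular representation, so any separable $\pi$ admits an $H$-equivariant isometric embedding into $\KK\otimes L^2(H)$ (trivial action on $\KK$, regular action $\rho_H$ on $L^2(H)$) for a suitable separable $\KK$; taking $\KK$ infinite-dimensional supplies every irreducible with infinite multiplicity and hence accommodates any such $\pi$. Inducing this embedding and invoking the standard identification $\Ind_H^G(\rho_H)\cong(\lambda,P,L^2(G))$ — inducing the regular representation of $H$ returns the regular representation of $G$ together with its canonical PVM, under which $\tilde P(X)$ is matched with multiplication by $\chi_{q^{-1}(X)}$ — yields an isometric intertwiner $J:\mathcal{K}'\to\KK\otimes L^2(G)$ with $J\tilde U(g)=(1_\KK\otimes\lambda(g))J$ and $J\tilde P(X)=(1_\KK\otimes P(X))J$. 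Setting $W:=JV$ and using $J^*J=1$ then gives $WU(g)=(1_\KK\otimes\lambda(g))W$ and $E(X)=W^*(1_\KK\otimes P(X))W$, which are exactly~\eqref{eq:compact_QRF_compr}.

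The main obstacle is not a single hard estimate but the careful assembly of the three cited machines — covariant Naimark, imprimitivity, and Peter--Weyl — at the level of systems of imprimitivity rather than bare representations: one must check that the $H$-equivariant embedding $\pi\hookrightarrow\KK\otimes L^2(H)$ induces a morphism compatible with \emph{both} the unitary and the projection-valued data, and that the identification $\Ind_H^G(\rho_H)\cong L^2(G)$ genuinely carries the induced PVM to $P(X)=T_{\chi_{q^{-1}(X)}}$. It is precisely here that compactness of $H$ is indispensable: for non-compact stabilisers not every representation embeds into a multiple of the regular representation, so no universal target of the form $\KK\otimes L^2(G)$ need exist.
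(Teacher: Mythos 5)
Your proposal is correct, and your part~2 coincides with the paper's own argument: the paper likewise disposes of the converse by direct verification that the compression of $(1_\KK\otimes\lambda,\,1_\KK\otimes P)$ by a projection $p=WW^*\in B(\KK)\otimes\lambda(G)'$ is a QRF, exactly as in your computation using $\Pi W=W$, $W^*\Pi=W^*$ and the covariance $\lambda(g)P(X)\lambda(g)^*=P(g.X)$. For part~1 the two proofs share the same skeleton up to the induced system: both invoke Cattaneo's result (Prop.~\ref{prop:cat_impr_thm}), which packages the covariant Naimark dilation together with Mackey's imprimitivity theorem and delivers an isometric dilation onto an induced system of imprimitivity $(\Ind_H^G V, P_V, L^2_V(G,\KK))$. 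Where you genuinely diverge is in the final embedding step. You use Peter--Weyl to embed the inducing representation $V$ of the compact group $H$ $H$-equivariantly into a multiple $\KK\otimes L^2(H)$ of the regular representation, then induce functorially and invoke induction in stages, $\Ind_H^G(\mathrm{reg}_H)\cong(\lambda,P,L^2(G))$ with its canonical PVM. The paper instead realises $L^2_V(G,\KK)$ \emph{directly} inside $\KK\otimes L^2(G)$: compactness of $H$ (with normalised $\mu_H$) gives the containment $L^2_V(G,\KK)\subset L^2(G,\KK)$, and the induced space is identified with the fixed-point subspace of $V\otimes\rho_H$, i.e.\ the range of the averaging projection $\Pi_V=\int_H V(h)\otimes\rho_H(h)\,\diff\mu_H(h)$ of Eq.~\eqref{eq:average_proj}, which visibly lies in $B(\KK)\otimes\lambda(G)'$ and commutes with the canonical PVM (Lem.~\ref{lem:compact_subrep}, Prop.~\ref{prop:compact_ind_impr}). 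Each route has its advantages: yours is more structural and makes transparent exactly where compactness enters (failure of Peter--Weyl embeddability for non-compact $H$), but it generically inflates the multiplicity space to an infinite-dimensional $\KK$ and defers the PVM-matching to the induction-in-stages identification, which requires its own routine but fiddly verification---essentially the same work the paper does explicitly. The paper's averaging construction keeps the $\KK$ supplied by the minimal dilation and yields a closed-form expression for the projection $WW^*=\Pi_V$, which is then reused downstream (for instance in the proof of Prop.~\ref{prop:yen_properties}, where the explicit form of $\Pi_V$ is needed to show $\yen$ is multiplicative for sharp frames). So your argument is sound as an existence proof of the dilation, but the paper's more concrete construction carries extra information that the rest of the paper exploits.
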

As quantum reference frames admit a natural notion of unitary equivalence (see Def.~\ref{def:qrf_equiv}), one can use the result above to give a complete description of all equivalence classes of compactly stabilised quantum reference frames.
    \begin{corollary}
    \label{cor:QRF_compact_subrep}
        Let $H\subset G$ as in Thm.\ \ref{thm:qref_embed}, then up to unitary equivalence, any QRF $\mathcal{R}$ of $G$ with value space $G/H$ is of the form
        \begin{equation}
            \mathcal{R}=((1_\KK\otimes \lambda)\restriction_{\HH_R},p(1_\KK\otimes P)\restriction_{\HH_R},\HH_R),
        \end{equation}
        for $\HH_R=p(\KK\otimes L^2(G))$ with $\KK$ some separable Hilbert space, $P(X)=T_{\chi_{q^{-1}(X)}}$ for $X\in\Bor(G/H)$ and $p\in(1_\KK\otimes \lambda(G))'$ a projection onto a subrepresentation of $(1_\KK\otimes \lambda,\KK\otimes L^2(G))$.
    \end{corollary}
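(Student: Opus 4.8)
The plan is to read the corollary off Theorem~\ref{thm:qref_embed}(1) by passing from the dilating isometry to the unitary onto its range. First I would invoke the theorem to obtain a separable Hilbert space $\KK$ and an isometry $W:\HH\to\KK\otimes L^2(G)$ satisfying $WU(g)=(1_\KK\otimes\lambda(g))W$ and $E(X)=W^*(1_\KK\otimes P(X))W$, and set $p:=WW^*$. Since $W$ is an isometry we have $W^*W=1_\HH$, so $p$ is an orthogonal projection with range $\HH_R:=p(\KK\otimes L^2(G))=W(\HH)$. The one substantive point is that $p\in(1_\KK\otimes\lambda(G))'$, and this is exactly where the covariance intertwiner enters: right-multiplying $WU(g)=(1_\KK\otimes\lambda(g))W$ by $W^*$ gives $WU(g)W^*=(1_\KK\otimes\lambda(g))p$, whereas taking adjoints of the relation for $g^{-1}$ (using $U(g^{-1})^*=U(g)$ and $\lambda(g^{-1})^*=\lambda(g)$) yields $U(g)W^*=W^*(1_\KK\otimes\lambda(g))$ and hence, after left-multiplying by $W$, $WU(g)W^*=p(1_\KK\otimes\lambda(g))$. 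Comparing the two shows $[p,1_\KK\otimes\lambda(g)]=0$ for all $g$, so $\HH_R$ is an invariant subspace and $p$ projects onto a subrepresentation of $(1_\KK\otimes\lambda,\KK\otimes L^2(G))$, as the statement requires.

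Next I would regard $W$ as a unitary $V:\HH\to\HH_R$ onto its range and check that it implements a unitary equivalence of quantum reference frames in the sense of Def.~\ref{def:qrf_equiv}. For the representation, the commutation $[p,1_\KK\otimes\lambda(g)]=0$ means $1_\KK\otimes\lambda(g)$ restricts to $\HH_R$, and the identity above gives $VU(g)V^*=WU(g)W^*=(1_\KK\otimes\lambda(g))\restriction_{\HH_R}$. For the frame observable, conjugating the dilation formula produces $VE(X)V^*=WW^*(1_\KK\otimes P(X))WW^*=p(1_\KK\otimes P(X))p$; since every vector of $\HH_R$ is fixed by $p$, this operator coincides with $p(1_\KK\otimes P)\restriction_{\HH_R}$. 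Thus $\mathcal{R}=(U,E,\HH)$ is unitarily equivalent, via $V$, to $((1_\KK\otimes\lambda)\restriction_{\HH_R},\,p(1_\KK\otimes P)\restriction_{\HH_R},\,\HH_R)$, which is the asserted normal form.

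I do not expect a genuine obstacle here: the entire analytic content is carried by Theorem~\ref{thm:qref_embed}, and the corollary merely repackages the Naimark-type dilation as a compression onto the range projection $p=WW^*$. The only steps needing care are bookkeeping ones, namely confirming that $p$ lands in the commutant (the sole use of the covariance intertwiner) and verifying that the compressed POVM $p(1_\KK\otimes P(X))p$ is literally the same operator on $\HH_R$ as $p(1_\KK\otimes P)\restriction_{\HH_R}$, using $p\restriction_{\HH_R}=\id$. Separability of $\KK$ and the existence of the $G$-invariant measure on $G/H$ are inherited directly from the hypotheses of the theorem, so no further assumptions are introduced.
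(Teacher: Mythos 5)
Your proposal is correct and takes exactly the paper's route: the paper disposes of this corollary with the single remark that it follows from Theorem~\ref{thm:qref_embed} by setting $p=WW^*$, and your argument simply supplies the bookkeeping the paper leaves implicit (deriving $p\in(1_\KK\otimes\lambda(G))'$ from the intertwining relation and its adjoint, and viewing $W$ as a unitary onto its range implementing the equivalence of Def.~\ref{def:qrf_equiv}). All of those filled-in steps are sound.
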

    This result follows from the theorem above by setting $p=WW^*$. We shall now use the above description of quantum reference frames to characterise invariant observables under the action of $G$ in a joint system/reference frame algebra.

    \subsection{Invariant operators and the crossed product}
    \label{sec:inv_op_crossed}
    We consider a von Neumann algebra $\MM_S\subset B(\HH_S)$ of system observables acting on a separable Hilbert space $\HH_S$. 
    We let a locally compact second countable Hausdorff group $G$ act on $\MM_S$ via a group homomorphism $\alpha_S:G\to\Aut(\MM_S)$, such that the action $G\times \MM_S\ni(g,x)\mapsto \alpha_S(g)x$ is weakly continuous in $g$ for fixed $x$. Note that this weak continuity assumption is equivalent to assuming continuity in the $\sigma$-weak operator topology, as these agree on bounded sets, see \cite[Lem.~II.2.5]{Takesaki2001}, as well as in the strong operator topology, see the introductory remarks of \cite[Ch.~13.2]{kadison1997fundamentals}. We shall sometimes denote $\alpha_S(g)x$ by $\alpha_S(g,x)$, i.e.~we use $\alpha_S$ to denote both a group homomorphism as well as the group action on $\MM_S$ that it defines. We can summarise the assumptions above by saying that $\alpha_S:G\times\MM_S\to \MM_S$ is a \textit{continuous action of unital *-automorphisms}. 
    Let $\MM=\MM_S\otimes B(\HH_R)$ be a joint system/reference frame algebra, where $(U_R,E,\HH_R)$ is some compactly stabilised quantum reference frame for $G$ such that its value space admits a nonzero left G-invariant Radon measure. 
    We are interested in parameterizing all observables in $\MM^{\alpha_S\otimes\Ad  U_R}$, i.e. the invariant operators in the joint system/reference frame algebra. Note that, due to Cor.\ \ref{cor:QRF_compact_subrep}, one can assume without loss of generality that for some compact subgroup $H\subset G$, one has \begin{equation}
        \HH_R=p(\KK\otimes L^2(G)),\qquad U_R(g)=(1_\KK\otimes\lambda(g)),\qquad E(X)=p(1_\KK\otimes T_{\chi_{q^{-1}(X)}})\restriction_{\HH_R},
    \end{equation}
    for each $g\in G$ and$X\in\Bor(G/H)$ given some invariant projection $p\in (1_\KK\otimes \lambda(G))'$. We can now employ the following lemma
    \begin{lemma}
        Let $\MM\subset B(\HH)$ be a von Neumann algebra closed under an action $\alpha:G\times\MM\to\MM$ where for each $g\in G$ the map $\alpha(g)\in \Aut(\MM)$. Let $p\in \MM^{\alpha}$ be a G-invariant projection, then $(p \MM p)^{\alpha}=p\MM^{\alpha}p$.
    \end{lemma}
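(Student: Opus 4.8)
The plan is to establish the two inclusions separately, after first checking that $(p\MM p)^\alpha$ is even meaningful, i.e.\ that $\alpha$ genuinely restricts to an action on the corner $p\MM p$. Since $p$ is $G$-invariant and each $\alpha(g)$ is a $*$-automorphism, for any $x\in\MM$ I have $\alpha(g)(pxp)=\alpha(g)(p)\,\alpha(g)(x)\,\alpha(g)(p)=p\,\alpha(g)(x)\,p\in p\MM p$, so $\alpha(g)$ maps $p\MM p$ into itself; as $\alpha(g^{-1})$ does likewise, the restriction is an automorphism of $p\MM p$, and the fixed-point set $(p\MM p)^\alpha$ is well defined.

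For the inclusion $p\MM^\alpha p\subseteq (p\MM p)^\alpha$, I would take $b=pmp$ with $m\in\MM^\alpha$. Plainly $b\in p\MM p$, and using $\alpha(g)(p)=p$ together with $\alpha(g)(m)=m$ gives $\alpha(g)(b)=p\,\alpha(g)(m)\,p=pmp=b$, so $b$ is fixed by the restricted action.

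For the reverse inclusion $(p\MM p)^\alpha\subseteq p\MM^\alpha p$, I would take $a\in(p\MM p)^\alpha$. Because $a\in p\MM p$ one has the idempotency identity $pap=a$, and because $a$ is fixed by the restricted action — which is literally the restriction of $\alpha$ — one has $\alpha(g)(a)=a$ for every $g$, i.e.\ $a\in\MM^\alpha$. Writing $a=pap$ then exhibits $a$ as an element of $p\MM^\alpha p$, which completes the argument.

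The proof is essentially a bookkeeping exercise and I do not expect any substantive obstacle; the only point deserving a moment's care is recognising that being fixed under the restricted action on $p\MM p$ is exactly the same condition as being fixed under $\alpha$ on $\MM$, so that such an $a$ automatically lies in $\MM^\alpha$ rather than in some corner-dependent fixed-point set. Everything else follows from the invariance of $p$ and the identity $pap=a$ for elements of the corner.
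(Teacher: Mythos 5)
Your proof is correct and follows essentially the same route as the paper's: both inclusions are established by the same direct computations, using $\alpha(g)(pxp)=p\,\alpha(g)(x)\,p$ for one direction and the identification of fixed points of the restricted action with fixed points of $\alpha$ (via $a=pap$) for the other. Your preliminary check that $\alpha$ restricts to an action on the corner $p\MM p$ is a sensible addition that the paper leaves implicit, but it does not change the substance of the argument.
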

    \begin{proof}
        Since $(p\MM p)\subset \MM$, clearly one has $(p\MM p)^{\alpha}=p(p\MM p)^{\alpha}p\subset p\MM^{\alpha}p$. Now let $pap\in p\MM^{\alpha}p$ be such that $a\in \MM^{\alpha}$. One now sees that for $g\in G$ \begin{equation*}
            \alpha(g,pap)=\alpha(g,p)\alpha(g,a)\alpha(g,p)=pap,
        \end{equation*}
        hence $pap\in (p\MM p)^{\alpha}$, so $p\MM^{\alpha}p\subset (p\MM p)^{\alpha}$. 
    \end{proof}
    Applying this to the setting described above, where $p\in B(\KK\otimes L^2(G))^{\Ad 1_\KK\otimes\lambda}$ defines our quantum reference frame, yields 
    \begin{equation}
        (\MM_S\otimes B(\HH_R))^{\alpha_S\otimes\Ad U_R}=(1\otimes p)(\MM_S\otimes B(\KK\otimes L^2(G)))^{\alpha_S\otimes\Ad (1_\KK\otimes \lambda)}\restriction_{\HH_S\otimes\HH_R},
    \end{equation}
    hence by characterizing the operator algebras of the form $(\MM_S\otimes B(\KK\otimes L^2(G)))^{\alpha_S\otimes\Ad (1_\KK\otimes \lambda)}$, we can characterise invariant operators for joint system/reference frame algebras. Note that, under the obvious unitary equivalence $\HH_S\otimes \KK\otimes L^2(G)\cong \HH_S\otimes L^2(G)\otimes \KK$, one has
    \begin{equation}
        (\MM_S\otimes B(\KK\otimes L^2(G)))^{\alpha_S\otimes\Ad (1_\KK\otimes \lambda)}\cong (\MM_S\otimes B(L^2(G)))^{\alpha_S\otimes\Ad \lambda}\otimes B(\KK).
    \end{equation} It is a well established fact that fixed point algebras of the form $(\MM_S\otimes B(L^2(G)))^{\alpha_S\otimes\Ad \lambda}$ are given by a crossed product, see e.g.\  \cite{vanDaele:1978}.
    \begin{definition}
        \label{def:crossed_prod}
        Let $\MM\subset B(\HH)$ be a von Neumann algebra for some (separable) Hilbert space $\HH$, and let $G$ be a locally compact second countable Hausdorff group acting on $\MM$ via a continuous action of unital *-automorphisms $\alpha:G\times\MM\to \MM$. Define for each $a\in \MM$ an operator $\pi(a)\in B(L^2(G,\HH))$ such that for each $\varphi\in L^2(G,\HH)$, $g\in G$
        \begin{equation}
        \label{eq:pi_map}
            (\pi(a)\varphi)(g)=\alpha(g,a)\varphi(g).
        \end{equation}
        Furthermore, consider the right translation action $\tilde{\rho}:G\to \textbf{U}(L^2(G,\HH))$ as usual such that for each $\varphi\in L^2(G,\HH)$, $g,g'\in G$
        \begin{equation}
            (\tilde{\rho}(g)\varphi)(g')=\Delta_G(g)\varphi(g'g),
        \end{equation}
        where $\Delta_G$ is the modular function as defined by Eq.\ \eqref{eq:modular_func} in Appendix \ref{appx:imprimitvity_thm}. Then the \textup{crossed product} $\MM\rtimes_\alpha G$ is the von Neumann algebra given by
        \begin{equation}
            \MM\rtimes_\alpha G:=(\pi(\MM)\cup\tilde{\rho}(G))''\subset B(L^2(G,\HH)).
        \end{equation}
    \end{definition}
    Here we have used the notation $\tilde{\rho}$ for the right translation action on $L^2(G,\HH)$ to distinguish it from the usual right translation action $\rho$ on $L^2(G)$ as given in Appendix \ref{appx:G_integrals}. Naturally, under the usual isomorphism $L^2(G,\HH)\cong \HH\otimes L^2(G)$, $\tilde{\rho}$ corresponds to $1_\HH\otimes\rho$. As mentioned above, crossed products are fixed point algebras.
    \begin{proposition}[Commutation theorem for the crossed product (\cite{vanDaele:1978} Thm.\ 3.11)]
    \label{prop:comm_thm}
        Let $(\MM,G,\alpha)$ be as in Def.\ \ref{def:crossed_prod}, consider $\MM\otimes B(L^2(G))$ as a von Neumann algebra acting naturally on $L^2(G,\HH)\cong \HH\otimes L^2(G)$. Then
        \begin{equation}
            \MM\rtimes_\alpha G=(\MM\otimes B(L^2(G)))^{\alpha\otimes\Ad\lambda},
        \end{equation}
        with $\lambda$ the left translation action of $G$ on $L^2(G)$.
    \end{proposition}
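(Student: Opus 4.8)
The plan is to establish the two inclusions separately, treating $\MM\rtimes_\alpha G\subseteq(\MM\otimes B(L^2(G)))^{\alpha\otimes\Ad\lambda}$ as the elementary half. Write $\beta_g:=\alpha(g)\otimes\Ad\lambda(g)$ for the automorphisms defining the fixed-point algebra. Since the fixed-point algebra is weakly closed, it suffices to check that the two generating families $\pi(\MM)$ and $\tilde\rho(G)$ are invariant under every $\beta_g$. For $\tilde\rho(g)=1_\HH\otimes\rho(g)$ this is immediate: $\alpha(g)$ acts only on the scalar $1_\HH$, while $\Ad\lambda(g)$ fixes $\rho(g)$ because the left and right regular representations commute, $\lambda(g')\rho(g)\lambda(g')^*=\rho(g)$. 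For $\pi(a)$ I would use that it is the decomposable operator on $L^2(G,\HH)$ with fibre $\alpha(s)(a)$ at $s\in G$; the action $\beta_g$ sends a decomposable field $s\mapsto F(s)$ to $s\mapsto\alpha(g)(F(g^{-1}s))$ — the $\alpha(g)$ from the first tensor leg and the argument shift from $\Ad\lambda(g)$ acting on the multiplication algebra $L^\infty(G)$ — so that the homomorphism property $\alpha(g)\alpha(g^{-1}s)=\alpha(s)$ yields $\beta_g(\pi(a))=\pi(a)$.

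For the reverse inclusion I would work at the level of commutants, so that by the bicommutant theorem it is enough to show that every $S\in(\MM\rtimes_\alpha G)'=\pi(\MM)'\cap\tilde\rho(G)'$ commutes with the fixed-point algebra $N^\beta$, where $N:=\MM\otimes B(L^2(G))$; this gives $N^\beta\subseteq(\pi(\MM)'\cap\tilde\rho(G)')'=\MM\rtimes_\alpha G$. The engine of the argument is to realise both algebras as operator-valued kernels on $L^2(G,\HH)$. First, Tomita's commutation theorem for tensor products together with $\rho(G)'=\lambda(G)''$ gives $\tilde\rho(G)'=(1\otimes\rho(G))'=B(\HH)\otimes\lambda(G)''$, whose elements are exactly the left-convolution operators with a $B(\HH)$-valued kernel $K_S(s,t)=d(st^{-1})$; intersecting with $\pi(\MM)'$, whose elements have the diagonal kernel $\alpha(s)(a)$, forces the intertwining relation
\[
d(g)\,b=\alpha(g)(b)\,d(g)\qquad(b\in\MM,\ g\in G).
\]
Secondly, spelling out $\beta_g(K)=K$ for a kernel with $K(s,t)\in\MM$ yields the covariance $K(gs,gt)=\alpha(g)(K(s,t))$, whence $N^\beta$ consists precisely of the twisted-convolution kernels $K(s,t)=\alpha(t)\bigl(c(t^{-1}s)\bigr)$ with $c\colon G\to\MM$. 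A change of variables in $KS$ and $SK$, using the intertwining relation for $d$ and the modular-function factors built into $\tilde\rho$ to make the Haar-measure substitutions consistent, then verifies $KS=SK$ and closes the inclusion.

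The hard part will be analytic rather than algebraic: the kernel and convolution calculus above is formal, and converting it into honest statements about bounded operators — that these families are weakly dense in the respective algebras and that the integrals converge — requires the machinery of decomposable operators and slice/Fubini maps, which is exactly what is packaged in \cite{vanDaele:1978}. A cleaner way to avoid the kernel bookkeeping, which I would pursue if a self-contained argument were wanted, is to reduce to the case where $\alpha$ is spatially implemented. Although $\beta_g$ is in general \emph{not} implemented by a unitary on $\HH\otimes L^2(G)$ when $\alpha$ is outer, the automorphism $\alpha(g)$ \emph{is} implemented on the copy $\pi(\MM)$ by $\tilde\rho(g)$; and once one passes to a representation in which $\alpha$ is implemented by unitaries $g\mapsto u_g$, the unitary $(\mathcal{W}\varphi)(s)=u_s\varphi(s)$ conjugates $\MM\otimes 1$ onto $\pi(\MM)$ and intertwines $\beta$ with $\id_{\MM}\otimes\Ad\lambda$. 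In that implemented picture the identity collapses to the commutation theorem for tensor products combined with $\lambda(G)'=\rho(G)''$, and the general case follows by the standard amplification argument. Either route reproduces \cite[Thm.~3.11]{vanDaele:1978}.
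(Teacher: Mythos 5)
Note first that the paper contains no proof of Proposition~\ref{prop:comm_thm}: it is imported verbatim from \cite{vanDaele:1978}, so your attempt must stand on its own. Your easy inclusion does: $\tilde\rho(g)=1_\HH\otimes\rho(g)$ is fixed because left and right translations commute, the fibrewise identity $\alpha(g)\circ\alpha(g^{-1}s)=\alpha(s)$ gives $\beta_g(\pi(a))=\pi(a)$, and since the fixed-point set of a group of (automatically normal) automorphisms is a weakly closed $*$-subalgebra, $\MM\rtimes_\alpha G\subseteq(\MM\otimes B(L^2(G)))^{\alpha\otimes\Ad\lambda}$ follows. The intertwining relation $d(g)b=\alpha(g)(b)d(g)$ and the covariance $K(gs,gt)=\alpha(g)(K(s,t))$ are also formally correct.

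The hard direction, however, has a gap that is not mere ``analytic bookkeeping''. Your strategy requires every $S\in\pi(\MM)'\cap\tilde\rho(G)'$ to commute with \emph{every} $x\in N^\beta$, and for this you represent both $S$ and $x$ by operator-valued kernels. Neither representation exists in general: $1$, $\pi(a)$ and $\tilde\rho(g)$ all lie in $N^\beta$ and admit no integral kernel, and a generic element of $B(\HH)\otimes\lambda(G)''$ is not a convolution operator $d(st^{-1})$. What would rescue the computation is a density statement --- that twisted-convolution kernels $K(s,t)=\alpha(t)(c(t^{-1}s))$ are $\sigma$-weakly dense in $N^\beta$ --- but that statement is essentially equivalent to the theorem: if such kernels lie in the crossed product and are dense in $N^\beta$, the inclusion $N^\beta\subseteq\MM\rtimes_\alpha G$ follows directly, with no need for the commutant detour at all. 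So your first route assumes the load-bearing lemma and defers it to \cite{vanDaele:1978}; that is a citation of the result rather than a proof of it.

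More seriously, your ``cleaner'' fallback fails at its key step. With $(\mathcal{W}\varphi)(s)=u_s\varphi(s)$ one indeed has $\mathcal{W}(a\otimes 1)\mathcal{W}^*=\pi(a)$, but the same computation gives $\mathcal{W}(1\otimes\lambda(g))\mathcal{W}^*=u_g\otimes\lambda(g)$: conjugation by $\mathcal{W}$ intertwines $\id\otimes\Ad\lambda$ with $\beta$ only as automorphisms of all of $B(\HH\otimes L^2(G))$, while the ambient algebra is \emph{not} preserved --- $\mathcal{W}^*(\MM\otimes B(L^2(G)))\mathcal{W}$ is the commutant of the decomposable copy of $\MM'$ with fibres $u_s^*b'u_s$, not $\MM\otimes B(L^2(G))$. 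Hence the fixed-point algebra does not collapse to $(\MM\otimes B(L^2(G)))^{\id\otimes\Ad\lambda}=\MM\otimes\rho(G)''$, and it cannot: every action on an algebra in standard form is unitarily implemented, so your collapse would force $\MM\rtimes_\alpha G\cong\MM\otimes\rho(G)''$ in complete generality, whereas for the modular action of a type $\mathrm{III}_1$ factor with $G=\RR$ the left-hand side is type $\mathrm{II}_\infty$ (Thm.~\ref{thm:cross_prod_semifinite}, Prop.~\ref{prop:type3to2fact}) while $\MM\otimes\rho(\RR)''\cong\MM\otimes L^\infty(\RR)$ remains type $\mathrm{III}$. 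Unitary implementability is strictly weaker than innerness, and only the inner case trivialises in the way you describe.
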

    We can now directly apply this to our established set-up to find the following key result.
    \begin{theorem}
    \label{thm:invariants_crossed_prod}
        Let $\MM_S\subset B(\HH_S)$ be a von Neumann algebra and $G$ a locally compact second countable Hausdorff group, let $\alpha_S:G\times\MM_S\to\MM_S$ be a continuous action of unital *-automorphisms. Furthermore let $\mathcal{R}=(U_R,E,\HH_R)$ be a quantum reference frame for $G$ as in Thm.\ \ref{thm:qref_embed}. Then there exists a separable Hilbert space $\KK$ and a projection $p\in \lambda(G)'\otimes B(\KK)\subset B(L^2(G)\otimes \KK)$ such that, when viewing $\MM_S\rtimes_{\alpha_S} G$ as a von Neumann algebra on $\HH_S\otimes L^2(G)$, one has
        \begin{equation}
            \label{eq:invariants_crossed_prod}
            (\MM_S\otimes B(\HH_R))^{\alpha_S\otimes\Ad U_R}\cong (1_{\HH_S}\otimes p)((\MM_S\rtimes_{\alpha_S}G)\otimes B(\KK))(1_{\HH_S}\otimes p),
        \end{equation}
        via a spatial isomorphism induced by a unitary equivalence $\HH_R\cong p(L^2(G)\otimes \KK)$.
    \end{theorem}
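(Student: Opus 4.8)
The plan is to assemble the statement from the reduction already performed above, while tracking the chain of unitary equivalences carefully enough that the resulting isomorphism is manifestly spatial.

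First I would use Corollary~\ref{cor:QRF_compact_subrep} to bring the frame into standard form: there is a separable Hilbert space $\KK$, a projection $p\in(1_\KK\otimes\lambda(G))'$, and a unitary $V\colon\HH_R\to p(\KK\otimes L^2(G))$ intertwining $U_R(g)$ with the restriction of $1_\KK\otimes\lambda(g)$ to $p(\KK\otimes L^2(G))$ and $E(X)$ with the restriction of $p(1_\KK\otimes P(X))$. Since $1_{\HH_S}\otimes V$ conjugates $\alpha_S\otimes\Ad U_R$ into the restriction of $\alpha_S\otimes\Ad(1_\KK\otimes\lambda)$, it restricts to a spatial isomorphism of the two invariant algebras, so it suffices to prove the theorem in the standard form, i.e.\ with $\HH_R=p(\KK\otimes L^2(G))$ and $U_R,E$ the indicated restrictions.

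Next, set $\MM:=\MM_S\otimes B(\KK\otimes L^2(G))$ acting on $\HH_S\otimes\KK\otimes L^2(G)$ with action $\alpha:=\alpha_S\otimes\Ad(1_\KK\otimes\lambda)$. Because each $\alpha_S(g)$ is unital and $p\in(1_\KK\otimes\lambda(G))'$, the projection $1_{\HH_S}\otimes p$ is $G$-invariant, hence lies in $\MM^{\alpha}$. In the standard form one has $\MM_S\otimes B(\HH_R)=(1_{\HH_S}\otimes p)\MM(1_{\HH_S}\otimes p)$, with the compressed action equal to $\alpha_S\otimes\Ad U_R$; the compression lemma proved just above then yields
\begin{equation*}
(\MM_S\otimes B(\HH_R))^{\alpha_S\otimes\Ad U_R}=(1_{\HH_S}\otimes p)\,\MM^{\alpha}\,(1_{\HH_S}\otimes p).
\end{equation*}
I would then apply the flip $\HH_S\otimes\KK\otimes L^2(G)\cong\HH_S\otimes L^2(G)\otimes\KK$, under which $\MM\cong\MM_S\otimes B(L^2(G))\otimes B(\KK)$ and $\alpha\cong(\alpha_S\otimes\Ad\lambda)\otimes\id_{B(\KK)}$. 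Since the action is trivial on the last factor, $\MM^{\alpha}\cong(\MM_S\otimes B(L^2(G)))^{\alpha_S\otimes\Ad\lambda}\otimes B(\KK)$, and Proposition~\ref{prop:comm_thm} identifies the first factor with the crossed product $\MM_S\rtimes_{\alpha_S}G$ on $\HH_S\otimes L^2(G)$. Composing all the identifications, and viewing $p$ in $\lambda(G)'\otimes B(\KK)$ after the flip, gives the claimed isomorphism with $\HH_R\cong p(L^2(G)\otimes\KK)$.

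The individual ingredients are all established above, so I expect the main difficulty to be organisational rather than conceptual. One must check that compression by $1_{\HH_S}\otimes p$ sends the restricted action to \emph{exactly} $\alpha_S\otimes\Ad U_R$, and that $V$, the flip, and the identification $p(\KK\otimes L^2(G))\cong\HH_R$ compose into a single unitary realising the stated equivalence $\HH_R\cong p(L^2(G)\otimes\KK)$, so that the whole isomorphism is genuinely spatial and not merely abstract. A secondary point worth spelling out is the factorisation $(\mathcal{N}\otimes B(\KK))^{\beta\otimes\id}=\mathcal{N}^\beta\otimes B(\KK)$: since $\alpha_S$ need not be spatially implemented, this is cleanest to justify by expanding an invariant element against matrix units for a fixed orthonormal basis of $\KK$ and observing that invariance holds if and only if every matrix entry lies in $\mathcal{N}^\beta$.
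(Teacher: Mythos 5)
Your proposal is correct and follows essentially the same route as the paper: reduction to standard form via Corollary~\ref{cor:QRF_compact_subrep}, the compression lemma $(p\MM p)^{\alpha}=p\MM^{\alpha}p$ for the invariant projection $1_{\HH_S}\otimes p$, the flip $\HH_S\otimes\KK\otimes L^2(G)\cong\HH_S\otimes L^2(G)\otimes\KK$ with the factorisation of the fixed-point algebra, and Proposition~\ref{prop:comm_thm} to identify $(\MM_S\otimes B(L^2(G)))^{\alpha_S\otimes\Ad\lambda}$ with $\MM_S\rtimes_{\alpha_S}G$. Your two extra care points --- tracking the composed unitary to ensure spatiality, and justifying $(\mathcal{N}\otimes B(\KK))^{\beta\otimes\id}=\mathcal{N}^{\beta}\otimes B(\KK)$ via matrix entries --- are sound elaborations of steps the paper treats as immediate.
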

    A crossed product $\MM_S\rtimes_{\alpha_S} G$ has an alternative description in the case that $\alpha_S$ is unitarily implementable on $\HH_S$, i.e.\ when there is a strongly continuous unitary representation $U_S:G\to\mathbf{U}(\HH_S)$ such that $\alpha=\Ad U_S$. This is due to the following result.
    \begin{proposition}[\cite{vanDaele:1978} Prop.\ 2.12]
    \label{prop:crossed_prod_unitary}
        Let $\MM$, $G$ be as in Def.\ \ref{def:crossed_prod}, let $U:G\to\mathbf{U}(\HH)$ a strongly continuous unitary representation, then let $V\in \mathbf{U}(L^2(G,\HH))$ be the unitary operator such that for $\varphi\in L^2(G,\HH)$, $g\in G$
        \begin{equation}
            (V\varphi)(g)=U(g)\varphi(g).
        \end{equation}
        One then has
        \begin{equation}
            \MM\rtimes_{\Ad U}G=V((\MM\otimes 1_{L^2(G)})\cup (U\otimes \rho)(G))''V^*,
        \end{equation}
        where for each $a\in \MM$, $g\in G$
        \begin{equation}
            \label{eq:crossed_prod_alt}
            V(a\otimes 1)V^*=\pi(a),\;V(U(g)\otimes \rho(g))V^*=\tilde{\rho}(g).
        \end{equation}
    \end{proposition}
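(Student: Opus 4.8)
The plan is to establish the two explicit intertwining relations \eqref{eq:crossed_prod_alt} by direct computation and then deduce the stated algebra identity by transporting the defining generators of $\MM\rtimes_{\Ad U}G$ through the unitary $V$. First I would check that $V$ is a well-defined unitary on $L^2(G,\HH)$: strong continuity of $U$ makes $g\mapsto U(g)\varphi(g)$ measurable whenever $\varphi$ is, while $\|U(g)\varphi(g)\|_\HH=\|\varphi(g)\|_\HH$ pointwise shows $V$ is norm-preserving with inverse $(V^*\varphi)(g)=U(g)^*\varphi(g)$. In other words $V$ is simply the fibrewise multiplication operator attached to the unitary-valued field $g\mapsto U(g)$.

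For the first relation I would compute, for $a\in\MM$ and $\varphi\in L^2(G,\HH)$, using $\alpha=\Ad U$ and the definition~\eqref{eq:pi_map} of $\pi$,
\[
(V(a\otimes 1)V^*\varphi)(g) = U(g)\,a\,(V^*\varphi)(g) = U(g)\,a\,U(g)^*\varphi(g) = \alpha(g,a)\varphi(g) = (\pi(a)\varphi)(g),
\]
so that $V(a\otimes 1)V^*=\pi(a)$. This step is pure bookkeeping: it merely says that $V$ conjugates the constant multiplication $a\otimes 1$ into the $g$-dependent multiplication $\pi(a)$, which is exactly what the action $\Ad U$ demands.

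The second relation is the only computation requiring care, and it is where I expect the main (mild) obstacle to lie. Writing $U(g)\otimes\rho(g)$ in the $\HH$-valued picture gives $((U(g)\otimes\rho(g))\psi)(g')=\Delta_G(g)\,U(g)\,\psi(g'g)$, matching $\tilde{\rho}=1_\HH\otimes\rho$ from Definition~\ref{def:crossed_prod}. Taking $\psi=V^*\varphi$, so that $\psi(g'g)=U(g'g)^*\varphi(g'g)=U(g)^*U(g')^*\varphi(g'g)$, and then applying $V$, I obtain
\[
\bigl(V(U(g)\otimes\rho(g))V^*\varphi\bigr)(g') = U(g')\,\Delta_G(g)\,U(g)\,U(g)^*\,U(g')^*\,\varphi(g'g) = \Delta_G(g)\,\varphi(g'g) = (\tilde{\rho}(g)\varphi)(g').
\]
The crucial point is that the $U$-factors telescope: the $U(g)U(g)^*$ in the middle collapse, and the outer $U(g')$ cancels the $U(g')^*$ produced by $V^*$, leaving exactly the scalar right translation $\tilde{\rho}(g)$. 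This cancellation is precisely the reason $V$ is built from the representation $U$, and keeping track of the homomorphism property $U(g'g)^*=U(g)^*U(g')^*$ together with the modular prefactor is the one place where a slip would be easy.

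Finally, since conjugation by a unitary satisfies $(VSV^*)'=VS'V^*$ for an arbitrary set $S\subset B(L^2(G,\HH))$, and hence commutes with the double-commutant operation, $(VSV^*)''=VS''V^*$. Applying this with $S=(\MM\otimes 1)\cup(U\otimes\rho)(G)$ and using the two relations just established, $VSV^*=\pi(\MM)\cup\tilde{\rho}(G)$, I would conclude
\[
\MM\rtimes_{\Ad U}G = (\pi(\MM)\cup\tilde{\rho}(G))'' = \bigl(V(\MM\otimes 1)V^*\cup V(U\otimes\rho)(G)V^*\bigr)'' = V\bigl((\MM\otimes 1_{L^2(G)})\cup(U\otimes\rho)(G)\bigr)''V^*,
\]
which is the asserted identity. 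Apart from the telescoping in the second relation, everything reduces to unwinding the definitions of $\pi$, $\tilde{\rho}$ and the isomorphism $L^2(G,\HH)\cong\HH\otimes L^2(G)$.
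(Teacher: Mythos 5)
Your proof is correct, and it is essentially the standard argument: the paper does not prove this proposition but cites it from van Daele (Prop.\ 2.12 of \cite{vanDaele:1978}), where the same direct verification is carried out — checking $V(a\otimes 1)V^*=\pi(a)$ and $V(U(g)\otimes\rho(g))V^*=\tilde{\rho}(g)$ fibrewise, then using that conjugation by a unitary commutes with taking double commutants. Both telescoping computations and the final double-commutant step check out against Definition~\ref{def:crossed_prod}, so nothing further is needed.
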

    This allows us to give the following parameterisation of invariant observables.
    \begin{corollary}
    \label{cor:inv_unitary_param}
        Let $\MM_S,\HH_S,G,\mathcal{R}=(U_R,E,\HH_R)$ be as in Thm.\ \ref{thm:invariants_crossed_prod}, let $U_S:G\to \mathbf{U}(\HH_S)$ be a strongly continuous unitary representation of $G$ such that $\MM_S$ is closed under $\Ad U_S$ and let $W:\HH_R\to L^2(G)\otimes\KK$ for some separable Hilbert space $\KK$ be as in Thm. \ref{thm:qref_embed} (up to a swapping of tensor products). Let $V\in \mathbf{U}(\HH_S\otimes L^2(G))$ be as in Prop.\ \ref{prop:crossed_prod_unitary}, such that for $\varphi\in \HH_S\otimes L^2(G)\cong L^2(G,\HH_S)$, $g\in G$
        \begin{equation}
            (V\varphi)(g)=U_S(g)\varphi(g),
        \end{equation}
        and define
        \begin{equation}
            \tilde{\MM}_S^G=\{xU_S(g)\otimes\rho(g):x\in \MM_S,g\in G\}''\subset B(\HH_S\otimes L^2(G)).
        \end{equation}
        Then there exists a surjective completely positive unital normal linear map 
        \begin{equation}
            \zen:\tilde{\MM}_S^G\otimes B(\KK)\to (\MM_S\otimes B(\HH_R))^{\Ad U_S\otimes U_R},
        \end{equation}
        given by $\zen(\,\cdot\,)=\tilde{W}^*(\,\cdot\,)\tilde{W}$, where $\tilde{W}:\HH_S\otimes\HH_R\to \HH_S\otimes L^2(G)\otimes\KK$ an isometric embedding given by $\tilde{W}=(V^*\otimes 1_\KK)(1_{\HH_S}\otimes W)$.
    \end{corollary}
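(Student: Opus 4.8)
The plan is to realise $\zen$ as the composition of two structurally transparent maps: a spatial $*$-isomorphism coming from Proposition~\ref{prop:crossed_prod_unitary}, followed by the compression that implements the spatial isomorphism of Theorem~\ref{thm:invariants_crossed_prod}. A preliminary observation is that $\tilde W=(V^*\otimes 1_\KK)(1_{\HH_S}\otimes W)$ is an isometry: using that $V$ is unitary and $W$ is an isometry,
\begin{equation*}
\tilde W^*\tilde W=(1_{\HH_S}\otimes W^*)(V\otimes 1_\KK)(V^*\otimes 1_\KK)(1_{\HH_S}\otimes W)=1_{\HH_S}\otimes W^*W=1_{\HH_S\otimes\HH_R}.
\end{equation*}
Consequently $\zen(\,\cdot\,)=\tilde W^*(\,\cdot\,)\tilde W$ is automatically completely positive and normal (being a conjugation by the single operator $\tilde W$) and unital (since $\tilde W^*\tilde W=1$); the substantive points to establish are that $\zen$ is well defined with image inside the invariant algebra, and that it is surjective.

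First I would identify the domain. By Proposition~\ref{prop:crossed_prod_unitary}, $\MM_S\rtimes_{\Ad U_S}G=V\big((\MM_S\otimes 1_{L^2(G)})\cup(U_S\otimes\rho)(G)\big)''V^*$. The generating set $\{xU_S(g)\otimes\rho(g):x\in\MM_S,\ g\in G\}$ of $\tilde{\MM}_S^G$ contains $x\otimes 1$ (taking $g=1_G$) and $U_S(g)\otimes\rho(g)$ (taking $x=1$), while conversely $xU_S(g)\otimes\rho(g)=(x\otimes 1)(U_S(g)\otimes\rho(g))$; hence $\tilde{\MM}_S^G=\big((\MM_S\otimes 1)\cup(U_S\otimes\rho)(G)\big)''$ and therefore $V\tilde{\MM}_S^G V^*=\MM_S\rtimes_{\Ad U_S}G$. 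Tensoring with $B(\KK)$ then shows that $\Phi(\,\cdot\,)=(V\otimes 1_\KK)(\,\cdot\,)(V^*\otimes 1_\KK)$ is a normal, unital, surjective $*$-isomorphism of $\tilde{\MM}_S^G\otimes B(\KK)$ onto $(\MM_S\rtimes_{\Ad U_S}G)\otimes B(\KK)$.

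Next I would treat the compression $\Psi(\,\cdot\,)=(1_{\HH_S}\otimes W^*)(\,\cdot\,)(1_{\HH_S}\otimes W)$ on $(\MM_S\rtimes_{\Ad U_S}G)\otimes B(\KK)$, so that $\zen=\Psi\circ\Phi$. Here $W$ is the isometry of Theorem~\ref{thm:qref_embed} whose range projection is the projection $p=WW^*$ of Theorem~\ref{thm:invariants_crossed_prod} (cf.\ Corollary~\ref{cor:QRF_compact_subrep}), with $W^*W=1_{\HH_R}$. The crucial step is the factorisation through the corner: since $W^*p=W^*$ and $pW=W$, for every $Z'\in(\MM_S\rtimes_{\Ad U_S}G)\otimes B(\KK)$ one has
\begin{equation*}
\Psi(Z')=(1_{\HH_S}\otimes W^*)\,(1_{\HH_S}\otimes p)Z'(1_{\HH_S}\otimes p)\,(1_{\HH_S}\otimes W),
\end{equation*}
so $\Psi(Z')$ is the image, under the inverse of the spatial isomorphism of Theorem~\ref{thm:invariants_crossed_prod}, of the corner element $(1_{\HH_S}\otimes p)Z'(1_{\HH_S}\otimes p)$. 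This places $\Psi(Z')$ in $(\MM_S\otimes B(\HH_R))^{\Ad U_S\otimes U_R}$, giving well-definedness. For surjectivity I would run the argument backwards: given $A$ in the invariant algebra, $Y=(1_{\HH_S}\otimes W)A(1_{\HH_S}\otimes W^*)$ is a corner element of $(\MM_S\rtimes_{\Ad U_S}G)\otimes B(\KK)$ by Theorem~\ref{thm:invariants_crossed_prod}, and $\Psi(Y)=A$ since $W^*W=1_{\HH_R}$.

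Finally, since $\Phi$ is a $*$-isomorphism and $\Psi$ is a unital compression that is completely positive, normal and (as just shown) onto the invariant algebra, their composition $\zen=\Psi\circ\Phi$ is completely positive, unital, normal, surjective and linear, and a direct computation gives $\Psi(\Phi(\,\cdot\,))=\tilde W^*(\,\cdot\,)\tilde W$. I expect the main obstacle to be purely bookkeeping: one must ensure that the isometry $W$ used here is genuinely the one whose range projection equals the $p$ furnished by Theorem~\ref{thm:invariants_crossed_prod} (matching the tensor-factor orderings, i.e.\ the ``swapping'' noted in the statement), so that the factorisation through the corner is valid and Theorem~\ref{thm:invariants_crossed_prod} can be invoked for both well-definedness and surjectivity. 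Once these identifications are in place, the remaining verifications are routine.
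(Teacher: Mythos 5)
Your proof is correct and follows essentially the same route as the paper: both identify $V\tilde{\MM}_S^G V^*=\MM_S\rtimes_{\Ad U_S}G$ via Proposition~\ref{prop:crossed_prod_unitary}, then invoke Theorem~\ref{thm:invariants_crossed_prod} with $p=WW^*$ so that the compression by $\tilde W$ lands surjectively on the invariant algebra, with complete positivity, unitality and normality coming from conjugation by the isometry $\tilde W$. Your additional explicit checks (the generation of $\tilde{\MM}_S^G$, the corner factorisation via $W^*p=W^*$, $pW=W$, and the backwards surjectivity argument) are details the paper leaves implicit, and your flagged bookkeeping point is precisely what the paper settles by noting that $W$ itself implements the spatial isomorphism of Eq.~\eqref{eq:invariants_crossed_prod}.
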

    \begin{proof}
        By Prop.\ \ref{prop:crossed_prod_unitary}, we have that 
        \begin{equation}
            \MM_S\rtimes_{\Ad U_S} G= V\tilde{\MM}_S^GV^*,
        \end{equation}
        hence by Thm.\ \ref{thm:invariants_crossed_prod} we have
        \begin{equation}
            (\MM_S\otimes B(\HH_R))^{\Ad U_S\otimes U_R}=(1\otimes W^*)((\MM\rtimes_{\Ad U_S} G)\otimes B(\KK))(1\otimes W)= \tilde{W}^*(\tilde{\MM}_S^G\otimes B(\KK))\tilde{W},
        \end{equation}
        where we note that the projection $p$ in Thm.\ \ref{thm:invariants_crossed_prod} is given by $p=WW^*$ and unitary isomorphism $\HH_R\cong p(L^2(G)\otimes \KK)$ yielding the spatial isomorphism of Eq.\ \eqref{eq:invariants_crossed_prod} is given by $W$. It directly follows that the map $\zen$ is surjective.
        
        Since $\tilde{W}$ is an isometry, the map $\zen(\,\cdot\,)=\tilde{W}^*(\,\cdot\,)\tilde{W}$ is a unital normal linear map and it is completely positive by \cite[Thm.\ IV.3.6]{Takesaki2001}. 
    \end{proof}
    The map $\zen$ parameterises the full algebra of invariant operators in a joint system/ref{\-}erence frame algebra. Compare this to the relativisation map $\yen$ considered in Sec.\ \ref{subsec:relativisation1} for principal reference frame, which generates a specific subset of all invariant operators. In the next subsection we shall generalise the definition of the relativisation map $\yen$ to compactly stabilised QRFs, and we will see that in the setting of Cor.\ \ref{cor:inv_unitary_param}, this relativisation map corresponds to a restriction of the map $\zen$. For this reason, we shall refer to $\zen$ as the \textit{extended relativisation map}.
    
    \subsection{Relativisation for compactly stabilised QRFs}\label{subsec:relativisation} 
    We have given a description of all invariant operators in a joint algebra of system and reference frame observables based on the theory of crossed products, provided the reference frame is compactly stabilised and that its value space admits a $G$-invariant measure. As discussed in Sec.\  \ref{subsec:relativisation1}, for principle reference frames a specific subset of these has a concrete interpretation as being system observables defined relative to the reference frame. In particular these are defined as the image of a relativisation map $\yen$ \cite{loveridge2017relativity}. Here we shall give a generalisation of this map to the class of compactly stabilised quantum reference frames. We shall furthermore show that the observables of the form of Eq.\ \eqref{eq:relative_measured}, which arose in the context of measurement theory, are well defined and given in terms of a relativisation map.

    Recall that for a principal reference frame $\mathcal{R}=(U_R,E,\HH_R)$ for a group $G$, the relativisation map $\yen:\MM_S\to (\MM_S\otimes B(\HH_R))^{\alpha_S\otimes\Ad U_R}$  is constructed explicitly in terms of the POVM $E$ and the orbits in $\MM_S$ under the action of $\alpha_S$ of $G$. One may therefore wonder, in our more general setting, which invariant operators may be constructed explicitly from system observables and the POVM, and to what extent these operators may be seen as relativised observables. More precisely, for a compactly stabilised quantum reference frame of the form $(U_R,E,\HH_R)$ with POVM $E:\Bor(G/H)\rightarrow B(\HH_R)$, one may ask which operators make up
    \begin{equation}
        \label{eq:inv_alg}
       (\MM_S\otimes E(\Bor(G/H))'')^{\alpha_S\otimes\Ad U_R} 
    \end{equation}
    with $\alpha_S$ a strongly continuous $G$-action of unital *-automorphisms as before. Under some assumptions on $E$, we can describe this fixed point algebra precisely. Firstly, we consider the following special case.
    \begin{lemma}
    \label{lem:pi_range}
        Let $\MM_S,G,\alpha_S$ be as in Thm.\ \ref{thm:invariants_crossed_prod}, let $(\lambda,P,L^2(G))$ be a sharp quantum reference frame (i.e.\ a system of imprimitivity) for $G$ acting on $G/H$ with $H$ compact, $\lambda$ the left translation action, $P(X)=T_{\chi_{q^{-1}X}}$ for any $X\in \Bor(G/H)$ with $q:G\to G/H$ the usual projection. Then
        \begin{equation}
            (\MM_S\otimes P(\Bor(G/H))'')^{\alpha_S\otimes\Ad \lambda}\cong \pi(\MM_S^{\alpha_S\restriction_H}),
        \end{equation}
        under the standard isomorphism given by $\HH_S\otimes L^2(G)\cong L^2(G,\HH_S)$, with $\pi(x)\in \MM_S\rtimes_{\alpha_S}G$ for $x\in \MM_S$ as defined by Eq.\ \eqref{eq:pi_map} of Def.\ \ref{def:crossed_prod}.
    \end{lemma}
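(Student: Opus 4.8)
The plan is to realise $\MM_S\otimes P(\Bor(G/H))''$ concretely as an algebra of ``multiplication operators'' by $\MM_S$-valued functions, transport the action $\alpha_S\otimes\Ad\lambda$ to these functions, and then simply read off the fixed points. To set this up I would first note that, since $P(X)=T_{\chi_{q^{-1}(X)}}$ is multiplication by the pullback of $\chi_X$, the projections $P(X)$ generate the algebra $P(\Bor(G/H))''\cong L^\infty(G/H)$ of multiplication operators by right $H$-invariant essentially bounded functions on $G$ (those that descend to $G/H$). Disintegrating $L^2(G)$ over $(G/H,\mu_{G/H})$, using that $\mu_{G/H}$ is $G$-invariant and $H$ compact hence unimodular, identifies $P(\Bor(G/H))''$ with the diagonalisable algebra of this decomposition. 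Under $\HH_S\otimes L^2(G)\cong L^2(G,\HH_S)$ and the standard theory of decomposable operators, the von Neumann tensor product becomes
\[
\MM_S\otimes P(\Bor(G/H))''=\{M_{\tilde x}: \tilde x\in L^\infty(G/H,\MM_S)\},\qquad (M_{\tilde x}\varphi)(g)=\tilde x(g)\varphi(g),
\]
where $\tilde x:G\to\MM_S$ is weakly measurable, essentially bounded and right $H$-invariant. The only point needing care here is the correct absorption of the multiplicity fibre $L^2(H)$ into the decomposable-operator picture.

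Next I would compute the action. Using the covariance $\lambda(g)P(X)\lambda(g)^*=P(g.X)$ built into the system of imprimitivity, a calculation on simple tensors $a\otimes P(X)$ shows that $(\alpha_S\otimes\Ad\lambda)(g_0)$ carries $M_{\tilde x}$ to $M_{g_0\cdot\tilde x}$ with $(g_0\cdot\tilde x)(g)=\alpha_S(g_0)\big(\tilde x(g_0^{-1}g)\big)$. Hence $M_{\tilde x}$ lies in the fixed-point algebra exactly when, for every $g_0\in G$, $\tilde x(g_0 g)=\alpha_S(g_0)(\tilde x(g))$ for $\mu_{G/H}$-a.e.\ $g$; that is, when $\tilde x$ is almost-everywhere $G$-equivariant.

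It then remains to prove the two inclusions. For ``$\supseteq$'': given $a\in\MM_S^{\alpha_S\restriction_H}$, the function $\tilde x_a(g)=\alpha_S(g)(a)$ is weakly continuous (continuity of $\alpha_S$), bounded by $\|a\|$ (automorphisms are isometric), and right $H$-invariant because $\alpha_S(h)a=a$; comparing with \eqref{eq:pi_map} gives $M_{\tilde x_a}=\pi(a)$. Since $\pi(\MM_S)\subset\MM_S\rtimes_{\alpha_S}G=(\MM_S\otimes B(L^2(G)))^{\alpha_S\otimes\Ad\lambda}$ by Prop.~\ref{prop:comm_thm}, each $\pi(a)$ is already invariant, and it lies in $\MM_S\otimes P(\Bor(G/H))''$ precisely because $\tilde x_a$ is right $H$-invariant; thus $\pi(\MM_S^{\alpha_S\restriction_H})$ sits inside the fixed-point algebra. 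For ``$\subseteq$'': given an invariant $M_{\tilde x}$, I would set $\beta(g):=\alpha_S(g^{-1})(\tilde x(g))\in\MM_S$ and observe that the equivariance rearranges to $\beta(g_0 g)=\beta(g)$ for every $g_0$ and a.e.\ $g$. A left-translation-invariant element of $L^\infty(G)$ is constant, so testing $\beta$ against a countable separating family of vector functionals (here separability of $\HH_S$ enters) shows $\beta\equiv a$ a.e.\ for a single $a\in\MM_S$. Then $\tilde x(g)=\alpha_S(g)(a)$ a.e., and right $H$-invariance of $\tilde x$ forces $\alpha_S(h)a=a$ for a.e.\ $h\in H$, hence for all $h\in H$ by continuity; therefore $a\in\MM_S^{\alpha_S\restriction_H}$ and $M_{\tilde x}=\pi(a)$.

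The main obstacle is the careful handling of the almost-everywhere quantifiers: promoting ``invariant for each $g_0$ and a.e.\ $g$'' to genuine left-invariance of $\beta$ requires a Fubini argument on $G\times G$ together with joint measurability of $(g_0,g)\mapsto\beta(g_0 g)-\beta(g)$, and the reduction to the scalar statement ``translation-invariant $\Rightarrow$ constant'' must be made uniformly over a countable separating set. Establishing the decomposable-operator identification of the first step, with the multiplicity fibre $L^2(H)$ correctly accounted for, is the other technical input; everything else is a direct computation.
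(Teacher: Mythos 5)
Your proof is correct, but it follows a genuinely different route from the paper's. The paper first establishes $P(\Bor(G/H))''=L^\infty_H(G)$ via the commutant computation $P(\Bor(G/H))''\subset(L^\infty(G)\cup\rho(H))'=L^\infty_H(G)\subset P(\Bor(G/H))''$, then imports two facts from van Daele: the identity $\pi(\MM_S)=(\MM_S\otimes L^\infty(G))^{\alpha_S\otimes\Ad\lambda}$ (extracted from the proof of the commutation theorem) and the covariance relation $\tilde\rho(h)\pi(x)\tilde\rho(h)^*=\pi(\alpha_S(h,x))$. Since $\rho(G)\subset\lambda(G)'$, taking $\Ad\tilde\rho\restriction_H$-fixed points on both sides of van Daele's identity immediately gives $(\MM_S\otimes P(\Bor(G/H))'')^{\alpha_S\otimes\Ad\lambda}=\pi(\MM_S)^{\Ad\tilde\rho\restriction_H}$, and injectivity plus normality of $\pi$ identify the latter with $\pi(\MM_S^{\alpha_S\restriction_H})$. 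You instead realise $\MM_S\otimes L^\infty_H(G)$ concretely as $L^\infty(G;\MM_S)$-multiplication operators (right $H$-invariant fibres), transport the action, and characterise the fixed points by hand via the essential-invariance argument for $\beta(g)=\alpha_S(g^{-1})(\tilde x(g))$ --- in effect you reprove van Daele's fixed-point identity as the special case $H=\{1_G\}$, rather than quoting it. What each approach buys: the paper's argument is short and delegates all measure theory to the cited crossed-product machinery; yours is self-contained and makes transparent \emph{why} invariant elements must be orbit functions $g\mapsto\alpha_S(g)(a)$, at the cost of the technicalities you correctly flag --- the identification $\MM_S\otimes L^\infty(G)\cong L^\infty(G;\MM_S)$ with the $L^2(H)$ multiplicity fibre (which needs a Borel section of $q:G\to G/H$, available since $G$ is second countable), the Fubini/approximate-identity argument promoting ``essentially invariant under each left translation'' to ``a.e.\ constant'', and the countable separating family of normal functionals (using separability of $\HH_S$). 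The only shared ingredient is $P(\Bor(G/H))''\cong L^\infty_H(G)$, and your upgrade from ``$\alpha_S(h)a=a$ for a.e.\ $h\in H$'' to all $h$ via weak continuity of $h\mapsto\alpha_S(h)a$ and full support of the Haar measure is exactly right; both routes are sound.
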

    The proof of this lemma is given in Appendix \ref{appx:relativisation_proofs}.
    
    Observe that for any $x\in \MM_S^{\Ad U_S\restriction_H}$ and $\psi_i\in \HH_S$, $\varphi_i\in L^2(G)$ for $i\in \{1,2\}$, and regarding $\tilde{\varphi}_i:=\psi_i\otimes \varphi_i$ as an element of $ L^2(G,\HH_S)$, one finds
    \begin{align}
        \langle\tilde{\varphi}_1, \pi(x)\tilde{\varphi}_2\rangle=&\int_G\overline{\varphi_1(g)}\varphi_2(g)\langle \psi_1, \alpha_S(g,x)\psi_2\rangle_{\HH_S}\,\diff \mu_G(g)\nonumber\\
        =&\int_{G/H}\langle \psi_1, \alpha_S(g,x)\psi_2\rangle_{\HH_S}\int_H\overline{\varphi_1(gh)}\varphi_2(gh)\,\diff \mu_H(h)\,\diff \mu_{G/H}(gH),
    \end{align}
    where the integrand on second line is independent of the choice of representative $g\in gH$. The PVM $P$ defines for $\varphi_1,\varphi_2$ a complex valued Radon measure $\nu_{\varphi_1,\varphi_2}:\Bor(G/H)\to \CC$ with
    \begin{align}
    \label{eq:proj_measure}
        \nu_{\varphi_1,\varphi_2}(X)=&\langle \varphi_1,P(X)\varphi_2\rangle_{L^2(G)}\nonumber\\
        =&\int_{q^{-1}(X)}\overline{\varphi_1(g)}\varphi_2(g)\,\diff\mu_G(g)\nonumber\\
        =&\int_{X}\left(\int_{H}\overline{\varphi_1(gh)}\varphi_2(gh)\,\diff\mu_H(h)\right)\,\diff\mu_{G/H}(gH).
    \end{align}
    Hence we see
    \begin{equation}
        \label{eq:pi_matrixelt}
        \langle\tilde{\varphi}_1, \pi(x)\tilde{\varphi}_2\rangle=\int_{G/H}\langle \psi_1, \alpha_S(g,x)\psi_2\rangle_{\HH_S}\diff \nu_{\varphi_1,\varphi_2}(gH).
    \end{equation}
    This means we can formally write
    \begin{equation}
        \pi(x)=\int_{G/H}\alpha_S(g,x)\otimes\diff P(gH).
    \end{equation}
    For $H$ trivial, we recognise this expression as the relativisation map with respect to the PVM $P$, such as given in Sec.\ \ref{subsec:relativisation1}. Note furthermore that for $E:\Bor(G/H)\to B(\HH_R)$ given by $E(X)=p(P(X)\otimes 1_\KK)\restriction_{\HH_R}$, where $\HH_R\subset L^2(G)\otimes\KK$ is some $G$-invariant closed subspace and $p$ is the projection onto $\HH_R$, and $x\in M_S^{\alpha_S\restriction_H}$, that
    \begin{equation}
        (1_{\HH_S}\otimes p)(\pi(x)\otimes 1_\KK)\restriction_{\HH_S\otimes\HH_R}=\int_{G/H}\alpha_S(g,x)\otimes\diff E(gH).
    \end{equation}
    Here we naturally view $\HH_S\otimes\HH_R$ as a $G$-invariant subspace of $L^2(G,\HH_S)\otimes\KK$. Noting that by Cor.\ \ref{cor:QRF_compact_subrep} each compactly stabilised quantum reference frame can be brought into the form given above, we can formulate the following definition. 
    \begin{definition}
        \label{def:relativisation_compact_stab}
        Let $\MM_S,G,\alpha_S$ be as in Thm.\ \ref{thm:invariants_crossed_prod}, $H\subset G$ compact as in Thm.\ \ref{thm:qref_embed} and $\mathcal{R}=(U_R,E,\HH_R)$ a QRF for $G$ with value space $G/H$. Let $W:\HH_R\to L^2(G)\otimes \KK$ an isometric embedding for some $\KK$ separable such that $WW^*\in \lambda(G)'\otimes B(\KK)$ as in Thm.\ \ref{thm:qref_embed}, such that
        \begin{equation}
            U_R(g)=W^*(\lambda(g)\otimes 1_\KK) W,\qquad E(X)=W^*(T_{\chi_{q^{-1}(X)}}\otimes 1_\KK)W,
        \end{equation}
        with $q:G\to G/H$ the canonical quotient map, we define the \textup{relativisation map}
        \begin{equation}
\yen:\MM_S^{\alpha_S\restriction_H}\to (\MM_S\otimes E(\Bor(G/H))'')^{\alpha_S\otimes\Ad U_R},
        \end{equation}
        by
        \begin{equation}
            \label{eq:yenproj}
            \yen(x)=(1_{\HH_S}\otimes W^*)(\pi(x)\otimes1_\KK)(1_{\HH_S}\otimes W).
        \end{equation}
    \end{definition}

    The relativisation map $\yen$ provides us with a concrete way of generating invariant operators in the algebra \eqref{eq:inv_alg}. It has the following properties.
    \begin{proposition}
    \label{prop:yen_properties}
        For $\MM_S,G,H,\alpha_S$ as in Def.\ \ref{def:relativisation_compact_stab}, $\mathcal{R}=(U_R,E,\HH_R)$ a (general) quantum reference frame of $G$ with value space $G/H$, the map
        $$\yen:\MM_S^{\alpha_S\restriction_H}\to (\MM_S\otimes E(\Bor(G/H))'')^{\alpha_S\otimes\Ad U_R}$$
        is uniquely defined, meaning that the definition does not depend on a choice of isometry $W:\HH_R\to L^2(G)\otimes\KK$, and is a completely positive, unital, normal linear map. If $E$ is projection valued, then $\yen$ is a *-isomorphism.
    \end{proposition}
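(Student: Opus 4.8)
The plan is to reduce every assertion to structural properties of the map $\pi$ from Definition~\ref{def:crossed_prod} and of compression by the isometry $\tilde W_0:=1_{\HH_S}\otimes W$, since by~\eqref{eq:yenproj} one has $\yen(x)=\tilde W_0^{\,*}(\pi(x)\otimes 1_\KK)\tilde W_0$. The map $x\mapsto\pi(x)$ is a normal, unital $*$-homomorphism of $\MM_S$ (immediate from~\eqref{eq:pi_map}, using that each $\alpha_S(g)$ is a normal automorphism), so $x\mapsto\pi(x)\otimes 1_\KK$ is as well, and conjugation by the isometry $\tilde W_0$ is a normal, unital, completely positive map. Hence the composite $\yen$ is automatically linear, unital, normal and completely positive, which disposes of the algebraic properties at once; it then remains only to identify the range and to treat the projection-valued case.

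For independence of the choice of dilating isometry $W$, I would compute matrix elements of $\yen(x)$ on the total set of product vectors $\psi_i\otimes\phi_i$ (with $\psi_i\in\HH_S$, $\phi_i\in\HH_R$). Using~\eqref{eq:pi_matrixelt} together with $E(X)=W^*(P(X)\otimes 1_\KK)W$, the measure appearing there is exactly $X\mapsto\langle\phi_1,E(X)\phi_2\rangle$, so that
\begin{equation*}
    \langle\psi_1\otimes\phi_1,\yen(x)(\psi_2\otimes\phi_2)\rangle=\int_{G/H}\langle\psi_1,\alpha_S(g,x)\psi_2\rangle_{\HH_S}\,\diff\langle\phi_1,E(\cdot)\phi_2\rangle(gH).
\end{equation*}
The right-hand side involves only $\alpha_S$, $x$ and the QRF data $(U_R,E,\HH_R)$, with no reference to $W$ or $\KK$; since product vectors are total, this determines $\yen(x)$ uniquely.

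To locate the range I would treat the two tensor factors separately. Invariance under $\alpha_S\otimes\Ad U_R$ is inherited from Theorem~\ref{thm:invariants_crossed_prod}: since $\pi(\MM_S)\subset\MM_S\rtimes_{\alpha_S}G$ we have $\pi(x)\otimes 1_\KK\in(\MM_S\rtimes_{\alpha_S}G)\otimes B(\KK)$, so its compression $(1_{\HH_S}\otimes W^*)(\pi(x)\otimes 1_\KK)(1_{\HH_S}\otimes W)=\yen(x)$ lies in the invariant algebra by Theorem~\ref{thm:invariants_crossed_prod} (with $p=WW^*$). For membership in the \emph{smaller} second factor I would use the hypothesis $x\in\MM_S^{\alpha_S\restriction_H}$, which by Lemma~\ref{lem:pi_range} gives $\pi(x)\in\MM_S\otimes P(\Bor(G/H))''$. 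Writing $\pi(x)$ as a $\sigma$-weak limit of elements of the algebraic tensor product $\MM_S\odot P(\Bor(G/H))''$, applying the normal map $\id_{\MM_S}\otimes(W^*(\cdot)W)$, and noting that $W^*(T_f\otimes 1_\KK)W=\int f\,\diff E\in E(\Bor(G/H))''$ for $f\in L^\infty(G/H)$, the normality and $\sigma$-weak closedness of $\MM_S\otimes E(\Bor(G/H))''$ place $\yen(x)$ in this algebra.

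The main obstacle is the projection-valued case. Here the key computation is that when $E$ is a PVM the dilation intertwines exactly, $(P(X)\otimes 1_\KK)W=WE(X)$, which I would verify by expanding $\|(P(X)\otimes 1_\KK)W\phi-WE(X)\phi\|^2$ and using $P(X)^2=P(X)$ and $E(X)^2=E(X)$; taking adjoints then shows $p=WW^*$ commutes with every $P(X)\otimes 1_\KK$, hence with $\pi(x)\otimes 1_\KK$ since $\pi(x)\in\MM_S\otimes P(\Bor(G/H))''$. Multiplicativity follows, because
\begin{equation*}
    \yen(xy)-\yen(x)\yen(y)=\tilde W_0^{\,*}(\pi(x)\otimes 1_\KK)\bigl(1-1_{\HH_S}\otimes p\bigr)(\pi(y)\otimes 1_\KK)\tilde W_0,
\end{equation*}
and commuting $1_{\HH_S}\otimes p$ past $\pi(y)\otimes 1_\KK$ together with $pW=W$ makes the right-hand side vanish; combined with $\yen(x^*)=\yen(x)^*$ this shows $\yen$ is a $*$-homomorphism. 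Finally, for injectivity I would use that a covariant PVM has $E(X)\neq 0$ for every non-empty open $X$ (by transitivity of the $G$-action and second countability), so the probability measures $\langle\phi,E(\cdot)\phi\rangle$ can be concentrated arbitrarily near any point of $G/H$; feeding this into the matrix-element formula and using weak continuity of $g\mapsto\alpha_S(g,x)$ forces $\yen(x)=0$ to imply $x=0$, so that $\yen$ is a $*$-isomorphism onto its image.
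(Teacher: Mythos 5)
Most of your proposal is sound, and two steps are genuinely nicer than the paper's own proof. Your uniqueness argument (matrix elements on product vectors against the measure $X\mapsto\langle\phi_1,E(X)\phi_2\rangle$) and the derivation of complete positivity, unitality, normality and linearity from the fact that $\pi$ is a normal unital $*$-homomorphism coincide with the paper's proof. Where you diverge, you gain: the paper establishes the exact intertwining $WE(X)=(P(X)\otimes 1_\KK)W$ in the projection-valued case only after replacing $W$ by a minimal covariant Naimark dilation via Cattaneo's theorem (Prop.~\ref{prop:cat_impr_thm}), whereas your norm expansion $\|(P(X)\otimes 1_\KK)W\phi-WE(X)\phi\|^2=0$ shows that \emph{any} dilating isometry automatically intertwines a PVM --- more elementary, and it makes the commutation of $p=WW^*$ with $\pi(x)\otimes 1_\KK$ (via Lemma~\ref{lem:pi_range}) and hence multiplicativity immediate. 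Your injectivity argument is also different from the paper's: the paper deduces $\pi(y)=0$ from $(\pi(y)\otimes 1_\KK)(1_{\HH_S}\otimes\Pi_V)=0$ and injectivity of $\pi$, while you concentrate the probability measures $\langle\phi,E(\cdot)\phi\rangle$ near $eH$ (using $E(X)\neq 0$ for open $X\neq\emptyset$, the PVM support property, and weak continuity of $g\mapsto\alpha_S(g,x)$, which descends to $G/H$ by $H$-invariance of $x$). Both are valid. Your $\sigma$-weak approximation argument placing $\yen(x)$ in $\MM_S\otimes E(\Bor(G/H))''$ is in fact more explicit than anything in the paper's proof.

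There is, however, one genuine gap: the proposition's final claim is that $\yen$ is a $*$-isomorphism \emph{onto the stated codomain} $(\MM_S\otimes E(\Bor(G/H))'')^{\alpha_S\otimes\Ad U_R}$, and you only prove it is a $*$-isomorphism ``onto its image'' --- your own closing words. Surjectivity is not automatic and is the substantive half of the claim; the paper relies on it in the main text (the identity $\yen(\MM_S^{\alpha_S\restriction_H})=(\MM_S\otimes E(\Bor(G/H))'')^{\alpha_S\otimes\Ad U_R}$ for sharp frames, stated after Cor.~\ref{cor:yen_measured}). The paper's proof closes this by choosing the minimal dilation, for which $W$ is a unitary onto $\Pi_V(L^2(G)\otimes\KK)$, so that $W^*(L^\infty_H(G)\otimes 1_\KK)W=E(\Bor(G/H))''$ as von Neumann algebras, and then transporting Lemma~\ref{lem:pi_range}, $\pi(\MM_S^{\alpha_S\restriction_H})=(\MM_S\otimes P(\Bor(G/H))'')^{\alpha_S\otimes\Ad\lambda}$, through the spatial isomorphism implemented by $W$ (which intertwines $U_R$ with $\lambda\otimes 1_\KK$). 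You already have all the ingredients in hand --- the $W$-independence of $\yen$ lets you pass to the minimal dilation without loss, and you invoke Lemma~\ref{lem:pi_range} elsewhere --- but for a generic (non-minimal) isometry $W$ the range projection $p=WW^*$ need not act as the identity on the relevant invariant subspace, so the compression of the upstairs fixed-point algebra is not manifestly all of the downstairs one; surjectivity must be argued, not merely asserted or omitted. As written, your proof establishes a $*$-monomorphism, which is strictly weaker than the proposition.
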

    A proof of this result can also be found in Appendix \ref{appx:relativisation_proofs}.
    
    Comparing the relativisation map $\yen$ to the extended relativisation map $\zen$ in the case that the group action $\alpha_S$ is unitarily implemented, one finds that, for $x\in \MM_S^{\alpha_S\restriction_H}$,
    \begin{align}
        \zen(x\otimes 1_{L^2(G)}\otimes 1_{\KK})=&(1_{\HH_S}\otimes W^*)(V(x\otimes 1_{L^2(G)})V^*\otimes 1_{\KK})(1_{\HH_S}\otimes W)\nonumber\\
        =&(1_{\HH_S}\otimes W^*)(\pi(x)\otimes 1_{\KK})(1_{\HH_S}\otimes W)\nonumber\\
        =&\yen(x),
    \end{align}
    with $\KK$, $W$ and $V$ as in Cor.\ \ref{cor:inv_unitary_param}. We can hence conclude that, in the case of a unitary implementable group action on $\MM_S$, the map $\zen$ indeed gives a (not necessarily unique) extension of the relativisation map $\yen$, hence motivating the term \textit{extended relativisation map}.

    For a (normal) state which does not correlate system and reference frame, expectation values of relativised observables can be given in the form of an integral expression:
    \begin{proposition}
        \label{prop:rel_expect}
        For $\MM_S,G,H,\alpha_S,\mathcal{R}$ as in Prop.\ \ref{prop:yen_properties}, let $\omega_S$ and $\omega_R$ be normal states on $\MM_S$ and $B(\HH_R)$ respectively. Then for any $x\in\MM_S^{\alpha_S\restriction_H}$, one finds
        \begin{equation}
            (\omega_S\otimes\omega_R)(\yen(x))=\int_{G/H} \omega_S(\alpha_S(g,x))\,\diff(\omega_{R}\circ E)(gH).
        \end{equation}
    \end{proposition}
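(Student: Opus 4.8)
The plan is to reduce the claim to the case of vector states and then invoke the matrix-element formula \eqref{eq:pi_matrixelt} established just before Definition~\ref{def:relativisation_compact_stab}. Both sides of the asserted identity are separately linear in $\omega_S$ and in $\omega_R$: the left-hand side because $\omega_S\otimes\omega_R$ depends linearly on each factor, and the right-hand side because $\omega_R\circ E$ is linear in $\omega_R$ while the integrand $gH\mapsto\omega_S(\alpha_S(g,x))$ is linear in $\omega_S$ (and is well defined on $G/H$ precisely because $x\in\MM_S^{\alpha_S\restriction_H}$, so that $\alpha_S(gh,x)=\alpha_S(g,x)$ for $h\in H$). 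Both are also bounded bilinear forms in $(\omega_S,\omega_R)$, with bound $\|x\|\,\|\omega_S\|\,\|\omega_R\|$, using that $\yen$ is unital and completely positive (hence contractive) for the left-hand side and that $\omega_R\circ E$ is a probability measure for the right-hand side. Since finite linear combinations of vector functionals are norm-dense in the preduals of $\MM_S$ and of $B(\HH_R)$, it suffices to establish the identity for vector states $\omega_S=\langle\psi,\,\cdot\,\psi\rangle$ and $\omega_R=\langle\Phi,\,\cdot\,\Phi\rangle$ with $\psi\in\HH_S$ and $\Phi\in\HH_R$.

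For such states I would unfold the definition \eqref{eq:yenproj} of $\yen$. Writing $1\otimes W$ for $1_{\HH_S}\otimes W$ and moving $1\otimes W^*$ onto the bra, one gets
\[
(\omega_S\otimes\omega_R)(\yen(x)) = \langle\psi\otimes W\Phi,\,(\pi(x)\otimes 1_\KK)\,\psi\otimes W\Phi\rangle,
\]
with $\psi\otimes W\Phi\in\HH_S\otimes L^2(G)\otimes\KK$. Fixing an orthonormal basis $\{f_k\}$ of $\KK$ and expanding $W\Phi=\sum_k\eta_k\otimes f_k$ with $\eta_k\in L^2(G)$, the factor $1_\KK$ together with orthonormality of the $f_k$ collapses the cross terms, giving $(\omega_S\otimes\omega_R)(\yen(x))=\sum_k\langle\psi\otimes\eta_k,\,\pi(x)(\psi\otimes\eta_k)\rangle$, where each $\psi\otimes\eta_k$ is regarded as an element of $L^2(G,\HH_S)$. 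Applying \eqref{eq:pi_matrixelt} to each summand (with $\psi_1=\psi_2=\psi$ and $\varphi_1=\varphi_2=\eta_k$) turns this into $\sum_k\int_{G/H}\langle\psi,\alpha_S(g,x)\psi\rangle\,\diff\nu_{\eta_k,\eta_k}(gH)$, where $\nu_{\eta_k,\eta_k}(X)=\langle\eta_k,P(X)\eta_k\rangle$. Since the integrand is bounded by $\|x\|\,\|\psi\|^2$ and $\sum_k\nu_{\eta_k,\eta_k}(G/H)=\|W\Phi\|^2<\infty$, dominated convergence lets me interchange the sum and the integral.

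It then remains to identify the total measure $\sum_k\nu_{\eta_k,\eta_k}$. Using $P(X)=T_{\chi_{q^{-1}(X)}}$ one computes $\sum_k\langle\eta_k,P(X)\eta_k\rangle=\langle W\Phi,(P(X)\otimes 1_\KK)W\Phi\rangle=\langle\Phi,E(X)\Phi\rangle=(\omega_R\circ E)(X)$, where the last equality is exactly the Naimark compression $E(X)=W^*(P(X)\otimes 1_\KK)W$ recorded in Definition~\ref{def:relativisation_compact_stab}. Hence $\sum_k\nu_{\eta_k,\eta_k}=\omega_R\circ E$ as measures on $G/H$, and combining this with $\langle\psi,\alpha_S(g,x)\psi\rangle=\omega_S(\alpha_S(g,x))$ yields the desired formula for vector states; the general case follows by the reduction in the first paragraph. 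I expect no genuine conceptual obstacle here, since the content is carried entirely by the earlier matrix-element identity \eqref{eq:pi_matrixelt} and the compression formula for $E$. The only points needing care are the bookkeeping of the three tensor factors $\HH_S$, $L^2(G)$ and $\KK$ together with the isometry $W$, the Fubini interchange, and the density argument reducing general normal states to vector states.
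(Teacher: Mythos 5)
Your argument is correct, and its computational core is the same as the paper's: everything is carried by the matrix-element identity \eqref{eq:pi_matrixelt} together with the compression formula $E(X)=W^*(T_{\chi_{q^{-1}(X)}}\otimes 1_\KK)W$ from Definition~\ref{def:relativisation_compact_stab}. Where you genuinely diverge is in the passage from vectors to general normal states. The paper writes $\omega_{S/R}(\,\cdot\,)=\sum_n\langle\psi_{S/R,n},\,\cdot\,\psi_{S/R,n}\rangle$, treats positive $x$ first, interchanges the $\omega_S$-sum with the integral by dominated convergence, and then handles the $\omega_R$-sum by a Radon--Nikodym argument: the partial-sum measures $\sum_{n\le N}\nu_{\psi_{R,n},\psi_{R,n}}$ have densities $f_N$ with respect to $\mu_R=\omega_R\circ E$ satisfying $\sup_N f_N=1$ almost everywhere. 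Your route replaces that entire limiting apparatus by an abstract one: both sides are bounded bilinear forms in $(\omega_S,\omega_R)$ on the preduals, and finite linear combinations of vector functionals are norm dense there, so vector states suffice. This is arguably slicker and avoids Radon--Nikodym altogether; it also avoids any positivity reduction in $x$. Your orthonormal expansion of $W\Phi$ in $\KK$ is fine (the cross terms collapse because $\pi(x)\otimes 1_\KK$ is diagonal in the $\KK$ factor), though you could have bypassed it by quoting the identity, already established in the proof of Proposition~\ref{prop:yen_properties}, that $\nu_{W\psi_1,W\psi_2}(X)=\langle\psi_1,E(X)\psi_2\rangle$ for arbitrary $\psi_i\in\HH_R$.

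Two small points of hygiene in your reduction. First, the bound you state for the right-hand side, justified ``since $\omega_R\circ E$ is a probability measure'', only applies when $\omega_R$ is a state; but your density argument runs through general normal functionals, for which $\omega_R\circ E$ is merely a complex measure. You then need a total-variation bound, e.g.\ by decomposing $\omega_R$ into positive normal parts, which gives $|\omega_R\circ E|(G/H)\le 2\Vert\omega_R\Vert$ --- still enough for bounded bilinearity. Alternatively, stay inside the state simplex: every normal state is a norm limit of renormalised finite convex combinations of vector states, and both sides are affine and norm-continuous in each state variable. Second, the sum--integral interchange is not literally dominated convergence; what you use is that $\mu:=\sum_k\nu_{\eta_k,\eta_k}$ is a finite measure and the integrand is bounded and measurable (indeed continuous, by weak continuity of $\alpha_S$ and $H$-invariance of $x$), so
\begin{equation}
    \Bigl\vert\int_{G/H} f\,\diff\mu-\sum_{k\le K}\int_{G/H} f\,\diff\nu_{\eta_k,\eta_k}\Bigr\vert\le \Vert f\Vert_\infty\,\Bigl(\mu-\sum_{k\le K}\nu_{\eta_k,\eta_k}\Bigr)(G/H)\to 0.
\end{equation}
Both fixes are one line each; with them in place your proof is complete.
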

    We refer again to Appendix \ref{appx:relativisation_proofs} for the proof. This in particular shows that for $x\in \MM_S^{\alpha_S\restriction_H}$ the formal expression
    \begin{equation} \int_{G/H}\alpha_S(g,x)\otimes \diff E(gH),
    \end{equation}
    is made precise by the operator $\yen(x)$.
    
    We now apply the relativisation map defined above to the setting of relativistic quantum measurement. Recall the discussion at the start of Sec.\ \ref{sec:qrf and measurement}. Let $M$ be a spacetime with a locally compact, second countable Hausdorff group $G$ of (time)-orientation perserving isometries, $\mathfrak{M}$ a set of measurement schemes for a QFT $\Af(M)$ closed under the natural group action of $G$ on measurement schemes, and $\mathcal{R}=(U_R,E,\HH_R)$ a compactly stabilised quantum reference frame for $G$ with value space $G/H$. We assume that $G/H$ resolves the measurement schemes in $\mathfrak{M}$ that are in the same $G$-orbit (see Def. \ref{def:suf_res}, note that compact stabilisation of $\mathcal{R}$ is automatic under the assumptions of \ref{prop:compact_stab}). One can then choose a $\mathfrak{f}:G/H\times \mathfrak{M}/G\to \mathfrak{M}$ satisfying the conditions of Eq.\ \eqref{eq:qrf_measurement_assignment}.
    
    We saw that any such function $\mathfrak{f}$ yields a map $(gH,[m])\mapsto A_{\mathfrak{f}(gH,[m])}\in\Af(M)$ for $gH\in G/H$, $[m]\in\mathfrak{M}/G$, where $A_m$ is the induced system observable for a measurement scheme $m\in\mathfrak{M}$. We furthermore assumed that  all operators of the form $A_{\mathfrak{f}(gH,[m])}$ belong to a von Neumann algebra $\MM_S$ such that the group action of $G$ on $\Af(M)$ extends to a continuous action of unital *-automorphisms $\alpha_S:G\times \MM_S\to\MM_S$.
    One then easily sees that 
    \begin{equation}
        A_{\mathfrak{f}(gH,[m])}=\alpha_S(g,A_{\mathfrak{f}(H,[m])})\,.
    \end{equation}
    By covariance of the map $\mathfrak{f}$, the measurement scheme $\mathfrak{f}(H,[m])$ must be $H$ invariant, so we have $A_{\mathfrak{f}(H,[m])}\in \MM_S^{\alpha_S\restriction_H}$. As a result, one can define
    \begin{equation}\label{eq:yen_measured}
        A_{[m]}:=\yen(A_{\mathfrak{f}(H,[m])})\in (\MM_S\otimes B(\HH_R))^{\alpha_S\otimes \Ad U_R},
    \end{equation}
    and verify that this operator solves Eq.\ \eqref{eq:relative_measured_exp}. This follows directly from applying Prop.~\ref{prop:rel_expect} to Eq.~\eqref{eq:yen_measured}. Summarising, we have proved the following.
    \begin{corollary}
        \label{cor:yen_measured}Under the assumptions just recalled, we have, for $\omega_S$ a normal state on $\MM_S$ and $\omega_R$ on $B(\HH_R)$, that
        \begin{equation}
            (\omega_S\otimes\omega_R)(\yen(A_{\mathfrak{f}(H,[m])}))=\int_{G/H}\omega_S(A_{\mathfrak{f}(gH,[m])})\otimes \diff(\omega_R\circ E)(gH).
        \end{equation}
    \end{corollary}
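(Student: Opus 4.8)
The plan is to derive the corollary as an immediate consequence of Proposition~\ref{prop:rel_expect}, after two preparatory observations. First I would confirm that $A_{\mathfrak{f}(H,[m])}$ lies in the domain $\MM_S^{\alpha_S\restriction_H}$ of the relativisation map $\yen$, as asserted in the text preceding~\eqref{eq:yen_measured}. By the covariance property~\eqref{eq:qrf_measurement_assignment} of $\mathfrak{f}$, the base point $H\in G/H$ is fixed by the stabiliser subgroup $H$, so the measurement scheme $\mathfrak{f}(H,[m])$ is invariant under the action of $H$; Theorem~\ref{thm:measurement_cov} then gives $\alpha_S(h,A_{\mathfrak{f}(H,[m])})=A_{\mathfrak{f}(H,[m])}$ for all $h\in H$, i.e.\ $A_{\mathfrak{f}(H,[m])}\in\MM_S^{\alpha_S\restriction_H}$.

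Second, I would verify that $gH\mapsto A_{\mathfrak{f}(gH,[m])}$ is a well-defined function on the quotient $G/H$ and that the covariance relation $A_{\mathfrak{f}(gH,[m])}=\alpha_S(g,A_{\mathfrak{f}(H,[m])})$ descends to it. This is exactly where the $H$-invariance just established is used: if $g'=gh$ with $h\in H$, then $\alpha_S(g',A_{\mathfrak{f}(H,[m])})=\alpha_S(g)\alpha_S(h)A_{\mathfrak{f}(H,[m])}=\alpha_S(g)A_{\mathfrak{f}(H,[m])}$, so the value depends only on the coset $gH$.

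With these in hand, the corollary follows by applying Proposition~\ref{prop:rel_expect} with $x=A_{\mathfrak{f}(H,[m])}$, which gives
\begin{equation*}
    (\omega_S\otimes\omega_R)(\yen(A_{\mathfrak{f}(H,[m])}))=\int_{G/H}\omega_S(\alpha_S(g,A_{\mathfrak{f}(H,[m])}))\,\diff(\omega_R\circ E)(gH),
\end{equation*}
and then substituting $\alpha_S(g,A_{\mathfrak{f}(H,[m])})=A_{\mathfrak{f}(gH,[m])}$ into the integrand; the tensor symbol in the statement reduces to ordinary multiplication of scalars after pairing with the product state $\omega_S\otimes\omega_R$.

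I do not expect any genuine obstacle here, since Proposition~\ref{prop:rel_expect} already carries all the analytic content, including the well-definedness of the operator-valued integral~\eqref{eq:relative_measured} and the integrability of the integrand. The only point meriting care is the Borel measurability and integrability of $gH\mapsto\omega_S(A_{\mathfrak{f}(gH,[m])})$ on $G/H$; these are inherited from the hypotheses of Proposition~\ref{prop:rel_expect} via the identification of this function with $gH\mapsto\omega_S(\alpha_S(g,x))$ for $x=A_{\mathfrak{f}(H,[m])}$, which is continuous on $G$ by weak continuity of $\alpha_S$ and descends to the quotient by the invariance established above.
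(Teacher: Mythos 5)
Your proposal is correct and follows essentially the same route as the paper: establish $A_{\mathfrak{f}(H,[m])}\in\MM_S^{\alpha_S\restriction_H}$ from the covariance of $\mathfrak{f}$ together with Theorem~\ref{thm:measurement_cov}, identify $A_{\mathfrak{f}(gH,[m])}=\alpha_S(g,A_{\mathfrak{f}(H,[m])})$, and then apply Proposition~\ref{prop:rel_expect}. Your explicit check that the assignment descends to cosets (via $\alpha_S(gh)=\alpha_S(g)\alpha_S(h)$ and the $H$-invariance) is a detail the paper leaves implicit, but it is the same argument.
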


    As the operators $A_{[m]}$ are invariant under the action of $G$, we can regard them as physical observables from the perspective of our operational framework. We have the following inclusions following from \ref{prop:yen_properties}.
    \begin{equation}
         A_{[m]}\in \yen(\MM_S^{\alpha_S\restriction_H})''\subset(\MM_S\otimes E(\Bor(G/H))'')^{\alpha_S\otimes\Ad { U_R}}\subset (\MM_S\otimes B(\HH_R))^{\alpha_S\otimes\Ad { U_R}},
     \end{equation}
     and that $\yen(\MM_S^{\alpha_S\restriction_H})=\yen(\MM_S^{\alpha_S\restriction_H})''=(\MM_S\otimes E(\Bor(G/H))'')^{\alpha_S\otimes\Ad { U_R}}$ if the quantum reference frame $\mathcal{R}$ is sharp, i.e.\ if its covariant POVM is projection valued, i.e.~if $E(X\cap Y)=E(X)E(Y)$ for each $X,Y\in\Bor(G/H)$.

    In this section we have so far established a relation between invariant operators of a system algebra combined with a compactly stabilised quantum reference frame on the one hand and crossed product von Neumann algebras on the other. 
    We have seen in particular that these crossed product algebras can be used to parameterise the algebra of invariant operators through a compression. Furthermore, we can identify a particular subalgebra of these crossed products that parameterises the space of relativised observables, which is a subspace of the full algebra of invariant operators that appears in the literature on quantum reference frames and has a straightforward physical interpretation as discussed in Sec.\ \ref{subsec:relativisation1}.

    The condition of compact stabilisation arose from the requirement that the QRF should resolve equivalence classes of measurement schemes with compactly localised couplings, see Sec.\ \ref{sec:qrf_measurement_general}. Hence we have a concrete physical motivation for the use of this class of reference frames. This is a significant advance on the results of \cite{chandrasekaran2023algebra}, which only studied one example of an `observer' (i.e., for us, QRF), and which therefore did not provide insight into which properties of this example are important in order to obtain the description as a crossed product; by contrast, we have obtained a much more general class of QRFs, motivated by the need for relative measurements, for which the crossed product description arises naturally.
    
    We can now ask to what extent the results of \cite{chandrasekaran2023algebra} on the type structure of the algebra of invariants also extend to our generalised setting.  To derive a concrete result of this kind, we shall however need to make some further assumptions on the group $G$ acting on the system algebra.
    
\subsection{Invariant operators for $G=\RR\times H$ with compact $H$}
\label{sec:RHcase}
In what follows, we will apply the characterisation of $G$-invariants in a joint system/reference frame algebra in the specific case where $G = \RR \times H$, where $H$ is assumed to be a compact second countable Hausdorff group. Note that this group $H$ does not necessarily correspond to a stabiliser subgroup of the value space for the quantum reference frame.  As in the preceding, our system observables are given by a von Neumann algebra $\mathcal{M}_S$ equipped with a continuous action $\alpha_S$ of $G$ by unital *-algebra automorphisms (in both the ($\sigma$-)weak and strong operator topologies). The quantum reference frames that we consider for $G$ are compactly stabilised.

\subsubsection{Reference frames}
\label{sec:RHQRF}
Under the same assumptions as before, a compactly stabilised quantum reference frame $( U_{R} , E ,\HH_{R} )$ for the group $G=\RR\times H$ is defined on a left $G$-space $\Sigma\cong \RR\times H/H'$, where $H'\subset H$ is a compact subgroup. Note that since compact groups are unimodular, the value space $\Sigma$ will always admit a nonzero left $G$-invariant Radon measure (see Appendix \ref{appx:G_integrals}). Because these reference frames are compactly stabilised, by Thm.\ \ref{thm:qref_embed} they can always be described via an isometric embedding of the unitary representation $( \mathcal{H}_{R} , U_{R} )$ of $G=\RR\times H$ into the unitary representation $(L^2(G)\otimes \KK,\lambda\otimes 1_\KK)$ (This is a special case of the general embedding of quantum reference frames into an induced representation, see Prop.\ \ref{prop:cat_impr_thm}). Decomposing $L^2(G)\cong L^2(\RR)\otimes L^2(H)$, the quantum reference frame $( U_{R} , E ,\HH_{R} )$ can be described in terms of an isometric embedding 
\begin{equation}
    W:\HH_R\to L^2(\RR)\otimes L^2(H)\otimes \KK.
\end{equation}
This map $W$ must be such that it intertwines the group action, i.e.\ for each $(t,h)\in \RR\times H$ 
\begin{equation}
    W U_R(t,h)=(\lambda_\RR(t)\times \lambda_{H}(h)\otimes 1_\KK)W,
\end{equation}
with $\lambda_\RR\times\lambda_{H}$ the diagonal regular left translation action of $\RR\times H$ on $L^2(\RR)\otimes L^2(H)$. Furthermore, for $X\in \Bor(\RR\times H/H')$ we must have
\begin{equation}
    E(X)=W^*P(X)W,
\end{equation}
with $P:\Bor(\RR\times H/H')\to L^\infty(\RR)\otimes L^\infty (H)\otimes 1_\KK$ the PVM defined so that for each $U\subset \RR$ and $V\subset H/H'$ one has
\begin{equation}
    P(U\times V)=T_{\chi_U}\otimes T_{\chi_{q^{-1}V}}\otimes 1_\KK,
\end{equation}
with $q:H\to H/H'$ the canonical quotient map. 
Concretely, this means that 
\begin{equation}
    \HH_R\cong p (L^2(\RR)\otimes L^2(H)\otimes \KK),
\end{equation}
with $p=W^*W\in \lambda_\RR(\RR)'\otimes \lambda_{H}(H)'\otimes B(\KK)$, such that under this isomorphism $U_R\cong \lambda_\RR\times \lambda_H\otimes 1_\KK$. By acting on the $L^2(\RR)$ entry with a Fourier transform, we show that this Hilbert space decomposes into a direct integral (see \cite[Ch.\ IV.8]{Takesaki2001} for the relevant definitions) of the following form.
\begin{proposition}
\label{prop:spect_decomp}
    Let $(U_R,E,\HH_R)$ be a compactly stabilised quantum reference frame for the group $\RR\times H$ with $H$ a compact second countable Hausdorff group. Then 
    \begin{equation}
        \HH_R\cong\int_\RR^\oplus \KK(\xi)\diff\xi,
    \end{equation}
    where $(U_{H}(\xi),\KK(\xi))$ is a measurable field of separable unitary representations of $H$ and under this isomorphism one has for any $(t,g)\in \RR\times H$ and $\psi\in \int_\RR^\oplus \KK(\xi)\diff\xi$ that
    \begin{equation}
        (U(t,g)\psi)(\xi)=\exp(it\xi)U_{H}(\xi)(g)\psi(\xi).
    \end{equation}
    This decomposition is unique up to almost everywhere unitary equivalence of the field of $\KK(\xi)$.  
\end{proposition}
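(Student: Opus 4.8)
The plan is to diagonalise the action of the time-translation subgroup $\RR$ by a Fourier transform and then read off the direct-integral decomposition from the theory of decomposable operators. Recall from the discussion preceding the statement that, after applying the intertwining isometry $W$, we may identify $\HH_R$ with the range of a projection $p\in\lambda_\RR(\RR)'\otimes\lambda_H(H)'\otimes B(\KK)$ inside $L^2(\RR)\otimes L^2(H)\otimes\KK$, with $U_R$ identified with the restriction of $\lambda_\RR\times\lambda_H\otimes 1_\KK$. I would apply the Fourier transform $\mathcal F$ on the $L^2(\RR)$ factor, normalised so that $\mathcal F\lambda_\RR(t)\mathcal F^*=T_{e^{it\,\cdot}}$, multiplication by $\xi\mapsto e^{it\xi}$. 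Conjugating by $\mathcal F\otimes 1\otimes 1$ replaces $\lambda_\RR(t)\times\lambda_H(h)\otimes 1_\KK$ by $T_{e^{it\,\cdot}}\otimes\lambda_H(h)\otimes 1_\KK$ and carries $p$ to a projection $\tilde p$ commuting with all of these. Viewing $L^2(\RR)\otimes L^2(H)\otimes\KK$ as the constant direct integral $\int_\RR^\oplus(L^2(H)\otimes\KK)\,\diff\xi$, the one-parameter group $t\mapsto T_{e^{it\,\cdot}}\otimes 1\otimes 1$ has self-adjoint generator equal to multiplication by $\xi$, which generates the full diagonalisable algebra $L^\infty(\RR)\otimes\CC\otimes\CC$.

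The key step is to observe that $\tilde p$ commutes with $T_{e^{it\,\cdot}}\otimes 1\otimes 1$ for all $t$ (taking $h=1_H$), hence with the whole diagonalisable algebra, and is therefore a decomposable operator, by the commutation theorem identifying the commutant of the diagonalisable algebra with the decomposable operators \cite[Ch.\ IV.8]{Takesaki2001}; thus $\tilde p=\int_\RR^\oplus p(\xi)\,\diff\xi$ for a measurable field of projections $p(\xi)$ on $L^2(H)\otimes\KK$. Since $\tilde p$ also commutes with each constant operator $1\otimes\lambda_H(h)\otimes 1_\KK=\int_\RR^\oplus(\lambda_H(h)\otimes 1_\KK)\,\diff\xi$ (taking $t=0$), comparing fibres shows that $p(\xi)$ commutes with $\lambda_H(h)\otimes 1_\KK$ for almost every $\xi$, for each fixed $h$. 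Fixing a countable dense subset of $H$ and invoking strong continuity of $\lambda_H$ upgrades this to simultaneous almost-everywhere commutation for all $h\in H$. I would then set $\KK(\xi):=p(\xi)(L^2(H)\otimes\KK)$ and $U_H(\xi)(h):=(\lambda_H(h)\otimes 1_\KK)\restriction_{\KK(\xi)}$, obtaining a measurable field of separable unitary representations of $H$ (separability holding because $L^2(H)\otimes\KK$ is separable). Under the resulting identification $\HH_R\cong\int_\RR^\oplus\KK(\xi)\,\diff\xi$ the action takes the desired form $(U(t,g)\psi)(\xi)=e^{it\xi}U_H(\xi)(g)\psi(\xi)$.

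For uniqueness I would note that the von Neumann algebra generated by $\{U_R(t,1_H):t\in\RR\}$ is intrinsic to the representation and, under the decomposition, coincides with the diagonalisable algebra $L^\infty(\RR)$; in particular the base measure lies in the Lebesgue class (so that the generator of time translations has purely continuous spectrum), and the standard uniqueness theorem for direct-integral decompositions relative to a fixed diagonalisable algebra \cite[Ch.\ IV.8]{Takesaki2001} then gives uniqueness of the field $\xi\mapsto\KK(\xi)$ up to almost-everywhere unitary equivalence. I expect the main obstacle to be purely technical: ensuring the measurability of the field $p(\xi)$ and arranging its commutation with $\lambda_H(h)\otimes 1_\KK$ simultaneously almost everywhere in $\xi$ for all $h$. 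Both points are resolved using the separability of $H$ together with the decomposable-operator machinery rather than any new idea, so the substance of the argument is the reduction---via the Fourier transform---to a projection lying in the commutant of a diagonalisable algebra.
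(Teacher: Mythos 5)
Your proposal is correct and follows essentially the same route as the paper's own proof: dilate via Theorem~\ref{thm:qref_embed}, conjugate the $L^2(\RR)$ factor by the Fourier transform so that the transformed projection commutes with the diagonalisable algebra $L^\infty(\RR)$, decompose it over the constant field $\int_\RR^\oplus (L^2(H)\otimes\KK)\,\diff\xi$ using the decomposable-operator machinery of \cite[Ch.~IV.8]{Takesaki2001}, and define $\KK(\xi)$ and $U_H(\xi)$ by restricting $\lambda_H\otimes 1_\KK$ to the ranges of the fibre projections. The only cosmetic differences are that the paper gets the fibrewise $H$-invariance of $\hat p(\xi)$ directly from the membership $\hat p\in L^\infty(\RR)\otimes\lambda_H(H)'\otimes B(\KK)$ rather than via your countable-density argument, and proves uniqueness by hand (extending the intertwiner's commutation from the exponentials to all of $L^\infty(\RR)$ and comparing fibre dimensions through measurable fields of orthonormal bases) where you invoke the standard uniqueness theorem for direct integrals relative to the diagonalisable algebra—interchangeable standard lemmas, not a different argument.
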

A proof is given in Appendix.\ \ref{appx:spect_decomp_proof}. This decomposition can now be used to define a spectral multiplicity function.
\begin{definition}
\label{def:spec_mult}
    Let $(U_R,E,\HH_R)$ be a compactly stabilised quantum reference frame for the group $\RR\times H$ with $H$ a compact second countable Hausdorff group, any direct integral representation of $\HH_R$ as in Prop.\ \ref{prop:spect_decomp} 
    \begin{equation}
        \HH_R\cong \int_\RR^\oplus \KK(\xi)\diff \xi,
    \end{equation}
    will be called a \textup{spectral decomposition}, with corresponding \textup{spectral multiplicity function} $m_{U_R}:\RR\to \mathbb{N}_0\cup\{\infty\}$ given by
    \begin{equation}
        m_{U_R}(\xi)=\dim(\KK(\xi)).
    \end{equation}
\end{definition}
Note that by \cite[Lem.\ IV.8.12]{Takesaki2001} the spectral multiplicity function is always (Borel) measurable and by the essential uniqueness of the spectral decomposition it is essentially uniquely defined by the unitary representation $(U_R(.,1_H),\HH_R)$ of $\RR$, i.e. up to a zero-measure set with respect to the Lebesgue measure on $\RR$. The spectral multiplicity as defined here resembles the multiplicity function for general normal bounded operators on separable Hilbert spaces as given in \cite[Ch.\ IX \S 10]{Conway}. We shall see in Sec.\ \ref{sec:type_change_thm} that this spectral multiplicity function has implications for the structure of the algebra of
invariant joint observables of the system and reference frame.

\subsubsection{Invariant observables}
\label{sec:RHinv_obs}
Let $( U_{R} , E ,\mathcal{H}_{R} )$ be a compactly stabilised QRF for the group $G=\RR\times H$ acting on $\Sigma=\RR\times H/H'$ defined by an isometric embedding $W:\HH_R\to L^2(\RR)\otimes L^2(H)\otimes \KK$ as above. Via Thm.~\ref{thm:invariants_crossed_prod}, we shall parameterise the von Neumann algebra of \emph{invariant operators} 
\begin{equation}
    ( \MM_S \otimes B(\mathcal{H}_{R}) )^{\alpha_S\otimes \Ad U_R}
\end{equation} by the algebra 
\begin{equation}
    (\MM_S\otimes B(L^2(\RR)\otimes L^2(H)))^{\alpha_S\otimes \Ad \lambda_\Sigma\times \lambda_H}\otimes B(\KK).
\end{equation} Since $G=\RR\times H$, operators are invariant under the action of $G$ if and only if they are invariant under the action of both $\RR$ and $H$, or
\begin{multline}
\label{eq:RHfixedpt}
    (\MM_S\otimes B(L^2(\RR)\otimes L^2(H)))^{\alpha_S\otimes \Ad \lambda_\RR\times \lambda_H}=\\\left((\MM_S\otimes B(L^2(\RR)\otimes L^2(H)))^{\alpha_S\restriction_{\RR}\otimes \Ad (\lambda_{\RR}\otimes 1_{L^2(H)})}\right)^{\alpha_S\restriction_H\otimes \Ad (1_{L^2(\RR)}\otimes \lambda_H)},
\end{multline}
where we have used $\alpha_S\restriction_\RR$ and $\alpha_S\restriction_H$ as a shorthand for the group actions of $\RR$ and $H$ given by the maps $t\mapsto \alpha_S((t,1_H),.)$ and $h\mapsto \alpha_S((1_\RR,h),.)$ respectively. We can hence analyse this fixed point algebra in steps, where we first consider the fixed points under the action of $\RR$ and then restrict further to the fixed points of $H$. The latter fixed points can be described either via a crossed product or via an averaging prescription.

\paragraph{A step wise analysis of the fixed point algebra Eq.~\eqref{eq:RHfixedpt}} We first consider the fixed points with respect to the action of $\RR$, we see that
\begin{align}
    (\MM_S\otimes B(L^2(\RR)\otimes L^2(H))&)^{\alpha_S\restriction_{\RR}\otimes \Ad(\lambda_{\RR}\otimes 1_{L^2(H)})}\nonumber\\&\cong (\MM_S\otimes B(L^2(\RR))^{\alpha_S\restriction_{\RR}\otimes \Ad \lambda_\RR}\otimes B( L^2(H))\nonumber\\
    &\cong (\MM_S\rtimes_{\alpha_S\restriction_{\RR}} \RR)\otimes B( L^2(H)),
\end{align}
where we have used natural isomorphisms and Prop.\ \ref{prop:comm_thm}. Applying this to Eq.\ \eqref{eq:RHfixedpt}, this gives
\begin{align}
\label{eq:RHRcrossed}
    (\MM_S\otimes B(L^2(\RR)\otimes L^2(H))&)^{\alpha_S\otimes \Ad \lambda_\RR\times \lambda_H}\nonumber\\
    &\cong \left((\MM_S\rtimes_{\alpha_S\restriction_{\RR}} \RR)\otimes B( L^2(H))\right)^{\alpha_S\restriction_H\otimes \Ad(1_{L^2(\RR)}\otimes \lambda_H)}\,,
\end{align}
where we note that $\alpha_S(h)\restriction_H\otimes \Ad 1_{L^2(\RR)}$ acts as a unital *-automorphism on $(\MM_S\rtimes_{\alpha_S\restriction_{\RR}} \RR)$ via the identification $L^2(\RR,\HH_S)\cong \HH_S\otimes L^2(\RR)$, such that $(\MM_S\rtimes_{\alpha_S\restriction_{\RR}} \RR)\subset \MM_S\otimes B(L^2(\RR))$. The fixed point algebra of Eq.\ \eqref{eq:RHRcrossed} can be expressed in various ways, either via a crossed product over $H$
\begin{multline}
    \left((\MM_S\rtimes_{\alpha_S\restriction_{\RR}} \RR)\otimes B( L^2(H))\right)^{\alpha_S\restriction_H\otimes \Ad (1_{L^2(\RR)}\otimes \lambda_H)}\\
    \cong \left((\MM_S\rtimes_{\alpha_S\restriction_{\RR}} \RR)\rtimes_{\alpha_S(h)\restriction_H\otimes \Ad 1_{L^2(\RR)}}H\right),
\end{multline}
or, as $H$ is compact, via an averaging prescription
\begin{multline}
    \left((\MM_S\rtimes_{\alpha_S\restriction_{\RR}} \RR)\otimes B( L^2(H))\right)^{\alpha_S\restriction_H\otimes \Ad (1_{L^2(\RR)}\otimes \lambda_H)}\\=\left\{\int_H (\alpha_S\restriction_H\otimes\Ad(1_{L^2(\RR)}\otimes \lambda_H))(h,x)\,\diff\mu_H(h):x\in (M_S\rtimes_{\alpha_S\restriction_\RR}\RR)\otimes B(L^2(H))\right\}.
\end{multline}

\paragraph{Parameterising the invariant observables} For a QRF $(U_R,E,\HH_R)$ as given above, which can be described using the isometry $W:\HH_R\to L^2(\RR)\otimes L^2(H)\otimes \KK$, this entails that
\begin{multline}
    \left(\MM_S\otimes B(\HH_R)\right)^{\alpha_S\otimes \Ad U_R}\\\cong(1_{\HH_S}\otimes W)^*\left(\left((\MM_S\rtimes_{\alpha_S\restriction_{\RR}} \RR)\otimes B( L^2(H))\right)^{\alpha_S\restriction_H\otimes \Ad (1_{L^2(\RR)}\otimes \lambda_H)}\otimes B(\KK)\right)(1_{\HH_S}\otimes W)\\=\left((1_{\HH_S}\otimes W)^*\left((\MM_S\rtimes_{\alpha_S\restriction_{\RR}} \RR)\otimes B( L^2(H)\otimes \KK)\right)(1_{\HH_S}\otimes W)\right)^{\alpha_S\restriction_H\otimes\Ad U_R\restriction_H}.
\end{multline}
The fixed points under the $H$ action can now be directly obtained by an averaging prescription. In analysing the structure of $\left(\MM_S\otimes B(\HH_R)\right)^{\alpha_S\otimes \Ad U_R}$, 
it may sometimes prove convenient to first analyse the structure of the algebra of observables invariant under the action of $\RR$ 
\begin{multline}
    \left(\MM_S\otimes B(\HH_R)\right)^{\alpha_S\restriction_\RR\otimes \Ad U_R\restriction_\RR}\\\cong (1_{\HH_S}\otimes W)^*\left((\MM_S\rtimes_{\alpha_S\restriction_{\RR}} \RR)\otimes B( L^2(H)\otimes \KK)\right)(1_{\HH_S}\otimes W),
\end{multline}
and then note that 
\begin{equation}
    \left(\MM_S\otimes B(\HH_R)\right)^{\alpha_S\otimes \Ad U_R}\subset \left(\MM_S\otimes B(\HH_R)\right)^{\alpha_S\restriction_\RR\otimes \Ad U_R\restriction_\RR}.
\end{equation}
We shall exploit this feature in the following section, where we prove that the algebra $\left(\MM_S\otimes B(\HH_R)\right)^{\alpha_S\restriction_\RR\otimes \Ad U_R\restriction_\RR}$, and hence also $\left(\MM_S\otimes B(\HH_R)\right)^{\alpha_S\otimes \Ad U_R}$, is a finite von Neumann algebra under certain assumptions on $\alpha_S\restriction_\RR$ and the spectral multiplicity function $m_{U_R}$.

\section{(Semi-)finiteness of the invariant observable algebra}
\label{sec:type_change} 

Consider a spacetime $M$, with group of smooth time-orientation and orientation preserving isometries $G$.
In Sec.~\ref{sec: measurement in QFT}, we saw how for a $G$-covariant QFT on $M$, defined by a net $N\mapsto\Af(M;N)$ of $C^*$-algebras, any faithful state $\omega$ determines a faithful GNS representation which in turn defines a net of von Neumann algebras, to be denoted $\mathfrak{A}(M;N)$, and which are typically expected to be of type $\textnormal{III}$ for $N$ sufficiently regular. By assumption, the group $G$ has an action $\alpha:G\times \Af(M)\to \Af(M)$ on $\Af(M)$, where $\alpha(g)$ are unital *-automorphisms. If furthermore a state $\omega$ is invariant under $\alpha$, i.e. for each $a\in \Af(M)$ one has $\omega(\alpha(g,a))=\omega(a)$, then for the associated GNS representation $(\HH,\pi,\Omega)$ one has a unitary representation $U:G\to \mathbf{U}(\HH)$ such that $\pi(\alpha(g,a))=U(g)\pi(a)U(g)^*$ and $U(g)\Omega=\Omega$, see e.g.\ \cite[Prop.\ 5.1.17]{AdvAQFT}. In the particular case where $\RR\subset G$ forms a one-paramenter subgroup, it can be shown that the one-paramenter unitary group $U\restriction_\RR:\RR\to \mathbf{U}(\HH)$ is strongly continuous (and hence defines a generally unbounded self adjoint generator) if and only if for each $a\in \Af(M)$ 
\begin{equation}
\label{eq:strong_cont_action}
    \lim_{t\to 0}\omega(a^*\alpha(t,a))=\omega(a^*a).
\end{equation}
Note that, if a region $N\in \Reg(M)$ is closed under the action $\alpha$, it follows that $\mathfrak{A}(M;N)$ is closed under the action $\Ad U$. By a slight abouse of notation, we shall also denote this action by $\alpha:G\times \mathfrak{A}(M;N)\to \mathfrak{A}(M;N)$.

Next we discuss an important class of states on $\mathcal{M}\equiv \mathfrak{A}(M;N)$, relevant for the analysis of type reduction, namely the \emph{KMS states}. These states are a natural generalisation of Gibbs states defined in the context of statistical mechanics, see e.g.\ \cite{cmp/1103840050} or \cite[Ch.\ V.1]{Haag:book}. Here is the formal definition, following \cite[Def.~5.3.1 and Prop.~5.3.7]{Bratteli2}.

\begin{definition}
    For a von Neumann algebra $\mathcal{M}$ with a continuous $\RR$-action of  unital *-automorphisms  $\alpha:\RR\times\mathcal{M}\to\MM$ (equivalently in the ($\sigma$-)weak and strong operator topologies), a normal state $\omega$  (i.e. $\sigma$-weakly continuous, or equivalently given in terms of a positive trace-class operator with unit trace, as in Sec.~\ref{Sec:QRFs}) is a \emph{KMS state} with respect to $\alpha$ of inverse temperature $\beta>0$ if for each $a,b\in\MM$, there is a bounded continuous function $F_{x,y}:S_\beta\to\CC$ defined on the strip 
    \begin{equation}
        S_\beta:=\{z\in \CC: 0\leq\Im(z)\leq \beta\},
    \end{equation} such that $F_{x,y}$ is holomorphic on the interior of $S_\beta$ and for $t\in\RR$
\begin{equation}
    F_{x,y}(t)=\omega(y\alpha(t,x)),\qquad F_{x,y}(t+i\beta)=\omega(\alpha(t,x)y).
\end{equation}
\end{definition}

The KMS condition also finds applications in the general theory of von Neumann algebras, via a set of results which have become known as modular or Tomita-Takasaki theory, see e.g.\ \cite{takesaki1970tomita,vanDaele:1978,takesaki2002}. A key result in this theory is the following
\begin{theorem}
    \label{thm:modular_action}
    For $\MM$ a ($\sigma$-finite) von Neumann algebra and $\phi:\MM\to\CC$ a faithful positive normal linear functional, there exists a unique strongly continuous group action $\sigma:\mathbb{R}\times\MM\to \MM$ such that $\sigma(t,.)$ is a *-automorphism for which $\phi$ satisfies a KMS condition as follows: For any $x,y\in \MM$ there exists a bounded and continuous function 
    \begin{equation}
        F_{x,y}:\{z\in \CC:0\leq \Im(z)\leq 1\}\to \CC
    \end{equation}
    that is holomorphic on the interior of its domain, such that for $t\in \RR$
    \begin{equation}
        F_{x,y}(t)=\phi(\sigma(t,x)y),\qquad F_{x,y}(t+i)=\phi(y\sigma(t,x)).
    \end{equation}
\end{theorem}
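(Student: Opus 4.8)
The plan is to derive the statement from Tomita--Takesaki modular theory. Since $\MM$ is $\sigma$-finite, it possesses a faithful normal state, and the given faithful positive normal functional $\phi$ yields a GNS representation $(\HH,\pi,\Omega)$ with cyclic vector $\Omega$; faithfulness of $\phi$ ensures that $\Omega$ is also separating for $\pi(\MM)$. Identifying $\MM$ with $\pi(\MM)$, I may thus assume that $\MM$ acts on a Hilbert space $\HH$ with cyclic and separating vector $\Omega$ and $\phi(x)=\langle\Omega,x\Omega\rangle$.

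First I would introduce the conjugate-linear operator $S_0$ with dense domain $\MM\Omega$ defined by $S_0 x\Omega=x^*\Omega$, show it is closable, and let $S$ be its closure. Its polar decomposition $S=J\Delta^{1/2}$ produces the \emph{modular conjugation} $J$ (an antiunitary involution) and the \emph{modular operator} $\Delta$ (a positive, in general unbounded, self-adjoint operator with $\Delta\Omega=\Omega$). The crucial input is the fundamental theorem of Tomita--Takesaki, namely
\begin{equation*}
    \Delta^{it}\MM\Delta^{-it}=\MM\quad(t\in\RR),\qquad J\MM J=\MM'.
\end{equation*}
Granting the first of these, I would define $\sigma(t,x):=\Delta^{it}x\Delta^{-it}$. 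Stone's theorem guarantees that $(\Delta^{it})_{t\in\RR}$ is a strongly continuous one-parameter unitary group, whence each $\sigma(t,\cdot)$ is a unital $*$-automorphism of $\MM$ and $t\mapsto\sigma(t,x)$ is strongly continuous; thus $\sigma$ is a strongly continuous $\RR$-action by $*$-automorphisms.

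It remains to verify the KMS condition and uniqueness. Working on the dense $*$-subalgebra of $\sigma$-analytic elements, for which $x\Omega$ and $y\Omega$ are entire analytic vectors for $(\Delta^{it})$, I would set $F_{x,y}(z):=\langle x^*\Omega,\Delta^{-iz}y\Omega\rangle$. Using $\Delta^{-it}\Omega=\Omega$ one computes directly that $F_{x,y}(t)=\langle\Omega,x\Delta^{-it}y\Omega\rangle=\phi(\sigma(t,x)y)$, while the modular relations $Jx^*\Omega=\Delta^{1/2}x\Omega$ and $\Delta^{1/2}J=J\Delta^{-1/2}$, together with $\Delta^{-i(t+i)}=\Delta^{-it}\Delta$, yield the second boundary value $F_{x,y}(t+i)=\langle y^*\Omega,\Delta^{it}x\Omega\rangle=\phi(y\sigma(t,x))$. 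Boundedness, continuity on the strip $0\le\Im(z)\le1$, and holomorphy in its interior follow from analyticity of the chosen elements, and the general case is recovered by a density and three-lines argument. Uniqueness is then the standard fact that a strongly continuous automorphism group for which $\phi$ satisfies the KMS condition at a fixed inverse temperature is uniquely determined: the functions $F_{x,y}$ are fixed by their boundary data, which through $\phi(\sigma(t,x)y)$ and separateness of $\Omega$ pin down $\sigma(t,x)$ for all $t$.

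The main obstacle is the fundamental theorem of Tomita--Takesaki itself---the assertions $\Delta^{it}\MM\Delta^{-it}=\MM$ and $J\MM J=\MM'$. Its proof is genuinely deep, requiring either Takesaki's original resolvent and analytic-continuation estimates or, in more modern treatments, the theory of left Hilbert algebras; I would not reproduce it here but cite the standard references \cite{takesaki1970tomita,takesaki2002,vanDaele:1978,Bratteli2}, as is appropriate since the result is invoked purely as input for the subsequent type analysis.
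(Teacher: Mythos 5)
Your proposal is correct and takes essentially the same approach as the paper: both pass to the GNS representation with cyclic and separating vector $\Omega$, form the polar decomposition $S=J\Delta^{1/2}$ of the closure of $x\Omega\mapsto x^*\Omega$, define $\sigma(t,x)=\Delta^{it}x\Delta^{-it}$, and invoke Tomita's theorem together with the standard references \cite{takesaki1970tomita,takesaki2002,vanDaele:1978} for the deep invariance statement and the full existence/uniqueness proof. Your sketch in fact supplies slightly more detail than the paper, which merely asserts that the KMS property ``can be verified'' and defers to \cite[Thm.~VI.1.19 and VIII.1.2]{takesaki2002}, whereas you correctly compute the boundary values $F_{x,y}(t)=\phi(\sigma(t,x)y)$ and $F_{x,y}(t+i)=\phi(y\sigma(t,x))$ on analytic elements using $\Delta^{it}\Omega=\Omega$, $Jx^*\Omega=\Delta^{1/2}x\Omega$ and $J\Delta^{it}=\Delta^{it}J$.
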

Here $\sigma$ is known as the \textit{modular action} and $\sigma(.,x)$ as the modular flow of $x$ associated with $\phi$. Here the requirement that $\MM$ is $\sigma$-finite entails that the set of faithful positive normal linear functionals on $\MM$ is non-empty, see \cite[Prop.\ II.3.19]{Takesaki2001}. Although this theorem can be extended beyond the case of $\sigma$-finite von Neumann algebras using the theory of weights and left Hilbert algebras, in particular assigning modular actions to weights (see def.\ \ref{def:weight_trace}), for our purposes the result above will be sufficient and hence we shall adhere to this simplified setting following \cite{vanDaele:1978}. The construction of the modular action can be sketched as follows: Assume that $\MM$ acts on a Hilbert space $\HH$ such that there is a $\Omega\in\HH$ cyclic and separating for $\MM$ such that $\phi(x)=\langle \Omega,x\Omega\rangle $ (i.e. we identify $\MM$ with its GNS representation).
Define $\mathcal{D}=\MM\Omega$ dense in $\HH$ and define $S:\mathcal{D}\to\mathcal{D}$ as the (in general unbounded) linear operator such that for each $x\in \MM$
\begin{equation}
    S x\Omega=x^*\Omega.
\end{equation}
It can be shown that (the closure of) $S$ has a unique polar decomposition $S=J\Delta^{\frac{1}{2}}$, where $\Delta$ is a positive (in general unbounded) operator and $J$ is an anti-unitary operator on $\HH$. This allows for the modular action to be defined as 
\begin{equation}
    \sigma(t,x)=\Delta^{it}x\Delta^{-it}.
\end{equation} That the von Neumann algebra $\mathcal{M}$ is indeed closed under this action, is a result known as \textit{Tomita's theorem}, see e.g.\ \cite{takesaki1970tomita}. It can be verified that this group action satisfies the properties listed in Thm.\ \ref{thm:modular_action}. We refer to \cite[Thm.\ VI.1.19 and VIII.1.2]{takesaki2002} for a full proof of existence and uniqueness of the modular action.

Even though any faithful normal state $\omega$ on $\mathcal{M}$ defines a modular action, in general this action is not related to any geometric action of the group $\RR$ on $M$. However, if $\omega$ is a KMS state with respect to some action of *-automorphisms $\alpha:\RR\times \MM\to\MM$ at (positive) inverse temperature $\beta$, where $\alpha$ was induced by some spacetime symmetry, it follows from Thm.\ \ref{thm:modular_action} that $\sigma(t,.)=\alpha(-\beta t,.)$ and we say that the modular flow is geometric. Note here the modular flow goes in the opposite direction to the flow defined by $\alpha$, as the KMS condition in Thm.~\ref{thm:modular_action} corresponds to an inverse temperature of $-1$. This is a somewhat unfortunate convention in the definition of the modular action which we shall adhere to in view of consistency with the mathematical literature.

As alluded to above, the modular action finds applications in the general theory of von Neumann algebras. Particularly in combination with the crossed product, the definition of which we recalled in Sec.\ \ref{sec:inv_op_crossed}, the modular action becomes a powerful tool to analyse the structure of type III von Neumann algebras. One result in this context is particularly relevant to our analysis of algebras of invariant operators, see e.g.~\cite[Ch.~II.3]{vanDaele:1978}.
\begin{theorem}
\label{thm:cross_prod_semifinite}
    Let $\MM$ be a $\sigma$-finite von Neumann algebra and $\sigma$ be the modular action associated with a faithful normal positive linear functional $\phi:\MM\to \CC$. Then the von Neumann algebra $\MM\rtimes_\sigma \RR$ admits a faithful normal semifinite trace, and is hence a semifinite von Neumann algebra.
\end{theorem}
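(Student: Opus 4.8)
The plan is to produce a faithful normal semifinite (f.n.s.) weight on $N:=\MM\rtimes_\sigma\RR$ whose modular automorphism group is \emph{inner}, and then to untwist that inner flow into a genuine trace. I realise $N$ on $L^2(\RR,\HH)\cong\HH\otimes L^2(\RR)$ exactly as in Definition~\ref{def:crossed_prod}, so that $N$ is generated by $\pi(\MM)$ together with the strongly continuous one-parameter unitary group $u_t:=\tilde\rho(t)$. A direct computation from the formulas in Definition~\ref{def:crossed_prod} (using that $\RR$ is unimodular, so $\Delta_\RR\equiv 1$) gives the defining covariance relation of the crossed product, $u_t\,\pi(x)\,u_t^{*}=\pi(\sigma(t,x))$ for all $x\in\MM$ and $t\in\RR$; this relation is the engine of the whole argument.

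First I would construct the \emph{dual weight} $\tilde\phi$ on $N$ associated with $\phi$ and its modular action $\sigma$, following the standard dual-weight theory for crossed products by locally compact groups \cite{Takesaki1973duality,vanDaele:1978}. The weight $\tilde\phi$ is f.n.s.\ and extends $\phi$ along $\pi$ on a suitable dense domain. The key step is then to identify its modular automorphism group $\sigma^{\tilde\phi}$. Combining the KMS characterisation of the modular flow (Theorem~\ref{thm:modular_action}) with the covariance relation above, one obtains
\begin{equation}
    \sigma^{\tilde\phi}_t(\pi(x))=\pi(\sigma(t,x))=u_t\,\pi(x)\,u_t^{*},\qquad \sigma^{\tilde\phi}_t(u_s)=u_s,
\end{equation}
for all $x\in\MM$ and $s,t\in\RR$. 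Since $\pi(\MM)$ and $\{u_s\}_{s\in\RR}$ generate $N$, this shows $\sigma^{\tilde\phi}_t=\Ad(u_t)$: the modular flow of $\tilde\phi$ is implemented by unitaries lying \emph{inside} $N$. This innerness is the conceptual heart of the theorem and is precisely what forces semifiniteness.

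Granting innerness, the remainder is essentially formal. By Stone's theorem I write $u_t=h^{it}$ for a positive, non-singular, self-adjoint operator $h$ affiliated to $N$ (the generator of $t\mapsto u_t$ has spectral projections in the von Neumann algebra generated by the $u_t$, hence in $N$, and its positivity/non-singularity come from strong continuity). The Pedersen--Takesaki Radon--Nikodym theorem then allows me to correct $\tilde\phi$ by $h$: the weight $\tau:=\tilde\phi\bigl(h^{-1/2}\,\cdot\,h^{-1/2}\bigr)$, interpreted with the customary care for unbounded affiliated operators, is again faithful, normal and semifinite, and its modular automorphism group is trivial, so $\tau$ is a trace. (The model computation $N=B(\HH)$, $\tilde\phi=\mathrm{Tr}(\rho\,\cdot)$, $h=\rho$, $\tau=\mathrm{Tr}$ shows the sign convention is correct.) A von Neumann algebra admitting a f.n.s.\ trace is by definition semifinite, which is the assertion.

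The main obstacle is not the innerness computation, which is short, but the functional-analytic bookkeeping around the unbounded objects: giving the dual weight $\tilde\phi$ a rigorous meaning, justifying the identification of $\sigma^{\tilde\phi}$, and controlling domains in the untwisting $\tilde\phi\mapsto\tau$ when $h$ is unbounded and non-singular. All of these are carried out in \cite{vanDaele:1978} in the language of left Hilbert algebras, where the trace on the modular crossed product is in fact built directly; I would follow that treatment, invoking the relevant lemmas of \cite[Ch.~II.3]{vanDaele:1978} rather than reproving them.
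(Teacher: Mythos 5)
Your proposal is correct in outline, but it reaches the theorem by a genuinely different route from the one the paper (following \cite{vanDaele:1978}) takes. The paper does not reprove the statement abstractly: it recalls van Daele's \emph{explicit spatial construction}, in which the trace is defined directly by $\tau(x)=\sup_K\langle\Psi_K,x\Psi_K\rangle$ with $\Psi_K(t)=(\mathcal{F}^*f_K)(t)\Omega$ and $f_K(\xi)=\chi_K(\xi)\exp(\xi/2)$ (Eqs.~\eqref{eq:psyK}--\eqref{eq:modular_trace}), and traciality, faithfulness, normality and semifiniteness are verified by hand on a dense subalgebra. Your argument is instead the structural one going back to \cite{Takesaki1973duality}: dual weight $\tilde\phi$, innerness of its modular flow, and a Pedersen--Takesaki untwisting by the generator $h$ of $u_t=\tilde\rho(t)$. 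The two are consistent: after Fourier transform, $h$ acts as multiplication by $\exp(\xi)$, and your density $h^{-1/2}$ is precisely the factor $\exp(\xi/2)$ appearing in $f_K$, so your $\tau$ coincides with the paper's. What the explicit route buys, and what your abstract route does not by itself deliver, is the evaluation formula $\tau(p_K)=\int_K\exp(\xi)\,\diff\xi$ of Eq.~\eqref{eq:finite_trace}, which is what the paper actually needs downstream: the finiteness criterion of Thm.~\ref{thm:type_change}, phrased via the spectral multiplicity function, rests on computing $\tau$ on concrete spectral projections, not merely on knowing that some trace exists. Conversely, your route explains conceptually \emph{why} the theorem holds, since innerness of the modular flow of a faithful normal semifinite weight characterises semifinite von Neumann algebras.

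One step in your write-up needs tightening: the identity $\sigma^{\tilde\phi}_t(u_s)=u_s$ does not follow from the KMS characterisation (Thm.~\ref{thm:modular_action}) together with the covariance relation alone. For the dual weight of a crossed product by a general action $\alpha$ one has $\sigma^{\tilde\phi}_t(u_s)=u_s\,\pi\bigl((D(\phi\circ\alpha_s):D\phi)_t\bigr)$, and the Connes cocycle term vanishes here only because $\phi$ is invariant under its own modular flow, $\phi\circ\sigma_s=\phi$. You should invoke that invariance explicitly: for a non-modular action the modular flow of the dual weight is generically not inner and the crossed product can fail to be semifinite, so this is exactly where the hypothesis that $\sigma$ is the modular action of $\phi$ enters. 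With that point made, and the domain bookkeeping delegated to \cite{vanDaele:1978,Takesaki1973duality} as you propose, the argument is a valid proof.
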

We recall the definitions of a semifinite von Neumann algebra and trace, as well as their relation to the type classification of factors in Appendix \ref{appx:factor_types}. We shall refer to $\MM\rtimes_\sigma \RR$ as the \textit{modular crossed product} of $\MM$. While not crucial to our analysis, it should be pointed out that, up to spatial isomorphism (i.e.\ the adjoint action of a unitary in $\mathbf{U}(L^2(\RR,\HH))$), the modular crossed product algebra does not depend on the choice faithful normal positive linear functional on $\MM$, see e.g.\ \cite[Thm.\ II.2.3]{vanDaele:1978}. 

A constructive proof of Thm.\ \ref{thm:cross_prod_semifinite} is given in \cite{vanDaele:1978}, and though we shall not reproduce the full proof of this statement here, we shall recall how the trace on $\MM\rtimes_\sigma\RR$ is constructed. Without loss of generality we once again identify $\MM$ with its GNS representation  defined by $\phi$, i.e.\ we embed $\MM \subset B(\HH)$ for some Hilbert space $\HH$, such that there is a cyclic and separating vector $\Omega$ for $\MM$ satisfying $\phi(\,\cdot\,)=\langle\Omega,\,\cdot\,\Omega\rangle$. Note that $\MM\rtimes_\sigma \RR$ acts on the Hilbert space $L^2(\RR,\HH)\cong\HH\otimes L^2(\RR)$. In the construction of the semifinite trace we require the Fourier transform $\mathcal{F}\in\mathbf{U}(L^2(\RR))$, which we here set to be the continuous extension of the map $f\mapsto \mathcal{F}f$ for $f\in C_c(\RR)\subset L^2(\RR)$ with
    \begin{equation}
        \mathcal{F}f(\xi)=\frac{1}{\sqrt{2\pi}}\int_\RR \exp(i\xi t)f(t)\diff t.
    \end{equation}
Now for each $K\subset \RR$ compact one can define a vector $\Psi_K\in L^2(\RR,\HH)$ given by
\begin{equation}
    \label{eq:psyK}
    \Psi_K(t)=(\mathcal{F}^*f_K)(t)\Omega,
\end{equation}
with $f_K\in L^2(\RR)$ given by
\begin{equation}
    f_K(\xi)=\chi_K(\xi)\exp(\xi/2),
\end{equation}
where $\chi_K$ is the characteristic function with support $K$. This allows one to define the following normal positive linear functional $\tau_K:\MM\rtimes_\sigma \RR\to\CC$ via
\begin{equation}
    \tau_K(x)=\langle\Psi_K,x\Psi_K\rangle.
\end{equation}
One now defines a trace $\tau:(\MM\rtimes_{\sigma}\RR)^+\to\RR^+\cup\{\infty\}$ on the modular crossed product via
    \begin{equation}
    \label{eq:modular_trace}
        \tau(x)=\sup_{K\subset\RR\text{ compact}}\tau_K(x).
    \end{equation}
In \cite{vanDaele:1978}, the following properties are established.  
\begin{proposition}
    \label{prop:modular_trace}
    The map $\tau:(\MM\rtimes_{\sigma}\RR)^+\to\RR^+\cup\{\infty\}$ defined above is a semifinite faithful normal trace. For each $K\subset \RR$ compact, defining a projection $p_K=1_{\HH}\otimes \mathcal{F}^*T_{\chi_K}\mathcal{F}\in (\MM\rtimes_{\sigma}\RR)^+$, seen as an operator on $L^2(\RR,\HH)\cong \HH\otimes L^2(\RR)$ under the usual isomorphism, one has
    \begin{equation}
        \label{eq:finite_trace}\tau(p_K)=\tau_K(p_K)=\int_K\exp(\xi)\diff \xi<\infty.
    \end{equation}
    The trace $\tau\restriction_{p_K(\MM\rtimes_{\sigma}\RR)^+p_K}=\tau_K\restriction_{p_K(\MM\rtimes_{\sigma}\RR)^+p_K}$ extends uniquely to a finite faithful normal trace on $p_K(\MM\rtimes_{\sigma}\RR)p_K$.
\end{proposition}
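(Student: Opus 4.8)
The plan is to verify all the assertions directly in the representation of $\MM\rtimes_\sigma\RR$ on $L^2(\RR,\HH)\cong\HH\otimes L^2(\RR)$ from Def.~\ref{def:crossed_prod}, after applying $1_\HH\otimes\mathcal{F}$ to the $L^2(\RR)$-factor to pass to a ``spectral'' picture in the variable $\xi$. There the right-translation unitaries $\tilde\rho(t)$ become multiplication by $e^{-i\xi t}$, so that $p_K=1_\HH\otimes\mathcal{F}^*T_{\chi_K}\mathcal{F}$ is multiplication by $\chi_K$ in $\xi$ and lies in $\tilde\rho(\RR)''\subset\MM\rtimes_\sigma\RR$, while the vector $\Psi_K$ of~\eqref{eq:psyK} corresponds to $\Omega\otimes\chi_K e^{\xi/2}$. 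Two bookkeeping identities, $p_K\Psi_K=\Psi_K$ and $p_K\Psi_{K'}=\Psi_{K\cap K'}$ (both immediate from $\chi_K\chi_{K'}=\chi_{K\cap K'}$), then give the explicit values at once: taking $\Omega$ a unit vector, $\tau_K(p_K)=\|\Psi_K\|^2=\int_K e^{\xi}\,\diff\xi<\infty$, and $\tau_{K'}(p_K)=\int_{K\cap K'}e^{\xi}\,\diff\xi\le\int_K e^\xi\,\diff\xi$ with equality once $K'\supset K$, so that $\tau(p_K)=\tau_K(p_K)=\int_K e^\xi\,\diff\xi$, which is~\eqref{eq:finite_trace}.

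The existence of a faithful normal semifinite trace on $\MM\rtimes_\sigma\RR$ is Thm.~\ref{thm:cross_prod_semifinite}; the task here is to show that the specific map~\eqref{eq:modular_trace} is such a trace and carries the stated values, and the crux is the trace property $\tau(x^*x)=\tau(xx^*)$. I would obtain it by identifying the canonical trace with $\tau=\tilde\phi(h^{-1/2}\,\cdot\,h^{-1/2})$, where $\tilde\phi$ is the dual weight on $\MM\rtimes_\sigma\RR$ and $h$ is the positive self-adjoint operator affiliated to $\MM\rtimes_\sigma\RR$ whose logarithm generates $\tilde\rho$ (i.e.\ multiplication by $\xi$ in the spectral picture). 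Since $h^{it}$ implements the modular automorphism group of $\tilde\phi$, the Pedersen--Takesaki Radon--Nikodym theorem shows that dividing out $h$ turns the KMS weight $\tilde\phi$ into a trace; the weighting $e^{\xi/2}$ in $\Psi_K$ is precisely $h^{-1/2}$ truncated to the spectral interval $K$ of $h$, which is why it compensates the modular flow. Alternatively the trace property can be checked by a direct KMS computation, using Thm.~\ref{thm:modular_action}, on the $\sigma$-weakly dense $*$-algebra generated by the $\pi(a)$ with $a$ entire analytic for $\sigma$ and by the smearings $\int f(t)\tilde\rho(t)\,\diff t$ with $f\in C_c(\RR)$; this is the main obstacle and is the content of van Daele's construction. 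Granting it, I would recover the supremum formula: the identification of $\Psi_K$ as the vector implementing the finite normal functional $x\mapsto\tau(p_Kxp_K)$ (which one checks, e.g., on $x=p_{K'}$) gives $\tau_K(x)=\tau(p_Kxp_K)=\tau(x^{1/2}p_Kx^{1/2})$ for $x\ge0$ by the trace property, and since $p_K\uparrow1$ forces $x^{1/2}p_Kx^{1/2}\uparrow x$, normality of $\tau$ yields $\sup_K\tau_K(x)=\tau(x)$ as a monotone limit. Thus~\eqref{eq:modular_trace} defines exactly this faithful normal trace, and the family $\{\tau_K\}$ is in particular monotone. Semifiniteness follows since the $p_K\in\MM\rtimes_\sigma\RR$ increase to $1$ with $\tau(p_K)<\infty$, whence $p_Kyp_K\le\|y\|p_K$ has finite trace for every $y\ge0$. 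Faithfulness follows from $\tau(x^*x)=0\Rightarrow\|x\Psi_K\|^2=\tau_K(x^*x)=0$ for all $K$, so that $x$ annihilates $\{\Omega\}\otimes L^2(\RR)$ (the functions $\chi_K e^{\xi/2}$ having dense span in $L^2(\RR)$); as $\Omega$ is cyclic for $\MM'$, this subspace is separating for $\MM\otimes B(L^2(\RR))\supset\MM\rtimes_\sigma\RR$, forcing $x=0$.

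Finally, for the corner $p_K(\MM\rtimes_\sigma\RR)p_K$, take $x=p_Kyp_K$ with $y\ge0$. For any compact $K'$, the identity $p_K\Psi_{K'}=\Psi_{K\cap K'}$ gives $\tau_{K'}(x)=\langle p_K\Psi_{K'},y\,p_K\Psi_{K'}\rangle=\tau_{K\cap K'}(y)$, and as $K'$ ranges over the compacts the right-hand side ranges over $\{\tau_{K''}(y):K''\subset K\}$, whose supremum, by the monotonicity just established, is $\tau_K(y)=\tau_K(x)$. Hence $\tau$ and $\tau_K$ agree on the positive cone of the corner. Since $\tau(p_K)<\infty$, this common restriction is a finite, faithful, normal trace there, and extends uniquely by linearity to all of $p_K(\MM\rtimes_\sigma\RR)p_K$, which is the last assertion. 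I expect the trace property of the middle paragraph to be the only genuine difficulty; everything else reduces to the explicit spectral-picture computations above.
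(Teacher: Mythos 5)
The paper gives no proof of this proposition: it is stated as established in \cite{vanDaele:1978}, with only the construction of $\tau_K$ and $\tau$ recalled. Your proposal therefore does more than the paper does, and most of it is correct: the Fourier-picture observations that $p_K\in\tilde{\rho}(\RR)''\subset\MM\rtimes_\sigma\RR$ and $p_K\Psi_{K'}=\Psi_{K\cap K'}$ immediately give \eqref{eq:finite_trace} (with the normalisation $\Vert\Omega\Vert=1$, which the paper also uses implicitly just after the proposition); your faithfulness argument via the subspace $\CC\Omega\otimes L^2(\RR)$, separating for $\MM\otimes B(L^2(\RR))$ because $\Omega$ is cyclic for $\MM'$, is clean and does not even need traciality; and the corner computation $\tau_{K'}(p_K y p_K)=\tau_{K\cap K'}(y)$, combined with monotonicity of $K''\mapsto\tau_{K''}(y)$, correctly yields $\tau\restriction=\tau_K\restriction$ on the positive part of the corner. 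For the trace property you propose the Takesaki-duality route, $\tau=\tilde{\phi}(h^{-1/2}\,\cdot\,h^{-1/2})$ with $h^{it}=\tilde{\rho}(t)$ and the Pedersen--Takesaki theorem, which is a genuinely different organisation from van Daele's direct left-Hilbert-algebra computation that the paper cites; both are standard, and yours has the merit of explaining conceptually why the weighting $e^{\xi/2}$ appears, as $h^{-1/2}$ compensating the modular flow of the dual weight.

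Two steps need repair. First, your identification of $\langle\Psi_K,\,\cdot\,\Psi_K\rangle$ with $\tau(p_K\,\cdot\,p_K)$, ``checked, e.g., on $x=p_{K'}$'', is not a proof: the projections $p_{K'}$ generate only the abelian algebra $1_\HH\otimes\mathcal{F}^*L^\infty(\RR)\mathcal{F}$, and two normal functionals on the corner that agree there need not coincide. The identification has to be verified on a generating family of the corner, e.g.\ on elements $p_K\pi(a)\tilde{\rho}(f)p_K$, using the GNS structure of the dual weight --- and this verification, together with traciality itself, is exactly the computational content of van Daele's construction; it cannot be compressed into a consistency check on the $p_{K'}$. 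Since everything downstream (the recovery of the supremum formula $\sup_K\tau_K=\tau$ and the agreement $\tau=\tau_K$ on the corner) routes through this identification, it is the one genuine soft spot in your plan, even granting the trace property. Second, your semifiniteness argument exhibits $p_K y p_K\le\Vert y\Vert p_K$ with finite trace, but $p_K y p_K$ is not dominated by $y$, which is what the paper's definition of semifiniteness requires; the repair is immediate given traciality: take $y^{1/2}p_K y^{1/2}\le y$, which is nonzero for $K$ large enough (since $p_K\uparrow 1$) and satisfies $\tau(y^{1/2}p_K y^{1/2})=\tau(p_K y p_K)\le\Vert y\Vert\tau(p_K)<\infty$.
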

It should be noted that in fact for any measurable $U\subset\RR$ such that $\xi\mapsto \chi_U(\xi)\exp(\xi)$ is in $L^1(\RR)$, one can define a projection $p_U=1_{\HH}\otimes \mathcal{F}^*T_{\chi_U}\mathcal{F}\in(\MM\rtimes_{\sigma}\RR)^+$ with
\begin{equation}
    \tau(p_U)=\langle \Omega,\Omega\rangle \sup_K\int_K\chi_U(\xi)\exp(\xi)\diff\xi=\int_U\exp(\xi)\diff\xi<\infty.
\end{equation} This implies via \cite[Eq.~V.2.2]{Takesaki2001} that $\tau$ restricts to a finite trace on $p_U(\MM\rtimes_{\sigma}\RR)^+p_U$ with for $x\in (\MM\rtimes_{\sigma}\RR)^+$
\begin{equation}
    \tau(p_Uxp_U)\leq\Vert x\Vert\int_U\exp(\xi)\diff \xi.
\end{equation}

We shall apply this result to the setting where one has geometric modular flow on a von Neumann algebra $\MM=\mathfrak{A}(M;N)$ associated to quantum field observables localisable on a spacetime region $N$, i.e.\ where $\MM$ admits a normal faithful KMS state with respect to some geometrically induced action of unital *-automorphisms $\alpha:\RR\times\MM\to\MM$. In this case, one has that $\MM\rtimes_{\alpha}\RR\cong \MM\rtimes_{\sigma}\RR$ is semifinite. In Sec.\ \ref{sec:inv_op_crossed} we saw that crossed products naturally arise when one considers invariant operator algebra for some (sufficiently well behaved) group acting on a system von Neumann algebra joint with some compactly stabilised quantum reference frame. In this setting, one can therefore use the semifinite faithful normal trace on the modular crossed product described above to construct a trace on the algebra of invariant operators.

\subsection{Type reduction for invariant operators}
\label{sec:type_change_thm}
Recall our setting discussed at the start of Sec.\ \ref{sec:type_change}.
We consider a $\sigma$-finite von Neumann algebra $\MM_S$ (the system algebra) that admits a group action of unital *-automorphisms for $G=\RR\times H$, $\alpha_S:G\times\MM_S\to\MM_S$, where $H$ is a compact second countable Hausdorff group, $\omega$ is a normal faithful KMS state on $\MM_S$ of inverse temperature $\beta\in(0,\infty)$ with respect to the action $\alpha_S\restriction_\RR:\RR\times\MM_S\to\MM_S$, where we are viewing $\RR$ naturally as a subgroup of $G$. We furthermore assume that $\omega$ is invariant under the full action of $G$ and (without loss of generality) that $\alpha_S$ is unitarily implemented, meaning that $\MM_S$ acts on a Hilbert space $\HH_S$ such that $U_S:G\to\mathbf{U}(\HH_S)$ is a unitary representation of $G$ with $\alpha_S=\Ad U_S$. Furthermore, it follows from the KMS condition that $U_S$ can be chosen such that $U_S\restriction_\RR$ is a strongly continuous unitary representation. As a mild additional condition, which is satisfied in the specific cases that we consider, we assume that the full map $U_S:G\to \mathbf{U}(\HH_S)$ is continuous in the strong operator topology. 

In addition to the system algebra, we have some compactly stabilised quantum reference frame $(U_R,E,\HH_R)$ for the group $G$. We are interested in the algebra of invariant operators $(\MM_S\otimes B(\HH_R))^{\Ad U_S\otimes U_R}$. As we saw in Sec.\ \ref{sec:RHinv_obs}, these invariant observables can be parameterised by a von Neumann algebra involving the crossed product $(\MM_S\rtimes_{\Ad U_S\restriction_\RR}\RR)$, which by Thm.\ \ref{thm:cross_prod_semifinite} and the fact that $\MM_S$ admits a faithful normal KMS state for $\Ad U_S\restriction_\RR$, is a semifinite von Neumann algebra. We shall show that under the assumptions above, semifiniteness of this crossed product implies that the algebra $(\MM_S\otimes B(\HH_R))^{\Ad U_S\otimes U_R}$ is semifinite, or, following an argument similar to that appearing in \cite{chandrasekaran2023algebra}, even a finite von Neumann algebra when a further (sufficient) condition is met. We refer to this fact as `type reduction', because $\MM_S\otimes B(\HH_R)$ is typically a type $\mathrm{III}_1$ von Neumann algebra, while the invariant operator algebra $(\MM_S\otimes B(\HH_R))^{\Ad U_S\otimes U_R}$ is of a `lower' type. The sufficient condition leading to the algebra of invariants being a finite von Neumann algebra, is formulated as a condition on the spectral multiplicity function $m_{U_R}$ of the quantum reference frame as defined in Def.\ \ref{def:spec_mult}. This is all made precise in the following theorem.
\begin{theorem}
\label{thm:type_change}
    Let $\MM_S$ be a $\sigma$-finite von Neumann algebra on a separable Hilbert space $\HH_S$ and $G=\RR\times H$ with $H$ a compact Hausdorff group that acts spatially on $\MM_S$ via a strongly continuous unitary representation $U_S:G\to\mathbf{U}(\HH_S)$, 
    such that there exists a $G$-invariant faithful normal KMS state $\omega:\MM_S\to \CC$ of inverse temperature $\beta\in (0,\infty)$ with respect to the action of $\RR$. Furthermore, let $(U_R,E,\HH_R)$ be a compactly stabilised quantum reference frame for $G$. Then the algebra of invariant operators
    \begin{equation}
        \left(\MM_S\otimes B(\HH_R)\right)^{\Ad(U_S\otimes U_R)}
    \end{equation} is semifinite. If furthermore
    \begin{equation}
        \label{eq:type_change_condition}
        \int_\RR\exp(-\beta\xi)m_{U_R}(\xi)\diff\xi<\infty,
    \end{equation}
    where $m_{U_R}$ is the spectral multiplicity function as given by Def.\ \ref{def:spec_mult}, then the algebra of invariant operators is finite.
\end{theorem}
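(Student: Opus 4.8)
The plan is to combine the crossed-product parameterisation of Theorem~\ref{thm:invariants_crossed_prod} with the stepwise $\RR\times H$ analysis of Section~\ref{sec:RHcase} and the explicit modular trace of Proposition~\ref{prop:modular_trace}. Write $\mathcal{A}=(\MM_S\otimes B(\HH_R))^{\Ad(U_S\otimes U_R)}$ for the $G$-invariant algebra and $\mathcal{A}_\RR=(\MM_S\otimes B(\HH_R))^{\alpha_S\restriction_\RR\otimes\Ad U_R\restriction_\RR}$ for the $\RR$-invariant one; since $G=\RR\times H$ one has $\mathcal{A}=(\mathcal{A}_\RR)^H$ and $\mathcal{A}\subset\mathcal{A}_\RR$, as recorded in Section~\ref{sec:RHinv_obs}. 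Because $\omega$ is a KMS state at inverse temperature $\beta$ for $\alpha_S\restriction_\RR=\Ad U_S\restriction_\RR$, Theorem~\ref{thm:modular_action} identifies the modular flow of $\omega$ with $\sigma(t,\cdot)=\alpha_S(-\beta t,\cdot)$, so that $\mathcal{N}:=\MM_S\rtimes_{\alpha_S\restriction_\RR}\RR$ is, up to the reparameterisation $t\mapsto-\beta t$, the modular crossed product $\MM_S\rtimes_\sigma\RR$. By Theorem~\ref{thm:cross_prod_semifinite} it is semifinite, carrying the faithful normal semifinite trace $\tau$ of Proposition~\ref{prop:modular_trace}.

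For semifiniteness I would first use the isomorphism of Section~\ref{sec:RHinv_obs} exhibiting $\mathcal{A}_\RR$ as a corner $q\,\bigl(\mathcal{N}\otimes B(L^2(H)\otimes\KK)\bigr)\,q$ with $q=1_{\HH_S}\otimes WW^*$. This corner is legitimate because $WW^*$ commutes with $\lambda(G)\otimes 1_\KK$, in particular with $\lambda_\RR(\RR)\otimes 1$, so that $q$ lies in $\mathcal{N}\otimes B(L^2(H)\otimes\KK)$ by the commutation theorem, Proposition~\ref{prop:comm_thm}. Since $\mathcal{N}$ is semifinite and $B(L^2(H)\otimes\KK)$ is type I, the tensor product is semifinite, and any corner of a semifinite algebra is semifinite; hence $\mathcal{A}_\RR$ is semifinite. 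To pass to $\mathcal{A}=(\mathcal{A}_\RR)^H$ I would observe that $G$-invariance of $\omega$ forces $\alpha_S\restriction_H$ to preserve $\omega$ and therefore to commute with $\sigma$, so that it extends to a trace-preserving automorphism of $\mathcal{N}$, while on $B(L^2(H)\otimes\KK)$ the group $H$ acts by $\Ad(\lambda_H\otimes 1_\KK)$, which preserves the operator trace; moreover $q$ is $H$-invariant. Thus the $H$-action preserves the semifinite trace on $\mathcal{A}_\RR$, and averaging over the compact group $H$ gives a normal trace-preserving conditional expectation $\mathbb{E}:\mathcal{A}_\RR\to\mathcal{A}$. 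As the $\tau$-finite positive elements are $\sigma$-weakly dense and $\mathbb{E}$ is normal and carries them into $\tau$-finite elements of $\mathcal{A}$, the restricted trace is again semifinite, so $\mathcal{A}$ is semifinite.

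For finiteness it suffices to prove that $\mathcal{A}_\RR$ is finite, since $\mathcal{A}\subset\mathcal{A}_\RR$ is a von Neumann subalgebra and finiteness is inherited by subalgebras (an ambient finite trace restricts to a faithful normal finite trace). Now $\mathcal{A}_\RR=q\mathcal{B}q$ with $\mathcal{B}=\mathcal{N}\otimes B(L^2(H)\otimes\KK)$, and such a corner is finite precisely when the trace of its identity, $(\tau\otimes\mathrm{Tr})(q)$, is finite. The crux is to evaluate this and show it equals $\int_\RR\exp(-\beta\xi)\,m_{U_R}(\xi)\,\diff\xi$. Three ingredients combine here: Proposition~\ref{prop:modular_trace} supplies the modular trace density $\exp(\zeta)$ in the variable $\zeta$ dual to the crossed-product $\RR$; the identification $\sigma(t,\cdot)=\alpha_S(-\beta t,\cdot)$ converts this, via $\zeta=-\beta\xi$, into the weight $\exp(-\beta\xi)$ in the physical energy variable $\xi$ (the generator of $U_R\restriction_\RR$); and the spectral decomposition $\HH_R\cong\int_\RR^\oplus\KK(\xi)\,\diff\xi$ of Proposition~\ref{prop:spect_decomp}, in which $U_R\restriction_\RR$ acts by $\exp(it\xi)$ on a fibre of dimension $m_{U_R}(\xi)=\dim\KK(\xi)$, contributes the multiplicity counted by $\mathrm{Tr}$. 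Diagonalising the $\RR$-direction by a Fourier transform and using the intertwiner $WU_R(t,1_H)=(\lambda_\RR(t)\otimes 1)W$ to match the QRF energy $\xi$ with the modular dual variable, the trace of $q$ integrates $\exp(-\beta\xi)$ against $m_{U_R}(\xi)$, giving the stated integral. Its finiteness then yields $(\tau\otimes\mathrm{Tr})(q)<\infty$, whence $\mathcal{A}_\RR$, and therefore $\mathcal{A}$, is finite.

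The main obstacle is exactly this last computation: tracking the three weights (the modular density $\exp(\zeta)$, the temperature reparameterisation $\zeta=-\beta\xi$, and the QRF multiplicity $m_{U_R}(\xi)$) through the unitary equivalences of Proposition~\ref{prop:crossed_prod_unitary} and Corollary~\ref{cor:inv_unitary_param}, so that the trace of the compressing projection $q$ is identified \emph{precisely} with $\int_\RR\exp(-\beta\xi)\,m_{U_R}(\xi)\,\diff\xi$ rather than merely bounded by it; in particular the Fourier-transform and modular-flow sign conventions must be reconciled to produce the factor $-\beta$. A secondary technical point is verifying that the canonical trace of Proposition~\ref{prop:modular_trace} is genuinely invariant under the extension of $\alpha_S\restriction_H$ to $\mathcal{N}$, which I would check directly from the explicit form of $\tau$ using that both $\omega$ and the Fourier structure entering $\Psi_K$ are preserved.
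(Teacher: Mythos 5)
Your proposal is correct and follows essentially the same route as the paper's proof: the crossed-product corner parameterisation of the $\RR$-invariants, the reparameterised modular trace $\tilde\tau_1$ tensored with the type $\mathrm{I}$ trace, $H$-averaging (the paper's $\operatorname{Avg}_H$ plays exactly the role of your trace-preserving conditional expectation $\mathbb{E}$), the evaluation $\tau(1_{\HH_S}\otimes p)=\beta\int_\RR e^{-\beta\xi}m_{U_R}(\xi)\,\diff\xi$ via the spectral decomposition, and inheritance of finiteness by subalgebras. The only cosmetic deviations are that the paper proves the corner's semifiniteness by a direct hereditary argument rather than citing the general fact about reduced algebras, and its trace carries an overall factor $\beta$, immaterial for the finiteness criterion.
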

\begin{proof}
    We shall construct a faithful normal trace on $\left(\MM_S\otimes B(\HH_R)\right)^{\Ad(U_S\otimes U_R)}$ and show that it is (semi)finite under the given assumptions. This construction builds on the semifinite trace on $\MM_S\rtimes_\sigma\RR$ for $\sigma$ a modular action on $\MM_S$, as given by Eq.\ \ref{eq:modular_trace}. Recall that up to unitary equivalence, the unitary representation $(U_R,\HH_R)$ associated with a compactly stabilised quantum reference frame for the group $G=\RR\times H$ is a subrepresentation of $(\lambda_\RR\times\lambda_H\otimes 1_\KK,L^2(\RR)\otimes L^2(H)\otimes \KK)$, for some separable Hilbert space $\KK$. This is a direct consequence of Thm.~\ref{thm:qref_embed} as applied to the group $G=\RR\times H$ as discussed in Sec.~\ref{sec:RHQRF}. In particular, there exists a projection $p\in \lambda_\RR(\RR)'\otimes\lambda_H(H)'\otimes B(\KK)$ such that $\HH_R\cong p(L^2(\RR)\otimes L^2(H)\otimes \KK)$. As in the proof of Prop.\ \ref{prop:spect_decomp} (see Appx.~\ref{appx:spect_decomp_proof}), this implies 
    \begin{equation}
        \hat{p}:=(\mathcal{F}\otimes 1_{L^2(H)\otimes \KK})p(\mathcal{F}^*\otimes 1_{L^2(H)\otimes \KK})\in L^\infty(\RR)\otimes\lambda_H(H)'\otimes B(\KK),
    \end{equation} and there exists a projection valued measurable function  $\xi\mapsto \hat{p}(\xi)\in \lambda_H(H)'\otimes B(\KK)$ such that for $\Psi,\Psi'\in L^2(\RR)$, $\psi,\psi'\in L^2(H)\otimes \KK$, one has
    \begin{equation}
        \langle \Psi\otimes \psi,\hat{p}(\Psi'\otimes\psi')\rangle=\int_\RR\overline{\Psi(\xi)}\Psi'(\xi)\langle \psi,\hat{p}(\xi)\psi'\rangle\diff\xi,
    \end{equation} yielding a spectral decomposition (see Def.\ \ref{def:spec_mult})
    \begin{equation}
        \HH_R\cong \int_\RR^\oplus \hat{p}(\xi)(L^2(H)\otimes \KK)\diff\xi.
    \end{equation}

    Consider the algebra
    \begin{equation}
        (\MM_S\otimes B(\HH_R))^{\Ad U_S\restriction_\RR\otimes U_R\restriction_\RR}.
    \end{equation}
    By Thm.\ \ref{thm:invariants_crossed_prod} (see also the discussion of Sec.\ \ref{sec:RHinv_obs}) we have that 
    \begin{equation}
    \label{eq:fixed_point_iso}
        (\MM_S\otimes B(\HH_R))^{\Ad U_S\restriction_\RR\otimes U_R\restriction_\RR}\cong (1_{\HH_S}\otimes p)((\MM_S\rtimes_{\Ad U_S\restriction_\RR}\RR)\otimes B(L^2(H)\otimes\KK))(1_{\HH_S}\otimes p).
    \end{equation}
    By the fact that $\MM_S$ admits a faithful normal KMS state $\omega$ of inverse temperature $\beta>0$, it follows that $\Ad U_S\restriction_\RR(t,.)=\sigma(-\frac{t}{\beta},.)$ for $\sigma$ the modular action associated with $\omega$. Identifying $\MM_S$ once again with its GNS representation defined by $\omega$, we see that $\MM_S\rtimes_{\Ad U_S\restriction_\RR}\RR= \tilde{U}(\MM_S\rtimes_\sigma\RR)\tilde{U}^*$ for the unitary transformation $\tilde{U}\in\mathbf{U}(L^2(\RR,\HH_S))$ implementing the scale transformation
    \begin{equation}
        (\tilde{U}\psi)(t)=\frac{1}{\sqrt{\beta}}\psi\left(-\frac{t}{\beta}\right).
    \end{equation}
    $(\MM_S\rtimes_\sigma\RR)$ admits a semifinite trace $\tau_1$  by Prop.\ \ref{prop:modular_trace}, hence $\MM_S\rtimes_{\Ad U_S\restriction_\RR}\RR$ also admits a semifinite trace $\tilde{\tau}_1$ given by
    \begin{equation}
        \tilde{\tau}_1(x)=\tau_1(\tilde{U}^*x\tilde{U})=\sup_{K\subset\RR\text{ compact}}\langle\tilde{U}\Psi_K,x\tilde{U}\Psi_K\rangle,
    \end{equation}
    for $x\in (\MM_S\rtimes_{\Ad U_S\restriction_\RR}\RR)^+$ and $\Psi_K\in L^2(\RR,\HH_S)$ as given in Eq.\ \eqref{eq:psyK}.
    Note that $B(L^2(H)\otimes\KK)$ is a type I factor (see Appendix \ref{appx:factor_types}), and therefore admits a unique (up to normalisation) semifinite trace $\tau_2:B(L^2(H)\otimes\KK)^+\to\RR^+\cup\{\infty\}$ defined by
    \begin{equation}\tau_2(x)=\sup_N\sum_{n=0}^N\langle \psi_n,x\psi_n\rangle,\end{equation}
    for $\{\psi_n\}_{n\in \mathbb{N}}$ a countable orthonormal basis of $L^2(H)\otimes\KK$. One can now define a semifinite trace $\tau:=\tilde{\tau}_1\otimes\tau_2$ on $(\MM_S\rtimes_{\Ad U_S\restriction_\RR}\RR)\otimes B(L^2(H)\otimes\KK)$ given by
    \begin{equation}
        \tau(x)=\sup_{\substack{K\subset\RR\text{ compact, }\\N\in\mathbb{N}}}\sum_{n=0}^N\langle \tilde{U}\Psi_K\otimes\psi_n,x(\tilde{U}\Psi_K\otimes\psi_n)\rangle,
    \end{equation}
    for $x\in ((\MM_S\rtimes_{\Ad U_S\restriction_\RR}\RR)\otimes B(L^2(H)\otimes\KK))^+$. That this is a faithful normal semifinite trace, follows from Prop.\ \ref{prop:trace_product}. As a result $(\MM_S\rtimes_{\Ad U_S\restriction_\RR}\RR)\otimes B(L^2(H)\otimes\KK))$ is a semifinite von Neumann algebra. For the projection $p\in \lambda_\RR(\RR)'\otimes\lambda_H(H)'\otimes B(\KK)$ above we have 
    \begin{equation}
        (1_{\HH_S}\otimes p)\in ((\MM_S\rtimes_{\Ad U_S\restriction_\RR}\RR)\otimes B(L^2(H)\otimes\KK))^+,
    \end{equation}
    and hence
    \begin{multline}
    \label{eq:compression_positive}
        \left((1_{\HH_S}\otimes p)((\MM_S\rtimes_{\Ad U_S\restriction_\RR}\RR)\otimes B(L^2(H)\otimes\KK))(1_{\HH_S}\otimes p)\right)^+\\
        \subset ((\MM_S\rtimes_{\Ad U_S\restriction_\RR}\RR)\otimes B(L^2(H)\otimes\KK))^+,
    \end{multline}
    By semifiniteness of $\tau$ on $((\MM_S\rtimes_{\Ad U_S\restriction_\RR}\RR)\otimes B(L^2(H)\otimes\KK))$, we have that for any
    \begin{equation}
        x\in \left((1_{\HH_S}\otimes p)((\MM_S\rtimes_{\Ad U_S\restriction_\RR}\RR)\otimes B(L^2(H)\otimes\KK))(1_{\HH_S}\otimes p)\right)^+,
    \end{equation}
    there exists a $y\in (\MM_S\rtimes_{\Ad U_S\restriction_\RR}\RR)\otimes B(L^2(H)\otimes\KK))$ with $0<y\leq x$ with $0<\tau(y)<\infty$. Denote $\tilde{\HH}=\HH_S\otimes L^2(\RR)\otimes L^2(H)\otimes\KK$ and $\tilde{\HH}_p=(1_{\HH_S}\otimes p)\tilde{\HH}\subset \tilde{\HH}$. Then for any $\psi\in\tilde{\HH}_p^\perp$ and $\varphi\in\tilde{\HH}$ we have that
    \begin{equation}
        \vert\langle \phi,y\psi\rangle\vert\leq \Vert \sqrt{y}\varphi\Vert\Vert \sqrt{y}\psi\Vert,
    \end{equation}
    and 
    \begin{equation}
        \Vert \sqrt{y}\psi\Vert^2=\langle\psi,y\psi\rangle\leq \langle \psi,x\psi\rangle=0.
    \end{equation}
    It follows that $y=y(1_{\HH_S}\otimes p)$, and hence by self-adjointness $y=(1_{\HH_S}\otimes p)y(1_{\HH_S}\otimes p)$. We can thus conclude that $\tau$ is semifinite on 
    \begin{equation}
        (1_{\HH_S}\otimes p)((\MM_S\rtimes_{\Ad U_S\restriction_\RR}\RR)\otimes B(L^2(H)\otimes\KK))(1_{\HH_S}\otimes p),
    \end{equation}
    and by the isomorphism \eqref{eq:fixed_point_iso}, we see that $(\MM_S\otimes B(\HH_R))^{\Ad U_S\restriction_\RR\otimes U_R\restriction_\RR}$ is semifinite.

    For the algebra of $G$-invariants $(\MM_S\otimes B(\HH_R))^{\Ad U_S\otimes U_R}\subset (\MM_S\otimes B(\HH_R))^{\Ad U_S\restriction_\RR\otimes U_R\restriction_\RR}$, we can similarly see that this algebra is semifinite if the dilated algebra $((\MM_S\rtimes_{\Ad U_S\restriction_\RR}\RR)\otimes B(L^2(H)\otimes\KK))^{\Ad U_R\restriction_H\otimes 1_{L^2(\RR)}\otimes \lambda_H\otimes 1_\KK}$ is semifinite. To show this, we use that $\omega$ is invariant under the action of $H$, hence $\tilde{\tau}_1$ is invariant under $\Ad U_R\restriction_H\otimes 1_{L^2(\RR)}$. The trace $\tau_2$ is invariant under $\Ad \lambda_H\otimes 1_\KK$ due to cyclicity of the trace, therefore $\tau=\tilde{\tau}_1\otimes\tau_2$ is invariant under the action of $H$. Since $H$ is compact and its unitary representation $U_R\restriction_H$ is strongly continuous, we can define the surjective map 
    \begin{multline}
        \operatorname{Avg}_H:(\MM_S\rtimes_{\Ad U_S\restriction_\RR}\RR)\otimes B(L^2(H)\otimes\KK)\\\to((\MM_S\rtimes_{\Ad U_S\restriction_\RR}\RR)\otimes B(L^2(H)\otimes\KK))^{\Ad U_R\restriction_H\otimes 1_{L^2(\RR)}\otimes \lambda_H\otimes 1_\KK},
    \end{multline}
    given by
    \begin{equation}
        \operatorname{Avg}_H(x)=\int_H \Ad (U_R\restriction_H\otimes 1_{L^2(\RR)}\otimes \lambda_H\otimes 1_\KK)(h,x)\,\diff\mu_H(h).
    \end{equation}
    Note that this is a linear positive map, so in particular for \begin{equation}
        x,y\in ((\MM_S\rtimes_{\Ad U_S\restriction_\RR}\RR)\otimes B(L^2(H)\otimes\KK))^+,
    \end{equation}  with $x\geq y\geq 0$, one has
    \begin{equation}
        \operatorname{Avg}_H(x)\geq\operatorname{Avg}_H(y)\geq 0,
    \end{equation}
    so that 
    \begin{multline}
        \operatorname{Avg}_H(((\MM_S\rtimes_{\Ad U_S\restriction_\RR}\RR)\otimes B(L^2(H)\otimes\KK))^+)\\=\left(((\MM_S\rtimes_{\Ad U_S\restriction_\RR}\RR)\otimes B(L^2(H)\otimes\KK))^{\Ad U_R\restriction_H\otimes 1_{L^2(\RR)}\otimes \lambda_H\otimes 1_\KK}\right)^+. 
    \end{multline}
    By dominated convergence, we can see that for $x\in ((\MM_S\rtimes_{\Ad U_S\restriction_\RR}\RR)\otimes B(L^2(H)\otimes\KK))^+$
    \begin{equation}
        \tau(\operatorname{Avg}_H(x))=\int_H \tau\left(\Ad (U_R\restriction_H\otimes 1_{L^2(\RR)}\otimes \lambda_H\otimes 1_\KK)(h,x)\right)\diff\mu_H(x)=\tau(x).
    \end{equation}
    Since the map $\operatorname{Avg}_H(x)$ is positive linear, we conclude that $\tau$ restricts to a semifinite trace on  $((\MM_S\rtimes_{\Ad U_S\restriction_\RR}\RR)\otimes B(L^2(H)\otimes\KK))^{\Ad U_R\restriction_H\otimes 1_{L^2(\RR)}\otimes \lambda_H\otimes 1_\KK}$ and hence this algebra is semifinite.
    
    We now verify finiteness of $(\MM_S\otimes B(\HH_R))^{\Ad U_S\restriction_\RR\otimes U_R\restriction_\RR}$ under the assumption given by Eq.\ \eqref{eq:type_change_condition}. By Eq.~\eqref{eq:compression_positive}, $\tau$ restricts to a finite trace on $(1_{\HH_S}\otimes p)((\MM_S\rtimes_{\Ad U_S\restriction_\RR}\RR)\otimes B(L^2(H)\otimes\KK))(1_{\HH_S}\otimes p)$ if and only if 
    \begin{equation}
        \tau(1_{\HH_S}\otimes p)<\infty.
    \end{equation}
    To evaluate this trace, we use 
    \begin{equation}
        (\tilde{U}\Psi_K)(t)=\frac{1}{\sqrt{\beta}}(\mathcal{F}^*f_K)\left(-\frac{t}{\beta}\right)\Omega=(\mathcal{F}^*\tilde{f}_K)(t)\Omega,
    \end{equation}
    where $\tilde{f}_K(\xi)=\sqrt{\beta}f_K(-\beta\xi),$ with $f_K(x)=\exp(x/2)\chi_K(x)$. We can now calculate that
    \begin{align}
        \tau(1_{\HH_S}\otimes p)=&\sup_{\substack{K\subset\RR\text{ compact, }\\N\in\mathbb{N}}}\sum_{n=0}^N\langle \mathcal{F}^*\tilde{f}_K\otimes\psi_n,p(\mathcal{F}^*\tilde{f}_K\otimes\psi_n)\rangle\nonumber\\
        =&\sup_{\substack{K\subset\RR\text{ compact, }\\N\in\mathbb{N}}}\sum_{n=0}^N\langle \tilde{f}_K\otimes\psi_n,\hat{p}(\tilde{f}_K\otimes\psi_n)\rangle\nonumber\\
        =&\sup_{\substack{K\subset\RR\text{ compact, }\\N\in\mathbb{N}}}\beta\int_\RR\chi_K(-\beta\xi)\exp(-\beta\xi)\sum_{n=0}^N \langle\psi_n,\hat{p}(\xi)\psi_n\rangle\diff\xi\nonumber\\
        =&\sup_{N\in\mathbb{N}}\beta\int_\RR\exp(-\beta\xi)\sum_{n=0}^N \langle\psi_n,\hat{p}(\xi)\psi_n\rangle\diff\xi.
    \end{align}
    Note that 
    \begin{equation}\label{eq:multiplicity function in a basis}
        \sum_{n\in \mathbb{N}}\langle\psi_n,\hat{p}(\xi)\psi_n\rangle=\tau_2(p(\xi))=\dim(\hat{p}(\xi)(L_2(H)\otimes\KK))=m_{U_R}(\xi),
    \end{equation} 
    with $m_{U_R}$ the spectral multiplicity function of the QRF.
    By monotone convergence, it follows that $\tau(1_{\HH_S}\otimes p)<\infty$ if and only if 
    \begin{equation}
        \beta\int_\RR\exp(-\beta\xi)m_{U_R}(\xi)\diff\xi<\infty.
    \end{equation}

    This shows that if Eq.\ \eqref{eq:type_change_condition} holds, then $(\MM_S\otimes B(\HH_R))^{\Ad U_S\restriction_\RR\otimes U_R\restriction_\RR}$ is finite. Since any subalgebra of a finite algebra is finite, this also implies that $(\MM_S\otimes B(\HH_R))^{\Ad U_S\otimes U_R}$ is finite.
\end{proof}
\begin{remark}
    To prove semifiniteness of the algebra of invariant operators, it was required that the unitary representation $U_S$ of $G$ is strongly continuous, as this is for instance needed to ensure that the map $\operatorname{Avg}_H$ used in the proof above is well defined. For the proof that $(\MM_S\otimes B(\HH_R))^{\Ad U_S\otimes U_R}$ is finite under the conditions described above, we can actually drop the requirement that $U_S\restriction_H$ is strongly continuous. The only continuity property on $U_S$ that is relevant to this case is that $U_S\restriction_\RR$ is strongly continuous, which can always be realised due to the assumption that $\MM_S$ admits a faithful normal KMS state with respect to $\Ad U_S\restriction_\RR$. In fact, one can similarly relax the compactness assumption on $H$ in the proof of finiteness of the algebra of invariants, provided that Eq.~\eqref{eq:type_change_condition} still holds.
\end{remark}

It should be noted that we only prove that Eq.\ \eqref{eq:type_change_condition} is a sufficient condition for finiteness of the algebra of invariants, but in general not a necessary one. This is due to the fact that in general one does not expect $(\MM_S\otimes B(\HH_R))^{\Ad(U_S\otimes U_R)}$ to be a factor. Hence, even though by the theorem above this algebra is semifinite and the trace constructed in the proof is a semifinite trace, it need not be a unique trace (up to rescaling) on this algebra. Therefore, even when this trace is not in fact finite, that does not mean that no trace on this algebra can be a finite trace. It is therefore conceivable that, even when Eq.\ \eqref{eq:type_change_condition} is not satisfied, the algebra of invariant operators may still be a finite von Neumann algebra. While we do not rule out the possibility of formulating a sufficient and necessary condition of a similar nature to  Eq.\ \eqref{eq:type_change_condition}, we shall leave this open for now.

We remark that under certain simplifying assumptions, Eq.\ \eqref{eq:type_change_condition} does actually become a necessary condition for finiteness of the algebra of invariants. First of all, by \cite{Takesaki1973duality}, if $\sigma$ is a modular action on $\MM_S$ one has that the crossed product $\MM_S\rtimes_{\sigma}\RR$ is a type $\mathrm{II}_\infty$ \textit{factor} if and only if $\MM_S$ is a type $\mathrm{III}_1$ factor (see Prop.\ \ref{prop:type3to2fact}). Second, the tensor product of a type $\mathrm{II}_\infty$ factor with a type $\mathrm{I}$ factor is again a type $\mathrm{II}_\infty$ factor, see \cite[Thm.~11.2.27, Prop.~11.2.21-22]{kadison1997fundamentals}. Third, for $\MM$ a factor and $p\in\MM$ a projection, the algebra $p\MM p$ is also a factor, see \cite[Ch. 2.1.]{dixmier1981VonNeumann}. Up to rescaling, factors admit a unique faithful normal trace. Hence any such trace on a finite factor will be a finite trace. We therefore can conclude the following.
\begin{corollary}
    \label{cor:III1_type_change}
    Let $\MM_S, \HH_S, G, U_S, \omega, U_R, E,\HH_R$ be as in Thm.\ \ref{thm:type_change}. Assume that $\MM_S$ is a type $\mathrm{III}_1$ factor and that the group $H$ is trivial. Then the algebra $(\MM_S\otimes B(\HH_R))^{\Ad (U_S\otimes U_R)}$ is a type $\mathrm{II}_1$ factor if and only if Eq.\ \eqref{eq:type_change_condition} is satisfied.
\end{corollary}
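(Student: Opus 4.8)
The plan is to identify the invariant algebra with a compression of a concrete type $\mathrm{II}_\infty$ factor and then read off its type from the trace of the compressing projection, which was already computed in the proof of Thm.~\ref{thm:type_change}. Since $H$ is trivial we have $G=\RR$, and specialising the spatial isomorphism established in the proof of Thm.~\ref{thm:type_change} (with $L^2(H)=\CC$, so $B(L^2(H)\otimes\KK)=B(\KK)$) yields
\begin{equation*}
(\MM_S\otimes B(\HH_R))^{\Ad(U_S\otimes U_R)}\cong (1_{\HH_S}\otimes p)\bigl((\MM_S\rtimes_{\Ad U_S\restriction_\RR}\RR)\otimes B(\KK)\bigr)(1_{\HH_S}\otimes p),
\end{equation*}
where $p\in\lambda_\RR(\RR)'\otimes B(\KK)$ is the projection attached to the QRF. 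Because $\omega$ is a KMS state at inverse temperature $\beta$, the action $\Ad U_S\restriction_\RR$ equals the modular action $\sigma$ up to the rescaling $t\mapsto -t/\beta$, and this rescaling induces a spatial isomorphism of crossed products, so $\MM_S\rtimes_{\Ad U_S\restriction_\RR}\RR\cong\MM_S\rtimes_\sigma\RR$.

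Next I would invoke the three structural facts recalled just before the corollary. First, since $\MM_S$ is a type $\mathrm{III}_1$ factor, Prop.~\ref{prop:type3to2fact} gives that $\MM_S\rtimes_\sigma\RR$ is a type $\mathrm{II}_\infty$ factor. Second, tensoring with the type $\mathrm{I}$ factor $B(\KK)$ preserves this, so $\mathcal{N}:=(\MM_S\rtimes_\sigma\RR)\otimes B(\KK)$ is a type $\mathrm{II}_\infty$ factor. Third, $1_{\HH_S}\otimes p$ is a projection in $\mathcal{N}$, so the compression $(1\otimes p)\mathcal{N}(1\otimes p)$ is again a factor, and being a compression of a type $\mathrm{II}_\infty$ factor it is a type $\mathrm{II}$ factor (in particular never type $\mathrm{I}$). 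The crux is the dichotomy for type $\mathrm{II}$ factors: $(1\otimes p)\mathcal{N}(1\otimes p)$ is of type $\mathrm{II}_1$ precisely when $1\otimes p$ is a \emph{finite} projection in $\mathcal{N}$, equivalently---since in a semifinite factor the faithful normal semifinite trace is unique up to scaling and detects finiteness of projections---when $\tau(1_{\HH_S}\otimes p)<\infty$, with $\tau$ the trace constructed in the proof of Thm.~\ref{thm:type_change}.

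Finally I would substitute the value of this trace already obtained there, namely
\begin{equation*}
\tau(1_{\HH_S}\otimes p)=\beta\int_\RR\exp(-\beta\xi)\,m_{U_R}(\xi)\,\diff\xi,
\end{equation*}
so that $\tau(1_{\HH_S}\otimes p)<\infty$ holds if and only if Eq.~\eqref{eq:type_change_condition} is satisfied. This closes both directions: convergence of the integral forces $1\otimes p$ to be finite and hence the invariant algebra to be type $\mathrm{II}_1$, while conversely if the invariant algebra is type $\mathrm{II}_1$ then its identity $1\otimes p$ necessarily has finite trace (a finite factor carries a genuinely finite trace), so the integral converges. I expect the only genuinely delicate point to be the justification that the compression remains a type $\mathrm{II}$ factor whose finiteness is governed \emph{exactly} by the finiteness of $\tau(1\otimes p)$---i.e.\ the standard but essential fact that compressing a type $\mathrm{II}_\infty$ factor by a finite projection yields type $\mathrm{II}_1$ and by an infinite projection yields type $\mathrm{II}_\infty$; everything else is bookkeeping already carried out in Thm.~\ref{thm:type_change}.
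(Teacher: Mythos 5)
Your proposal is correct and follows essentially the same route as the paper, whose proof is the paragraph preceding the corollary: the same three structural facts (Prop.~\ref{prop:type3to2fact} for the type $\mathrm{II}_\infty$ crossed product, stability under tensoring with a type $\mathrm{I}$ factor, and that compressions of factors are factors), combined with uniqueness up to scaling of the semifinite normal trace on a factor and the value $\tau(1_{\HH_S}\otimes p)=\beta\int_\RR\exp(-\beta\xi)m_{U_R}(\xi)\,\diff\xi$ computed in the proof of Thm.~\ref{thm:type_change}. Your explicit finite/infinite-projection dichotomy is just a slightly more detailed phrasing of the paper's trace-uniqueness argument, so there is nothing to add.
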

Here we imposed the requirement that $H$ is trivial, as for a factor $\MM$ with continuous action of $H$, the subalgebra $\operatorname{Avg}_H(\MM)$ will generically not be a factor. 

\subsection{A thermal interpretation of the type reduction condition}
\label{sec:thermal_int}
The condition Eq.~\eqref{eq:type_change_condition} can be given an interpretation in terms of thermal properties of the quantum reference frame. Suppose that, for a compactly stabilised QRF $(U_R,E,\HH_R)$ for a group $G=\RR\times H$ as above, the energy spectrum is essentially bounded from below, i.e. there is some $\xi_0\in\RR$ such that $m_{U_R}(\xi)=0$ for $\xi<\xi_0$ almost everywhere. In this case, defining $N_{U_R}:\RR\to[0,\infty]$ by
\begin{equation}
    N_{U_R}(\xi)=\int_{\xi_0}^{\xi} m_{U_R}(\eta)\diff\eta,
\end{equation}
one readily sees on integrating by parts, that 
\begin{equation}
    \int_\RR\exp(-\beta\xi)m_{U_R}(\xi)\diff\xi=\lim_{\xi_1\to\infty}\left(\exp(-\beta\xi_1)N_{U_R}(\xi_1)+\beta\int_{\xi_0}^{\xi_1}\exp(-\beta\xi)N_{U_R}(\xi)\diff\xi\right).
\end{equation}
Therefore, under the assumptions above, Eq.\ \eqref{eq:type_change_condition} in particular implies
\begin{equation}
    \label{eq:spectral_growth_cond}
    \int_{\xi_0}^{\infty}\exp(-\beta\xi)N_{U_R}(\xi)\diff\xi<\infty.
\end{equation}
Loosely speaking, $N_{U_R}(\xi)$ quantifies the degrees of freedom of the quantum reference frame that have an energy below $\xi$. Since this function is increasing in $\xi$, one sees that as a condition, Eq.\ \eqref{eq:spectral_growth_cond} restricts the asymptotic growth of the function $N_{U_R}(\xi)$, i.e. this function should not grow too quickly. In that sense, one can view this condition as similar in spirit to the nuclearity condition appearing in the context of quantum field theory on Minkowski spacetime, see e.g.\ \cite{Buchholz:1986dy}. Roughly speaking, the QFT nuclearity conditions restrict how the number of states that are, in some sense, localised in a given spacetime region and have an energy bounded above by $E$, grow as a function of $E$. One particular consequence of such a nuclearity condition is the existence of positive temperature KMS states on that quantum field theory, see \cite{Buchholz:1988fk}. Another is the split property \cite{Buchholz1987Universal}.

For quantum reference frames as above, one can straightforwardly formulate a KMS condition with respect to the evolution given by $U_R\restriction_\RR$. Note however that due to the existence of a covariant POVM $E$ on $\HH_R$ and that our QRFs are compactly stabilised, the existence of time translation invariant (normal) states is ruled out. This can be seen as follows. Let $\Sigma=\RR\times H/H'$ be the value space of the POVM $E$, consider $X_n=[n,n+1)\times H/H'$ and let $\omega$ a time translation invariant state on the $C^*$ algebra generated by $E(\Bor(\Sigma))$. Then note in particular that
\begin{equation}
    \omega(E(X_{n+1}))=\omega(E(X_n)).
\end{equation}
However, since for each $N\in\mathbb{N}$ one has
\begin{equation}
    \sum_{n=-N}^N E(X_n)\leq 1,
\end{equation}
it follows that $\omega(E(X_n))\leq \frac{1}{2N+1}$. Hence $\omega(E(X_n))=0$. Clearly, such a state cannot extend to a normal state on $B(\HH_R)$ or even on $E(\Bor(\Sigma))''$. Nevertheless, one can use condition \eqref{eq:type_change_condition} to construct a normal \textit{weight} on $B(\HH_R)$ that satisfies the KMS condition on a particular subalgebra. This construction is given below, deferring some of the technical details to Appendix \ref{appx:QRF_thermal_weight}.

Note that any compactly stabilised QRF $(U_R,E,\HH_R)$ for $G=\RR\times H$ with value space $\Sigma=\RR\times H/H'$ defines a principal QRF for $\RR$ given by $(U_R\restriction_\RR,\tilde{E},\HH_R)$, with $\tilde{E}:\Bor(\RR)\to B(\HH_R)$ given by $\tilde{E}(X)=E(X\times H/H')$. Hence without loss of generality we can restrict our discussion to principal QRFs for the group $G=\RR$. For such a reference frame, we first consider a subset of $B(\HH_R)$ consisting of operators that satisfy an integrability condition.
\begin{definition}
\label{def:pos_atl}
    For $(U_R,E,\HH_R)$ a principal QRF for the additive group $\RR$, consider for each $a\in B(\HH_R)$ and $\psi\in\HH_R$ the integral $I_{a,\psi}\in[0,\infty]$ given by
    \begin{equation}
        I_{a,\psi}=\int_\RR \Vert aU_R(t)^*\psi\Vert^2\diff t,
    \end{equation}
    and for fixed $a\in B(\HH_R)$, let $C_a\in [0,\infty]$ be
    \begin{equation}\label{eq:Ca}        C_a=\sqrt{\sup_{\psi\in\HH_R,\,\Vert\psi\Vert=1}I_{a,\psi}},
    \end{equation}
    with the convention $\sqrt{\infty}=\infty$. We now define the set $\mathcal{P}\subset B(\HH_R)$ of \textup{positive $t$-integrable operators} by
    \begin{equation}
        \mathcal{P}=\{a^*a:a\in B(\HH_R),\, C_a<\infty\}.
    \end{equation}
    We refer to the linear span of $\mathcal{P}$ as the \textup{*-algebra of $t$-integrable operators},
    \begin{equation}
        \mathcal{D}:=\left\{\sum_{n=1}^Nc_n x_n:N\in\mathbb{N},\, c_n\in\CC,\,x_n\in\mathcal{P}\right\}.
    \end{equation}
\end{definition}
The operators $x\in\mathcal{P}$ have the property that for $\psi\in\HH_R$, the following integral is well defined, positive and admits the upper bound
\begin{equation}
    \int_\RR \langle\psi ,U_R(t)xU_R(t)^*\psi\rangle \,\diff t\leq C_{\sqrt{x}}^2\Vert\psi\Vert^2. 
\end{equation}
This means that as a weak integral
\begin{equation}
    I(x):=\int_\RR  U_R(t)xU_R(t)^*\,\diff t\in (B(\HH_R)^+)^{\Ad U_R},
\end{equation}
hence motivating the term $t$-integrable. In particular, the operators 
\begin{equation}
    E(f)=\int_\RR f(t)\,\diff E(t),
\end{equation}
are positive $t$-integrable for $f\in L^1(\RR)\cap L^\infty(\RR)^+$, i.e.\ a positive bounded $L^1$ function, see Lem.\ \ref{lem:pos_atl}. By linearity this implies that $E(f)\in\mathcal{D}$ for $f\in L^1(\RR)\cap L^\infty(\RR)$.

That $\mathcal{D}$ is indeed a *-algebra, is also shown in Lem.\ \ref{lem:pos_atl}. This is a consequence of the fact that $\mathcal{P}$ is a hereditary convex cone. This means that for each $x\in\mathcal{P}$ and $y\in B(\HH_R)^+$ with $x\geq y$ one has $y\in \mathcal{P}$ and that $\mathcal{P}$ is closed under convex sums and rescaling, see \cite[Lem.~VII.1.2]{takesaki2002}. In particular, this allows one to linearly extend the $t$-integral 
\begin{equation}
    I:\mathcal{D}\to B(\HH_R)^{\Ad U_R}.
\end{equation}

We now construct a weight on $B(\HH_R)$ (see Def.\ \ref{def:weight_trace}). 
\begin{definition}
    \label{def:beta_weight}
    Given a principal QRF $(U_R,E,\HH_R)$ for the group $\RR$ with $W:\HH_R\to L^2(\RR)\otimes\KK$ as in Thm.\ \ref{thm:qref_embed}, let $\{\psi_n\}_{n=1}^\infty$ be an orthonormal basis of $\KK$ (with $\psi_n=0$ for $n>\dim(\KK)$ if $\KK$ finite dimensional). Furthermore, for $K\subset\RR$ compact and $z\in\CC$, let $f_K^{(z)}\subset L^2(\RR)$ be given by
    \begin{equation}
        f_K^{(z)}(\xi)=\chi_K(\xi)\exp(-z\xi/2).
    \end{equation}
    For $\beta>0$, we define the weight $\varphi_\beta:B(\HH_R)^+\to [0,\infty]$ by
    \begin{equation}
        \varphi_\beta(x)=\sup_{K,N}\sum_{n=1}^N\int_\RR \langle W^*(\mathcal{F}^*f_K^{(\beta)}\otimes \psi_n),U_\RR(t)xU_\RR(t)^*W^*(\mathcal{F}^*f_K^{(\beta)}\otimes \psi_n)\rangle\diff t,
    \end{equation}
    where the supremum is taken over all $N\in\mathbb{N}$ and $K\subset \RR$.
\end{definition}
Using the fact that $U_\RR(t)^*W^*(\mathcal{F}^*\otimes 1_{\KK})=W^*(\mathcal{F}^*T_{\exp(-it.)}\otimes 1_{\KK})$, and defining $\Psi_{K,n}^{(z)}=(\mathcal{F}^*f_K^{(z)}\otimes \psi_n)$, we can alternatively write
\begin{equation}
        \varphi_\beta(a)=\sup_{K,N}\sum_{n=1}^N\int_\RR \langle W^*\Psi_{K,n}^{(\beta+2it)},aW^*\Psi_{K,n}^{(\beta+2it)}\rangle\diff t.
    \end{equation}
It is not immediately obvious that $\varphi_\beta$ is a normal weight (see Def.\ \ref{def:weight_trace}), as one typically may not interchange integration and suprema over non-countable sets. Using however that $\langle \psi, U_R(t)xU_R(t)^*\psi\rangle$ is a continuous function of $t$, normality is proven in Lem.\ \ref{lem:weight_normal}.

For $x\in \mathcal{P}$, the time integrated $I(x)$ is invariant under $\Ad U_R$. Using this, we see that on $\mathcal{P}$ the weight $\varphi_\beta$ relates to the trace constructed in the proof of Thm.\ \ref{thm:type_change} on the algebra $(\MM_S\otimes B(\HH_R))^{\sigma_\beta\otimes \Ad U_R}$ for $\sigma_\beta(t,.)=\sigma(-t/\beta,.)$ with $\sigma$ given by the modular action on a von Neumann algebra $\MM_S$. Namely, $\varphi_\beta$ is given by
\begin{equation}
    \varphi_\beta(x)=\tau(1_{\MM_S}\otimes I(x)).
\end{equation}
It is thus clear that $\varphi_\beta$ takes finite values on $\mathcal{P}$ provided that Eq.\ \eqref{eq:type_change_condition} holds, as then $\tau$ is a finite trace. That this implication in fact holds both ways, is proven in Lem.\ \ref{lem:weight_type_change}. Hence, if Eq.\ \eqref{eq:type_change_condition} holds, one can linearly extend $\varphi_\beta$ to $\mathcal{D}$. We shall denote this linear map by the same symbol as the weight, i.e.\  $\varphi_\beta:\mathcal{D}\to\CC$.

We now investigate if $\varphi_\beta$ indeed satisfies a KMS property. Note that the set $\mathcal{D}$ admits a natural norm $\Vert.\Vert_{\mathcal{D}}:\mathcal{D}\to\RR^+$ given by
\begin{equation}
    \Vert x\Vert_\mathcal{D}=\sup_{\psi\in\HH_R,\,\Vert\psi\Vert=1}\int_\RR\vert\langle \psi,U_R(t)xU_R(t)^*\psi\rangle\vert\diff t,
\end{equation}
where for $a^*a\in \mathcal{P}$ we have $\Vert a^*a\Vert_\mathcal{D}=\Vert I(a^*a)\Vert=C_{a}^2$. Subadditivity and absolute homogeneity of $\Vert.\Vert_{\mathcal{D}}$ are easily verified. To show positive definiteness of this map, we use that $U_R$ is a strongly continuous map, therefore $t\mapsto \vert\langle \psi,U_R(t)xU_R(t)^*\psi\rangle\vert$ is a continuous function. Hence $\Vert x\Vert_\mathcal{D}=0$ if and only if $\langle \psi,x\psi\rangle=0$ for all $\psi\in\HH_R$ and hence $x=0$. Note furthermore that $\Vert x\Vert_\mathcal{D}=\Vert x^*\Vert_\mathcal{D}$.
One now sees by a similar calculation as in Lem.\ \ref{lem:weight_type_change} that for $x\in\mathcal{D}$, that 
\begin{equation}
    \vert \varphi_\beta(x)\vert\leq \Vert x\Vert_{\mathcal{D}}\int_\RR \exp(-\beta\xi)m_{U_R}(\xi)\diff\xi.
\end{equation}
Given $x,y\in\mathcal{D}$, we have for $t\in \RR$ that $yU_R(t)xU_R(t)^*\in \mathcal{D}$. In fact we find
\begin{align}
    \Vert yU_R(t)xU_R(t)^*\Vert_\mathcal{D}=&\sup_{\Vert\psi\Vert=1}\int_\RR\vert \langle\psi,U(s)yU_R(t)xU_R(t+s)^*,\psi\rangle\vert \diff s\nonumber\\
    \leq& \sup_{\Vert\psi\Vert=1}\int_\RR\Vert y^*U_R(s)^*\psi\Vert\Vert xU_R(t+s)^*\psi\Vert \diff s\nonumber\\
    \leq& \sup_{\Vert\psi\Vert=1}\sqrt{\int_\RR\Vert y^*U_R(s)^*\psi\Vert^2\diff s}\sqrt{\int_\RR\Vert xU_R(t+s)^*\psi\Vert^2 \diff s}\nonumber\\
    \leq& C_{y^*}C_{x},
\end{align}
It follows that $t\mapsto \varphi_\beta(yU_R(t)xU_R(t)^*)$ is a bounded function. Of course a similar result holds for $t\mapsto \varphi_\beta(U_R(t)xU_R(t)^*y)$. Ideally, we would find a function on the strip $S_\beta$ that takes these functions as boundary values, but we will find it necessary to restrict to a subalgebra to achieve this. We write
\begin{align}
    \varphi_\beta(yU_R(t)xU_R(t)^*)=&\lim_{K\uparrow\RR}\sum_{n=1}^\infty \int_\RR \langle W^*\Psi_{K,n}^{(\beta+2is)},yU_R(t)xU_R(t)^* W^*\Psi_{K,n}^{(\beta+2is)}\rangle\diff s\nonumber\\
    =&\lim_{K\uparrow\RR}\sum_{n=1}^\infty \int_\RR \langle W^*\Psi_{K,n}^{(\beta+2is)},yU_R\left(t\right)xW^*\Psi_{K,n}^{(\beta+2is+2it)}\rangle\diff s\nonumber\\
    =&\lim_{K\uparrow\RR}\sum_{n=1}^\infty \int_\RR \langle W^*\Psi_{K,n}^{(\beta-it+2is)},yU_R\left(\frac{t}{2}\right)U_R\left(\frac{t}{2}\right)xW^*\Psi_{K,n}^{(\beta+it+2is)}\rangle\diff s,
\end{align}
where we have changed variables from $s$ to $s+t/2$ in the last line. To the expression above, we can apply Eq.\ \eqref{eq:time_trace_completeness} of Lem.\ \ref{lem:time_trace_completeness} and find that
\begin{align}
    \varphi_\beta(yU_R(t)xU_R(t)^*)=\frac{1}{2\pi}\lim_{K\uparrow\RR}\sum_{n=1}^\infty \lim_{K'\uparrow\RR}\sum_{n'=1}^\infty\int_{\RR^2}& \langle W^*\Psi_{K,n}^{(\beta-it+2is)},yW^*\Psi_{K',n'}^{(2is'-it)}\rangle\nonumber\\
    &\times\langle W^*\Psi_{K',n'}^{(2is'+it)},xW^*\Psi_{K,n}^{(\beta+it+2is)}\rangle\diff s\diff s'.
\end{align}
At this point one is tempted to define a candidate holomorphic interpolating function $F_{x,y}$ on the usual complex strip $S_\beta$ by
\begin{align}
    F_{x,y}(z)=\frac{1}{2\pi}\lim_{K\uparrow\RR}\sum_{n=1}^\infty \lim_{K'\uparrow\RR}\sum_{n'=1}^\infty\int_{\RR^2}& \langle W^*\Psi_{K,n}^{(\beta-i\overline{z}+2is)},yW^*\Psi_{K',n'}^{(2is'-iz)}\rangle\nonumber\\
    &\times\langle W^*\Psi_{K',n'}^{(2is'+i\overline{z})},xW^*\Psi_{K,n}^{(\beta+iz+2is)}\rangle\diff s\diff s',
\end{align}
in which  the complex conjugate of $z$ appears at entries where the inner product is antilinear. Certainly for $t\in \RR$ $F_{x,y}(t)=\varphi_\beta(yU_R(t)xU_R(t)^*)$ is well defined. If we consider $z=t+i\beta$, we find
\begin{align}
    F_{x,y}(t+i\beta)=\frac{1}{2\pi}\lim_{K\uparrow\RR}\sum_{n=1}^\infty \lim_{K'\uparrow\RR}\sum_{n'=1}^\infty\int_{\RR^2}& \langle W^*\Psi_{K,n}^{(2is-it)},yW^*\Psi_{K',n'}^{(\beta+2is'-it)}\rangle\nonumber\\
    &\times\langle W^*\Psi_{K',n'}^{(\beta+2is'+it)},xW^*\Psi_{K,n}^{(it+2is)}\rangle\diff s\diff s'.
\end{align}
That the limiting operations $\lim_{K\uparrow\RR}\sum_{n=1}^\infty$ and $\lim_{K'\uparrow\RR}\sum_{n'=1}^\infty$ can be interchanged, is shown in Lem.\ \ref{lem:F_int_lim}. It now follows that $F_{x,y}(t+i\beta)=\varphi_\beta(x U_R(t)^*yU_R(t))=\varphi_\beta(U_R(t)x U_R(t)^*y)$.

From the arguments above, it seems that the function $F_{x,y}$ is a promising candidate to interpolate between $\varphi_\beta(yU_R(t)xU_R(t)^*)$ and $\varphi_\beta(U_R(t)xU_R(t)^*y)$. The problematic aspect however, is that for $z$ in the interior of $S_\beta$, the expression for $F_{x,y}(z)$ in general does not converge, let alone define an analytic function. To overcome this, we follow a similar strategy as detailed in \cite{Buchholz:1988fk} and introduce an algebra of analytic operators on which we will show that the weight $\varphi_\beta$ satisfies the KMS condition.
\begin{definition}
    \label{def:ana_pos_atl}
    For $(U_R,E,\HH_R)$ a QRF with $t$-integrable operators $\mathcal{D}\subset B(\HH_R)$ as in Def.\ \ref{def:pos_atl}, let $\mathcal{D}_1\subset B(\HH_R)$ be the *-algebra generated by operators of the form
    \begin{equation}
        \label{eq:op_smear}
        x(f)=\int_{\RR}f(t)U_R(t)xU_R(t)^*\diff t,
    \end{equation}
    where $f\in L^1(\RR)$ is a Gaussian, i.e. $f(t)=\gamma\exp(-\alpha (t-z)^2)$ for some $\alpha>0$ and $\gamma,z\in\CC$, and $x\in \mathcal{D}$.
\end{definition}
Some convenient properties of this algebra are listed in Lem.\ \ref{lem:ana_pos_atl}, some of which are analogous to those found in \cite[Lem.\ 3.2]{Buchholz:1988fk}. In particular, $\mathcal{D}_1$ is weakly and strongly dense in $B(\HH_R)$ and contains $E(f)$ for Gaussian $f\in L^\infty(\RR)$. The latter guarantees that $\varphi_\beta$ is nonzero on $\mathcal{D}_1$, as for $f\in L^1(\RR)\cap L^\infty(\RR)^+$ it is shown in the proof of Lem.\ \ref{lem:weight_type_change} that
\begin{equation}
    \varphi_\beta(E(f))=\Vert f\Vert_{L^1(\RR)}\int_\RR\exp(-\beta\xi)m_{U_R}(\xi).
\end{equation}
Moreover, it is shown in Lem.\ \ref{lem:ana_pos_atl} that for $x\in\mathcal{D}_1$, the action $t\mapsto U_R(t)x U_R(t)^*$ extends to a holomorphic function $z\mapsto \alpha(z,x)$ for $z\in\CC$.

We now consider $x,y\in\mathcal{D}_1$. We define for $z\in S_\beta$
\begin{equation}
    F_{x,y}(z)=\varphi_\beta(y\alpha(z,x)).
\end{equation}
It is shown in Lem.\ \ref{lem:KMS_F_int} that this function is continuous and bounded, and analytic on the interior on $S_\beta$. Furthermore, it satisfies the boundary conditions
\begin{equation}
    F_{x,y}(t)=\varphi_\beta(y\alpha(t,x)),\, F_{x,y}(t+i\beta)=\varphi_\beta(\alpha(t,x)y).
\end{equation}
This leads us to the following conclusion.

\begin{theorem}\label{thm:KMS_weight}
    Let $(U_R,E,\HH_R)$ be a quantum reference frame as in Def.\ \ref{def:pos_atl}, $\varphi_\beta$ the normal weight as in Def.\ \ref{def:beta_weight} and $\mathcal{D}_1\subset B(\HH_R)$ as in Def.\ \ref{def:ana_pos_atl} a weakly/strongly dense *-subalgebra, then $\varphi_\beta$ takes finite values on $\mathcal{D}_1$ if and only if Eq.\ \eqref{eq:type_change_condition} holds. It then further satisfies the KMS condition on $\mathcal{D}_1$ at inverse temperature $\beta$ with respect to $\Ad U_R$.
\end{theorem}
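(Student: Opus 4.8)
The plan is to assemble the machinery already developed in this section and in the appendix (Lemmas~\ref{lem:weight_type_change}, \ref{lem:ana_pos_atl} and~\ref{lem:KMS_F_int}) into the two assertions of the theorem: the finiteness equivalence for $\varphi_\beta$ on $\mathcal{D}_1$, and the KMS property. As observed just before Def.~\ref{def:pos_atl}, any compactly stabilised QRF for $G=\RR\times H$ restricts to a principal QRF for $\RR$ via $\tilde E(X)=E(X\times H/H')$, so it suffices to treat the case $G=\RR$ throughout.

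For the finiteness equivalence I would argue in two directions. The \emph{if} direction rests on the identity $\varphi_\beta(x)=\tau(1_{\MM_S}\otimes I(x))$ valid on $\mathcal{P}$, where $\tau$ is the trace built in the proof of Thm.~\ref{thm:type_change}: when Eq.~\eqref{eq:type_change_condition} holds one has $\tau(1\otimes p)<\infty$, so $\tau$ is finite and $\varphi_\beta$ is finite on $\mathcal{P}$, extending linearly to $\mathcal{D}$ and thence to $\mathcal{D}_1$, using that the Gaussian-smeared generators of $\mathcal{D}_1$ lie in $\mathcal{D}$ (Lem.~\ref{lem:ana_pos_atl}). This is precisely the content of Lem.~\ref{lem:weight_type_change}, which I would simply cite. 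The \emph{only if} direction is cleaner: $\mathcal{D}_1$ contains the operators $E(f)$ for Gaussian $f\in L^1(\RR)\cap L^\infty(\RR)^+$, and
\begin{equation}
    \varphi_\beta(E(f))=\Vert f\Vert_{L^1(\RR)}\int_\RR \exp(-\beta\xi)m_{U_R}(\xi)\,\diff\xi,
\end{equation}
so divergence of the integral in Eq.~\eqref{eq:type_change_condition} forces $\varphi_\beta(E(f))=\infty$ (the prefactor being strictly positive for a nonzero Gaussian), whence $\varphi_\beta$ cannot be finite on all of $\mathcal{D}_1$. This establishes the stated equivalence.

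For the KMS property, assuming Eq.~\eqref{eq:type_change_condition} so that $\varphi_\beta$ is finite on $\mathcal{D}_1$, I would fix $x,y\in\mathcal{D}_1$ and form the candidate interpolant
\begin{equation}
    F_{x,y}(z)=\varphi_\beta(y\,\alpha(z,x)),\qquad z\in S_\beta,
\end{equation}
where $z\mapsto\alpha(z,x)$ is the entire extension of $t\mapsto U_R(t)xU_R(t)^*$ guaranteed by Lem.~\ref{lem:ana_pos_atl}. By Lem.~\ref{lem:KMS_F_int}, $F_{x,y}$ is bounded and continuous on $S_\beta$, holomorphic in its interior, with boundary values $F_{x,y}(t)=\varphi_\beta(yU_R(t)xU_R(t)^*)$ and $F_{x,y}(t+i\beta)=\varphi_\beta(U_R(t)xU_R(t)^*y)$; since $\mathcal{D}_1$ is a weakly/strongly dense $*$-subalgebra, this is exactly the KMS condition for $\varphi_\beta$ with respect to $\Ad U_R$ at inverse temperature $\beta$.

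The genuine obstacle, and the reason $\mathcal{D}_1$ rather than $\mathcal{D}$ must appear, is the interior analyticity of $F_{x,y}$. As the discussion preceding Def.~\ref{def:ana_pos_atl} makes plain, the naive interpolant assembled from the matrix-element expansion converges on the two boundaries $\Im z\in\{0,\beta\}$ of $S_\beta$ but fails to converge, let alone define an analytic function, in the interior. The hard part is therefore handled by passing to the Gaussian-smeared analytic algebra $\mathcal{D}_1$ (following the strategy of~\cite{Buchholz:1988fk}), which supplies both the entire extension $\alpha(z,x)$ and the Gaussian decay needed to dominate the $z$-dependent integrals uniformly on compact subsets of $S_\beta$; this uniform domination, packaged in Lem.~\ref{lem:KMS_F_int}, is the crux of the argument, and the remainder of the proof is the routine assembly sketched above.
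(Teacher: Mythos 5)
Your proposal is correct and takes essentially the same route as the paper: its proof of Theorem~\ref{thm:KMS_weight} is exactly the assembly you describe, with the finiteness equivalence drawn from Lemma~\ref{lem:weight_type_change} together with $E(f)\in\mathcal{D}_1$ for positive Gaussians, and the KMS property from the interpolant $F_{x,y}(z)=\varphi_\beta(y\,\alpha(z,x))$ whose boundedness, interior analyticity and boundary values are supplied by Lemmas~\ref{lem:ana_pos_atl} and~\ref{lem:KMS_F_int}. Your identification of interior analyticity as the crux, resolved by passing to the Gaussian-smeared algebra $\mathcal{D}_1$ following~\cite{Buchholz:1988fk}, matches the paper's strategy precisely.
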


\section{Application to the de Sitter static patch}
\label{sec: deSitter}

We will now show how our general results apply in a physical example of recent interest: quantum fields on de Sitter spacetime. A particular scenario of local algebras' type reduction in the presence of an observer moving along a geodesic in the static patch of de Sitter has been analysed in \cite{chandrasekaran2023algebra}. We will demonstrate how this scenario arises as a special case in our general framework.

Recall that $n$-dimensional de Sitter spacetime is defined as the  hyperboloid 
\begin{equation}
    \{(x^0,...,x^n)\in \RR^{n+1}:(x^0)^2-\sum_{j=1}^n (x^j)^2 =-\rho^2\}\,,
\end{equation}
for some $\rho>0$, embedded in $(n+1)$-dimensional Minkowski spacetime, equipped with the induced metric. 

Up to isometries, the observer's geodesic can be parametrised in proper time $\tau$ by 
\begin{equation}
x^0=\rho \sinh (\tau/\rho)\,,\quad x^1=\rho \cosh (\tau/\rho)\,,\quad x^2=\dots=x^n=0\,.
\end{equation}
The associated static patch is defined as the causal hull of this geodesic and it is the intersection of the hyperboloid with a wedge of Minkowski space-time, where $\vert x^0\vert <x^1$. We regard the static patch as spacetime in its own right, and denote it as $M$. The static patch contains all the points that the observer could send and receive messages to and from. Hence, the static patch constitutes the maximal experimental domain for the observer, in which they can both coordinate an experiment, as well as receive the results. Crucially, these experiments may be conducted far away from the geodesic, possibly by other agents. We describe such an experiment by the relativistic quantum measurement framework given in Sec.\ \ref{sec: measurement in QFT}. Here we recall that $M$ has a group of (time-orientation and orientation preserving) isometries $G=\RR\times \mathrm{SO}(n-1)$. Its action, which preserves the observer's trajectory, is given by $\mathrm{SO}(n-1)$-rotations in $(x^2,\dots,x^n)$ coordinates and boosts in $(x^0,x^1)$ coordinates paramatrised by $t\in\RR$, i.e.:
\begin{equation}
    (x^0,x^1)\mapsto (x^0 \cosh(t/\rho)+x^1 \sinh(t/\rho),x^0 \sinh(t/\rho)+x^1 \cosh(t/\rho))\,.
\end{equation}
$M$ is static with respect to this $\RR$-action, which defines a time-like flow of spacetime points that we shall refer to as the \textit{static flow}. In order to distinguish between experiments related by a $G$-symmetry, the observer will use a compactly stabilised quantum reference frame as defined in Sec.\ \ref{Sec:QRFs}, see Sec.\ \ref{sec:qrf_measurement_general} for a general discussion. A full description of compactly stabilised QRFs applicable for the group $G=\RR\times \mathrm{SO}(n-1)$ is given in Sec.\ \ref{sec:RHcase}. In particular in Def.\ \ref{def:spec_mult} we defined for such a QRF $(U_R,E,\HH_R)$ the spectral multiplicity function $m_{U_R}:\RR\to [0,\infty]$.

 We describe a quantum field theory on $M$ in terms of a net of local algebras $\mathscr{A}(M;N)$ on $M$ with corresponding global algebra $\mathscr{A}(M)$, for simplicity we assume that these are $C^*$-algebras, see Sec.\ \ref{sec:AQFTbackground}. Possibly, this theory arises as a subnet of a theory on the full de Sitter spacetime. We assume
  that there exists a faithful state $\omega$   on $\Af(M)$, which induces a thermal state on
  $\MM_S\equiv \pi(\Af(M))''$ of inverse temperature $\beta\in (0,\infty)$ (i.e.\ KMS with respect to the static flow) which is invariant under the full action of $G$. 
Performing the usual completion in the associated GNS representation $\pi$, we obtain the von Neumann algebra
\begin{equation}
    \MM_S\equiv \pi(\Af(M))''\,.
\end{equation}
It is assumed that the action of $G$ on this algebra can be described in terms of a strongly continuous unitary representation $U_S$ on the GNS Hilbert space. 

  In Sec.\ \ref{sec:qrf_measurement_general} we gave a description of operators encoding outcomes of experiments probing $\Af(M)$ made relative to a QRF $(U_R,E,\HH_R)$. Provided that the quantum field is in a state in the folium of $\omega$, these operators are elements of
\begin{equation}
    (\MM_S\otimes B(\HH_R))^{\Ad U_S\otimes U_R}.
\end{equation}
By Thm.\ \ref{thm:type_change}, this algebra is semifinite, meaning that it admits a semifinite faithful normal trace. Moreover, if the spectral multiplicity of the frame satisfies
\begin{equation}
    \int_\RR \exp(-\beta\xi)m_{U_R}(\xi)\diff \xi<\infty,
\end{equation}
then this algebra is finite, meaning that there is a faithful normal trace on this algebra taking finite values on all elements.

Below we shall illustrate this general result by way of various examples of compactly stabilised quantum reference frames for which the resulting algebra of invariants is either semifinite or finite. We first consider the example most closely related to the discussion in \cite{chandrasekaran2023algebra}.

\paragraph{Example 1 (trivial $\textnormal{SO}(n-1)$ action)} A simple (yet relevant) class of compactly stabilised quantum reference frames for $G=\RR\times \mathrm{SO}(n-1)$ is given as follows. In each case, the value space is $\Sigma=\RR$, while the Hilbert space is $\HH_R=p_V L^2(\RR)$ for a projection
\begin{equation}
    p_V=\mathcal{F}^*T_{\chi_V}\mathcal{F},
\end{equation}
where $V\in \Bor(\RR)$ is a measurable set of nonzero Lebesgue measure. The group $G$ acts on $\HH_R$ via $U_R(t,h)=\lambda_\RR(t)$, the left translation action for $\RR$, so $\mathrm{SO}(n-1)$ acts trivially. Meanwhile, the POVM $E:\Bor(\RR)\to B(\HH_R)$ is defined by
\begin{equation}
    E(X)=p_V T_{\chi_X}.
\end{equation}
Note that indeed $U_R(t,h)E(X)U_R(t,h)^*=E(X+t)$. The spectral multiplicity function of this quantum reference frame is simply given by
\begin{equation}
    m_{U_R}(\xi)=\chi_V(\xi).
\end{equation} 
Since the action of $\mathrm{SO}(n-1)$ on $B(\HH_R)$ is trivial, one can now easily describe the algebra of invariant observables in terms of a crossed product algebra, see e.g.\ Sec.\ \ref{sec:RHinv_obs}.
\begin{equation}
    (\MM_S\otimes B(\HH_R))^{\Ad U_S\otimes U_R}=(1\otimes p_V)\left(\MM_S^{\Ad U_S\restriction_{SO(n-1)}}\rtimes_{\Ad U_S\restriction_\RR}\RR\right)(1\otimes p_V).
\end{equation}
Since $\MM_S$ admits a faithful normal KMS state w.r.t.\ $\Ad U_R\restriction_\RR$, this action matches (up to reparameterisation) with a modular action on $\MM_S$, i.e.\ \begin{equation}\label{eq:accountancy}
    \Ad U_R\restriction_\RR(t)=\sigma\left(-\frac{t}{\beta}\right).
\end{equation} As $\MM_S^{\Ad U_S\restriction_{\mathrm{SO}(n-1)}}\subset \MM_S$ and the unitary actions $U_S\restriction_{\RR}$ and $U_S\restriction_{\mathrm{SO}(n-1)}
$ commute, $\sigma$ also restricts to a modular action on $\MM_S^{\Ad U_S\restriction_{\mathrm{SO}(n-1)}}$. We apply Prop.\ \ref{prop:modular_trace}, to find that \begin{equation}
    \MM_S^{\Ad U_S\restriction_{\mathrm{SO}(n-1)}}\rtimes_{\Ad U_S\restriction_\RR}\RR
\end{equation} is a semifinite von Neumann algebra (see Appendix \ref{appx:factor_types} for a definition). Hence this also holds for its compression $(\MM_S\otimes B(\HH_R))^{\Ad U_S\otimes U_R}$. Moreover, by Eq.\ \eqref{eq:finite_trace}, this algebra is finite if
\begin{equation}
    \label{eq:simple_type_change}
    \int_V\exp(-\xi\beta)\diff \xi<\infty,
\end{equation}
Where we have accounted for the reparameterisation relating $\Ad U_S\restriction_\RR$ to a modular action via Eq.~\eqref{eq:accountancy}.
Eq.~\eqref{eq:simple_type_change} holds in particular if $V$ is semibounded from below, $\inf V>-\infty$. This constitutes a special case of our general result given in Thm.\ \ref{thm:type_change}. 

The example above relates quite explicitly to the discussion in \cite{chandrasekaran2023algebra}, see also Example 3 in Sec.~\ref{sec:qrf examples}. Even though the connection with quantum reference frames is not made explicit in~\cite{chandrasekaran2023algebra}, the quantum systems that are introduced to engineer an algebra of invariant observables are in fact QRFs of the form given above, with $V=[0,\infty)$. One noteable difference here is that~\cite{chandrasekaran2023algebra} only considers the algebra
\begin{equation}
    (\MM_S\otimes B(\HH_R))^{\Ad U_S\restriction_{\RR}\otimes U_R\restriction_{\RR}}=(1\otimes p_V)\left(\MM_S\rtimes_{\Ad U_S\restriction_\RR}\RR\right)(1\otimes p_V),
\end{equation}
i.e.\ only the time-translation invariance on the de Sitter static patch is taken into account. In this simplified setting we can apply Cor.\ \ref{cor:III1_type_change} if we furthermore assume that $\MM_S$ is a type III${}_1$ factor. This is known to be the case for various quantum field theories provided that the net $\mathcal{A}(M;N)$ is a subnet of a theory on full de Sitter and that the KMS state on $\mathcal{A}(M)$ extends to a `physically reasonable' state on this larger spacetime, see e.g.\ \cite[Thm.\ 3.6.g]{Verch:1996wv}. Under this assumption, $(\MM_S\otimes B(\HH_R))^{\Ad U_S\restriction_{\RR}\otimes U_R\restriction_{\RR}}$ is finite (and indeed a type $\textnormal{II}_1$ factor) if and only if Eq.\ \eqref{eq:simple_type_change} holds. As Eq.\ \eqref{eq:simple_type_change} certainly holds for the $V=[0,\infty)$ made in \cite{chandrasekaran2023algebra}, our results for this case are in clear agreement.

\paragraph{Example 2 (nontrivial $\textnormal{SO}(3)$ action in four dimensions)} For our second class of examples, we set for concreteness $n=4$. We consider a class of compactly stabilised quantum reference frames for $G=\RR\times \mathrm{SO}(3)$, where in this case we let the value space be $\Sigma=G$. We set $\HH_R=p(L^2(\RR)\otimes L^2(\mathrm{SO}(3)))$ for some projection $p\in \lambda_\RR(\RR)'\otimes\lambda_{\mathrm{SO}(3)}(\mathrm{SO}(3))'$. Via Cor.\ \ref{cor:QRF_compact_subrep} this projection defines a natural (principal) quantum reference frame with the unitary representation of $G$ given by 
\begin{equation}
    U_R(t,h)=\left(\lambda_\RR(t)\otimes\lambda_{\mathrm{SO}(3)}(h)\right)\restriction_{\HH_R}.
\end{equation} One could similarly as in Example 1 make the choice $p=p_V\otimes 1_{L^2(\mathrm{SO}(3))}$, for $V\in\Bor(\RR)$, which yields for the spectral multiplicity 
\begin{equation}
    m_{U_R}(\xi)=\chi_V(\xi)\cdot\infty,
\end{equation}
due to the infinite dimension of $L^2(\textnormal{SO}(3))$.
It is clear that for $V$ of nonzero measure Eq.\ \eqref{eq:type_change_condition} is not satisfied. Hence for this choice of $p$, the algebra of invariant operators
\begin{multline}
    (\MM_S\otimes B(\HH_R))^{\Ad U_S\otimes U_R}=\\\left((1_{\MM_S}\otimes p_V)(\MM_S\rtimes_{\Ad U_S\restriction_\RR}\RR)(1_{\MM_S}\otimes p_V)\otimes B(L^2(SO(3)))\right)^{\Ad U_S\restriction_{\mathrm{SO}(3)}\otimes 1_{L^2(\RR)}\otimes \lambda_{\mathrm{SO}(3)}}
\end{multline}
is generally not expected to be finite. Note however that in this example we have only shown that Eq.\ \eqref{eq:type_change_condition} is a sufficient condition for finiteness. Depending on the details of the action $\Ad U_S\restriction_{\mathrm{SO}(3)}$, the algebra $(\MM_S\otimes B(\HH_R))^{\Ad U_S\otimes U_R}$ may still be finite. 

Arguably the choice $p=p_V\otimes 1_{L^2(\mathrm{SO}(3))}$ is not very physical, which can be seen as follows. Note that the left regular action on $L^2(\mathrm{SO}(3))$ decomposes into irreducible representations due to the Peter-Weyl theorem \cite[Thm.\ 1.12]{Knapp_rep_semisimple}
\begin{equation}
    L^2(\mathrm{SO}(3))\cong \overline{\bigoplus_{l=0}^{\infty}\HH_{l}^{\oplus(2l+1)}},
\end{equation}
where $\HH_{l}$ is the Hilbert space associated with the spin-$l$ irreducible representation of $\mathrm{SO}(3)$. The spaces $\HH_{l}^{\oplus(2l+1)}\cong \HH_l\otimes\HH_l$, seen as closed subspaces of $L^2(\mathrm{SO}(3))$ are in particular $(2l+1)^2$-dimensional eigenspaces for the Casimir element $\mathbf{L}^2$ of the Lie algebra $\mathfrak{so}(3)$. Here $\mathbf{L}^2$ acts as an (unbounded) linear operator on $L^2(\mathrm{SO}(3))$ with eigenvalues \begin{equation}
    \mathbf{L}^2\psi_l=l(l+1)\psi_{l},
\end{equation} 
for $\psi_l\in \HH_{l}^{\oplus(2l+1)}$. In typical models from quantum mechanics that admit an $\mathrm{SO}(3)$ symmetry, the Casimir element $\mathbf{L}^2$ is proportional to the squared angular momentum and typically contributes to the Hamiltonian of that system. We can model this in our quantum reference frame by setting $\HH_R=p (L^2(\RR)\otimes L^2(\mathrm{SO}(3)))$ with
\begin{equation}
    p=\sum_{l=0}^\infty p_{V_l}\otimes \tilde{p}_l,
\end{equation}
for $V_l\in\Bor(\RR)$ and where $\tilde{p}_l\in \lambda_{SO(3)}(SO(3))'$ projects $L^2(\mathrm{SO}(3))$ onto its closed subspace $\HH_{l}^{\oplus(2l+1)}$. This projection defines a principal quantum reference frame for $\RR\times \mathrm{SO}(3)$ with spectral multiplicity
\begin{equation}
    m_{U_R}(\xi)=\sum_{l=0}^\infty (2l+1)^2\chi_{V_l}(\xi).
\end{equation}
One can for instance set 
$V_l=[E_l,\infty)$ for some set of numbers $E_l\in \RR$. In this case, Eq.\ \eqref{eq:type_change_condition} holds if and only if 
\begin{equation}
    \sum_{l=0}^\infty (2l+1)^2\exp(-\beta E_l)<\infty.
\end{equation}
Note that $\HH_R\cong p_{[0,\infty)}L^2(\RR)\otimes L^2(SO(3))$ for an isomorphism under which \begin{equation}
    U_R\restriction_{\RR}(t)\cong\lambda_\RR(t)\otimes \sum_{l=0}^\infty \exp(i E_l t)\tilde{p}_l.
\end{equation}
Due to this identification, one can view $\hat{H}_{\mathrm{SO}(3)}:=\sum_{l=0}^\infty E_l\tilde{p}_l$ as the Hamiltonian of the subsystem associated with the Hilbert space $L^2(\mathrm{SO}(3))$. One now sees that Eq.\ \eqref{eq:type_change_condition} holds precisely when the operator $\exp(-\beta \hat{H}_{\mathrm{SO}(3)})\in B(L^2(\mathrm{SO}(3)))^+$ is trace-class, i.e.\ when using the standard trace $\tau$ on type $\mathrm{I}_\infty$ factors, we have
\begin{equation}
    \tau(\exp(-\beta \hat{H}_{\mathrm{SO}(3)}))=\sum_{l=0}^\infty (2l+1)^2\exp(-\beta E_l)<\infty.
\end{equation}
In other words, for this QRF Eq.\ \eqref{eq:type_change_condition} is equivalent to the existence of a Gibbs state on $B(L^2(\mathrm{SO}(3)))$ with respect to the time-translation generated by the Hamiltonian $\hat{H}_{\mathrm{SO}(3)}$, defined by the density matrix
\begin{equation}
    \rho_\beta=\frac{\exp(-\beta \hat{H}_{\mathrm{SO}(3)})}{\tau(\exp(-\beta \hat{H}_{\mathrm{SO}(3)}))}\in B(L^2(\mathrm{SO}(3))).
\end{equation} 
Given that $\rho_\beta$ is trace class, one can use the resulting Gibbs state on $B(L^2(\mathrm{SO}(3)))$ to define a KMS weight $\varphi_\beta$ on the full QRF as in Sec.\ \ref{sec:thermal_int}. Namely, under the isomorphism $\HH_R\cong p_{[0,\infty)}L^2(\RR)\otimes L^2(SO(3))$ as above we set
\begin{equation}
    \varphi_\beta\cong \varphi_\beta^{(1)}\otimes \tau(\rho_\beta\,\cdot\,),
\end{equation}
where $\varphi_\beta^{(1)}$ is the weight on $B(p_{[0,\infty)}L^2(\RR))$ such that for $a\in B(p_{[0,\infty)}L^2(\RR))^+$
\begin{equation}
    \varphi_\beta^{(1)}(a)=\sup_{K\subset\RR\text{ compact}}\int_\RR\langle\mathcal{F}^*f_K,\lambda_{\RR}(t)a\lambda_{\RR}(t)^* \mathcal{F}^*f_K\rangle\diff t,
\end{equation}
with $f_K(\xi)=\chi_K(\xi)\exp(-\frac{\beta}{2}\xi)$. Here we note that by the analysis described in Sec.\ \ref{sec:thermal_int}, $\varphi_\beta^{(1)}$ defines a normal KMS weight on $B(p_{[0,\infty)}L^2(\RR))$ with respect to the action $\Ad\lambda_{\RR}$ (in the sense that the KMS condition is fulfilled on a dense subalgebra of $B(p_{[0,\infty)}L^2(\RR))$). This gives a concrete example of the relation between Eq.\ \eqref{eq:type_change_condition} and thermal properties of the quantum reference frame that were studied in more generality in Sec.\ \ref{sec:thermal_int}.  

\section{Conclusion}

In this paper we have developed an operational framework that combines the theory of quantum reference frames with measurements in quantum field theory. We have demonstrated how measurement schemes related by a spacetime symmetry transformation can be distinguished by using a quantum system -- a QRF -- as a reference. Through the analysis of a suitable class of QRFs, we have proved a general connection between physical observables and crossed product algebras. Applying these results to scenarios of physical interest, we have shown how, through the theory of modular crossed products, this connection has direct implications for the type structure of algebras of physical observables.

Taking a broader perspective, our results emphasise that physical observables in QFT should be understood operationally; this perspective has been systematically developed in~\cite{FewVer_QFLM:2018},
which focussed attention on measurement schemes in which observables are measured through physical interactions with probes. In the presence of symmetries, 
local measurements are only meaningful relative to a physical reference frame.  As illustrated by our analysis, the structure of the algebra of physical observables can differ significantly from local algebras that one typically considers in algebraic quantum field theory. We contrast our operational viewpoint with~\cite{chandrasekaran2023algebra}, in which a type reduction phenomenon was ascribed to the need to `gravitationally dress' the observables of QFT; in our approach type reduction arises from the inclusion of a reference frame and is not directly gravitational in origin.

One question that could be raised is whether or not the results of this paper and~\cite{chandrasekaran2023algebra} indicate that the paradigm of local type III algebras should be abandoned entirely in QFT. We do not believe that the results here provide sufficient evidence for such a view, because they depend on a number of special features: namely, the geometric symmetries and the coincidence of the modular flow with time translations, as well as the properties of the QRF that is employed. General spacetime regions do not have nontrivial isometries, and geometric modular action is rare and fragile. Nevertheless, we emphasise that our reservations concern generic quantum field theory in curved spacetimes, and it could well be that type reduction occurs more generally in specific models, especially those with extended symmetry groups beyond background isometries. In particular, our results cannot cast direct light on the local algebras to be expected in a quantum theory of gravity.

Nonetheless, it is worth commenting that recent results hinting at semifinite physical subalgebras in quantum gravity, such as presented in \cite{witten2022gravity,jensenGeneralizedEntropyGeneral2023,kudler-flamGeneralizedBlackHole2024}, rely similarly on modular crossed products. In these works crossed product algebras are used in the analysis of diffeomorphism invariant observables in (effective) quantum gravity on some given background. The background isometries considered in this paper can then be seen as a subset of these diffeomorphisms. It is assumed in \cite{jensenGeneralizedEntropyGeneral2023} that their larger group of symmetries on general backgrounds includes a subgroup implemented by modular action. Although we view this as an interesting possibility, we are not aware of a concrete argument supporting this assumption. We hope that further investigations into the relation between quantum reference frames, physical subsystems and diffeomorphism invariance may provide insights into this issue, as well as the potential ubiquity of type reduction in quantum gravity.

\backmatter

%
%
%

\bmhead{Acknowledgements}

It is a pleasure to thank Eli Hawkins, Atsushi Higuchi, Roberto Longo, and Rainer Verch for useful comments at various stages of this work. \\

We are aware of a concurrent independent development by J. De Vuyst, S. Eccles, P. H\"{o}hn and J. Kirklin of a treatment of the ideas in \cite{chandrasekaran2023algebra} from a QRF perspective. A preliminary version of their work \cite{devuystGravitationalEntropyObserverdependent2024} has subsequently appeared, and we hope to compare these proposals with ours elsewhere. We also note the subsequent work by other authors~\cite{aliahmadSemifiniteNeumannAlgebras2024}, analysing type reduction for crossed product algebras involving more general groups of automorphisms than those of form $\mathbb{R}\times H$ analysed in Sec.~\ref{sec:type_change}.

\section*{Statements and declarations}

\subsection*{Funding}
 This research has been supported by the Engineering \& Physical Sciences Research Council grant \verb!EP/Y000099/1! to the University of York.
For the purpose of open access a Creative Commons Attribution (CC BY) licence is applied to any Author Accepted Manuscript version arising from this submission.

\subsection*{Competing interests}
The authors have no competing interests to declare that are relevant to the content of this article.
\subsection*{Data availability} No datasets were generated or analysed in this work.
\appendix
\section{The measurement framework in functorial terms}\label{appx:functorial} 

In this appendix we describe the measurement framework in functorial language to put it in a broader perspective that provides motivation for the conditions (a) and (b) assumed in Section~\ref{sec: measurement with symm}. As in the main text, $M$ is a globally hyperbolic spacetime and $\Reg(M)$ is the set of its open regions. If $N_1,N_2\in\Reg(M)$ and $N_1\subset N_2$, let $\iota_{N_2;N_1}:N_1\to N_2$ denote the inclusion map of $N_1$ in $N_2$. Then $\Reg(M)$ becomes a category with inclusions as morphisms. Another category of interest is the category $\Alg$ of unital $*$-algebras with unit-preserving injective $*$-homomorphisms.

Any AQFT $\Af$ on $M$ determines a functor $\Af:\Reg(M)\to\Alg$ so that
$\Af(N)=\Af(M;N)$ and $\Af(\iota_{N_2;N_1}):\Af(N_1)\to\Af(N_2)$ is the inclusion map of $\Af(M;N_1)$ in $\Af(M;N_2)$. Two theories are equivalent if there is a natural isomorphism
$\zeta:\Af\to\Bf$, which means that there should be isomorphisms $\zeta_N:\Af(N)\to\Bf(N)$ so that $\Bf(\iota_{N_2;N_1})\circ \zeta_{N_1}= \zeta_{N_2}\circ \Af(\iota_{N_2;N_1})$ for all regions $N_1,N_2\in\Reg(M)$ with $N_1\subset N_2$. Because each $\Af(N)$ is realised as a subalgebra of $\Af(M)$, every component $\zeta_N$ is a restriction of the isomorphism $\zeta_M:\Af(M)\to\Bf(M)$; conversely, any isomorphism of $\zeta_M:\Af(M)\to\Bf(M)$ such that
$\zeta_M(\Af(M;N))= \Bf(M;N)$ for all $N\in\Reg(M)$ determines an equivalence of $\Af$ and $\Bf$.

In particular, a global gauge transformation of $\Af$ is an equivalence of $\Af$ with itself, i.e., an element of $\Aut(\Af)$ -- see~\cite{Fewster:gauge} for a discussion in a broader context -- and $\Aut(\Af)$ may be identified with a subgroup of $\Aut(\Af(M))$.
Similarly, if $\Af$ and $\Bf$ are QFTs describing the system and probe respectively, the isomorphisms $\chi_N$ relating the uncoupled combination $\Uf=\Af\otimes\Bf$ to a coupled combination $\Cf$ form a partial natural isomorphism $\chi:\Uf\rightharpoonup\Cf$ defined on regions $N$ outside the causal hull of the coupling region. Thus, they obey $\Cf(\iota_{N_2;N_1})\circ \chi_{N_1}= \chi_{N_2}\circ\Uf(\iota_{N_2;N_1})$ for all such regions $N_1$ and $N_2$ with $N_1\subset N_2$. This is a more formal expression of the idea that $\chi_{N_1}$ and $\chi_{N_2}$ agree on $\Uf(M;N_1)$.

Next, suppose that the group $G$ acts on $M$ by time-orientation-preserving isometries. 
Because this action respects inclusions, every $g\in G$ determines a functor $\Tf(g):\Reg(M)\to\Reg(M)$ so that $\Tf(g)(N)=g.N$ and $\Tf(g)(\iota_{N_2;N_1})= \iota_{g.N_2;g.N_1}$, and indeed the assignment $g\mapsto \Tf(g)$ is a group homomorphism $\Tf:G\to\Aut(\Reg(M))$, the group of invertible endofunctors on $\Reg(M)$. 

 Given a QFT $\Af:\Reg(M)\to\Alg$, each $g\in G$ determines a modified theory $\act{g}{\Af}=\Af\circ \Tf(g^{-1})$, as described in Section~\ref{sec: measurement with symm}. Note that $\act{g}{\Af}(\iota_{N_2;N_1})= \Af(\iota_{g^{-1}.N_2;g^{-1}.N_1})$. 
 The notion of $G$-covariance set out in Section~\ref{sec: measurement with symm} can be expressed equivalently as the condition that all theories $\act{g}{\Af}$ are equivalent, whereupon any family $(\alpha(g))_{g\in G}$
 of natural isomorphisms $\alpha(g):\act{g}{\Af}\to\Af$ with $\alpha(1_G)=\id_{\Af}$ is said to implement the $G$-covariance. This notion was explored in some detail and generality in~\cite{Fewster:2018}, where it was used in a proof of the Coleman--Mandula theorem for QFTs in curved spacetimes.\footnote{We adopt different conventions to~\cite{Fewster:2018}, which used $\act{g}{\Af}$ to denote the 
 \emph{right} action $\act{g}{\Af}=\Af\circ\Tf(g)$. Correspondingly, $\alpha(g)$ here corresponds to
 $\eta(g^{-1})^{-1}$ in the notation of~\cite{Fewster:2018}.} It turns out that every implementation $g\mapsto\alpha(g)$ of the $G$-covariance determines a normalised cocycle in $Z^2(G,\Aut(\Af))$ (Theorem 5 in~\cite{Fewster:2018}) and a group extension $E$ of the spacetime symmetry group $G$ by the global gauge group $\Aut(\Af)$, so that the 
 theory $\Af$ is also $E$-covariant (see Theorem~7 in~\cite{Fewster:2018}). In the present setting, 
 the naturality of $\alpha(g)$ implies that each component $\alpha(g)_N:\act{g}{\Af}(N)\to \Af(N)$ can be obtained as the restriction of $\alpha(g)_M\in\Aut(\Af(M))$ to $\act{g}{\Af}(N)=\Af(M;g^{-1}.N)$, and indeed the discussion in Section~\ref{sec: measurement with symm} 
 identified $\alpha(g)$ with the component $\alpha(g)_M$. The condition (a) assumed there, namely that $G\owns g\mapsto\alpha(g)_M\in\Aut(\Af(M))$ is a homomorphism, implies
 \begin{equation}
     \alpha(gg')_N=\alpha(g)_N \alpha(g')_{g^{-1}.N},
 \end{equation}
 for all $g,g'\in G$, $N\in\Reg(M)$,
 while condition (b) asserted that $\zeta\circ\alpha(g)=\alpha(g)\circ\zeta$ for all $g\in G$, $\zeta\in\Aut(\Af)$. By Theorem 5 in~\cite{Fewster:2018}, conditions (a) and (b) imply that the cocycle determined by $\alpha$ is trivial, which means that the group extension is a direct product, $E=\Aut(\Af)\times G$, in line with the expectations of the Coleman--Mandula theorem. 

 We can provide further motivation for conditions (a) and (b) from another direction. The AQFT axioms used here and in~\cite{FewVer_QFLM:2018} represent a cut-down version of the framework of locally covariant QFT~\cite{BrFrVe03,FewVerch_aqftincst:2015}, which sets out to describe a QFT on all globally hyperbolic spacetimes. There, one introduces a category $\Loc$ of oriented and time-oriented globally hyperbolic spacetimes, with morphisms that are (time)-orientation-preserving isometries with causally convex image. A QFT is then a functor $\Af:\Loc\to\Alg$ obeying further axioms generalising those we have stated in Section~\ref{sec: measurement in QFT}. The connection with the previous presentation is that each region of a fixed spacetime $M$ may be regarded as a globally hyperbolic spacetime in its own right, with the induced metric and (time)-orientation, whereupon the inclusion map becomes a $\Loc$ morphism. In this way, $\Reg(M)$ becomes a subcategory of $\Loc$ and by restricting a locally covariant AQFT $\Af:\Loc\to\Alg$ to $\Reg(M)$, we obtain an AQFT on $M$ in the sense of Section~\ref{sec: measurement in QFT}. We denote this by $\Af|_M$ in what follows. 
 
 If $M\in\Loc$ is a spacetime, the group $G=\Aut(M)$ of $\Loc$-isomorphisms from $M$ to itself consists of isometric isomorphisms of $M$ preserving orientation and time-orientation. If $g\in \Aut(M)$ then its restriction to $N\in\Reg(M)$ determines a $\Loc$-isomorphism, $g|_N:N\to g.N$, so that 
 \begin{equation}\label{eq:prenat}
     g|_{N_2}\circ\iota_{N_2;N_1} = \iota_{g.N_2;g.N_1}\circ g|_{N_1},
 \end{equation}
 whenever $N_1\subset N_2$ are nested regions. 
 For any $N\in\Reg(M)$, now set 
 \begin{equation}
 \alpha(g)_N=\Af|_M(g|_{g^{-1}.N}):\Af(g^{-1}.N)\to \Af(N).
 \end{equation}
 Thus $\alpha(g)_N:\act{g}{\Af}|_M(N)\to\Af|_M(N)$ and~\eqref{eq:prenat} implies that 
 \begin{equation}
 \Af|_M(\iota_{N_2;N_1})\circ \alpha(g)_{N_1}= \alpha(g)_{N_2}\circ\act{g}{\Af}|_M(\iota_{N_2;N_1})
 \end{equation}
 for all nested regions $N_1\subset N_2$. Thus
 $\alpha(g):\act{g}{\Af}|_M\to \Af|_M$ is a natural isomorphism for each $g\in G$ and
 $\alpha(1_G)=\id_{\Af|_M}$, so $g\mapsto\alpha(g)$ implements a $G$-covariance of $\Af|_M$. Furthermore, $g\mapsto \alpha(g)_M=\Af|_M(g)=\Af(g)$ is a homomorphism of
 $G$ to $\Aut(\Af(M))$ by functoriality of $\Af$, so condition (a) is satisfied. 
 Defining the global gauge group of $\Af:\Loc\to \Alg$ as the group $\Aut(\Af)$, it also follows that global gauge transformations commute with the spacetime symmetries, as shown in Section~2.2 of~\cite{Fewster:gauge}. This is our condition (b), modulo the possible subtlety that the global gauge group of $\Af|_M$ might admit transformations that do not extend to global gauge transformations in all spacetimes.

\section{Mackey's imprimitivity theorem and compactly stabilised QRFs}
\label{appx:imprimitvity_thm}
\label{appx:G_integrals}
In this appendix we give a short review of Mackey's imprimitivity theorem and its application to characterise systems of covariance as in \cite{Cattaneo:1979}. Following this, we apply this to the specific case of compactly stabilised QRFs and conclude with a proof of Thm.~\ref{thm:qref_embed}. We shall assume in this appendix that $G$ is a locally compact second countable Hausdorff group and $H\subset G$ a closed subgroup. $G$ defines the (separable complex) Hilbert space $L^2(G)$ of square integrable functions w.r.t. a left Haar measure on $G$ (unique up to a scale), $\mu_G:\Bor(G)\to[0,\infty]$, satisfying
\begin{equation}
    \mu_G(gX)=\mu_G(X),
\end{equation}
and
\begin{equation}
\label{eq:modular_func}
    \mu_G(Xg)=\Delta_G(g)\mu_G(X),
\end{equation}
for $\Delta_G:G\to \RR_{>0}$ the modular function (see \cite{deitmar2008principles} for details). Similarly, the group $H\subset G$ admits a left Haar measure $\mu_H$ and modular function $\Delta_H$. For convenience, we shall assume in this appendix that $\Delta_H=\Delta_{G}\restriction_H$. This is equivalent to the topological space $G/H$, given by
\begin{equation}
    G/H:=\{gH:g\in G\},
\end{equation}
with the usual quotient topology, admitting a non-zero left $G$-invariant Radon measure $\mu_{G/H}$, defining the Hilbert space $L^2(G/H)$, see \cite[Thm.~1.5.2]{deitmar2008principles}. In particular, this measure can be chosen such that for any compactly supported $f\in C_c(G)$, we have
\begin{equation}
\label{eq:quotient_measure}
    \int_G f\diff\mu_G=\int_{G/H}f^H\diff\mu_{G/H},
\end{equation}
where $f^H\in C_c(G/H)$ given by
\begin{equation}
    f^H(gH)=\int_H f(gh)\diff\mu_H(h).
\end{equation}
In the case that $H$ is compact, $\mu_{G/H}$ is given by the formula
\begin{equation}\mu_{G/H}(X)=\frac{\mu_G(q^{-1}(X))}{\mu_H(H)},\end{equation}
where $q:G\to G/H$ is the canonical quotient map given by $q(g)=gH$. In what follows we shall always assume that whenever $H$ is compact, the measure $\mu_H$ is normalised, i.e.~$\mu_H(H)=1$. 

The Hilbert space $L^2(G)$ (and similarly $L^2(H)$) admit natural strongly continuous unitary representations of $G$, namely the \emph{left} and \emph{right translation actions}
$\lambda_G,\rho_G:G\to \mathbf{U}(L^2(G))$, where for $g,g'\in G$, $\psi\in L^2(G)$ the left translation action is given by
\begin{equation}
\label{eq:left_action}
(\lambda_G(g)\psi)(g')=\psi(g^{-1}g'),\end{equation}
and similarly the right translation action is
\begin{equation}
\label{eq:right_action}
(\rho_G(g)\psi)(g')=\Delta_G(g)\psi(g'g).\end{equation}
As per usual for $L^2$ spaces, these Hilbert spaces also admit multiplication operators. For each essentially bounded $\mu_G$ $f\in L^\infty(G)$, the operator $T_f\in B(L^2(G))$ is such that for each $\psi\in L^2(G)$, $g\in G$
\begin{equation}
    (T_f\psi)(g)=f(g)\psi(g).
\end{equation}

The function spaces given above can be used to define the class of induced representations for the group $G$. These representations play a key role in Mackey's imprimitivity theorem.

\subsection{Induced representations and the imprimitivity theorem}\label{appsec:catt}
There are several equivalent ways to describe an induced representation. Here we give a definition as presented in \cite{kaniuth2012induced}, specialised to our assumptions above.
\begin{definition}
    \label{def:induced_rep}
    Let $G$ be a locally compact second countable Hausdorff group and $H\subset G$ be a closed subgroup such that $\Delta_H=\Delta_G\restriction_H$, whereupon $G/H$ admits a $G$-invariant Radon measure $\mu_{G/H}$ fixed by Eq. \eqref{eq:quotient_measure}. Furthermore, let $(U,\HH)$ be a strongly continuous unitary representation of $H$ on some separable Hilbert space $\HH$. We define the \textup{induced representation} $(\Ind_H^G U,L^2_U(G,\HH))$ as follows. First, $L^2_U(G,\HH)$ is the Hilbert space completion of equivalence classes of functions $\varphi:G\to \HH$ with
    \begin{enumerate}
        \item $g\mapsto\langle\xi,\varphi(g)\rangle_\HH$ is Borel measurable for each $\xi\in \HH$,
        \item for $g\in G$ and $h\in H$, $\varphi(gh)=U(h)^*\varphi(g)$,
        \item the function $gH\mapsto \Vert\varphi(g)\Vert_\HH$ is in $L^2(G/H)$,
    \end{enumerate}
    with pre-inner product given by 
    \begin{equation}
    \label{eq:induced_prod}
        \langle \varphi,\varphi'\rangle_{L^2_U(G,\HH)}=\int_{G/H}\langle \varphi(g),\varphi'(g)\rangle_\HH\,\diff\mu_{G/H}(gH),
    \end{equation}
    and the equivalence relation given as usual by $\varphi\sim\varphi'$ if $\Vert\varphi-\varphi'\Vert_{L^2_U(G,\HH)}=0$. 
    
    Second, we define the unitary representation $\Ind_H^G$ of $G$ on $L^2_U(G,\HH)$ via the left translation action
    \begin{equation}
        (\Ind_H^G U(g)\varphi)(g')=\varphi(g^{-1}g').
    \end{equation}
    \end{definition}
    Note that the function $gH\mapsto \Vert\varphi(g)\Vert_\HH$ is well defined, as $\Vert\varphi(gh)\Vert_\HH=\Vert U(h)^*\varphi(g)\Vert_\HH=\Vert \varphi(g)\Vert_\HH$. 
    
    Any induced representation admits a natural system of imprimitivity.
    \begin{definition}
        \label{def:induced_si}
        For $G,H,U,\HH$ as in definition \ref{def:induced_rep}, one defines the \textup{induced system of imprimitivity} as $(\Ind_H^G U,P_U,L^2_U(G,\HH))$ with $P_U:\Bor(G/H)\to B(L^2_U(G,\HH))$ a projection valued measure where for each $\varphi\in L^2_U(G,\HH)$, $g\in G$ and $X\in\Bor(G/H)$
        \begin{equation}
            (P_U(X)\varphi)(g)=\chi_X(q(g))\varphi(g).
        \end{equation}
        Here $\chi_X$ denotes the characteristic function of $X$ on $G/H$ and $q:G\to G/H$ the canonical quotient map.
    \end{definition}
    Indeed one can easily check that $P_U$ is a well defined projection valued measure on $G/H$ where for each $X\in \Bor(G/H)$, $g\in G$ one has $\Ind_H^G U(g)P_U(X)\Ind_H^G U(g)^*=P_U(gX)$.
    
    Such systems of imprimitivity play an important role in the characterisation of systems of covariance, as appearing in the context of representation theory for locally compact groups, but which can also be understood as quantum reference frames as defined in Def.\ \ref{def:qrf1}. We recall a result due to \cite{Cattaneo:1979}, generalizing Mackey's imprimitivity theorem, see e.g.\ \cite{mackey1949imprimitivity,raczka1986theory,kaniuth2012induced}, which tells us that any quantum reference frame $(U_R,E,\HH_R)$ (or system of covariance) for $G$ with value space $G/H$ is equivalent to a compression of an induced system of imprimitivity, as given in Def.\ \ref{def:induced_si}.
\begin{proposition}[Cattaneo]
        \label{prop:cat_impr_thm}
        Let $G$ be a second countable locally compact Hausdorff group and let $H$ be a (not necessarily compact) closed subgroup such that $G/H$ admits a nonzero G-invariant Radon measure. Let $\mathcal{R}=(U_R,E,\HH_R)$ be a quantum reference frame for $G$ with value space $G/H$ as in Def.~\ref{def:qrf1}. 
        Then there exists a separable Hilbert space $\KK$, a strongly continuous unitary representation $(V,\KK)$ of $H$ and an isometry $W_\mathcal{R}:\HH_R\to L^2_V(G,\KK)$ such that for the induced system of imprimitivity $(\Ind_H^G V,P_V,L^2_V(G,\KK))$, one has for all $g\in G$, $X\in\Bor(G/H)$
        \begin{equation}
            W_\mathcal{R}U_R(g)=\left(\Ind_H^G V(g)\right)W_\mathcal{R},\qquad E(X)=W_\mathcal{R}^*P_V(X)W_\mathcal{R}.
        \end{equation}
        Furthermore $(V,\KK)$ can always be chosen such that $W_\mathcal{R}$ is a minimal covariant Naimark dilation, i.e. the set
        \begin{equation}
            P_V(\Bor(G/H))W_\mathcal{R} \HH_R=\{P_V(X)W_\mathcal{R}\psi:X\in \Bor(G/H),\, \psi\in \HH_R\}
        \end{equation}
         spans a dense subset of $L_V^2(G,\KK)$. The representation $(V,\KK)$ yielding this minimality property is unique up to unitary equivalence and under this minimal choice $W_\mathcal{R}$ is an isomorphism if and only if $E$ is projection valued.
\end{proposition}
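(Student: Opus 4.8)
The plan is to reduce the statement to Mackey's imprimitivity theorem by first constructing a \emph{minimal covariant Naimark dilation} of the system of covariance $\mathcal{R}=(U_R,E,\HH_R)$, which converts the (generally unsharp) POVM $E$ into a genuine system of imprimitivity, and then identifying the latter with an induced system of imprimitivity.

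First I would construct the dilation explicitly, in the spirit of the minimal Naimark dilation recalled in Thm.~\ref{thm:naimark}. Consider the complex vector space $\mathcal{V}_0$ of $\HH_R$-valued simple functions on $G/H$, spanned by elements $\chi_X\otimes\psi$ with $X\in\Bor(G/H)$ and $\psi\in\HH_R$, equipped with the sesquilinear form determined by
\begin{equation}
    \langle \chi_X\otimes\psi,\chi_{X'}\otimes\psi'\rangle=\langle\psi,E(X\cap X')\psi'\rangle.
\end{equation}
Positivity of $E$ makes this form positive semidefinite; quotienting by its null space and completing yields a separable Hilbert space $\tilde{\KK}$. Multiplication by characteristic functions defines a PVM $P:\Bor(G/H)\to B(\tilde{\KK})$, and the map $W:\HH_R\to\tilde{\KK}$, $W\psi=[\chi_{G/H}\otimes\psi]$, is an isometry satisfying $E(X)=W^*P(X)W$, while $P(\Bor(G/H))W\HH_R$ is total in $\tilde{\KK}$ by construction (minimality).

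The key point is to carry the $G$-covariance through the dilation. I would define operators $\tilde{U}(g)$ on $\mathcal{V}_0$ by $\tilde{U}(g)(\chi_X\otimes\psi)=\chi_{g.X}\otimes U_R(g)\psi$ and verify, using the covariance relation $E(g.Y)=U_R(g)E(Y)U_R(g)^*$, that each $\tilde{U}(g)$ preserves the form above; hence it descends to the quotient and extends to a unitary on $\tilde{\KK}$, with $g\mapsto\tilde{U}(g)$ a homomorphism. One then checks the intertwining relation $\tilde{U}(g)W=WU_R(g)$ (using $g.(G/H)=G/H$) and the imprimitivity relation $\tilde{U}(g)P(X)\tilde{U}(g)^*=P(g.X)$. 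Strong continuity of $\tilde{U}$ should follow from strong continuity of $U_R$ together with second countability of $G$ and regularity of $\mu_{G/H}$; I expect this measure-theoretic continuity argument, rather than the algebra, to be the main obstacle, since one must control the $g$-dependence of the multiplication action on a dense set. The upshot is that $(\tilde{U},P,\tilde{\KK})$ is a transitive system of imprimitivity for $G$ based on $G/H$.

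Finally I would invoke Mackey's imprimitivity theorem (in the form recalled via Def.~\ref{def:induced_rep} and Def.~\ref{def:induced_si}): every such system of imprimitivity is unitarily equivalent, by a unitary $\Phi:\tilde{\KK}\to L^2_V(G,\KK)$ intertwining $\tilde{U}$ with $\Ind_H^G V$ and $P$ with $P_V$, to the induced system for a representation $(V,\KK)$ of $H$ that is unique up to unitary equivalence. Setting $W_\mathcal{R}=\Phi\circ W$ then yields the required isometry, and the two displayed relations follow at once from the corresponding relations for $W$, $\tilde{U}$, $P$ together with unitarity of $\Phi$; minimality is preserved because $\Phi$ is onto. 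Uniqueness of $(V,\KK)$ combines the uniqueness in Mackey's theorem with the uniqueness (up to unitary equivalence) of the minimal Naimark dilation. For the last claim, $W_\mathcal{R}$ is an isomorphism iff $W$ is: if $W$ is unitary then $E(X)=W^*P(X)W$ is unitarily conjugate to the PVM $P$ and hence itself projection valued, while conversely if $E$ is already a PVM the identity $E(X)E(X')=E(X\cap X')$ shows that $\psi\mapsto[\chi_{G/H}\otimes\psi]$ already exhausts the minimal dilation, so $W$—and therefore $W_\mathcal{R}$—is an isomorphism.
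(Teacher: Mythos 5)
Your proposal follows precisely the two-step strategy that the paper itself uses and delegates to Cattaneo's paper: first a minimal \emph{covariant} Naimark dilation converting the covariant POVM $E$ into a transitive system of imprimitivity over $G/H$ (your explicit kernel $\langle\chi_X\otimes\psi,\chi_{X'}\otimes\psi'\rangle=\langle\psi,E(X\cap X')\psi'\rangle$, the transported unitaries $\tilde U(g)$, the intertwining and imprimitivity checks, and the final argument that $W$ is unitary iff $E$ is projection valued are all correct), and then Mackey's imprimitivity theorem to identify that system with the induced one $(\Ind_H^G V,P_V,L^2_V(G,\KK))$, with uniqueness of $(V,\KK)$ obtained by combining uniqueness of the minimal dilation with the uniqueness in Mackey's theorem. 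The single step you flag but leave open --- strong continuity of $\tilde U$ --- is also the point the paper outsources to Cattaneo; it can be closed by verifying weak measurability of $g\mapsto\langle[\chi_Y\otimes\psi'],\tilde U(g)[\chi_X\otimes\psi]\rangle$ on the generating vectors and invoking the classical fact that a weakly measurable unitary representation of a second countable locally compact group on a separable Hilbert space is automatically strongly continuous.
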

We refer to \cite{Cattaneo:1979} and references therein for the full proof of this proposition. Here we will just remark that the proof combines the following two ingredients. Firstly, given a quantum reference frame $(U_R,E,\HH_R)$ as in Prop.\ \ref{prop:cat_impr_thm}, one can apply a version of \textit{Naimark's dilation theorem}, see e.g.\ \cite[Thm.\ 4.6.]{Paulsen_2003} or Thm.\ \ref{thm:naimark} , to construct a system of imprimitivity $(\tilde{U}_R,P,\tilde{\HH}_R)$ for which there is an embedding $\tilde{W}:\HH_R\to \tilde{\HH}_R$ with for $g\in G$, $X\in \Bor(G/H)$
\begin{equation}
    \tilde{W} U_R(g)=\tilde{U}_R(g)\tilde{W},\qquad E(X)=\tilde{W}^*P(X)\tilde{W},
\end{equation}
such that $P(\Bor(G/H))\tilde{W}\HH_R$ spans a dense subset of $\tilde{\HH}_R$, i.e.\ such that $\tilde{W}$ is a minimal covariant Naimark dilation. Here the covariance aspect is reflected by the fact that $\tilde{W}$ intertwines the unitary representations of $G$, $U_R$ and $\tilde{U}_R$. As shown in \cite[Prop.\ 1.]{Cattaneo:1979}, this system of imprimitivity is unique up to unitary equivalence. Here this equivalence is meant in the following sense.
\begin{definition}
    \label{def:qrf_equiv}
    Let $\mathcal{R}=(U,E,\HH)$ and $\mathcal{R}' =(U',E',\HH')$ be QRFs for a group $G$ with value space $\Sigma$ as in Def.~\ref{def:qrf1}. We say $\mathcal{R}$ and $\mathcal{R}'$ are \textup{unitarily equivalent} if there exists an isometric isomorphism $V:\HH\to \HH'$ such that for each $g\in G,\, X\in\Bor(\Sigma)$, 
    \begin{equation}
        VU(g)=U'(g)V,\qquad VE(X)=E'(X)V.
    \end{equation}
    In other words, if $V$ intertwines the unitary representations $U$ and $U'$ and the POVMs $E$ and $E'$.
\end{definition}

Secondly, one can now apply Mackey's imprimititvity theorem, such as given in \cite{kaniuth2012induced}, to the dilated system of imprimitivity $(\tilde{U}_R,P,\tilde{\HH}_R)$. By this theorem, this system of imprimitivity is unitarily equivalent to an induced system of imprimitivity as given by Def.\ \ref{def:induced_si}. One can hence conclude that one can always dilate the quantum reference frame $(U_R,E,\HH_R)$ to an induced system of imprimitivity.

Proposition \ref{prop:cat_impr_thm} gives a characterisation of a very general class of quantum reference frames. In this paper, we are mostly interested in quantum reference frames that are compactly stabilised. Below we describe how the imprimitivity theorem simplifies for this particular class of reference frames.

\subsection{Characterising compactly stabilised quantum reference frames}
\label{appx:compact_stab_qrf_charact}
For $H\subset G$ be compact, then induced representations have an alternative characterization to the one given in Def.\ \ref{def:induced_rep}. Let $\mu_H$ be normalised such that $\mu_H(H)=1$. We can apply the relation \eqref{eq:quotient_measure} to \eqref{eq:induced_prod} to see that for $\varphi,\varphi'\in L^2_U(G,\HH)$
    \begin{equation}
        \langle\varphi,\varphi'\rangle_{L^2_U(G,\HH)}=\int_G\langle\varphi(g),\varphi'(g)\rangle_\HH\,\diff\mu_G(g),
    \end{equation}
    hence $L^2_U(G,\HH)\subset L^2(G,\HH)$. This Hilbert space admits a strongly continuous unitary representation $(\tilde{U},L^2(G,\HH))$ of $H$ with 
    \begin{equation}
        (\tilde{U}(h)\varphi)(g)=U(h)\varphi(gh),
    \end{equation}
    or, under the natural isomorphism $L^2(G,\HH)\cong \HH\otimes L^2(G)$, $\tilde{U}$ corresponds to $U\otimes\rho_H$, with $\rho_H=\rho_G\restriction_H$ the right translation action of $H$ on $L^2(G)$ (see Sec.\ \ref{appx:G_integrals}). Note that $\varphi\in L^2(G,\HH)$ is a fixed point for this group action if and only if $\varphi(gh)=U(h)^*\varphi(g)$ for any $h\in H$ and almost every $g\in G$. Hence we see by (2) of Def.~\ref{def:induced_rep} that, under the natural isomorphism $L^2(G,\HH)\cong \HH\otimes L^2(G)$ 
    \begin{equation}
        L^2_U(G,\HH)\cong (\HH\otimes L^2(G))^{U\otimes \rho_H}.
    \end{equation}
    As $H$ is compact, the fixed points under this unitary action are exactly given by 
    \begin{equation}
        (\HH\otimes L^2(G))^{U\otimes \rho_H}=\Pi_{U}(\HH\otimes L^2(G)),
    \end{equation} with $\Pi_{U}\in B(\HH\otimes L^2(G))$ the projection
    \begin{equation}
    \label{eq:average_proj}
        \Pi_{U}=\int_HU(h)\otimes\rho_H(h)\,\diff\mu_H(h).
    \end{equation}
    If we consider the left translation action $\lambda_G$ of $G$ on $L^2(G)$, we see that $U\otimes\rho_H$ and $1_\HH\otimes\lambda_G$ are commuting group actions of $H$ and $G$ respectively on $L^2(G,\HH)$, this means in particular that
    \begin{equation}
        \Pi_U\in B(\HH)\otimes\lambda_G(G)'.
    \end{equation}
    It follows that $\Pi_{U}(\HH\otimes L^2(G))$ is closed under the action of $(1_\HH\otimes \lambda)$, and in fact one sees by Def.\ \ref{def:induced_rep} that, under the natural isomorphism, $\Ind_H^G U(g)$ corresponds to $(1_\HH\otimes\lambda(g))\restriction_{\Pi_{U}(\HH\otimes L^2(G))}$. This allows us to conclude the following.
    \begin{lemma}
        \label{lem:compact_subrep}
        For $G$ locally compact, second countable and Hausdorff, $H\subset G$ compact such that $G/H$ admits a nonzero left $G$-invariant Radon measure, the induced representation of $(U,\HH)$ is unitarily equivalent to a subrepresentation of $(1_\HH\otimes\lambda_G,\HH\otimes L^2(G))$ given by the invariant subspace $\Pi_U(\HH\otimes L^2(G))$, with $\Pi_U$ given by Eq.\ \eqref{eq:average_proj}.
    \end{lemma}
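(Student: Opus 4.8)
The plan is to produce an explicit unitary identification of the induced representation space $L^2_U(G,\HH)$ with the closed subspace $\Pi_U(\HH\otimes L^2(G))$ of $\HH\otimes L^2(G)$, and then to check that the left-translation action defining $\Ind_H^G U$ corresponds under this identification to the restriction of $1_\HH\otimes\lambda_G$. The observation making everything work is the compactness of $H$: it both forces the defining pre-inner product on $L^2_U(G,\HH)$ to agree with the genuine inner product on $L^2(G,\HH)$, and guarantees that the relevant fixed-point subspace is the range of a bounded averaging projection.

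First I would use that, for compact $H$ with $\mu_H$ normalised, the invariant measure $\mu_{G/H}$ satisfies the relation \eqref{eq:quotient_measure}. Applying this to the pre-inner product \eqref{eq:induced_prod} shows that for $\varphi,\varphi'\in L^2_U(G,\HH)$ one has $\langle\varphi,\varphi'\rangle_{L^2_U(G,\HH)}=\int_G\langle\varphi(g),\varphi'(g)\rangle_\HH\,\diff\mu_G(g)$, so that $L^2_U(G,\HH)$ embeds isometrically into $L^2(G,\HH)$. Under the standard isomorphism $L^2(G,\HH)\cong\HH\otimes L^2(G)$, the $H$-action $\tilde{U}(h)\varphi=U(h)\varphi(\,\cdot\,h)$ corresponds to $U\otimes\rho_H$, and the covariance condition (2) of Def.~\ref{def:induced_rep}, namely $\varphi(gh)=U(h)^*\varphi(g)$, is exactly the condition for $\varphi$ to be a fixed vector for this unitary $H$-representation. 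Hence $L^2_U(G,\HH)\cong(\HH\otimes L^2(G))^{U\otimes\rho_H}$.

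Next, since $H$ is compact and $U\otimes\rho_H$ is strongly continuous, the fixed-point subspace is precisely the range of the averaging projection $\Pi_U$ of \eqref{eq:average_proj}; I would verify $\Pi_U^2=\Pi_U=\Pi_U^*$ and that $\Pi_U x=x$ if and only if $(U\otimes\rho_H)(h)x=x$ for all $h\in H$ by the usual invariance-of-Haar-measure argument. To obtain the subrepresentation claim, I would observe that left and right translations on $L^2(G)$ commute and that $U$ acts only on the $\HH$ factor, so $1_\HH\otimes\lambda_G$ commutes with each $U(h)\otimes\rho_H(h)$ and hence with $\Pi_U$; thus $\Pi_U\in B(\HH)\otimes\lambda_G(G)'$ and $\Pi_U(\HH\otimes L^2(G))$ is invariant under $1_\HH\otimes\lambda_G$. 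Finally, because $\Ind_H^G U$ is defined by the very same left-translation formula $(\Ind_H^G U(g)\varphi)(g')=\varphi(g^{-1}g')$, the identification above intertwines it with $(1_\HH\otimes\lambda_G)\restriction_{\Pi_U(\HH\otimes L^2(G))}$, completing the argument.

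The main obstacle is the first step: justifying carefully that the completion appearing in the definition of $L^2_U(G,\HH)$ yields nothing beyond the concrete closed subspace $\Pi_U(\HH\otimes L^2(G))$. This requires combining the measurability condition (1) with completeness, i.e. showing that the isometric map into $L^2(G,\HH)$ already has closed range equal to the fixed-point subspace, so that no further abstract completion is needed and every fixed vector is represented by a function satisfying (1)--(3). The averaging-projection identity is the point at which compactness of $H$ is indispensable; without it there would be no bounded projection onto the fixed vectors, and the clean subrepresentation picture would break down.
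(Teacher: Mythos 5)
Your proposal is correct and follows essentially the same route as the paper's proof: the quotient-measure identity \eqref{eq:quotient_measure} applied to \eqref{eq:induced_prod} to embed $L^2_U(G,\HH)$ isometrically in $L^2(G,\HH)\cong\HH\otimes L^2(G)$, identification of condition (2) of Def.~\ref{def:induced_rep} with the fixed points of $U\otimes\rho_H$, the averaging projection $\Pi_U$ onto those fixed points via compactness of $H$, and the commutation of $1_\HH\otimes\lambda_G$ with $U(h)\otimes\rho_H(h)$ to conclude $\Pi_U\in B(\HH)\otimes\lambda_G(G)'$ and read off the intertwining from the shared left-translation formula. The surjectivity/closed-range point you flag (that the completion yields exactly the fixed-point subspace, after adjusting representatives on null sets) is likewise left tacit in the paper, so nothing in your account is missing relative to it.
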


    We now describe the induced system of imprimitivity using this alternative characterization of the induced reprepresentation. Consider the system of imprimitivity
    \begin{equation}
        \label{eq:compact_stab_qrf_dilated}
        \mathcal{R}=(1_\HH\otimes \lambda, 1_\HH\otimes P,\HH\otimes L^2(G))
    \end{equation} for $G$ with value space $G/H$, where $P:\Bor(G/H)\to B(L^2(G))$ is the projection valued measure such that for each $X\in\Bor(G/H)$ one has \begin{equation}
        \label{eq:compact_stab_dilated_proj}
        P(X)=T_{\chi_{q^{-1}(X)}}.
    \end{equation} Here as in Sec.\ \ref{appx:G_integrals},  $T_f\in B(L^2(G))$ is the multiplication operator for $f\in L^\infty(G)$ and $q:G\to G/H$ the canonical quotient map. Note in particular that for any $X\in \Bor(G/H)$, $P(X)$ commutes with $\rho_H$, hence for $\Pi_U$ as defined by Eq.\ \eqref{eq:average_proj}, we have $\Pi_U\in B(\HH)\otimes P(\Bor(G/H))'$. Under the isomorphism $L^2_U(G,\HH)\cong \Pi_U(\HH\otimes L^2(G))$, one can now easily verify that $(1_\HH\otimes P(X))\restriction_{\Pi_U(\HH\otimes L^2(G))}$ corresponds to $P_U$ defined as in Def.\ \ref{def:induced_si}. Hence we can extend Lem.\ \ref{lem:compact_subrep} as follows.
    \begin{proposition}
        \label{prop:compact_ind_impr}
        For $H\subset G$ as in Lem.\ \ref{lem:compact_subrep}, 
        $(U,\HH)$ a strongly continuous representation of $H$ on a separable Hilbert space $\HH$ 
        and $(\Ind_H^GU,P_U,L^2_U(G,\HH))$ its induced system of imprimitivity, there exists an isometry $W_U:L^2_U(G,\HH)\to\HH\otimes L^2(G)$ with $W_UW_U^*=\Pi_U$ such that
        \begin{equation}
            W_U(\Ind_H^G U)=(1_\HH\otimes\lambda)W_U,\; W_UP_U=(1_\HH\otimes P)W_U,
        \end{equation}
        i.e. $W_U$ intertwines the systems of imprimitivity $(\Ind_H^GU,P_U,L^2_U(G,\HH))$ and $(1_\HH\otimes \lambda, 1_\HH\otimes P,\HH\otimes L^2(G))$.
    \end{proposition}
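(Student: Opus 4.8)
The plan is to define the isometry $W_U$ directly as the composite of two identifications already assembled in the discussion preceding the proposition, and then to read off the intertwining relations from the way these identifications were constructed. Concretely, since $H$ is compact with $\mu_H(H)=1$, the relation \eqref{eq:quotient_measure} applied to $f(g)=\Vert\varphi(g)\Vert_\HH^2$ shows, using that this function is constant on right $H$-cosets by property (2) of Def.\ \ref{def:induced_rep}, that for $\varphi\in L^2_U(G,\HH)$ one has $\int_{G/H}\Vert\varphi(g)\Vert_\HH^2\,\diff\mu_{G/H}(gH)=\int_G\Vert\varphi(g)\Vert_\HH^2\,\diff\mu_G(g)$. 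Hence the inner product \eqref{eq:induced_prod} coincides with the $L^2(G,\HH)$ inner product, so $L^2_U(G,\HH)$ embeds isometrically into $L^2(G,\HH)$. Composing with the canonical unitary $L^2(G,\HH)\cong\HH\otimes L^2(G)$ produces the desired map $W_U:L^2_U(G,\HH)\to\HH\otimes L^2(G)$.

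Next I would identify the range of $W_U$. Property (2) of Def.\ \ref{def:induced_rep}, namely $\varphi(gh)=U(h)^*\varphi(g)$, is under the isomorphism $L^2(G,\HH)\cong\HH\otimes L^2(G)$ precisely the condition that $\varphi$ be fixed by $U\otimes\rho_H$. Since $H$ is compact, this fixed-point subspace equals $\Pi_U(\HH\otimes L^2(G))$ with $\Pi_U$ the projection \eqref{eq:average_proj}. Thus the range of $W_U$ is $\Pi_U(\HH\otimes L^2(G))$, and because $\Pi_U$ is the orthogonal projection onto this subspace while $W_U$ is an isometry onto it, we obtain $W_UW_U^*=\Pi_U$.

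It then remains to verify the two intertwining relations. For the first, both $\Ind_H^G U(g)$ and $1_\HH\otimes\lambda(g)$ act by the left-translation formula $\varphi\mapsto\varphi(g^{-1}\,\cdot\,)$; since $\Pi_U\in B(\HH)\otimes\lambda_G(G)'$ this formula preserves the fixed-point subspace, so $W_U(\Ind_H^G U(g))=(1_\HH\otimes\lambda(g))W_U$ for all $g\in G$. For the second, the projection $P(X)=T_{\chi_{q^{-1}(X)}}$ of \eqref{eq:compact_stab_dilated_proj} acts by multiplication by $\chi_X\circ q$, which is exactly the formula defining $P_U(X)$ in Def.\ \ref{def:induced_si}; since $P(X)$ commutes with $\rho_H$ one has $\Pi_U\in B(\HH)\otimes P(\Bor(G/H))'$, so multiplication preserves the fixed-point subspace and $W_UP_U(X)=(1_\HH\otimes P(X))W_U$.

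The main point requiring care — rather than a genuine obstacle — is bookkeeping: one must confirm that the measurability and square-integrability conditions (1) and (3) of Def.\ \ref{def:induced_rep} are matched exactly by membership in $L^2(G,\HH)$ once the right-covariance condition (2) is imposed, and that the formulas defining $\Ind_H^G U$, $P_U$, $\lambda_G$ and $P$ agree on the nose after transport through the unitary $L^2(G,\HH)\cong\HH\otimes L^2(G)$. All of these are routine consequences of the compactness of $H$ and the measure identity \eqref{eq:quotient_measure}, so no input beyond Lem.\ \ref{lem:compact_subrep} is needed; the proposition is essentially a repackaging of the observations already made in this subsection.
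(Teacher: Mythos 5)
Your proposal is correct and takes essentially the same route as the paper, whose own argument is precisely the discussion preceding the proposition: $W_U$ is the inclusion $L^2_U(G,\HH)\subset L^2(G,\HH)\cong \HH\otimes L^2(G)$ obtained from the measure identity \eqref{eq:quotient_measure} with $\mu_H(H)=1$, the range is identified via condition (2) of Def.~\ref{def:induced_rep} as the $U\otimes\rho_H$ fixed-point subspace $\Pi_U(\HH\otimes L^2(G))$, and the intertwining relations follow because $\Ind_H^G U$ and $P_U$ are exactly the restrictions of $1_\HH\otimes\lambda$ and $1_\HH\otimes P$ to that invariant subspace, using $\Pi_U\in B(\HH)\otimes\lambda_G(G)'$ and $\Pi_U\in B(\HH)\otimes P(\Bor(G/H))'$. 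Your bookkeeping caveats (matching conditions (1) and (3) to $L^2(G,\HH)$ membership, and the formulas agreeing on the nose after transport through the canonical unitary) are exactly the points the paper also treats implicitly, so the proof is complete as stated.
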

    
    We can now prove Thm.\ \ref{thm:qref_embed}.
    \begin{proof}[Proof of Thm.\ \ref{thm:qref_embed}]
    \begin{enumerate}
        \item If one combines Prop.\ \ref{prop:compact_ind_impr} with Prop.\ \ref{prop:cat_impr_thm}, it follows that any compactly stabilised quantum reference frame $(U_R,E,\HH_R)$ can be dilated to a QRF as given by Eq.\ \eqref{eq:compact_stab_qrf_dilated}, although not necessarily in a minimal sense.
        \item  Conversely, for any subrepresentation of $(1_\HH\otimes\lambda_G,\HH\otimes L^2(G))$ with $\HH$ separable, i.e. for every projection $p\in B(\HH)\otimes \lambda_G(G)'$, one can verify that the compression $(p(1_\HH\otimes \lambda), p(1_\HH\otimes P)p, p(\HH\otimes L^2(G)))$ for $P:\Bor(G/H)\to L^\infty(G)$ as given by Eq.\ \eqref{eq:compact_stab_dilated_proj} defines a compactly stabilised quantum reference frame. Up to unitary equivalence, these are precisely the quantum reference frames given in Thm.\ \ref{thm:qref_embed}.
    \end{enumerate}
    \end{proof}
    
\section{Further proofs for statements in Sec.\ \ref{sec:qrf and measurement}}
\label{appx:proofs}
In this appendix we give the proofs of various results stated in sections \ref{subsec:relativisation} and \ref{sec:RHcase}.
\subsection{Proofs for statements is Sec.\ \ref{subsec:relativisation}}
\label{appx:relativisation_proofs}
\begin{proof}[Proof of Lem.\ \ref{lem:pi_range}]
        Identify $L^\infty(G)$ with the von Neumann algebra \begin{equation}
            \{T_f\in B(L^2(G)):f\in L^\infty(G)\}''.
        \end{equation} This algebra satisfies $L^\infty(G)'=L^\infty(G)$ (see \cite[Thm.~I.7.3.2]{dixmier1981VonNeumann}). Now note that
        \begin{equation}
            P(\Bor(G/H))''\subset (L^\infty (G)\cup \rho(H))'= L_H^\infty(G)\subset P(\Bor(G/H))'',
        \end{equation} where we recall that for $X\in\Bor(G/H)$
        \begin{equation}
            P(X)=T_{\chi_{q^{-1}X}}\in L^\infty (G)\cup \rho(H)'
        \end{equation} for $q:G\to G/H$ the canonical quotient map. Here we have used that by $\eqref{eq:quotient_measure}$, $L_H^\infty(G):=L^\infty(G)^{\Ad \rho\restriction_{H}}$ is isomorphic to $L^\infty(G/H)$.
        Hence $P(\Bor(G/H))''=L_H^\infty(G)$. By \cite[Part I, Prop.\ 2.5]{vanDaele:1978}, the map $\pi:\MM_S\to B(L^2(G,\HH_S))$ is an injective normal *-homomorphism, with $\pi(\MM_S)\in \MM_S\otimes  L^\infty(G)$. In fact, following the proof of \cite[Part I, Thm.\ 3.11]{vanDaele:1978} one sees that 
        \begin{equation}
            \pi(\MM_S)= (\MM_S\otimes L^\infty(G))^{\alpha_S\otimes\Ad \lambda}.
        \end{equation}
        Since $\rho(G)\subset \lambda(G)'$, it follows that
        \begin{align}
            \pi(\MM_S)^{\Ad \tilde\rho\restriction_H} =& (\MM_S\otimes L^\infty(G)^{\Ad\rho\restriction_H})^{\alpha_S\otimes\Ad\lambda}\nonumber\\
            =&(\MM_S\otimes P(\Bor(G/H))'')^{\alpha_S\otimes\Ad\lambda}.
        \end{align}
        By \cite[Lem.~I.2.9]{vanDaele:1978}, where we recall that we have here exchanged the role of the right and left translation action in comparison to that reference, we have for $x\in \MM_S$, $h\in H$ that
        \begin{equation}
            \tilde{\rho}(h)\pi(x)\tilde{\rho}(h)^*=\pi(\alpha_S(h,x))
        \end{equation}
        Hence $\pi(x)\in\pi(\MM_S)^{\Ad \tilde\rho\restriction_H}$ means that $\pi(x)=\pi(\alpha_S(h,x))$. By injectivity of $\pi$, we have $x\in \MM_S^{\alpha_S\restriction_H}$. Since $\pi$ is a normal *-homomorphism, we see that
        \begin{align}
            \pi(\MM_S^{\alpha_S\restriction_H})=& \pi(\MM_S)^{\Ad \tilde\rho\restriction_H}\nonumber\\
            =&(\MM_S\otimes P(\Bor(G/H))'')^{\alpha_S\otimes\Ad \lambda}.
        \end{align}
    \end{proof}

    \begin{proof}[Proof of Prop.\ \ref{prop:yen_properties}]
        We first show independence of the choice of the map $W:\HH_R\to L^2(G)\otimes \KK$ satisfying the properties in Def.\ \ref{def:relativisation_compact_stab} for $\yen$. Choose $\varphi_i\in \HH_S$ and $\psi_i\in \HH_R$ for $i=1,2$. We calculate for $x\in \MM_S^{\alpha_S\restriction_H}$
        \begin{align}
            \langle \varphi_1\otimes \psi_1,\yen(x)(\varphi_2\otimes \psi_2)\rangle=&\int_G \langle\varphi_1,\alpha(g,x)\varphi_2\rangle\langle (W\psi_1)(g),(W\psi_2)(g)\rangle\,\diff\mu_G(g)\nonumber\\
            =&\int_{G/H} \langle\varphi_1,\alpha(g,x)\varphi_2\rangle\,\diff\nu_{W\psi_1,W\psi_2}(gH),
        \end{align}
        where similarly as in Eq.\ \eqref{eq:proj_measure} for $X\in \Bor(G/H)$
        \begin{equation}
            \nu_{W\psi_1,W\psi_2}(X)=\langle W\psi_1,(T_{\chi_{q^{-1}X}}\otimes 1_\KK)W\psi_2\rangle=\langle \psi_1,E(X)\psi_2\rangle.
        \end{equation}
        Note thus that the measure $\nu_{W\psi_1,W\psi_2}$ is independent on $W$ and only depends on $\psi_1,\psi_2$. Therefore $\langle \varphi_1\otimes \psi_1,\yen(x)(\varphi_2\otimes \psi_2)\rangle$ is independent on $W$ and we see that the definition of $\yen$ is unique.

        Since $\pi$ is a normal *-homomorphism \cite[Prop.\ 2.5]{vanDaele:1978}, complete positivity, unitality, normality and linearity of $\yen$ naturally follow, see e.g.\ \cite[Thm.\ IV.3.6]{Takesaki2001}.
        
         To prove that $\yen$ is a *-isomorphism if $E$ is projection valued, we first show that one can take $W$ to intertwine the PVM $E$ and $P$. Let $E:\Bor(G/H)\to B(\HH_R)$ be a covariant PVM, then by Prop.\ \ref{prop:cat_impr_thm} there exists a unitary isomorphism $W_{\mathcal{R}}:\HH_R\to L_V^2(G,\KK)$ for some unitary representation $(V,\KK)$ of $H$ with $W_{\mathcal{R}}E=P_VW_{\mathcal{R}}$. By prop.\ \ref{prop:compact_ind_impr}, there exists an isometric embedding $W_V:L_V^2(G,\KK)\to L^2(G)\otimes\KK$ such that $PW_{V}=W_{V}P_V$ with $P(X)=T_{\chi_{q^{-1}(X)}}\otimes 1_{\KK}$. For the projection $\Pi_V=W_VW_V^*$,
        with
        \begin{equation}
            \Pi_V=\int_H\rho(h)\otimes V(h)\diff\mu_H(h),
        \end{equation}
        it follows that $\Pi_VP=P\Pi_V$. Hence there exists an isometric embedding $W:\HH_R\to L^2(G)\otimes\KK$, where $W=W_VW_\mathcal{R}$, such that $WE=PW$. As $WW^*=\Pi_V$ we have $WW^*P=PWW^*$.
        
        By definition of the relativisation map, we have for $x\in \MM_S^{\alpha_S\restriction_H}$
        \begin{equation}
            \yen(x)=(1_{\HH_S}\otimes W)^*(\pi(x)\otimes 1_{\KK})(1_{\HH_S}\otimes W).
        \end{equation}
        Note furthermore that due to invariance of $x$ under $\alpha_S\restriction_H$, one finds \begin{equation}
            (\pi(x)\otimes 1_\KK)(1_{\HH_S}\otimes W)(1_{\HH_S}\otimes W)^*=(\pi(x)\otimes 1_\KK)(1_{\HH_S}\otimes\Pi_V)=(1_{\HH_S}\otimes\Pi_V)(\pi(x)\otimes 1_\KK).
        \end{equation}
        We can now see that for $x,y\in \MM_S^{\alpha_S\restriction_H}$, we have that
        \begin{align}
            \yen(x)\yen(y)=&(1_{\HH_S}\otimes W)^*(\pi(x)\otimes 1_\KK)(1_{\HH_S}\otimes\Pi_V)(\pi(y)\otimes 1_\KK)(1_{\HH_S}\otimes W)\nonumber\\
            =&(1_{\HH_S}\otimes W)^*(1_{\HH_S}\otimes\Pi_V)(\pi(x)\otimes 1_\KK)(\pi(x)\otimes 1_\KK)(1_{\HH_S}\otimes W)\nonumber\\
            =&(1_{\HH_S}\otimes W)^*(\pi(xy)\otimes 1_\KK)(1_{\HH_S}\otimes W)=\yen(xy),
        \end{align}
        where we have used that $\pi$ is a *-homomorphism. It hence follows that $\yen$ is a *-homomorphism.
        
        To show that under the assumptions described above, \begin{equation}
            \yen:\MM_S^{\alpha_S\restriction_H}\to (\MM_S\otimes E(\Bor(G/H))'')^{\alpha_S\otimes \Ad U_R}
        \end{equation} is an isomorphism, we use the following argument. Note that for $x\in \MM_S \otimes L_H^\infty(G)\otimes \CC 1_\KK$, $x(1_{\HH_S}\otimes\Pi_V)=0$ if and only if $x=0$. If for $y\in \MM_S^{\alpha_S\restriction_H}$ one has $\yen(y)=0$, this implies $(\pi(y)\otimes 1_\KK)(1_{\HH_S}\otimes\Pi_V)=0$. It thus follows that $\pi(y)=0$ and since $\pi$ is injective it follows that $y=0$, i.e.~$\yen$ is injective.
        
        Since $\Pi_V\in (L^\infty_H(G)\otimes 1_\KK)'$, with $L^\infty_H(G)=L^\infty(G)^{\Ad\rho\restriction_H}\subset B(L^2(G))$, we have by \cite[Prop.\ I.2.1]{dixmier1981VonNeumann} that $( L^\infty_H(G)\otimes 1_\KK)\restriction_{\Pi_V(
        L^2(G)\otimes\KK)}$ is a von Neumann algebra. Since $W:\HH_R\to \Pi_V(L^2(G)\otimes\KK)$ is an isomorphism, one has
        \begin{align}
            W^*(L^\infty_H(G)\otimes 1_\KK)W=&(W^*(L^\infty_H(G)\otimes 1_\KK)W)''\nonumber\\
            =&E(L^\infty(G/H))''\nonumber\\
            =&E(\Bor(G/H))'',
        \end{align}
        where by definition for $f\in L^\infty(G/H)\cong L^\infty_H(G)$
        \begin{equation}
            E(f)=W^*P(f)W=\int_{G/H} f(gH)\,\diff E(gH).
        \end{equation}
        Since $W$ intertwines $U_R$ and $\lambda\otimes 1_\KK$, we can now conclude that
        \begin{align}
            \yen(\MM_S^{\alpha_S\restriction_H})=&(1_{\HH_R}\otimes W)^*\left((\MM_S\otimes L^\infty_H(G))^{\alpha_S\otimes\Ad\lambda}\otimes 1_\KK\right)(1_{\HH_R}\otimes W)\nonumber\\=&
            \left((1_{\HH_R}\otimes W)^*(\MM_S\otimes L^\infty_H(G)\otimes 1_\KK)(1_{\HH_R}\otimes W)\right)^{\alpha_S\otimes\Ad U_R}\nonumber\\
            =&(\MM_S\otimes E(\Bor(G/H))'')^{\alpha_S\otimes \Ad U_R},
        \end{align}
        where we have used lemma \ref{lem:pi_range}. This proves surjectivity of $\yen$ and hence $\yen$ is an isomorphism.
    \end{proof}
    \begin{proof}[Proof of Prop.~\ref{prop:rel_expect}]
    For vectors $\psi_S,\psi_S'\in\HH_S$ and $\psi_R,\psi_R'\in\HH_R$ respectively, one sees by Eq.\ \eqref{eq:yenproj} and \eqref{eq:pi_matrixelt} that
    \begin{equation}
            \langle \psi_S\otimes\psi_R,\yen(x)(\psi_S'\otimes\psi_R')\rangle=\int_{G/H} \langle\psi_S,\alpha_s(g,x)\psi_S'\rangle\, \diff\nu_{\psi_R,\psi_R'}(gH),
        \end{equation}
        with for $X\in \Bor(G/H)$
        \begin{equation}
            \nu_{\psi_R,\psi_R'}(X)=\langle \psi_R,E(X)\psi_R'\rangle.
        \end{equation}

    Now consider normal states $\omega_S$ and $\omega_R$ on $\MM_S$ and $B(\HH_R)$ respectively. By \cite[Thm.~7.1.12]{kadison1997fundamentals} there exists sequences of vectors $\{\psi_{S/R,n}\in \HH_{S/R}\}_{n=1}^\infty$ such that
    \begin{equation}
        \omega_{S/R}(\,\cdot\,)=\sum_{n=1}^\infty \langle \psi_{S/R,n},(\,\cdot\,)\psi_{S/R,n}\rangle.
    \end{equation}
    For $x\in \left(\MM_S^{\alpha_S\restriction_H}\right)^+$, we have by dominated convergence that
    \begin{align}
        (\omega_S\otimes\omega_R)(\yen(x))=&\sup_{N,N'}\sum_{n'=1}^{N'}\int_{G/H} \sum_{n=1}^N\langle\psi_{S,n},\alpha_s(g,x)\psi_{S,n}\rangle \,\diff\nu_{\psi_{R,n'},\psi_{R,n'}}(gH)\nonumber\\
        =&\sup_{N'}\sum_{n'=1}^{N'}\int_{G/H} \omega_{S}(\alpha_s(g,x)) \,\diff\nu_{\psi_{R,n'},\psi_{R,n'}}(gH).
    \end{align}
    Since furthermore for $X\in \Bor(G/H)$
    \begin{equation}
        \sum_{n=1}^{N}\nu_{\psi_{R,n},\psi_{R,n}}(X)\leq \sum_{n=1}^{\infty}\nu_{\psi_{R,n},\psi_{R,n}}(X)=(\omega_R\circ E)(X),
    \end{equation}
    Denote $\mu_R:=\omega_R\circ E$. By the Radon-Nikodym theorem, there exists a $\mu_R$-almost everywhere increasing sequence of $\mu_R$-measurable functions $f_{N}:G/H\to [0,1]$ such that
    \begin{equation}
        \sum_{n=1}^{N}\nu_{\psi_{R,n},\psi_{R,n}}(X)=\int_X f_N(gH)\,\diff\mu_R(gH).
    \end{equation}
    Note that
    \begin{equation}
        \mu_R(X)=\sup_{N}\int_X f_N(gH)\,\diff\mu_R(gH)=\int_X \sup_{N}f_N(gH)\,\diff\mu_R(gH).
    \end{equation}
    It follows that $\sup_{N}f_N(gH)=1$ $\mu_R$-almost everywhere. We use this to show
    \begin{align}
        (\omega_S\otimes\omega_R)(\yen(x))=&\int_{G/H} \sup_{N}\omega_{S}(\alpha_s(g,x)) f_N(gH)\,\diff\mu_R(gH)\nonumber\\
        =&\int_{G/H} \omega_{S}(\alpha_s(g,x)) \,\diff\mu_R(gH).
    \end{align}
    \end{proof}

    \subsection{Proofs for statements in Sec.\ \ref{sec:RHcase}}
    \label{appx:spect_decomp_proof}
    \begin{proof}[Proof of Prop.\ \ref{prop:spect_decomp}]
    By Thm.\ \ref{thm:qref_embed}, there exists an isometric embedding $W:\HH_R\to L^2(\RR)\otimes L^2(H)\otimes \KK'$ for $\KK'$ separable such that for each $(t,h)\in \RR\times H$, one has $WU_R(t,h)=(\lambda_\RR(t)\otimes \lambda_H(h)\otimes 1_{\KK'})W$, hence also $WW^*(\lambda_\RR(t)\otimes \lambda_H(h)\otimes 1_{\KK'})=(\lambda_\RR(t)\otimes \lambda_H(h)\otimes 1_{\KK'})WW^*$, or $p:=WW^*\in \lambda_\RR(\RR)'\otimes \lambda_H(H)'\otimes B(\KK')$. We now define 
    \begin{equation}
        V=(\mathcal{F}\otimes 1_{L^2(H)}\otimes 1_{\KK'})W,
    \end{equation}
    with the Fourier transform $\mathcal{F}\in\mathbf{U}(L^2(\RR))$ being the continuous extension of 
    \begin{equation}
        (\mathcal{F}f)(\xi)=\frac{1}{\sqrt{2\pi}}\int_\RR \exp(i\xi t)f(t)\diff t,
    \end{equation}
    for $f\in C_c(\RR)$. Naturally $V:\HH\to L^2(\RR)\otimes L^2(H)\otimes\KK'$ is also an isometric embedding. Furthermore
    \begin{align}
        VU_R(t,h)=&(\mathcal{F}\lambda_\RR(t)\mathcal{F}^*\otimes \lambda_H(h)\otimes 1_{\KK'})V\nonumber\\
        =&(T_{\exp(it.)}\otimes \lambda_H(h)\otimes 1_{\KK'})V
    \end{align}
    where $T_{\exp(it.)}$ denotes the multiplication operator defined by the function $\RR\ni\xi \mapsto \exp(it\xi)$.
    Defining
    \begin{equation}
        \hat{p}:=VV^*,  
    \end{equation}
    and noting that, by \cite[Lem.\ B2]{vanDaele:1978}, $\{T_{\exp(it.)}:t\in\RR\}'=L^\infty(\RR)'$, it follows that 
    \begin{equation}
        \hat{p}\in L^\infty(\RR)'\otimes\lambda_H(H)'\otimes B(\KK'),
    \end{equation}
    where we embed $L^\infty(\RR)\subset B(L^2(\RR))$ as multiplication operators. Note that $L^\infty(\RR)'=L^\infty(\RR)$, see e.g.\ \cite[Thm.\ I.7.3.2]{dixmier1981VonNeumann}, thus
    \begin{equation}
        \hat{p}\in L^\infty(\RR)\otimes\lambda_H(H)'\otimes B(\KK').
    \end{equation}
    One can trivially write $L^2(\RR)\otimes L^2(H)\otimes \KK'=\int_\RR^\oplus L^2(H)\otimes \KK'\,\diff \xi$, see \cite[Ch.\ IV.8]{Takesaki2001}. By the above observation and by \cite[Cor.\ IV.8.16]{Takesaki2001}, $\hat{p}$ is a decomposable operator on this Hilbert space, i.e.\ there exists a measurable map
    \begin{equation}
    \xi\mapsto \hat{p}(\xi)\in \lambda_H(H)' \otimes B(\KK') \subset B(L^2(H)\otimes \KK') ,
    \end{equation}
    such that for $\psi =\int_\RR^\oplus \psi(\xi)\diff \xi \in \int_\RR^\oplus L^2(H)\otimes \KK'\diff \xi$ with $\xi\mapsto\psi(\xi)$ square integrable, one has 
    \begin{equation}\hat{p}\psi=\int_\RR^\oplus \hat{p}(\xi)\psi(\xi)\diff \xi.
    \end{equation}
    As $\hat{p}$ is a projection, it follows that $\hat{p}(\xi)\psi(\xi)=\hat{p}(\xi)^2\psi(\xi)=\hat{p}(\xi)^*\psi(\xi)$ almost everywhere, so that without loss of generality one can choose $\hat{p}(\xi)$ to be a projection valued.
    Note that, as $\hat{p}(\xi)$ is measurable, so is the family of Hilbert spaces $\KK(\xi):=\hat{p}(\xi)(L^2(H)\otimes \KK')$ as in \cite[Def.\ IV.8.9]{Takesaki2001}, where the measurable vector fields are given by $\xi\mapsto \hat{p}(\xi)\psi(\xi)$ for $\psi(\xi)\in L^2(H)\otimes \KK'$ measurable and given a countable basis $\{\psi^{(n)}\}_{n\in \mathbb{N}}$ of $L^2(H)\otimes \KK'$, the measurable vector fields $\{\xi\mapsto \hat{p}(\xi)\psi^{(n)}\}_{n\in \mathbb{N}}$ form a fundamental sequence of $\xi\mapsto \KK(\xi)$. One can now see that
    \begin{equation}
         \HH\cong \hat{p}\int_{\RR}^{\oplus}L^2(H)\otimes \KK'\diff\xi\cong \int_{\RR}^{\oplus}\hat{p}(\xi)(L^2(H)\otimes \KK')\diff\xi= \int_{\RR}^{\oplus}\KK(\xi)\diff\xi.
    \end{equation}
    Note that under the isomorphism above, one has for a square integrable $\psi(\xi)\in \KK(\xi)$ that \begin{equation}
        (U(t,h)\psi)(\xi)=\exp(it\xi)(\lambda_{H}(h)\otimes 1_{\KK'})\psi(\xi).
    \end{equation}
    Note that this in particular implies that given the spectral measure $P:\Bor(\RR)\to B(\HH)$ associated with the action of $\RR$ via the SNAG theorem
    \begin{equation}
        U(t,1_H)=\int_\RR \exp(i\xi t)\diff P(\xi),
    \end{equation}
    one finds under the isomorphism described above that for any $X\in \Bor(\RR)$
    \begin{equation}
        P(X)\HH\cong \int_X^\oplus \KK(\xi)\diff \xi,
    \end{equation}

    Now consider $\xi\mapsto\tilde{\KK}(\xi)$ some second measurable family of separable Hilbert spaces such that
    \begin{equation}
        \HH\cong \int_\RR^\oplus \tilde{\KK}(\xi)\diff\xi,
    \end{equation}
    where for any square integrable $\tilde{\psi}(\xi)\in \tilde{\KK}(\xi)$ one has 
    \begin{equation}
        (U(t,1_\HH)\tilde{\psi})(\xi)=\exp(it\xi)\tilde{\psi}(\xi).
    \end{equation}
    Hence there exists a unitary $\tilde{U}:\int_\RR^\oplus \tilde{\KK}_\xi\diff\xi\to \int_\RR^\oplus \KK_\xi\diff\xi$ such that for any $t\in \RR$
    \begin{equation}
        (\tilde{U}\exp(it.)\tilde{\psi})(\xi)=\exp(it\xi)(\tilde{U}\tilde{\psi})(\xi).
    \end{equation}
    By linearity and continuity of $\tilde{U}$, this can be extended to any $f\in L^\infty(\RR)$, i.e.
    \begin{equation}
        (\tilde{U}(f\tilde{\psi}))(\xi)=f(\xi)(\tilde{U}\tilde{\psi})(\xi).
    \end{equation}
    This means in particular that for any square integrable vectorfields $\tilde{\psi}(\xi),\tilde{\psi}'(\xi)$ and $f\in C_c(\RR)$, one has 
    \begin{equation}
        \int_\RR f(\xi)\langle \tilde{U}\tilde{\psi}'(\xi),\tilde{U}\tilde{\psi}(\xi)\rangle\diff\xi=\int_\RR f(\xi)\langle \tilde{\psi}'(\xi),\tilde{\psi}(\xi)\rangle\diff\xi,
    \end{equation}
    or $\langle \tilde{U}\tilde{\psi}'(\xi),\tilde{U}\tilde{\psi}(\xi)\rangle=\langle \tilde{\psi}'(\xi),\tilde{\psi}(\xi)\rangle$ almost everywhere.
    Now let $\{\tilde{\psi}^{(n)}(\xi)\in \tilde{\KK}(\xi)\}_{n=1}^\infty$ a sequence of measurable vectorfields such that, setting $N(\xi)=\dim(\tilde{\KK}(\xi))$, one has for each $\xi\in \RR$ that $\{\tilde{\psi}^{(n)}(\xi)\}_{n=1}^{N(\xi)}$ is an orthonormal basis for $\tilde{\KK}(\xi)$ and $\tilde{\psi}^{(n)}(\xi)=0$ for $n> N(\xi)$ (the latter only occurs for $N(\xi)<\infty$). Such a sequence always exists, see \cite[Lem.\ IV.8.12]{Takesaki2001}. For any $X\subset \RR$ having finite measure, the vectorfields $\tilde{\psi}_X^{(n)}(\xi):=\chi_X(\xi)\tilde{\psi}^{(n)}(\xi)$ are square integrable. One can now see that $\{(\tilde{U}\tilde{\psi}_{X}^{(n)})(\xi)\}_{n=1}^{N(\xi)}$ form an orthonormal set for almost all $\xi\in X$. One can conclude that $\dim(\KK_\xi)\geq N(\xi)=\dim(\tilde{\KK}_\xi)$ almost everywhere. By repeating the argument with $\tilde{\KK}(\xi)$ and $\KK(\xi)$ reversed, one concludes that $\dim(\KK(\xi))=\dim(\tilde{\KK}(\xi))$ almost everywhere and as these are separable Hilbert spaces, one concludes that $\KK(\xi)\cong \tilde{\KK}(\xi)$ almost everywhere.
    \end{proof}

    \section{Von Neumann algebra factors, types and traces}
    \label{appx:factor_types}
    In this section we recall some of the definitions relevant to the type classification of von Neumann algebras and factors due to \cite{Murray_1936}, which in this paper mainly play a role in Sec.\ \ref{sec:type_change}. A detailed treatment of this topic can be found for instance in \cite{kadison1997fundamentals} or \cite{Takesaki2001}. Recall that a von Neumann algebra is defined as follows
    \begin{definition}
    \label{def:vNeumannAlg}
        Let $\MM$ be a unital *-subalgebra of $B(\HH)$ for some Hilbert space $\HH$, we say $\MM$ is a \textup{von Neumann algebra} if
        \begin{equation}
            \MM''=\MM,
        \end{equation}
        where for any $S\subset B(\HH)$, the commutant $S'$ is defined by
        \begin{equation}
            S'=\{a\in B(\HH):\text{ for all }s\in S,\, as=sa \}.
        \end{equation}
    \end{definition}
    A factor is then defined as follows.
    \begin{definition}
        A von Neumann algebra $\MM$ is a \textup{(von Neumann) factor} if it has trivial centre, i.e.
        \begin{equation}
            \MM\cap\MM'=\CC 1_{\MM}.
        \end{equation}
    \end{definition}
    A key feature of von Neumann algebras acting on separable Hilbert spaces is that they can always be decomposed into factors, see e.g.\ \cite[Thm.\ IV.8.21]{Takesaki2001}.
    \begin{theorem}
        Let $\MM$ a von Neumann algebra acting on a separable Hilbert space $\HH$, then there exists a $\sigma$-finite measure space $(X,\mu)$ and a measurable family of separable Hilbert spaces $X\ni x\mapsto \HH(x)$ and of von Neumann algebra factors $X\ni x\mapsto \MM(x)$ such that $\MM(x)$ acts on $\HH(x)$ and there is an isomorphism of Hilbert spaces
        \begin{equation}
            \HH\cong \int_X^\oplus\HH(x)\diff\mu(x),
        \end{equation}
        under which
        \begin{equation}
            \MM\cong \int_X^\oplus\MM(x)\diff\mu(x).
        \end{equation}
    \end{theorem}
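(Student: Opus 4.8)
The plan is to invoke von Neumann's reduction theory, which amounts to diagonalising the centre of $\MM$. Write $\mathcal{Z}=\MM\cap\MM'$ for the centre, which is an abelian von Neumann algebra acting on the separable Hilbert space $\HH$. First I would appeal to the spectral theorem for abelian von Neumann algebras: because $\HH$ is separable, $\mathcal{Z}$ is countably generated, hence $*$-isomorphic to $L^\infty(X,\mu)$ for some standard $\sigma$-finite measure space $(X,\mu)$ (see e.g.\ \cite[Ch.~III, IV.7]{Takesaki2001}). This produces the measure space $(X,\mu)$ that appears in the statement.

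Next I would \emph{diagonalise}: realise the above isomorphism concretely as a unitary identification
\begin{equation}
    \HH\cong \int_X^\oplus \HH(x)\,\diff\mu(x),
\end{equation}
under which each element of $\mathcal{Z}$ acts as multiplication by its corresponding $L^\infty$ function; that is, $\mathcal{Z}$ is carried onto the algebra of diagonalisable operators on the direct integral. This is the content of the diagonalisation theorem for abelian algebras on separable Hilbert spaces. The commutant of the diagonal algebra is then exactly the algebra of decomposable operators $\int_X^\oplus B(\HH(x))\,\diff\mu(x)$. Since $\mathcal{Z}\subset\MM$ and $\mathcal{Z}\subset\MM'$, both $\MM$ and $\MM'$ consist of decomposable operators, and one shows that any such von Neumann algebra admits a decomposition $\MM\cong\int_X^\oplus\MM(x)\,\diff\mu(x)$ into a measurable field of von Neumann algebras $\MM(x)\subset B(\HH(x))$. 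The technical input here is to start from a countable $*$-subalgebra of $\MM$ that is weakly dense, decompose its (decomposable) elements fibrewise, and take fibrewise weak closures, checking measurability of the resulting field.

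The crux of the argument is the final step: verifying that $\MM(x)$ is a \emph{factor} for $\mu$-a.e.\ $x$. For this one establishes that centres decompose compatibly, so that the fibre of $\mathcal{Z}$ over $x$ equals the centre $\MM(x)\cap\MM(x)'$ of the fibre algebra, i.e.
\begin{equation}
    \mathcal{Z}=\MM\cap\MM'\cong \int_X^\oplus \bigl(\MM(x)\cap\MM(x)'\bigr)\,\diff\mu(x).
\end{equation}
But by construction $\mathcal{Z}$ has been diagonalised onto the full algebra of diagonal operators, whose fibres are precisely the scalars $\CC\,1_{\HH(x)}$. Comparing the two descriptions forces $\MM(x)\cap\MM(x)'=\CC\,1_{\HH(x)}$ for almost every $x$, which is exactly the statement that $\MM(x)$ is a factor a.e.

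The main obstacle is this last measure-theoretic step: showing that the operations of taking commutants and centres can be performed fibrewise up to a null set, and that diagonalising $\mathcal{Z}$ to the maximal abelian diagonal algebra pins down each fibre centre to be trivial. This is where the full machinery of measurable fields of von Neumann algebras is needed, and where separability of $\HH$ is essential (guaranteeing countable generation and the existence of fundamental sequences). The rest of the proof is bookkeeping that the isomorphisms of $\HH$ and of $\MM$ are compatible, which is automatic once $\MM$ and $\mathcal{Z}$ are decomposed with respect to the same diagonalisation.
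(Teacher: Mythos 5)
Your proposal is correct and is essentially the paper's approach: the paper offers no proof of this statement, citing it directly as \cite[Thm.~IV.8.21]{Takesaki2001}, and your sketch — diagonalising the centre $\mathcal{Z}=\MM\cap\MM'$ as $L^\infty(X,\mu)$, identifying its commutant with the decomposable operators, decomposing $\MM$ fibrewise via a countable weakly dense subalgebra, and invoking the a.e.\ fibrewise behaviour of commutants and centres to force $\MM(x)\cap\MM(x)'=\CC 1_{\HH(x)}$ almost everywhere — is precisely the standard reduction-theory argument carried out in that reference. You correctly identify the genuine technical crux (that commutants, hence centres, decompose fibrewise up to null sets, which rests on the commutant theorem for direct integrals and on separability of $\HH$), so nothing essential is missing.
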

    See \cite[Ch.\ IV.8]{Takesaki2001} for details on measurable fields of Hilbert spaces and von Neumann algebras. Given the existence of such decompositions, the classification of factors allows insight into the structure of a rich class of von Neumann algebras. Although it is more conventional and arguably more fundamental to describe the type classification of factors in terms of the structure of their projection lattice, we shall give their characterisation in terms of which traces they admit. Let us first recall the definition of a trace, and the more general notion of a weight, on a von Neumann algebra.
    \begin{definition}
        \label{def:weight_trace}
        Let $\mathcal{M}$ be a von Neumann algebra. A \textup{weight} on a von Neumann algebra is a map $\varphi:\MM^+\to [0,\infty]$ such that for each $x,y\in\MM$, $\lambda\in\RR^+$ we have
        \begin{equation}
            \varphi(x+y)=\varphi(x)+\varphi(y),\;\varphi(\lambda x)=\lambda\varphi(x),
        \end{equation}
        where we use the convention $0\cdot\infty=0$. A \textup{trace} is a weight $\tau:\MM^+\to [0,\infty]$ that furthermore satisfies for each $x\in\mathcal{M}$
        \begin{equation}
            \tau(x^*x)=\tau(xx^*).
        \end{equation}
        A \textup{tracial state} is a state on $\MM$ that restricts to a trace on $\MM^+$, where in particular
        \begin{equation}
            \tau(1_{\MM})=1.
        \end{equation}
        We say a weight is \textup{faithful} if for $x\in\MM$ we have $\varphi(x^*x)=0$ if and only if $x=0$, \textup{semifinite} if for every $x\in\MM^+$ with $x>0$ there is a $y\in\MM^+$ with $0<y\leq x$ for which we have $\varphi(y)<\infty$, \textup{finite} if $\tau(1_\MM)<\infty$ and \textup{normal} if for every bounded increasing net $x_\alpha$ in $\MM$, we have $\sup_{\alpha}\varphi(x_\alpha)=\varphi(\sup_\alpha x_\alpha)$.
    \end{definition}
    One can use the notion of a trace to distinguish classes of von Neumann algebras. For simplicity we shall restrict ourselves here to the case of von Neumann algebras acting on separable Hilbert spaces.
    \begin{definition}
        A von Neumann algebra $\MM\subset B(\HH)$ for $\HH$ separable is \textup{(semi)finite} if it admits a faithful (semi)finite normal trace.
    \end{definition}
    For general von Neumann algebras, the characterisation of finite algebras based on traces is somewhat more involved, see \cite[Ch.~V.2]{Takesaki2001}.
    
    Note that it need in general not be true that any faithful normal trace on a semifinite von Neumann algebra is semifinite. Furthermore, a general semifinite von Neumann algebra may admit many linearly independent faithful semifinite normal traces. For semifinite factors however, the freedom in chosing a semifinite normal trace is significantly restricted, see e.g.\ \cite[Cor.\ V.2.32]{Takesaki2001}.
    \begin{proposition}
        All (nonzero) semifinite normal traces on a semifinite factor are proportional.
    \end{proposition}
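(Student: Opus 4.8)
The plan is to show that any two nonzero semifinite normal traces $\tau_1,\tau_2$ on a semifinite factor $\MM$ agree up to a positive scalar by first reducing to their values on projections and then invoking the comparison theory of projections in a factor. The reduction is standard: a normal trace is determined on $\MM^+$ by its values on the spectral projections of positive elements, since every $x\in\MM^+$ is the supremum of an increasing net of finite positive linear combinations of its spectral projections, and normality lets one pass $\tau_i$ through the supremum. Thus it suffices to produce $c>0$ with $\tau_2(e)=c\,\tau_1(e)$ for every projection $e\in\MM$.

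Before comparing the two traces I would first record that a nonzero normal trace $\tau$ on a factor is automatically faithful. Let $p_0$ be the supremum of all projections $p$ with $\tau(p)=0$; this family is directed upward (if $\tau(p)=\tau(q)=0$ then $p\vee q=q+(p\vee q-q)$ with $p\vee q-q\sim p-p\wedge q\le p$, so $\tau(p\vee q)=0$), and by normality $\tau(p_0)=0$. For any unitary $u\in\MM$ the trace property gives $\tau(up_0u^*)=\tau(p_0)=0$, so $up_0u^*\le p_0$; applying this to $u^*$ as well yields $up_0u^*=p_0$. Hence $p_0$ commutes with every unitary, so $p_0\in\MM\cap\MM'=\CC 1_{\MM}$. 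If $p_0=1$ then $\tau(x)\le\|x\|\tau(1)=0$ for all $x\in\MM^+$, contradicting $\tau\ne 0$; therefore $p_0=0$, and a spectral argument upgrades this to faithfulness of $\tau$.

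Next I would fix a single nonzero projection $e_0$ finite for both traces: by semifiniteness choose $e_1$ with $0<\tau_1(e_1)<\infty$, then inside $e_1\MM e_1$ choose $e_0\le e_1$ with $0<\tau_2(e_0)<\infty$; automatically $\tau_1(e_0)\le\tau_1(e_1)<\infty$, and both values are nonzero by faithfulness. Set $c=\tau_2(e_0)/\tau_1(e_0)>0$. The core is then the uniqueness of the Murray--von Neumann dimension function: because $\MM$ is a factor the comparison theorem forces any two projections to be comparable, $e\precsim f$ or $f\precsim e$, and each trace induces a dimension $D_i(e)=\tau_i(e)$ that is invariant under equivalence ($e\sim f\Rightarrow D_i(e)=D_i(f)$, since $\tau_i(v^*v)=\tau_i(vv^*)$), monotone, and countably additive on orthogonal families. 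I would show $\tau_2=c\,\tau_1$ on all finite projections by subdivision: comparing an arbitrary finite projection with $e_0$ and splitting into equivalent pieces expresses its dimension, for either trace, through multiples of the value on $e_0$ (integer multiples of a minimal-projection value in type $\mathrm{I}$, arbitrarily fine subdivisions in type $\mathrm{II}$), forcing the ratio to equal $c$. Finally, semifiniteness writes any projection as the supremum of an increasing net of finite subprojections, and normality propagates $\tau_2=c\,\tau_1$ to all projections, and thence to all of $\MM^+$.

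The main obstacle is exactly this uniqueness of the dimension function, the step that uses the lattice structure rather than the formal properties of a trace. This is where the type dichotomy enters: in a continuous (type $\mathrm{II}$) factor one exploits infinite divisibility of projections to realise arbitrary ratios, whereas in a type $\mathrm{I}$ factor the argument proceeds through minimal projections, which are all mutually equivalent and hence receive equal trace. Unifying these cases---equivalently, showing directly that the comparison order together with additivity pins down $\tau$ up to scale---is the technical heart, and is precisely the content of the Murray--von Neumann dimension theory underlying \cite[Cor.\ V.2.32]{Takesaki2001}.
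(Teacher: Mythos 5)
The paper does not actually prove this proposition; it simply cites \cite[Cor.\ V.2.32]{Takesaki2001}, so there is no in-paper argument to diverge from. Your proposal reconstructs the standard proof behind that citation, and its scaffolding is sound: the reduction to projections via dyadic spectral approximation and normality is correct; your derivation that a nonzero normal trace on a factor is automatically faithful (the supremum $p_0$ of null projections is directed by Kaplansky's formula $p\vee q-q\sim p-p\wedge q$, is unitarily invariant, hence central, hence $0$ or $1$) is a genuinely nice, complete argument; and the construction of a reference projection $e_0$ finite and nonzero for both traces by applying semifiniteness of $\tau_2$ inside $e_1\MM e_1$ works, modulo the routine insertion of spectral projections to pass from a positive element $y$ with finite trace to a projection. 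You are also candid that the decisive step --- that comparability of projections in a factor, equivalence-invariance, and additivity pin the ratio $\tau_2(e)/\tau_1(e)$ to the constant $c$ on trace-finite projections (minimal projections in type $\mathrm{I}$, halving plus an exhaustion estimate of the form $m\,\tau_j(q)\le\tau_j(e)\le(m+1)\,\tau_j(q)$ in type $\mathrm{II}$) --- is deferred to Murray--von Neumann dimension theory; that is exactly the content of the corollary the paper cites, so your proof stops short of self-containedness at precisely the point the paper outsources.

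Two small refinements would make the final step airtight. First, in the exhaustion at the end, say ``subprojections with finite trace for \emph{both} $\tau_1$ and $\tau_2$'' rather than ``finite subprojections'': the family of such subprojections of a given $e$ is nonempty inside every nonzero $f\le e$ (by the same two-step semifiniteness argument used for $e_0$), directed, and has supremum $e$ by a maximality argument, so normality propagates $\tau_2=c\,\tau_1$ to all projections without invoking the implication ``Murray--von Neumann finite $\Rightarrow$ trace-finite'', which itself needs the existence half of dimension theory. Second, note that trace-finiteness does imply Murray--von Neumann finiteness (if $e\sim e'\le e$ with $\tau(e)<\infty$ then $\tau(e-e')=0$, so $e=e'$ by faithfulness), which is the direction your subdivision argument actually uses.
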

    One can use this fact to classify factors into mutually exclusive types in terms of properties of their traces. This classification is equivalent to the more conventional classification in terms of the structure of the projection lattice of a factor, see in particular \cite[Prop.\ 8.5.3-5]{kadison1997fundamentals}.
    \begin{definition}
        Let $\MM$ be a von Neumann factor and let $\MM_p\subset\MM^+$ be the set of all its projections
        \begin{equation}
            \MM_p:=\{a\in\MM:a=a^*=a^2\}.
        \end{equation}
        We distinguish the following cases:
        \begin{itemize}
            \item We say $\MM$ is type $\mathrm{I}$ if:
            \begin{itemize}
                \item $\MM$ admits a faithful normal tracial state $\tau$ with $\tau(\MM_p)=\{0,1/n,2/n,...,1\}$ for some $n\in\mathbb{N}$, in which case we say $\MM$ is type $\mathrm{I}_n$, or
                \item $\MM$ admits a faithful normal semifinite trace $\tau$ with $\tau(\MM_p)=\mathbb{N}_0\cup\{\infty\}$, in which case we say $\MM$ is type $\mathrm{I}_\infty$.
            \end{itemize}
            \item We say $\MM$ is type $\mathrm{II}$ if:
            \begin{itemize}
                \item $\MM$ admits a faithful normal tracial state $\tau$ with $\tau(\MM_p)=[0,1]$, in which case we say $\MM$ is type $\mathrm{II}_1$, or
                \item $\MM$ admits a faithful normal semifinite trace $\tau$ with $\tau(\MM_p)=[0,\infty]$, in which case we say $\MM$ is type $\mathrm{II}_\infty$.
            \end{itemize}
            \item We say $\MM$ is type $\mathrm{III}$ if it is not semifinite.
        \end{itemize}
    \end{definition}
    It can be shown that this classification exhausts all factors. \\
    
    Though type III factors are typically not further distinguished based on properties of a trace defined on them, various finer classifications of type III factors have been introduced. In particular we recall a classification due to \cite{Connes1973}, see also \cite[Ch.\ XII]{takesaki2002}, based on the spectrum of modular operators. We sketched the construction of the modular operator for a faithful positive normal linear functional below Thm.\ \ref{thm:modular_action}. This construction can be generalised for faithful semifinite normal weights, see e.g. \cite[Thm.\ VIII.1.2]{takesaki2002}. For such a weight $\varphi$, we denote $\Delta_\varphi$ its associated modular operator and $Sp(\Delta_\varphi)\subset\RR$ its spectrum. We now recall the following definition.
    \begin{definition}
        For a von Neumann algebra $\MM$, we define its \textup{modular spectrum} by
        \begin{equation}
            S(\MM)=\bigcap_{\substack{\varphi\text{ semifinite}\\\text{normal weight}}}Sp(\Delta_\varphi).
        \end{equation}
        For $\MM$ a type $\mathrm{III}$ factor, we say that
        \begin{itemize}
            \item $\MM$ is type $\mathrm{III}_0$ if $S(\MM)=\{0,1\}$,
            \item $\MM$ is type $\mathrm{III}_\lambda$ for $\lambda\in (0,1)$ if $S(\MM)=\overline{\{\lambda^n:n\in\mathbb{Z}\}}$,
            \item $\MM$ is type $\mathrm{III}_1$ if $S(M)=[0,\infty)$.
        \end{itemize}
    \end{definition}
    This classification exhausts all type III factors, which is most easily seen through an equivalent characterisation, see \cite[Def.\ XII.1.5 and Thm.\ XII.1.6]{takesaki2002}. The factors of type $\mathrm{III}_1$ are the most relevant to us, firstly because their frequent appearance in the context of local von Neumann algebras of quantum field theories, see e.g.\ \cite{fredenhagen1985modular,Verch:1996wv}, but also due to a relation between the crossed product of type $\mathrm{III}_1$ factors with its modular group and type $\mathrm{II}_\infty$ factors, see \cite[Cor. 9.7]{Takesaki1973duality}.
    \begin{proposition}
        \label{prop:type3to2fact}
        Let $\MM$ be a von Neumann algebra and $\sigma$ a modular action (associated with some faithful normal semifinite weight or positive functional), then the crossed product $\MM\rtimes_\sigma\RR$ is a type $\mathrm{II}_\infty$ factor if and only if $\MM$ is a type $\mathrm{III}_1$ factor.
    \end{proposition}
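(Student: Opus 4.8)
The plan is to combine the semifiniteness already established in Theorem~\ref{thm:cross_prod_semifinite} with Takesaki duality, reducing the statement to an analysis of the \emph{center} of $N:=\MM\rtimes_\sigma\RR$ and of the canonical \emph{dual action} on it. First I would record the structural facts about $N$. By Theorem~\ref{thm:cross_prod_semifinite} and Proposition~\ref{prop:modular_trace}, $N$ carries a faithful normal semifinite trace $\tau$, so $N$ is semifinite and in particular never of type $\mathrm{III}$. Next I would exhibit the dual action $\hat\sigma\colon\RR\to\Aut(N)$ (using $\widehat{\RR}\cong\RR$), implemented after Fourier transform by translation in the spectral variable $\xi$, and verify directly from the explicit formula $\tau(p_U)=\int_U e^{\xi}\,\diff\xi$ of Proposition~\ref{prop:modular_trace} that $\hat\sigma$ rescales the trace, $\tau\circ\hat\sigma_s=e^{-s}\tau$ for all $s\in\RR$.

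This rescaling is the crucial rigidity. On any factor of type $\mathrm{I}$ or $\mathrm{II}_1$ the trace is unique up to a positive scalar and is preserved by \emph{every} automorphism, so a nontrivial trace-rescaling automorphism cannot exist there. Hence whenever $N$ is a factor it is semifinite, not finite, and not of type $\mathrm{I}$, and therefore of type $\mathrm{II}_\infty$. It then remains to match factoriality of $N$ with $\MM$ being a type $\mathrm{III}_1$ factor. I would first observe that $N$ can be a factor only if $\MM$ is: since the modular flow fixes $Z(\MM)$ pointwise, the image $\pi(Z(\MM))$ lies in the center of $N$, so $Z(N)=\CC 1$ forces $Z(\MM)=\CC 1$. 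Moreover $\MM$ must then be of type $\mathrm{III}$, for if $\MM$ were semifinite its modular automorphism group would be inner, whence $N\cong\MM\otimes L^\infty(\RR)$ would inherit the large center $L^\infty(\RR)$ and fail to be a factor.

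The remaining, and genuinely deep, equivalence is that for a type $\mathrm{III}$ factor $\MM$ one has $Z(N)=\CC 1$ if and only if $\MM$ is of type $\mathrm{III}_1$. This is exactly the content of the Connes--Takesaki theory of the \emph{flow of weights}: the center $Z(N)$ equipped with the restricted dual action $\hat\sigma$ realizes this flow, and by the classification recalled in Appendix~\ref{appx:factor_types} the flow is trivial (supported on a point) precisely when $S(\MM)=[0,\infty)$, i.e.\ when $\MM$ is $\mathrm{III}_1$, whereas for $\mathrm{III}_\lambda$ ($0<\lambda<1$) and $\mathrm{III}_0$ the flow is periodic, respectively properly ergodic, and $Z(N)\neq\CC 1$. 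This last identification of $Z(N)$, together with its dual flow, as Connes' modular invariant $S(\MM)$ is the main obstacle; proving it from scratch would require developing the full duality between $N$, its second dual $N\rtimes_{\hat\sigma}\RR\cong\MM\otimes B(L^2(\RR))$, and the smooth flow of weights, so for this step I would invoke Takesaki's theorem directly (see \cite[Cor.~9.7]{Takesaki1973duality} and \cite[Ch.~XII]{takesaki2002}, together with \cite{Connes1973}) rather than reprove it.

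Assembling the pieces then gives both implications. If $\MM$ is a $\mathrm{III}_1$ factor, the trivial flow of weights yields $Z(N)=\CC 1$, and the trace-rescaling argument of the second paragraph upgrades the semifinite factor $N$ to type $\mathrm{II}_\infty$. Conversely, if $N$ is a $\mathrm{II}_\infty$ factor then $Z(N)=\CC 1$ forces $\MM$ to be a type $\mathrm{III}$ factor whose flow of weights is trivial, which is to say $\MM$ is of type $\mathrm{III}_1$.
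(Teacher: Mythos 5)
Your proposal is sound, but it is worth being clear about what the paper actually does here: it offers no proof at all, stating the proposition with a bare citation to \cite[Cor.~9.7]{Takesaki1973duality}. Your argument is, in effect, the standard proof of that corollary partially carried out, and the scaffolding you supply is correct and genuinely adds content. The trace-scaling step checks out: the dual action $\hat\sigma$ maps $\pi(x)\mapsto\pi(x)$ and $\tilde\rho(t)\mapsto e^{ist}\tilde\rho(t)$, hence preserves $N$, and the computation $\tau(p_{U+s})=e^{s}\tau(p_U)$ from Proposition~\ref{prop:modular_trace} gives nontrivial scaling (your sign $e^{-s}$ versus $e^{s}$ is a harmless convention); since on a factor of type $\mathrm{I}$ or $\mathrm{II}_1$ every automorphism preserves the essentially unique trace (minimal projections go to minimal projections, respectively the normalized trace is unique), a semifinite factor admitting a trace-scaling automorphism must be $\mathrm{II}_\infty$. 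Likewise correct are the two necessity arguments: $\sigma^\phi$ fixes $Z(\MM)$ pointwise, so $\pi(Z(\MM))\subseteq Z(N)$ and factoriality of $N$ forces factoriality of $\MM$; and for semifinite $\MM$ the modular flow is inner, so $N\cong\MM\,\bar\otimes\, L^\infty(\RR)$ has large center. The one point to flag is that the crux of the biconditional --- that for a type $\mathrm{III}$ factor $\MM$ the center $Z(N)$ with the dual flow realizes the flow of weights, trivial precisely when $S(\MM)=[0,\infty)$ --- is \emph{not} recalled in Appendix~\ref{appx:factor_types}, which only defines the types via the modular spectrum; that equivalence is exactly the Connes--Takesaki theorem \cite{Connes1973,Takesaki1973duality,takesaki2002}, and you rightly invoke rather than reprove it. So logically your proof and the paper rest on the same citation; the difference is that you isolate precisely which part is deep (triviality of the flow of weights $\Leftrightarrow\,\mathrm{III}_1$) and derive the rest --- semifiniteness, exclusion of finite and type $\mathrm{I}$ behaviour, and the reduction of factoriality of $N$ to properties of $\MM$ --- from results already established in the paper, which is a more informative treatment than the bare reference.
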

    We note that the result above can be used in a structure theory of type $\mathrm{III}_1$ factors, namely showing that any such factor $\MM$ has a natural decomposition $\MM\cong \mathcal{N}\rtimes_\vartheta\RR$ for some group action $\vartheta:\RR\times\mathcal{N}\to\mathcal{N}$ where $\mathcal{N}$ is a type $\mathrm{II}_\infty$ factor. Similar decompositions, where $\mathcal{N}$ has a nontrivial centre, can also be derived for the other type III factors, see \cite[Ch.\ XII]{takesaki2002} for a detailed treatment.

    \paragraph{Tensor products of semifinite von Neumann algebras}
    An important property of the class of semifinite von Neumann algebras is that it is closed under the (spatial) tensor product, see e.g.\ \cite[Thm.\ V.2.30]{Takesaki2001}. Here we give an alternative proof of this fact by constructing a faithful normal semifinite trace on such a tensor product of two semifinite von Neumann algebras from traces defined on these. This particular construction is of use to us in the proof of Thm.\ \ref{thm:type_change}.
    \begin{proposition}
    \label{prop:trace_product}
    Let $\MM^{(1)},\MM^{(2)}$ be semifinite von Neumann algebras acting on Hilbert spaces $\HH^{(1)},\HH^{(2)}$ with semifinite faithful normal traces $\tau^{(1)},\tau^{(2)}$, then $\MM^{(1)}\otimes\MM^{(2)}$ is semifinite with semifinite faithful normal trace $\tau$, which is characterised by the following: for $x^{(i)}\in \MM^{(i)}$ positive one has
    \begin{equation}
        \tau(x^{(1)}\otimes x^{(2)})=\tau^{(1)}(x^{(1)})\tau^{(2)}(x^{(2)}).
    \end{equation}
    \end{proposition}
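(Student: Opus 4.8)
The plan is to build $\tau$ explicitly from vector functionals adapted to the two given traces, and then to check the trace axioms one at a time, with the product formula essentially free from positivity. First I would use semifiniteness of each $\tau^{(i)}$ to choose, for $i=1,2$, an upward-directed family of projections $\{p^{(i)}_\alpha\}\subset\MM^{(i)}$ with $\tau^{(i)}(p^{(i)}_\alpha)<\infty$ and $p^{(i)}_\alpha\uparrow 1$ strongly; the restriction of $\tau^{(i)}$ to each corner $p^{(i)}_\alpha\MM^{(i)}p^{(i)}_\alpha$ is then a finite normal trace. Passing to the GNS/standard representation of $\tau^{(i)}$ (where the tracial structure is manifest, the modular operator being trivial) one obtains a trace vector $\Omega^{(i)}_\alpha$ implementing this corner trace, i.e.\ $\langle\Omega^{(i)}_\alpha,\,ab\,\Omega^{(i)}_\alpha\rangle=\langle\Omega^{(i)}_\alpha,\,ba\,\Omega^{(i)}_\alpha\rangle$. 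This is directly analogous to the explicit traces assembled in the proof of Thm.~\ref{thm:type_change} as suprema of vector functionals; since the value $\tau(x^{(1)}\otimes x^{(2)})=\tau^{(1)}(x^{(1)})\tau^{(2)}(x^{(2)})$ is an assertion about normal functionals and the abstract algebra, it is representation-independent, so it is harmless to construct $\tau$ in this convenient representation.

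With this data I would define, on $(\MM^{(1)}\otimes\MM^{(2)})^+$,
\begin{equation}
    \tau(y)=\sup_{\alpha,\beta}\big\langle \Omega^{(1)}_\alpha\otimes\Omega^{(2)}_\beta,\; y\,\big(\Omega^{(1)}_\alpha\otimes\Omega^{(2)}_\beta\big)\big\rangle .
\end{equation}
Positivity, additivity and positive homogeneity are immediate from the definition. Monotonicity of the family in $(\alpha,\beta)$ (using traciality, $\tau^{(i)}(p^{(i)}_\alpha x p^{(i)}_\alpha)=\tau^{(i)}(x^{1/2}p^{(i)}_\alpha x^{1/2})$ increases with $p^{(i)}_\alpha$) exhibits $\tau$ as a supremum of an increasing net of normal weights, which is itself normal in the sense of Def.~\ref{def:weight_trace}. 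The product formula is then the easy part: for positive $x^{(i)}\in\MM^{(i)}$ every term factorises as $\langle\Omega^{(1)}_\alpha,x^{(1)}\Omega^{(1)}_\alpha\rangle\langle\Omega^{(2)}_\beta,x^{(2)}\Omega^{(2)}_\beta\rangle$, and since all terms are nonnegative the supremum factors by monotone (Tonelli-type) reasoning into $\tau^{(1)}(x^{(1)})\,\tau^{(2)}(x^{(2)})$, with the convention $0\cdot\infty=0$.

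The substantive step, which I expect to be the main obstacle, is the tracial identity $\tau(y^*y)=\tau(yy^*)$, since a generic sum of vector functionals is \emph{not} a trace and this must genuinely use that each $\tau^{(i)}$ is tracial. I would first verify it on the finite corners $(p^{(1)}_\alpha\otimes p^{(2)}_\beta)(\MM^{(1)}\otimes\MM^{(2)})(p^{(1)}_\alpha\otimes p^{(2)}_\beta)$, where $\Omega^{(1)}_\alpha\otimes\Omega^{(2)}_\beta$ is a genuine trace vector whose tracial symmetry is inherited factorwise from that of each $\Omega^{(i)}_\alpha$, and then pass to the limit over $\alpha,\beta$ using normality and $p^{(1)}_\alpha\otimes p^{(2)}_\beta\uparrow 1$. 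Faithfulness follows from faithfulness of the two factors: $\tau(y^*y)=0$ forces $y$ to annihilate every $\Omega^{(1)}_\alpha\otimes\Omega^{(2)}_\beta$, whence (using traciality and $p^{(i)}_\alpha\uparrow1$) $\tau(y^*y)=\sup_{\alpha,\beta}\tau^{\text{corner}}=0$ propagates to $y=0$. Semifiniteness is inherited since the finite-trace projections $p^{(1)}_\alpha\otimes p^{(2)}_\beta$ increase to $1$. Finally, uniqueness of the trace obeying the displayed product formula follows because the elementary tensors generate $\MM^{(1)}\otimes\MM^{(2)}$ and a normal semifinite trace is determined by its values on such a generating set together with the corner exhaustion.
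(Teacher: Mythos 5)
Your construction is essentially the paper's, transported to the standard representation: you exhaust by an increasing net of finite-trace projections $p^{(i)}_\alpha$, define $\tau$ as a supremum of corner product functionals, and get normality, semifiniteness and faithfulness along the same lines. Indeed your vector functionals $\langle \Omega^{(1)}_\alpha\otimes\Omega^{(2)}_\beta,\;\cdot\;(\Omega^{(1)}_\alpha\otimes\Omega^{(2)}_\beta)\rangle$ \emph{are} the paper's maps $\tau_{\alpha,\beta}=\tau^{(1)}_\alpha\otimes\tau^{(2)}_\beta$ viewed in the GNS picture; the move to standard form is legitimate because the von Neumann tensor product is stable under faithful normal representations (cf.\ \cite[Thm.~IV.5.2 and Thm.~IV.5.9]{Takesaki2001}, which the paper itself invokes in its faithfulness step), though you should say this explicitly rather than only gesture at representation-independence.

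The one step that would fail as literally written is your treatment of the tracial identity. Corner traciality controls only compressed elements: for $q=p^{(1)}_\alpha\otimes p^{(2)}_\beta$ it gives $\omega_q\bigl((qyq)^*(qyq)\bigr)=\omega_q\bigl((qyq)(qyq)^*\bigr)$, and you cannot ``pass to the limit over $\alpha,\beta$ using normality and $q\uparrow 1$'': the net $q\,y^*y\,q$ is not increasing, so the normality clause of Def.~\ref{def:weight_trace} does not apply to it, and $\omega_q(y^*y)\neq\omega_q(yy^*)$ for fixed $q$ and general $y$. What is actually needed is a supremum over \emph{two independent} corner indices together with a swap identity — this is precisely the paper's Eq.~\eqref{eq:pretrace_property}, $\tau_{\alpha,\beta}\bigl(y^*(p^{(1)}_{\alpha'}\otimes p^{(2)}_{\beta'})y\bigr)=\tau_{\alpha',\beta'}\bigl(y(p^{(1)}_{\alpha}\otimes p^{(2)}_{\beta})y^*\bigr)$, first verified on elementary tensors by factorwise traciality and then extended by linearity and normality. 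One then uses the genuinely monotone nets $y^*q'y\uparrow y^*y$ to write $\tau(y^*y)=\sup_{q}\sup_{q'}\Vert q'y\,\Omega_q\Vert^2=\sup_{q'}\sup_{q}\Vert q\,y^*\Omega_{q'}\Vert^2=\tau(yy^*)$. In your $L^2$-picture the swap identity becomes cheap — it is the anti-isometry of the involution, $\Vert q'yq\Vert_2=\Vert qy^*q'\Vert_2$ — so the repair is short, but it is this double-supremum device, not plain normality, that carries the argument. Two smaller glosses: your faithfulness step needs traciality to propagate $y\Omega_q=0$ to $yx\Omega_q=0$ (via $\tau(x^*y^*yx)\le\Vert x\Vert^2\tau(yy^*)=0$) before density of $\bigcup_q \MM^{(1)}\otimes\MM^{(2)}\,\Omega_q$ can be used; and your closing uniqueness claim is not automatic, since normal semifinite weights are not determined by their values on a generating set — it can be salvaged by noting that on each corner $\tau$ restricts to a normal functional, determined on the $\sigma$-weakly dense span of elementary tensors, with $\tau$ recovered from the corners by normality, but the paper itself neither asserts nor proves more than the displayed evaluation on positive elementary tensors.
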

\begin{proof}
    Let $\{p_\alpha^{(i)}\in \MM^{(i)}\}$ be monotonically increasing nets of finite trace projections such that
    \begin{equation}
        \sup_{\alpha}p_\alpha^{(i)}=1_{\MM^{(i)}}.
    \end{equation}
    
    Note that for any $\alpha$, the map $\tau^{(i)}_\alpha:\MM^{(i)}\to\CC$, $x\mapsto \tau^{(i)}(p^{(i)}_\alpha xp^{(i)}_\alpha)$ is a positive normal and linear map. Now define the map $\tau_{\alpha,\beta}:\MM^{(1)}\otimes\MM^{(2)}\to\CC$ by
   \begin{equation}
       \tau_{\alpha,\beta}(x)=\left(\tau_\alpha^{(1)}\otimes\tau_\beta^{(2)}\right)(x).
   \end{equation}
   By \cite[Prop.\ 11.2.7.]{kadison1997fundamentals}, this is also a normal positive linear map. Note that for any pairs $x^{(i)},y^{(i)}\in \MM^{(i)}$, one has for $x=x^{(1)}\otimes x^{(2)}, y=y^{(1)}\otimes y^{(2)}$
   \begin{equation}
       \label{eq:pretrace_property}
       \tau_{\alpha,\beta}(x(p^{(1)}_{\alpha'}\otimes p^{(2)}_{\beta'})y)=\tau_{\alpha',\beta'}(y(p^{(1)}_\alpha\otimes p^{(2)}_\beta)x).
   \end{equation}
   By linearity and normality of $\tau_{\alpha,\beta}$, this equality can be extended to arbitrary $x,y\in \MM^{(1)}\otimes\MM^{(2)}$.\\

   Now define $\tau: (\MM^{(1)}\otimes \MM^{(2)})^+\to [0,\infty]$ by
    \begin{equation}
        \tau(x)=\sup_{\alpha,\beta}\tau_{\alpha,\beta}(x).
    \end{equation}
    This is a normal map, in the sense that for each bounded increasing net $x_\gamma$, we have
    \begin{equation}
        \sup_{\gamma}\tau(x_\gamma)=\sup_{\alpha,\beta,\gamma}\tau_{\alpha,\beta}(x_\gamma)=\sup_{\alpha,\beta}\tau_{\alpha,\beta}(\sup_\gamma x_\gamma)=\tau(\sup_\gamma x_\gamma),
    \end{equation}
    where we have used normality of $\tau_{\alpha,\beta}$ and interchangeability of suprema.
    Moreover, using Eq.~\eqref{eq:pretrace_property}, one finds that for general $x\in \MM^{(1)}\otimes\MM^{(2)}$
    \begin{align}
        \tau(x^*x)=\sup_{\alpha,\beta}\tau_{\alpha,\beta}(x^*x)=&\sup_{\alpha,\beta}\sup_{\alpha',\beta'}\tau_{\alpha,\beta}(x^*(p_{\alpha'}^{(1)}\otimes p_{\beta'}^{(2)})x)\nonumber\\
        =&\sup_{\alpha,\beta,\alpha',\beta'}\tau_{\alpha',\beta'}(x(p_\alpha^{(1)}\otimes p_\beta^{(2)})x^*)=\tau(xx^*).
    \end{align}
    Therefore $\tau$ is a trace. We can now show that $\tau$ is semifinite. Let $x^*x\in(\MM^{(1)}\otimes \MM^{(2)})^+$, then for any $\alpha,\beta$ we have $x^*x\geq x^*(p_\alpha^{(1)}\otimes p_\beta^{(2)})x$, for which we have
    \begin{equation}
        \tau(x^*(p_\alpha^{(1)}\otimes p_\beta^{(2)})x)=\tau((p_\alpha^{(1)}\otimes p_\beta^{(2)})xx^*(p_\alpha^{(1)}\otimes p_\beta^{(2)}))=\tau_{\alpha,\beta}(xx^*)<\infty.
    \end{equation}

        Lastly, we show that $\tau$ is faithful. Suppose $\tau(x^*x)=0$, then also $\tau_{\alpha,\beta}(x^*x)=0$, and hence $\tau_{\alpha,\beta}(x^*(p_\alpha^{(1)}\otimes p_\beta^{(2)})x)=0$.
    Note that $\tau^{(i)}$ defines a faithful normal positive bounded linear functional on $\mathcal{N}^{(i)}_\alpha:=p_\alpha^{(i)}\MM^{(i)}p_\alpha^{(i)}$, as by \cite[Ch.\ V.2.\ Eq.\ (2)]{Takesaki2001}
    \begin{equation}
        \vert\tau^{(i)}(p_\alpha^{(i)}x^{(i)}p_\alpha^{(i)})\vert\leq \Vert p_\alpha^{(i)}x^{(i)}p_\alpha^{(i)}\Vert \tau^{(i)}(p_\alpha^{(i)}),
    \end{equation} which define faithful normal representations via the G.N.S. construction $(\pi_\alpha^{(i)},\HH_\alpha^{(i)},\Omega_\alpha^{(i)})$ with $\Omega_\alpha^{(i)}$ a cyclic and separating vector such that
    \begin{equation}
        \langle \Omega_\alpha^{(i)},\pi_\alpha^{(i)}(p_\alpha^{(i)}x^{(i)}p_\alpha^{(i)})\Omega_\alpha^{(i)}\rangle=\tau^{(i)}(x^{(i)}p_\alpha^{(i)}).
    \end{equation} By \cite[Thm.\ IV.5.2]{Takesaki2001} $(\pi_\alpha^{(1)}\otimes \pi_\beta^{(2)},\HH_\alpha^{(1)}\otimes\HH_\beta^{(2)})$ defines a faithful representation of 
    \begin{equation}
        \mathcal{N}_\alpha^{(1)}\otimes \mathcal{N}_\beta^{(2)}=(p_\alpha^{(1)}\otimes p_\beta^{(2)})(\MM^{(1)}\otimes \MM^{(2)})(p_\alpha^{(1)}\otimes p_\beta^{(2)}).
    \end{equation}
    As $\Omega_\alpha^{(i)}$ is separating for $\pi_\alpha^{(i)}(\mathcal{N}_\alpha^{(i)})=\pi_\alpha^{(i)}(\mathcal{N}_\alpha^{(i)})''$, one finds it is cyclic for $\pi_\alpha^{(i)}(\mathcal{N}_\alpha^{(i)})'$, i.e. $\pi_\alpha^{(i)}(\mathcal{N}_\alpha^{(i)})'\Omega_\alpha^{(i)}$ is dense in $\HH_\alpha^{(i)}$. It follows that 
    \begin{equation}
        (\pi_\alpha^{(1)}(\mathcal{N}_\alpha^{(i)})'\otimes \pi_\beta^{(2)}(\mathcal{N}_\beta^{(i)})')(\Omega_\alpha^{(1)}\otimes \Omega_\beta^{(2)})
    \end{equation} 
    is dense in $\HH_\alpha^{(1)}\otimes\HH_\beta^{(2)}$, hence $\Omega_\alpha^{(1)}\otimes \Omega_\beta^{(2)}$ is cyclic for 
    \begin{equation}
        \pi_\alpha^{(1)}(\mathcal{N}_\alpha^{(1)})'\otimes \pi_\beta^{(2)}
    (\mathcal{N}_\beta^{(2)})'=(\pi_\alpha^{(1)}(\mathcal{N}_\alpha^{(1)})\otimes \pi_\beta^{(2)}(\mathcal{N}_\beta^{(2)}))',
    \end{equation} see \cite[Thm.\ IV.5.9]{Takesaki2001}. This in turn implies that $\Omega_\alpha^{(1)}\otimes \Omega_\beta^{(2)}$ is separating for $(\pi_\alpha^{(1)}(\mathcal{N}_\alpha^{(1)})\otimes \pi_\beta^{(2)}(\mathcal{N}_\beta^{(2)}))$, and as
    \begin{equation}
        \tau_{\alpha,\beta}(x(p_\alpha^{(1)}\otimes p_\beta^{(2)}))=\langle \Omega_\alpha^{(1)}\otimes \Omega_\beta^{(2)},(\pi_\alpha^{(1)}\otimes \pi_\beta^{(2)})((p_\alpha^{(1)}\otimes p_\beta^{(2)})x(p_\alpha^{(1)}\otimes p_\beta^{(2)})) (\Omega_\alpha^{(1)}\otimes \Omega_\beta^{(2)})\rangle,
    \end{equation}
    we find that $\tau_{\alpha,\beta}$ is faithful on $(p_\alpha^{(1)}\otimes p_\beta^{(2)})(\MM^{(1)}\otimes \MM^{(2)})(p_\alpha^{(1)}\otimes p_\beta^{(2)})$.

    We now see that $\tau(x^*x)=0$ implies $(p_\alpha^{(1)}\otimes p_\beta^{(2)})x(p_\alpha^{(1)}\otimes p_\beta^{(2)})=0$, and as 
    \begin{equation}
        x=\operatorname{w-lim}_{\alpha,\beta}(p_\alpha^{(1)}\otimes p_\beta^{(2)})x(p_\alpha^{(1)}\otimes p_\beta^{(2)}),
    \end{equation}
    we find $x=0$. Hence, $\tau$ is faithful.
\end{proof}

\section{On thermal properties of quantum reference frames}
\label{appx:QRF_thermal_weight}
In this appendix we prove some results used in the discussion of Sec.\ \ref{sec:thermal_int}, these are all presented as lemmas in this appendix.
\begin{lemma}
\label{lem:pos_atl}
    For a principal QRF $(U_R,E,\HH_R)$ for the group $\RR$, the set of positive $t$-integrable operators $\mathcal{P}$ defined in Def.\ \ref{def:pos_atl} is a hereditary convex subcone of $B(\HH_R)^+$. For $f\in L^1(\RR)\cap L^\infty(\RR)^+$, i.e. a bounded positive $L^1$ function, we have that $E(f):=\int_\RR f(t)\diff E(t)\in \mathcal{P}$. Furthermore, the set $\mathcal{D}$ defined in \ref{def:pos_atl} is a weakly and strongly dense *-subalgebra of $B(\HH_R)$.
\end{lemma}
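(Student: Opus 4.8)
The plan is to organise the entire proof around the single linear space $\mathfrak{n}=\{a\in B(\HH_R):C_a<\infty\}$ and the auxiliary map $T_a:\HH_R\to L^2(\RR,\HH_R)$ defined by $(T_a\psi)(t)=aU_R(t)^*\psi$. Strong continuity of $U_R$ makes $t\mapsto aU_R(t)^*\psi$ a continuous, hence Bochner-measurable, $\HH_R$-valued function, and the definition of $I_{a,\psi}$ in Def.\ \ref{def:pos_atl} gives $\Vert T_a\psi\Vert_{L^2(\RR,\HH_R)}^2=I_{a,\psi}$, so that $C_a=\Vert T_a\Vert$ for $a\in\mathfrak{n}$. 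A direct matrix-element computation identifies $T_a^*T_a$ with the weak integral $I(a^*a)=\int_\RR U_R(t)a^*aU_R(t)^*\,\diff t$, whence the basic formula
\begin{equation}
    C_{\sqrt{x}}^2=\sup_{\Vert\psi\Vert=1}\int_\RR\langle\psi,U_R(t)\,x\,U_R(t)^*\psi\rangle\,\diff t ,
\end{equation}
valid for any $x=a^*a\in B(\HH_R)^+$; crucially its right-hand side depends only on $x$, not on the chosen square root. This formula is the engine for the first two claims.

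For the hereditary convex cone property I would read everything off the formula above. Homogeneity in $x$ gives stability under scaling by $\lambda\ge 0$, and subadditivity of the supremum-of-integrals gives stability under addition, in both cases passing to the positive square roots $\sqrt{\lambda x}$ and $\sqrt{x_1+x_2}$. For hereditarity, if $0\le y\le x$ with $x=a^*a\in\mathcal{P}$ then $U_R(t)yU_R(t)^*\le U_R(t)xU_R(t)^*$ pointwise in $t$, so monotonicity of the integral gives $C_{\sqrt{y}}^2\le C_a^2<\infty$ and hence $y=(\sqrt{y})^*\sqrt{y}\in\mathcal{P}$.

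For $E(f)\in\mathcal{P}$ I would exploit covariance. Writing $\mu_\psi=\langle\psi,E(\,\cdot\,)\psi\rangle$, a positive measure of total mass $\Vert\psi\Vert^2$, the covariance relation $U_R(s)E(X)U_R(s)^*=E(X+s)$ turns $\langle\psi,U_R(s)E(f)U_R(s)^*\psi\rangle$ into $\int_\RR f(v-s)\,\diff\mu_\psi(v)$. Since $f\ge 0$, Tonelli's theorem yields
\begin{equation}
    \int_\RR\langle\psi,U_R(s)E(f)U_R(s)^*\psi\rangle\,\diff s=\int_\RR\Big(\int_\RR f(v-s)\,\diff s\Big)\diff\mu_\psi(v)=\Vert f\Vert_{L^1(\RR)}\Vert\psi\Vert^2 ,
\end{equation}
so that $C_{\sqrt{E(f)}}^2=\Vert f\Vert_{L^1(\RR)}<\infty$, with boundedness of $E(f)$ coming from $f\in L^\infty(\RR)$.

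Finally, for the algebra and density statements I would first verify that $\mathfrak{n}$ is a left ideal of $B(\HH_R)$: the pointwise bound $\Vert baU_R(t)^*\psi\Vert\le\Vert b\Vert\,\Vert aU_R(t)^*\psi\Vert$ gives $C_{ba}\le\Vert b\Vert C_a$, while $T_{a+b}=T_a+T_b$ shows $\mathfrak{n}$ is a linear subspace. Polarisation then identifies $\mathcal{D}=\operatorname{span}\mathcal{P}$ with $\mathfrak{n}^*\mathfrak{n}$, whose $*$-closedness is immediate and whose multiplicative closure follows from $(b^*a)(d^*c)=b^*\big((ad^*)c\big)$ together with $(ad^*)c\in\mathfrak{n}$. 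For density I would use the effects $p_n=E([-n,n])$, which lie in $\mathcal{P}$ by the second claim (taking $f=\chi_{[-n,n]}\in L^1(\RR)\cap L^\infty(\RR)$) and increase strongly to $1_{\HH_R}$ by $\sigma$-additivity of $E$, so that $p_nbp_n=p_n^*(bp_n)\in\mathfrak{n}^*\mathfrak{n}=\mathcal{D}$ and $p_nbp_n\to b$ strongly (using $\Vert p_n\Vert\le 1$), giving strong and hence weak density. The step I expect to be most delicate is the structural one: recognising $\mathcal{D}$ as the definition ideal $\mathfrak{n}^*\mathfrak{n}$ of the operator-valued integral $I$ and checking the left-ideal property cleanly from the covariance data. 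Once that is in place, the cone, hereditarity and density arguments are routine specialisations of the weight-theoretic reasoning of \cite[Lem.~VII.1.2]{takesaki2002}.
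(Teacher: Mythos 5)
Your proof is correct, but it takes a genuinely different route from the paper's in two substantive places. First, for $E(f)\in\mathcal{P}$ the paper invokes the dilation of Thm.~\ref{thm:qref_embed}: it writes $E(f)=W^*(T_f\otimes 1_\KK)W$ with $W:\HH_R\to L^2(\RR)\otimes\KK$ and runs Tonelli upstairs in $L^2(\RR)\otimes\KK$, whereas you work intrinsically with the scalar measures $\mu_\psi=\langle\psi,E(\cdot)\psi\rangle$ and covariance, obtaining the same exact identity $I_{\sqrt{E(f)},\psi}=\Vert f\Vert_{L^1(\RR)}\Vert\psi\Vert^2$ with no dilation machinery, using only the normalisation $E(\RR)=1_{\HH_R}$. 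Second, for the algebra and density claims the paper cites \cite[Lem.~VII.1.2]{takesaki2002} (hereditary convex cone $\Rightarrow$ left ideal $\mathcal{N}$ and $*$-algebra $\mathcal{D}$), proves weak density via the one-sided products $aE(K)\to a$ and $E(K)a\to a$ (again computed through the dilation), and then upgrades to strong density with the Kaplansky-type argument of \cite[Thm.~II.2.6]{Takesaki2001}; you instead reprove the ideal/polarisation structure by hand via the operator $T_a$ and the estimate $C_{ba}\le\Vert b\Vert C_a$, and obtain strong density directly from the two-sided compressions $p_nbp_n\in\mathfrak{n}^*\mathfrak{n}$ with $p_n=E([-n,n])$, weak density then being automatic — bypassing both the dilation theorem and Kaplansky. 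Two steps you leave implicit are routine but worth recording: $\sigma$-additivity of $E$ gives $p_n\to 1$ a priori only weakly, and the upgrade to strong convergence needs $\Vert(1-p_n)\psi\Vert^2\le\langle\psi,(1-p_n)\psi\rangle$, valid since $0\le 1-p_n\le 1$; and writing $p_nbp_n=p_n^*(bp_n)$ requires $p_n\in\mathfrak{n}$ itself, which follows from hereditarity because $p_n^2\le p_n\in\mathcal{P}$. What the paper's route buys is coherence with the rest of its appendices, where the dilation picture and the vectors $W^*\Psi$ are reused repeatedly (e.g.\ in Lem.~\ref{lem:weight_type_change} and Lem.~\ref{lem:time_trace_completeness}); what your route buys is a self-contained, more elementary argument that isolates exactly which structural inputs — covariance, normalisation, strong continuity — each part of the lemma really needs.
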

\begin{proof}
    Let us first note that $C_a=C_{\sqrt{a^*a}}$, for $C_a$ as defined in \eqref{eq:Ca}. Considering $x,y\in\mathcal{P}$, $\lambda\in \RR^+$, and any $\psi\in \HH_R$, we have
    \begin{equation}
        I_{\sqrt{\lambda x},\psi}=\int_\RR \langle\psi, U_R(t)\lambda xU_R(t)^*\psi\rangle\diff t=\lambda\int_\RR \langle\psi, U_R(t)xU_R(t)^*\psi\rangle\diff t=\lambda I_{\sqrt{x},\psi},
    \end{equation}
    and similarly
    \begin{equation}
        I_{\sqrt{x+y},\psi}=\int_\RR \langle\psi, U_R(t)(x+y)U_R(t)^*\psi\rangle\diff t= I_{\sqrt{x},\psi}+I_{\sqrt{y},\psi}.
    \end{equation}
    It follows that $C_{\sqrt{\lambda x}}=\sqrt{\lambda}C_{\sqrt{x}}$ and $C_{\sqrt{x+y}}\leq \sqrt{C_{\sqrt{x}}^2+C_{\sqrt{y}}^2}$, hence $\lambda x,\,x+y\in\mathcal{P}$ and therefore $\mathcal{P}$ is a positive cone. 

    That $\mathcal{P}$ is hereditary, i.e. for every $x\in \mathcal{P}$ and $y\in B(\HH_R)^+$ with $y\leq x$ we have $y\in\mathcal{P}$, follows from the fact that for $0\leq y\leq x$
    \begin{equation}
        \Vert \sqrt{y}U_R(t)^*\psi\Vert\leq \Vert \sqrt{x}U_R(t)^*\psi\Vert,
    \end{equation}
    hence $C_{\sqrt{y}}\leq C_{\sqrt{x}}$.

    To show that $E(f)\in\mathcal{P}$ for $f\in L^1(\RR)\cap L^\infty(\RR)^+$, recall that by Theorem~\ref{thm:qref_embed} there exists a separable Hilbert space $\KK$ and an isometric embedding $W:\HH_R\to L^2(\RR)\otimes\KK$ such that for $t\in \RR$ we have $U_R(t)W^*=W^*(\lambda_\RR(t)\otimes 1_{\KK})$ and for each $f\in L^\infty(\RR)$, $E(f)=W^* T_{f}\otimes 1_{\KK} W$. Consider now $f\in L^1(\RR)^+\cap L^\infty(\RR)$ and $\Psi\in L^2(\RR)\otimes \KK$; we calculate
    \begin{align}
        \langle W^*\Psi,U_R(t)E(f)U_R(t)^*W^*\Psi\rangle =& \langle WW^*\Psi, (\lambda_{\RR}(t)T_{f}\lambda_\RR(t)^*\otimes 1_{\KK}) WW^*\Psi\rangle\nonumber\\
        =&\int_\RR f(s-t)\Vert (WW^*\Psi)(s)\Vert_\KK^2\diff s.
    \end{align}
    Hence, by Tonelli's theorem, we find
    \begin{align}
    \label{eq:Ef_integral}
        I_{\sqrt{E(f)},W^*\Psi}=&\int_\RR\int_\RR f(s-t)\Vert (WW^*\Psi)(s)\Vert_\KK^2\diff s\diff t\nonumber\\
        =&\left(\int_\RR f(t)\diff t\right)\Vert (WW^*\Psi)\Vert^2
        =\Vert f\Vert_{L^1(\RR)}\Vert W^*\Psi\Vert^2.
    \end{align}
    It follows that $C_{\sqrt{E(f)}}=\sqrt{\Vert f\Vert_{L^1(\RR)}}$, and thus $E(f)\in\mathcal{P}$ for $f\in L^1(\RR)^+\cap L^\infty(\RR)$.

    Since $\mathcal{P}$ is a hereditary convex subcone of $B(\HH_R)^+$, it follows from \cite[Lem.\ VII.1.2]{takesaki2002} that $\mathcal{P}$ defines a left ideal of $B(\HH_R)$
\begin{equation}
    \mathcal{N}=\{a\in B(\HH_R):a^*a\in\mathcal{P}\},
\end{equation}
and a hereditary *-subalgebra of $B(\HH_R)$
\begin{equation}
    \mathcal{D}'=\left\{\sum_{i=1}^N a_i^*b_i:N\in\mathbb{N},\,a_i,b_i\in\mathcal{N}\right\}\subset\mathcal{N},
\end{equation}
such that $\mathcal{D}'\cap B(\HH_R)^+=\mathcal{P}$ and each element $x\in\mathcal{D}'$ can be written as a linear combination of four elements in $\mathcal{P}$. Since $\mathcal{D}'$ is in particular closed under linear combinations, we see that $\mathcal{D}'$ is the linear span of $\mathcal{P}$, hence $\mathcal{D}'=\mathcal{D}$.

To show that $\mathcal{D}$ is weakly dense in $B(\HH_R)$, we first show that $\mathcal{N}$ is dense in $B(\HH_R)$. Note that for any $a\in B(\HH_R)$ and for $K\subset\RR$ compact we have $
    aE(K)\in \mathcal{N}$ due to the fact that $\mathcal{N}$ is a left ideal of $B(\HH_R)$. Moreover, for any $\Psi,\Psi'\in L^2(\RR)\otimes\KK$, we have
    \begin{equation}
        \langle W^*\Psi,aE(K)W^*\Psi'\rangle=\int_K\langle (Wa^*W^*\Psi)(t),(WW^*\Psi')(t)\rangle_{\KK}\diff t.
    \end{equation}
    Observe that in the limit $K\uparrow\RR$ this tends to 
    \begin{equation}
        \int_\RR\langle (Wa^*W^*\Psi)(t),(WW^*\Psi')(t)\rangle_{\KK}\diff t=\langle W^*\Psi,aW^*\Psi'\rangle.
    \end{equation}
    hence the net $aE(K)$ converges weakly to $a$ and therefore $\mathcal{N}$ is weakly dense in $B(\HH_R)$. Similarly, for $a\in\mathcal{N}$, one has that $E(K)a\in\mathcal{D}$ for $K\subset\RR$ compact and that this net weakly converges to $a$, hence $\mathcal{D}$ is weakly dense in $\mathcal{N}$ and therefore also in $B(\HH_R)$. Furthermore, since $\mathcal{D}$ is a *-algebra of $B(\HH_R)$, we conclude from \cite[Thm.\ II.2.6]{Takesaki2001} that $\mathcal{D}\cap S$ is strongly dense in $S$ for $S$ the unit ball of $B(\HH_R)$, from which it follows that $\mathcal{D}$ is strongly dense in $B(\HH_R)$.    
\end{proof}
\begin{lemma}
    \label{lem:weight_normal}
     The weight $\varphi_\beta$ for a QRF $(U_R,E,\HH_R)$ as in Def.\ \ref{def:beta_weight} is normal.
\end{lemma}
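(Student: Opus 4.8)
The plan is to verify the normality condition of Def.~\ref{def:weight_trace} directly: for every bounded increasing net $(x_\alpha)$ in $B(\HH_R)$ with supremum $x$, I must show $\sup_\alpha \varphi_\beta(x_\alpha) = \varphi_\beta(x)$. The strategy is to exhibit $\varphi_\beta$ as a pointwise supremum of manifestly normal building blocks and then use that a pointwise supremum of maps each satisfying the normality condition again satisfies it (suprema over independent index sets commute in $[0,\infty]$). Writing $\eta_{K,n}:=W^*(\mathcal F^* f_K^{(\beta)}\otimes\psi_n)$, Def.~\ref{def:beta_weight} reads
\[
  \varphi_\beta(x)=\sup_{K,N}\ \Phi_{K,N}(x),\qquad
  \Phi_{K,N}(x):=\sum_{n=1}^N \phi_{\eta_{K,n}}(x),
\]
where for a fixed vector $\eta\in\HH_R$ I set $\phi_\eta(x):=\int_\RR \langle U_R(t)^*\eta,\,x\,U_R(t)^*\eta\rangle\,\diff t$, using $\langle\eta,U_R(t)xU_R(t)^*\eta\rangle=\langle U_R(t)^*\eta,xU_R(t)^*\eta\rangle$. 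Since each $\phi_\eta$ is monotone and a finite sum of monotone maps satisfying the normality condition again satisfies it (for increasing nets $\sup_\alpha(a_\alpha+b_\alpha)=\sup_\alpha a_\alpha+\sup_\alpha b_\alpha$), it suffices to prove that each single-vector weight $\phi_\eta$ is normal.

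To handle $\phi_\eta$, I would exhaust $\RR$ by the compact intervals $[-m,m]$ and write $\phi_\eta=\sup_{m}\phi_\eta^{(m)}$ with $\phi_\eta^{(m)}(x):=\int_{-m}^m \langle U_R(t)^*\eta,xU_R(t)^*\eta\rangle\,\diff t$; the identity $\phi_\eta=\sup_m\phi_\eta^{(m)}$ holds by the (sequential) monotone convergence theorem applied to the non-negative integrand. Each $\phi_\eta^{(m)}$ is a bounded positive functional, and this reduction is precisely what replaces the non-compact domain of integration -- on which a monotone interchange against a net would be illegitimate -- by a compact one.

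The crux is then the normality of each $\phi_\eta^{(m)}$. Given a bounded increasing net $x_\alpha\uparrow x$, consider $h_\alpha(t):=\langle U_R(t)^*\eta,x_\alpha U_R(t)^*\eta\rangle$ and $h(t):=\langle U_R(t)^*\eta,xU_R(t)^*\eta\rangle$. By strong continuity of $U_R$, the map $t\mapsto U_R(t)^*\eta$ is norm-continuous, so every $h_\alpha$ and $h$ is continuous on $[-m,m]$; moreover $(h_\alpha)$ is an increasing net converging pointwise to $h$, since a bounded increasing net of operators converges in the strong (hence weak) operator topology to its supremum. Dini's theorem -- whose standard proof uses only compactness and the directedness of the index set, and so applies verbatim to nets -- then yields uniform convergence $h_\alpha\to h$ on the compact interval $[-m,m]$, whence $\sup_\alpha\int_{-m}^m h_\alpha\,\diff t=\int_{-m}^m h\,\diff t$; that is, $\phi_\eta^{(m)}$ is normal.

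Assembling the pieces, $\phi_\eta=\sup_m\phi_\eta^{(m)}$ is normal, hence so is each finite sum $\Phi_{K,N}$, and finally $\varphi_\beta=\sup_{K,N}\Phi_{K,N}$ is normal. The main obstacle, and the only genuinely non-formal step, is exactly the interchange of the $t$-integral with the supremum over the possibly uncountable net $(x_\alpha)$: the monotone convergence theorem is unavailable for nets, and it is the continuity of $t\mapsto\langle U_R(t)^*\eta,x U_R(t)^*\eta\rangle$ together with Dini's theorem on compact intervals that rescues the argument.
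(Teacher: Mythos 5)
Your proof is correct, but it resolves the crux of the lemma by a genuinely different device than the paper. Both arguments share the same skeleton: decompose $\varphi_\beta$ into the single-vector integrals $\phi_\eta(x)=\int_\RR\langle U_R(t)^*\eta,\,x\,U_R(t)^*\eta\rangle\,\diff t$ with $\eta=W^*(\mathcal{F}^*f_K^{(\beta)}\otimes\psi_n)$, reduce the integral over $\RR$ to compact time intervals, and exploit the continuity of $t\mapsto\langle U_R(t)^*\eta,x_\alpha U_R(t)^*\eta\rangle$ supplied by strong continuity of $U_R$. They diverge exactly at the interchange of $\sup_\alpha$ with the integral over the compact interval. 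The paper fixes $\varepsilon>0$, extracts from the net an increasing \emph{sequence} $(x_M)$ that is $\varepsilon$-close to the supremum at a countable dense set of rationals in the compact interval $K'$, extends the resulting lower bound from the rationals to all of $K'$ ``due to continuity'', and concludes by dominated convergence. You instead invoke Dini's theorem for nets: the increasing net $h_\alpha(t)=\langle U_R(t)^*\eta,x_\alpha U_R(t)^*\eta\rangle$ of continuous functions converges pointwise to the continuous limit $h$ (this is just the defining property of the supremum of a bounded increasing net in $B(\HH_R)$), hence uniformly on $[-m,m]$, and uniform convergence passes through the integral trivially. Your observation that Dini's proof uses only compactness and directedness, so that it applies verbatim to nets, is accurate, and this route buys something real: it avoids both the sequence extraction and the delicate extension step in the paper's argument --- the pointwise limit of the extracted sequence is only lower semicontinuous, so propagating the $\varepsilon$-bound from the rational grid to all of $K'$ requires more care than the paper's phrasing suggests, a subtlety your uniform-convergence argument never encounters. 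The surrounding bookkeeping in your write-up (finite sums of normal monotone maps over a common directed index set are normal; pointwise suprema of normal maps are normal because independent suprema commute in $[0,\infty]$; exhaustion of $\RR$ by $[-m,m]$ via monotone convergence for the fixed nonnegative continuous integrand) is all sound, and correctly addresses only normality, which is all the lemma asserts.
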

\begin{proof}
    Let $\{x_\alpha\}\subset B(\HH_R)$ be a monotone increasing bounded net. We have
    \begin{align}
        \sup_\alpha\varphi_\beta(x_\alpha)=&\sup_{K,N}\sum_{n=1}^N\sup_\alpha\int_\RR \langle W^*\Psi_{K,n}^{(\beta)},U_R(t)x_\alpha U_R(t)^*W^*\Psi_{K,n}^{(\beta)}\rangle\diff t\nonumber\\
        =&\sup_{K,K',N}\sum_{n=1}^N\sup_\alpha\int_{K'} \langle W^*\Psi_{K,n}^{(\beta)},U_R(t)x_\alpha U_R(t)^*W^*\Psi_{K,n}^{(\beta)}\rangle\diff t,
    \end{align}
    for $K,K'\subset\RR$ compact. Here we recall that $W:\HH_R\to L^2(\RR)\otimes \KK$ an isometry and $\Psi^{(\beta)}_{K,n}=\mathcal{F}^*f_{K}^{(\beta)}\otimes \psi_n$ for $f_{K}^{(\beta)}(\xi)=\chi_K(\xi)\exp(-\frac{\beta}{2}\xi)$ such that $f_K\in L^2(\RR)$ and $\{\psi_n\}_{n=1}^\infty$ an orthonormal basis of $\KK$. We now note that $K'\ni t\mapsto \langle W^*\Psi_{K,n}^{(\beta)},U_R(t)x_\alpha U_R(t)^*W^*\Psi_{K,n}^{(\beta)}\rangle$ is continuous due to strong continuity of $U_R$. For fixed $K'$, let $\mathbb{N}\ni m\mapsto q_m$ be a bijection of $\mathbb{N}$ and $\mathbb{Q}\cap K'$ and let $\varepsilon>0$. Since for fixed $t\in K'$ one has 
    \begin{equation}
        \sup_\alpha\langle W^*\Psi_{K,n}^{(\beta)},U_R(t)x_\alpha U_R(t)^*W^*\Psi_{K,n}^{(\beta)}\rangle=\langle W^*\Psi_{K,n}^{(\beta)},U_R(t)\sup_\alpha x_\alpha U_R(t)^*W^*\Psi_{K,n}^{(\beta)}\rangle<\infty,
    \end{equation}
    one can for each $M\in\mathbb{N}$ choose an $x_M\in\{x_\alpha\}$ such that for each $m<M$ one has
    \begin{equation}
         0\leq \langle W^*\Psi_{K,n}^{(\beta)},U_R(q_m)(\sup_\alpha x_\alpha-x_M) U_R(q_m)^*W^*\Psi_{K,n}^{(\beta)}\rangle<\varepsilon,
    \end{equation}
    and such that $\{x_M\}_{M\in\mathbb{N}}$ is a (bounded) increasing sequence. For this sequence, we note that $\lim_{M\to\infty} x_M\leq \sup_\alpha x_\alpha$ and that for each $q\in\mathbb{Q}\cap K'$
    \begin{equation}
        \lim_{M\to\infty}\langle\Psi_{K,n}^{(\beta)},U_R(q)x_M U_R(q)^*W^*\Psi_{K,n}^{(\beta)}\rangle>\langle\Psi_{K,n}^{(\beta)},U_R(q)\sup_\alpha x_\alpha U_R(q)^*W^*\Psi_{K,n}^{(\beta)}\rangle-\varepsilon,
    \end{equation}
    Due to continuity, one can extend this lower bound to all of $K'$. Now we see by dominated convergence that 
    \begin{align}
        &\int_{K'} \langle W^*\Psi_{K,n}^{(\beta)},U_R(t)\sup_\alpha x_\alpha U_R(t)^*W^*\Psi_{K,n}^{(\beta)}\rangle\diff t\nonumber\\
        &\geq \sup_\alpha\int_{K'} \langle W^*\Psi_{K,n}^{(\beta)},U_R(t) x_\alpha U_R(t)^*W^*\Psi_{K,n}^{(\beta)}\rangle\diff t\nonumber\\
        &\geq\lim_{M\to\infty}\int_{K'} \langle W^*\Psi_{K,n}^{(\beta)},U_R(t) x_M U_R(t)^*W^*\Psi_{K,n}^{(\beta)}\rangle\diff t\nonumber\\
        &>\int_{K'} \left[\langle W^*\Psi_{K,n}^{(\beta)},U_R(t) \sup_\alpha x_\alpha U_R(t)^*W^*\Psi_{K,n}^{(\beta)}\rangle-\varepsilon\right]\diff t.
    \end{align}
    Since $K'$ is compact and $\varepsilon>0$ was chosen arbitrarily, it follows that
    \begin{align}
        &\sup_\alpha\int_{K'} \langle W^*\Psi_{K,n}^{(\beta)},U_R(t) x_\alpha U_R(t)^*W^*\Psi_{K,n}^{(\beta)}\rangle\diff t\nonumber\\
        &=\int_{K'} \langle W^*\Psi_{K,n}^{(\beta)},U_R(t) \sup_\alpha x_\alpha U_R(t)^*W^*\Psi_{K,n}^{(\beta)}\rangle\diff t.
    \end{align}
    From this one concludes that $\varphi_\beta$ is normal.    
\end{proof}
\begin{lemma}
    \label{lem:weight_type_change}
    For $\varphi_\beta$ the normal weight for a QRF $(U_R,E,\HH_R)$ as in Def.\ \ref{def:beta_weight} and $\mathcal{P}$ the positive $t$-integrable operators on $\HH_R$, we have that Eq.\ \eqref{eq:type_change_condition} holds if and only if 
    \begin{equation}
        \varphi_\beta(\mathcal{P})=\RR^+,
    \end{equation}
    i.e.\ $\varphi_\beta(x)<\infty$ for $x\in\mathcal{P}$.
\end{lemma}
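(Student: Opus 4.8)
The plan is to reduce $\varphi_\beta$ evaluated on a positive $t$-integrable operator to a single integral against the spectral multiplicity function $m_{U_R}$, and to extract both implications from this reduction. Throughout I work with the isometry $W:\HH_R\to L^2(\RR)\otimes\KK$ of Def.~\ref{def:beta_weight}, write $\hat W=(\mathcal{F}\otimes 1_\KK)W$ so that $\hat W\hat W^*=\hat p$ is the Fourier-transformed range projection of Prop.~\ref{prop:spect_decomp} (taken here with the compact group trivial), and set $g_{K,n}=f_K^{(\beta)}\otimes\psi_n\in L^2(\RR)\otimes\KK$ with $f_K^{(\beta)}(\xi)=\chi_K(\xi)\exp(-\beta\xi/2)$. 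Then $W^*\Psi_{K,n}^{(\beta)}=\hat W^* g_{K,n}$, and since for $x\in\mathcal{P}$ the map $t\mapsto U_R(t)xU_R(t)^*$ integrates weakly to the bounded, $\Ad U_R$-invariant operator $I(x)$ of Def.~\ref{def:pos_atl}, the definition of $\varphi_\beta$ rewrites (for $x\in\mathcal{P}$) as
\begin{equation*}
\varphi_\beta(x)=\sup_{K,N}\sum_{n=1}^N\langle \hat W^* g_{K,n},\,I(x)\,\hat W^* g_{K,n}\rangle .
\end{equation*}

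First I would isolate the computation common to both directions. For any $\Ad U_R$-invariant $Y\in B(\HH_R)^+$ one has $Y\le\|Y\|\,1_{\HH_R}$, so using $\hat p=\int_\RR^\oplus\hat p(\xi)\,\diff\xi$ and $|f_K^{(\beta)}(\xi)|^2=\chi_K(\xi)e^{-\beta\xi}$,
\begin{equation*}
\langle \hat W^* g_{K,n},\,Y\,\hat W^* g_{K,n}\rangle\le\|Y\|\,\langle g_{K,n},\hat p\,g_{K,n}\rangle=\|Y\|\int_\RR\chi_K(\xi)e^{-\beta\xi}\langle\psi_n,\hat p(\xi)\psi_n\rangle\,\diff\xi .
\end{equation*}
Summing over $n$ and taking the supremum over $K$ and $N$, monotone convergence together with Eq.~\eqref{eq:multiplicity function in a basis} (giving $\sum_{n}\langle\psi_n,\hat p(\xi)\psi_n\rangle=m_{U_R}(\xi)$) yields $\sup_{K,N}\sum_{n=1}^N\langle g_{K,n},\hat p\,g_{K,n}\rangle=\int_\RR e^{-\beta\xi}m_{U_R}(\xi)\,\diff\xi$. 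For the forward implication I take $x=a^*a\in\mathcal{P}$ and $Y=I(a^*a)$; since $\langle\psi,I(a^*a)\psi\rangle=I_{a,\psi}$ forces $\|I(a^*a)\|=C_a^2$, the estimate above gives $\varphi_\beta(a^*a)\le C_a^2\int_\RR e^{-\beta\xi}m_{U_R}(\xi)\,\diff\xi$, so Eq.~\eqref{eq:type_change_condition} immediately implies that $\varphi_\beta$ is finite on all of $\mathcal{P}$.

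For the converse I would evaluate $\varphi_\beta$ exactly on the operators $E(f)$ with $f\in L^1(\RR)\cap L^\infty(\RR)^+$, which lie in $\mathcal{P}$ by Lem.~\ref{lem:pos_atl}. Using the covariance relation $U_R(t)E(f)U_R(t)^*=E(f(\,\cdot-t))$ and Tonelli's theorem applied to the finite positive measure $X\mapsto\langle\psi,E(X)\psi\rangle$, I obtain $\langle\psi,I(E(f))\psi\rangle=\|f\|_{L^1(\RR)}\|\psi\|^2$, that is, $I(E(f))=\|f\|_{L^1(\RR)}\,1_{\HH_R}$. Substituting this into the rewritten weight turns the inequality of the previous paragraph into the equality
\begin{equation*}
\varphi_\beta(E(f))=\|f\|_{L^1(\RR)}\int_\RR e^{-\beta\xi}m_{U_R}(\xi)\,\diff\xi .
\end{equation*}
Choosing any such $f$ with $\|f\|_{L^1(\RR)}>0$, finiteness of $\varphi_\beta$ on $\mathcal{P}$ then forces Eq.~\eqref{eq:type_change_condition}, which closes the equivalence.

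I do not expect a deep obstacle: the content is the reduction to $m_{U_R}$ and the interchanges of the $t$-integral with the weak limit defining $I(x)$, and of the sum, supremum and $\xi$-integral. All of these are justified by positivity (monotone convergence and Tonelli), so the only genuine care needed is in organising these exchanges and in identifying $\sum_n\langle\psi_n,\hat p(\xi)\psi_n\rangle$ with the multiplicity $m_{U_R}(\xi)$ via Eq.~\eqref{eq:multiplicity function in a basis}; the covariance computation of $I(E(f))$ is the one place where the precise structure of the frame, rather than a mere bound, is used.
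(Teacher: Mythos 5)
Your proof is correct and follows essentially the same route as the paper's: the forward implication via the bound $\varphi_\beta(a^*a)\le C_a^2\int_\RR e^{-\beta\xi}m_{U_R}(\xi)\,\diff\xi$ (your identification $\Vert I(a^*a)\Vert=C_a^2$ is just a repackaging of the paper's pointwise estimate $I_{a,\psi}\le C_a^2\Vert\psi\Vert^2$), and the converse via the exact evaluation $\varphi_\beta(E(f))=\Vert f\Vert_{L^1(\RR)}\int_\RR e^{-\beta\xi}m_{U_R}(\xi)\,\diff\xi$ on the operators $E(f)$, $f\in L^1(\RR)\cap L^\infty(\RR)^+$, using Eq.~\eqref{eq:multiplicity function in a basis} and monotone convergence. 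The only cosmetic difference is that you compute $I(E(f))=\Vert f\Vert_{L^1(\RR)}1_{\HH_R}$ intrinsically from covariance of $E$ and Tonelli applied to the scalar measure $X\mapsto\langle\psi,E(X)\psi\rangle$, whereas the paper obtains the same identity (Eq.~\eqref{eq:Ef_integral} in Lem.~\ref{lem:pos_atl}) through the dilation $W$; both are sound.
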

\begin{proof}
    Let $x=a^*a\in\mathcal{P}$ with $C_a<\infty$. Note that by definition, for any vector $\psi\in\HH_R$ we have
    \begin{equation}
        I_{a,\psi}=\int_\RR\Vert aU_R(t)^*\psi\Vert^2\diff t\leq C_a^2\Vert\psi\Vert.
    \end{equation}
    Hence we see that
    \begin{align}
        \varphi_\beta(x)=&\sup_{K,N}\sum_{n=1}^N I_{a,W^*\Psi_{K,n}^{(\beta)}}\nonumber\\
        \leq&C_a^2\sup_{K,N}\sum_{n=1}^N \Vert W^*\Psi_{K,n}^{(\beta)}\Vert^2\nonumber\\
        =&C_a^2\sup_{K,N}\sum_{n=1}^N \int_K \exp(-\beta\xi)\langle\psi_n, \hat{p}(\xi)\psi_n\rangle\diff\xi\nonumber\\
        =&C_a^2\int_\RR \exp(-\beta\xi)m_{U_R}(\xi)\diff\xi,
    \end{align}
    where we have used \eqref{eq:multiplicity function in a basis} and recall that $\hat{p}(\xi)$ is the measurable family of projections defined by $\hat{p}=(\mathcal{F}\otimes 1_{\KK})WW^*(\mathcal{F}^*\otimes 1_{\KK})$ as in the proof of Prop.\ \ref{prop:spect_decomp}, see Appendix \ref{appx:spect_decomp_proof}. It follows that $\varphi_\beta(x)$ is finite provided that Eq.\ \eqref{eq:type_change_condition} holds. 

    To invert the implication, consider now $f\in L^1(\RR)^+\cap L^\infty(\RR)$, for which $E(f)\in\mathcal{P}$ by Lem.\ \ref{lem:pos_atl}. We verify that
    \begin{align}
        \varphi_\beta(E(f))=&\sup_{K,N}\sum_{n=1}^N I_{\sqrt{E(f)},W^*\Psi_{K,n}^{(\beta)}}\nonumber\\
        =&\Vert f\Vert_{L^1(\RR)}\sup_{K,N}\sum_{n=1}^N \Vert W^*\Psi_{K,n}^{(\beta)}\Vert^2\nonumber\\
        =&\Vert f\Vert_{L^1(\RR)}\int_\RR \exp(-\beta\xi)m_{U_R}(\xi)\diff\xi,
    \end{align}
    where we have used Eq.\ \eqref{eq:Ef_integral}. Hence Eq.\ \eqref{eq:type_change_condition} holds if $\varphi_\beta(E(f))<\infty$ for some $f\in L^1(\RR)\cap L^\infty(\RR)^+$ with $\Vert f\Vert_{L^1(\RR)}>0$.
\end{proof}

\begin{lemma}
    \label{lem:time_trace_completeness}
    Let $(U_R,E,\HH_R)$ be a QRF as in Def.\ \ref{def:pos_atl} and $\mathcal{N}=\{a\in B(\HH_R):a^*a\in\mathcal{P}\}$ the ideal given in the proof of Lem.\ \ref{lem:pos_atl} with $\mathcal{D}\subset\mathcal{N}$ the $t$-integrable operators as in Def.\ \ref{def:pos_atl}. Given $W:\HH_R\to L^2(\RR)\otimes\KK$ as in Thm.\ \ref{thm:qref_embed}, let $\{\psi_n\}_{n=1}^\infty$ be an orthonormal basis for $\KK$. Define for $a\in\mathcal{N}$, $\psi\in\HH_R$, $n\in\mathbb{N}$ and $K\subset\RR$ compact, the map $\RR^2\ni(t,s)\mapsto g_{a,\psi,n,K}(t,s)$ by
        \begin{equation}
            g_{a,\psi,n,K}(t,s)=\langle W^*(\mathcal{F}^*\chi_K\otimes \psi_n),U_R(s)a U_R(t)^*\psi\rangle.
        \end{equation}
        Then
        \begin{equation}
            \int_\RR\langle \psi, U_R(t)a^*a U_R(t)^*\psi\rangle\diff t =\frac{1}{2\pi}\sup_{K,N}\sum_{n=1}^N\int_{\RR^2}\vert g_{a,\psi,n,K}(t,s)\vert^2\diff t\diff s.
        \end{equation}
        Additionally, for any $b\in \mathcal{N}$, $\psi'\in \HH_R$, we have
        \begin{equation}
            \label{eq:time_trace_completeness}
            \int_\RR\langle \psi, U_R(t)a^*b U_R(t)^*\psi'\rangle\diff t =\frac{1}{2\pi}\lim_{K\uparrow\RR}\sum_{n=1}^\infty\int_{\RR^2}\overline{g_{a,\psi,n,K}(t,s)}g_{b,\psi',n,K}(t,s)\diff t\diff s.
        \end{equation}
\end{lemma}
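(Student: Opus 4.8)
The plan is to reduce both identities to Plancherel's theorem applied in the variable $s$, after transporting everything through the isometry $W\colon\HH_R\to L^2(\RR)\otimes\KK$ of Thm.~\ref{thm:qref_embed} and the Fourier transform $\mathcal{F}$. Throughout I write $\phi_t=aU_R(t)^*\psi$, so that the integrand on the left of the first identity is $\langle\psi,U_R(t)a^*aU_R(t)^*\psi\rangle=\|\phi_t\|^2$, and $\int_\RR\|\phi_t\|^2\diff t=I_{a,\psi}<\infty$ precisely because $a\in\mathcal{N}$. I set $\hat W=(\mathcal{F}\otimes 1_\KK)W$, again an isometry, and denote by $(\hat W\varphi)_n\in L^2(\RR)$ the $n$-th component of $\hat W\varphi$ relative to the basis $\{\psi_n\}$.

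The first step is to rewrite $g$. Moving $W^*$ across the inner product, using the intertwining relation $WU_R(s)=(\lambda_\RR(s)\otimes 1_\KK)W$ (and its adjoint) together with $\mathcal{F}\lambda_\RR(s)\mathcal{F}^*=T_{\exp(is\cdot)}$ as in the proof of Prop.~\ref{prop:spect_decomp}, one finds
\begin{equation}
    g_{a,\psi,n,K}(t,s)=\int_K e^{is\xi}(\hat W\phi_t)_n(\xi)\,\diff\xi=\sqrt{2\pi}\,\mathcal{F}\!\left[\chi_K(\hat W\phi_t)_n\right](s).
\end{equation}
Since $\chi_K(\hat W\phi_t)_n\in L^1(\RR)\cap L^2(\RR)$, Plancherel in $s$ gives the pointwise identity
\begin{equation}
    \frac{1}{2\pi}\int_\RR |g_{a,\psi,n,K}(t,s)|^2\,\diff s=\int_K |(\hat W\phi_t)_n(\xi)|^2\,\diff\xi.
\end{equation}
For the first (diagonal) identity all summands are non-negative, so Tonelli permits integrating this over $t$ and summing over $n$ freely, while monotone convergence lets me pass the supremum over $N$ and over compact $K$ (realised along $K_j=[-j,j]\uparrow\RR$) through the $t$-integral. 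This produces
\begin{equation}
    \frac{1}{2\pi}\sup_{K,N}\sum_{n=1}^N\int_{\RR^2}|g_{a,\psi,n,K}(t,s)|^2\,\diff t\,\diff s=\int_\RR\|\hat W\phi_t\|^2\,\diff t=\int_\RR\|\phi_t\|^2\,\diff t=I_{a,\psi},
\end{equation}
which matches the left-hand side. Norm-continuity of $t\mapsto\phi_t$, from strong continuity of $U_R$, ensures joint measurability and legitimises Tonelli.

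The polarised identity follows the same route using the polarisation of Plancherel. Both sides reduce to the inner product $\langle F_{a,\psi},F_{b,\psi'}\rangle_{L^2(\RR^2;\KK)}$, where $F_{a,\psi}(t,\xi)=(\hat W\phi_t)(\xi)$ and $\phi'_t=bU_R(t)^*\psi'$: the diagonal computation shows $F_{a,\psi}\in L^2(\RR^2;\KK)$ with $\|F_{a,\psi}\|^2=I_{a,\psi}$, while the left-hand side equals this inner product directly, since $\int_\RR\langle\psi,U_R(t)a^*bU_R(t)^*\psi'\rangle\diff t=\int_\RR\langle\phi_t,\phi'_t\rangle\diff t$.

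I expect the main obstacle to be purely analytic and confined to the polarised case: the summands are no longer non-negative, so I must justify interchanging $\lim_{K\uparrow\RR}$, $\sum_n$ and the double integral. This is handled by a Cauchy--Schwarz estimate in $n$, bounding the tail of the $n$-sum uniformly in $K$ by $I_{a,\psi}^{1/2}I_{b,\psi'}^{1/2}$, together with dominated convergence in the form $\chi_{\RR\times K}F_{a,\psi}\to F_{a,\psi}$ in $L^2(\RR^2;\KK)$ as $K\uparrow\RR$. Combining these estimates with the two displays above yields the stated limit.
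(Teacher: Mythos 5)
Your proof is correct and takes essentially the same route as the paper's: you rewrite $g_{a,\psi,n,K}(t,\cdot)$ as a Fourier transform of the truncated component $\chi_K(\hat W\phi_t)_n$, apply Plancherel in $s$, sum over the orthonormal basis of $\KK$, and pass the suprema over $K$ and $N$ through the $t$-integral by Tonelli and monotone convergence, exactly as the paper does (it phrases the intermediate step via the operators $p_{K,n}=W^*(\mathcal{F}^*T_{\chi_K}\mathcal{F}\otimes\vert\psi_n\rangle\langle\psi_n\vert)W$, which is the same computation in different notation). Your more detailed treatment of the polarised identity---reducing both sides to an inner product in $L^2(\RR^2;\KK)$ and justifying the interchange of $\lim_{K\uparrow\RR}$, $\sum_n$ and the double integral by a Cauchy--Schwarz tail bound uniform in $K$ together with $\chi_{\RR\times K}F_{a,\psi}\to F_{a,\psi}$ in $L^2$---simply fills in what the paper compresses into the single line that the result \emph{follows from the polarisation identity}.
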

\begin{proof}
    Observe that for $a\in\mathcal{N}$
    \begin{align}
        g_{a,\psi,n,K}(t,s)=&\langle (\mathcal{F}^*\chi_K\otimes \psi_n),(\lambda_\RR(s)\otimes 1_\KK)Wa U_R(t)^*\psi\rangle\nonumber\\
        =&\int_K\langle \psi_n,((\mathcal{F}\lambda_\RR(s)\otimes 1_\KK)Wa U_R(t)^*\psi)(\xi)\rangle_\KK\diff\xi\nonumber\\
        =&\int_K\exp(i\xi s)\langle \psi_n,((\mathcal{F}\otimes 1_\KK)Wa U_R(t)^*\psi)(\xi)\rangle_\KK\diff\xi
        =\sqrt{2\pi}(\mathcal{F}^*h_t)(-s),
    \end{align}
    where 
    \begin{equation}
        h_t(\xi)=\chi_K(\xi)\langle \psi_n,((\mathcal{F}\otimes 1_\KK)Wa U_R(t)^*\psi)(\xi)\rangle_\KK.
    \end{equation}
    Note that $h_t$ is a compactly supported $L^2$ function, hence in particular $h_t\in L^2(\RR)\cap L^1(\RR)$.
    Now consider 
    \begin{align}
        \int_{\RR^2}\vert g_{a,\psi,n,K}(t,s)\vert^2\diff s\diff t=&2\pi\int_{\RR^2}\vert(\mathcal{F}^*h_t)(-s)\vert^2\diff s\diff t\nonumber\\
        =&2\pi\int_{\RR^2}\vert(h_t)(\xi)\vert^2\diff \xi\diff t\nonumber\\
        =&2\pi\int_\RR\int_K \vert\langle \psi_n,((\mathcal{F}\otimes 1_\KK)Wa U_R(t)^*\psi)(\xi)\rangle_\KK\vert^2\diff\xi\diff t\nonumber\\
        =&2\pi\int_\RR\langle \psi,U_R(t)a^*p_{K,n}aU_R(t)^*\psi\rangle\diff t,
    \end{align}
    where we have used the Plancherel identity and have denoted 
    \begin{equation}
        p_{K,n}=W^*\left(\mathcal{F}^*T_{\chi_K}\mathcal{F}\otimes \vert\psi_n\rangle\langle\psi_n\vert\right)W.
    \end{equation} From the fact that $\{\psi_n\}$ forms a basis of $\KK$, we can conclude that
    \begin{equation}
        \sup_{N}\sum_{n=1}^N\int_{\RR^2}\vert g_{a,\psi,n,K}(t,s)\vert^2\diff s\diff t=2\pi\int_\RR\Vert(T_{\chi_K}\mathcal{F}\otimes 1_\KK)Wa U_R(t)^*\psi\Vert^2\diff t,
    \end{equation}
    hence
    \begin{align}
         \sup_{K,N}\sum_{n=1}^N\int_{\RR^2}\vert g_{a,\psi,n,K}(t,s)\vert^2\diff s\diff t=&2\pi\int_\RR\Vert(\mathcal{F}\otimes 1_\KK)Wa U_R(t)^*\psi\Vert^2\diff t\nonumber\\=&2\pi\int_\RR\Vert a U_R(t)^*\psi\Vert^2\diff t.
    \end{align}
    Eq.~\eqref{eq:time_trace_completeness} now follows from the polarisation identity.
\end{proof}

\begin{lemma}
    \label{lem:F_int_lim}
    Let $(U_R,E,\HH_R)$ be a QRF as in Def.\ \ref{def:pos_atl} and $\mathcal{D}$ the *-algebra of $t$-integrable operators. Given $W:\HH_R\to L^2(\RR)\otimes\KK$ as in Thm.\ \ref{thm:qref_embed}, let $\{\psi_n\}_{n=1}^\infty$ be an orthonormal basis for $\KK$ and $\Psi_{K,n}^{(z)}=(\mathcal{F}^*f_K^{(z)}\otimes \psi_n)\in L^2(\RR)\otimes\KK$ with $f_K^{(z)}(\xi)=\chi_K(\xi)\exp(-z\xi/2)$. Assume furthermore that Eq.\ \eqref{eq:type_change_condition} is satisfied for some $\beta\in (0,\infty)$. Then for $x,y\in\mathcal{D}$ the following interchange of limits is valid:
\begin{align}
    \label{eq:lim_interchange}
    \lim_{K\uparrow\RR}\sum_{n=1}^\infty \lim_{K'\uparrow\RR}\sum_{n'=1}^\infty\int_{\RR^2}& \langle W^*\Psi_{K,n}^{(2is-it)},yW^*\Psi_{K',n'}^{(\beta+2is'-it)}\rangle\nonumber\\
    &\times\langle W^*\Psi_{K',n'}^{(\beta+2is'+it)},xW^*\Psi_{K,n}^{(it+2is)}\rangle\diff s\diff s'=\nonumber\\
    \lim_{K'\uparrow\RR}\sum_{n'=1}^\infty\lim_{K\uparrow\RR}\sum_{n=1}^\infty\int_{\RR^2}& \langle W^*\Psi_{K,n}^{(2is-it)},yW^*\Psi_{K',n'}^{(\beta+2is'-it)}\rangle\nonumber\\
    &\times\langle W^*\Psi_{K',n'}^{(\beta+2is'+it)},xW^*\Psi_{K,n}^{(it+2is)}\rangle\diff s\diff s'.
\end{align}
\end{lemma}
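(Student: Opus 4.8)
The plan is to show that the quadruple sum--integral over $(n,n',s,s')$ is absolutely convergent \emph{uniformly} in the cut-offs $K,K'$; once this is in hand, Fubini's theorem legitimises the interchange of $\sum_n$ and $\sum_{n'}$ for each fixed pair $(K,K')$, while the uniform bound supplies the domination needed to commute the two net-limits $\lim_{K\uparrow\RR}$ and $\lim_{K'\uparrow\RR}$. Write the two inner products in the integrand as $A$ and $B$, so the summand is $A\cdot B$. First I would strip the complex arguments off the vectors $W^*\Psi_{K,n}^{(z)}$ using the intertwining identity $U_R(t)^*W^*(\mathcal{F}^*\otimes 1_\KK)=W^*(\mathcal{F}^*T_{\exp(-it\,\cdot)}\otimes 1_\KK)$ recorded before Definition~\ref{def:beta_weight}. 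Since $f_K^{(z)}(\xi)=\chi_K(\xi)\exp(-z\xi/2)$, this gives $W^*\Psi_{K,n}^{(2is\mp it)}=U_R(s\mp t/2)^*W^*\Psi_{K,n}^{(0)}$ and $W^*\Psi_{K',n'}^{(\beta+2is'\mp it)}=U_R(s'\mp t/2)^*W^*\Psi_{K',n'}^{(\beta)}$, with base frame vectors $W^*\Psi_{K,n}^{(0)}=W^*(\mathcal{F}^*\chi_K\otimes\psi_n)$. Treating the two factors separately and applying a measure-preserving shift in $(s,s')$, each then reduces to an inner product of the form $\langle U_R(s)^*W^*\Psi_{K,n}^{(0)},\,zU_R(s')^*W^*\Psi_{K',n'}^{(\beta)}\rangle$, with $z=y$ for $A$ and $z=x^*$ for $B$ (the latter after conjugating).

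The core estimate is an application of Cauchy--Schwarz on $\ell^2(\mathbb N^2)\otimes L^2(\RR^2)$:
\[
\sum_{n,n'}\int_{\RR^2}|A|\,|B|\;\le\;\Big(\sum_{n,n'}\int_{\RR^2}|A|^2\Big)^{1/2}\Big(\sum_{n,n'}\int_{\RR^2}|B|^2\Big)^{1/2}.
\]
To control $\sum_{n,n'}\int|A|^2$ I would fix $(n',s')$, put $\phi=yU_R(s')^*W^*\Psi_{K',n'}^{(\beta)}$, and apply the pointwise Parseval identity obtained by specialising the computation in the proof of Lemma~\ref{lem:time_trace_completeness} (take $a=1_{\HH_R}$ and a single time argument),
\[
\sup_K\sum_n\int_\RR\big|\langle W^*(\mathcal{F}^*\chi_K\otimes\psi_n),U_R(s)\phi\rangle\big|^2\,ds=2\pi\|\phi\|^2,
\]
which reflects $\sum_n p_{K,n}=W^*(\mathcal{F}^*T_{\chi_K}\mathcal{F}\otimes 1_\KK)W\uparrow 1_{\HH_R}$. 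Summing over $n'$ and integrating in $s'$, the $t$-integrability bound $\int_\RR\|yU_R(s')^*\psi\|^2\,ds'\le C_y^2\|\psi\|^2$ (finite since $y\in\mathcal{D}\subset\mathcal N$) together with $\sum_{n'}\|W^*\Psi_{K',n'}^{(\beta)}\|^2=\int_{K'}\exp(-\beta\xi)m_{U_R}(\xi)\,d\xi$ from~\eqref{eq:multiplicity function in a basis} yields $\sum_{n,n'}\int|A|^2\le 2\pi C_y^2\int_\RR\exp(-\beta\xi)m_{U_R}(\xi)\,d\xi$. The identical argument with $y$ replaced by $x^*$ gives $\sum_{n,n'}\int|B|^2\le 2\pi C_{x^*}^2\int_\RR\exp(-\beta\xi)m_{U_R}(\xi)\,d\xi$. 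By~\eqref{eq:type_change_condition} both right-hand sides are finite, and Cauchy--Schwarz then produces the uniform bound $\sum_{n,n'}\int_{\RR^2}|A\,B|\le 2\pi C_y C_{x^*}\int_\RR\exp(-\beta\xi)m_{U_R}(\xi)\,d\xi<\infty$, independent of $K,K'$.

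With uniform absolute convergence established, for each fixed $(K,K')$ Fubini's theorem for the product of counting measure on $\mathbb N^2$ and Lebesgue measure on $\RR^2$ allows $\sum_n$, $\sum_{n'}$ and $\int_{\RR^2}$ to be taken in any order. It then remains to commute the two cut-off limits. Since the common uniform bound dominates every truncated partial sum, I would route the limits through the monotone resolution $P_K=W^*(\mathcal{F}^*T_{\chi_K}\mathcal{F}\otimes 1_\KK)W\uparrow 1_{\HH_R}$ (and its $n'$-analogue carrying the weight $\exp(-\beta\xi)$), applying dominated convergence against the integrable majorant; equivalently, both iterated limits compute the single absolutely convergent object, so a Moore--Osgood double-limit argument gives~\eqref{eq:lim_interchange}. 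The main obstacle is precisely this last step: the base frame vectors $W^*\Psi_{K,n}^{(0)}=W^*(\mathcal{F}^*\chi_K\otimes\psi_n)$ do \emph{not} converge as $K\uparrow\RR$, since $\mathcal{F}^*\chi_K$ leaves $L^2(\RR)$, so one cannot take $K\to\RR$ pointwise inside the integrand. The limits must be handled through the monotone operator resolutions and the uniform $L^1$-type majorant rather than by naive pointwise convergence, and keeping the separate roles of the two cut-offs decoupled while applying completeness on the $n$-index and $t$-integrability on the $n'$-index is the delicate bookkeeping.
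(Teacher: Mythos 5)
Your preparatory work is sound: the identity $U_R(r)^*W^*\Psi_{K,n}^{(z)}=W^*\Psi_{K,n}^{(z+2ir)}$ is correctly applied, the Cauchy--Schwarz splitting on $\ell^2(\mathbb{N}^2)\otimes L^2(\RR^2)$ is legitimate (and the shift removing $t$ is correctly deferred until after the splitting, since the two factors require opposite shifts $\mp t/2$), the Parseval bound $\sum_n\int_\RR|\langle W^*\Psi_{K,n}^{(0)},U_R(s)\phi\rangle|^2\,\diff s=2\pi\langle\phi,P_K\phi\rangle\le 2\pi\Vert\phi\Vert^2$ is exactly the computation in the proof of Lem.~\ref{lem:time_trace_completeness}, the constants $C_y,C_{x^*}$ are finite because $\mathcal{D}$ is a $*$-algebra inside $\mathcal{N}$, and $\sum_{n'}\Vert W^*\Psi_{K',n'}^{(\beta)}\Vert^2=\int_{K'}\exp(-\beta\xi)m_{U_R}(\xi)\diff\xi$ is finite by \eqref{eq:type_change_condition}. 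This gives absolute convergence uniformly in $(K,K')$ and hence Fubini for the two sums at \emph{fixed} cutoffs. The gap is in the final step, which is the actual content of the lemma: a bound on a doubly-indexed net that is uniform in both indices does \emph{not} imply that the two iterated limits agree --- consider $c_{m,n}=m/(m+n)$, uniformly bounded by $1$, with $\lim_m\lim_n c_{m,n}=1\neq 0=\lim_n\lim_m c_{m,n}$. Moore--Osgood requires that one of the limits converge \emph{uniformly} in the other parameter (or that the joint limit exist), and your proposal asserts this ("both iterated limits compute the single absolutely convergent object") without establishing it; indeed your own closing observation --- that $\mathcal{F}^*\chi_K$ leaves $L^2(\RR)$, so the frame vectors have no limit --- shows that the existence of a limiting object is precisely what is in question, not something domination supplies.

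There are two ways to close this. The paper's route sidesteps uniformity entirely: by polarisation it reduces \eqref{eq:lim_interchange} to summands of the form $\vert\langle W^*\Psi_{K',n'}^{(\beta+2is')},aW^*\Psi_{K,n}^{(2is)}\rangle\vert^2$, and then exhibits the $(s,s')$-integral in \emph{two} Plancherel forms, $2\pi\int_\RR\langle\,\cdot\,,a^*p_{K,n}a\,\cdot\,\rangle\diff s'$ and $2\pi\int_\RR\langle\,\cdot\,,a^*\tilde{p}^{(\beta)}_{K',n'}a\,\cdot\,\rangle\diff s$, which show the truncated expression is monotone increasing separately in $K$ and in $K'$ (and trivially in $N,N'$); both iterated limits are then the joint supremum over $(K,K',N,N')$, and suprema commute with no further analysis. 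Alternatively, your route can be completed, but it requires showing the joint limit of $S(K,K')$ exists, e.g.\ by a joint Cauchy estimate: for $K\subset K''$ the tail $\Vert A_{K''\setminus K}\Vert_2^2$ equals $2\pi$ times a cutoff version of $\varphi_\beta(y^*(P_{K''}-P_K)y)$, which is bounded uniformly in the other cutoff by $\varphi_\beta(y^*(1-P_K)y)$, and this tends to $0$ because $y^*P_Ky\uparrow y^*y$, $\varphi_\beta$ is normal (Def.~\ref{def:beta_weight}), and $\varphi_\beta(y^*y)<\infty$ by Lem.~\ref{lem:weight_type_change}; the symmetric estimate handles the $K'$-tails. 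That is the missing argument your phrase "route the limits through the monotone resolution" points at but does not carry out --- and without it, the proof as written does not go through.
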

\begin{proof}
To show that the relevant limiting operations may indeed be interchanged, we once again make use of the polarisation identity to reduce the problem of proving Eq.~\eqref{eq:lim_interchange} to showing that
\begin{align}
    &\lim_{K\uparrow\RR}\sum_{n=1}^\infty \lim_{K'\uparrow\RR}\sum_{n'=1}^\infty\int_{\RR^2} \langle W^*\Psi_{K,n}^{(2is)},a^*W^*\Psi_{K',n'}^{(\beta+2is')}\rangle\langle W^*\Psi_{K',n'}^{(\beta+2is')},aW^*\Psi_{K,n}^{(2is)}\rangle\diff s\diff s'\nonumber\\
    &\quad=\lim_{K\uparrow\RR}\sum_{n=1}^\infty \lim_{K'\uparrow\RR}\sum_{n'=1}^\infty\int_{\RR^2} \vert\langle W^*\Psi_{K',n'}^{(\beta+2is')},aW^*\Psi_{K,n}^{(2is)}\rangle\vert^2\diff s\diff s',
\end{align}
for any $a\in\mathcal{D}$.
Following the proof  of Lem.\ \ref{lem:time_trace_completeness}, we use the projection
 \begin{equation}
        p_{K,n}=W^*\left(\mathcal{F}^*T_{\chi_K}\mathcal{F}\otimes \vert\psi_n\rangle\langle\psi_n\vert\right)W,
\end{equation}
to write
\begin{align}
    &\int_{\RR^2} \vert\langle W^*\Psi_{K',n'}^{(\beta+2is')},aW^*\Psi_{K,n}^{(2is)}\rangle\vert^2\diff s\diff s'\nonumber\\
    &\quad=2\pi\int_{\RR} \langle W^*\Psi_{K',n'}^{(\beta+2is')},a^*p_{K,n}aW^*\Psi_{K',n'}^{(\beta+2is')}\rangle\diff s',
\end{align}
but also
\begin{align}
    &\int_{\RR^2} \vert\langle W^*\Psi_{K',n'}^{(\beta+2is')},aW^*\Psi_{K,n}^{(2is)}\rangle\vert^2\diff s\diff s'\nonumber\\
    &\quad=2\pi\int_{\RR} \langle W^*\Psi_{K,n}^{(2is)},a^*\tilde{p}^{(\beta)}_{K',n'}aW^*\Psi_{K,n}^{(2is)}\rangle\diff s,
\end{align}
 where 
\begin{equation}
        \tilde{p}^{(\beta)}_{K,n}=W^*\left(\mathcal{F}^*T_{\chi_K\exp(-\beta.)}\mathcal{F}\otimes \vert\psi_n\rangle\langle\psi_n\vert\right)W
\end{equation}
is a positive operator (but not a projection).
From these expressions, it is clear that the integrals 
\begin{equation}
    \int_{\RR^2} \vert\langle W^*\Psi_{K',n'}^{(\beta+2is')},aW^*\Psi_{K,n}^{(2is)}\rangle\vert^2\diff s\diff s'
\end{equation} are increasing functions of $K,K'\subset \RR$ compact (for fixed $n,n'\in\mathbb{N}$). Hence we find 
\begin{align}
    &\lim_{K\uparrow\RR}\sum_{n=1}^\infty \lim_{K'\uparrow\RR}\sum_{n'=1}^\infty\int_{\RR^2} \vert\langle W^*\Psi_{K',n'}^{(\beta+2is')},aW^*\Psi_{K,n}^{(2is)}\rangle\vert^2\diff s\diff s'\nonumber\\
    &\quad=\sup_{K,K',N,N'}\sum_{n=1}^N \sum_{n'=1}^{N'}\int_{\RR^2} \vert\langle W^*\Psi_{K',n'}^{(\beta+2is')},aW^*\Psi_{K,n}^{(2is)}\rangle\vert^2\diff s\diff s',
\end{align}
and it is clear that the order of the relevant limiting operations are interchangeable.
\end{proof}
\begin{lemma}
    \label{lem:ana_pos_atl}
    For $(U_R,E,\HH_R)$ a QRF as in Def.\ \ref{def:pos_atl} and $\mathcal{D}_1\subset B(\HH_R)$ as in Def.~\ref{def:ana_pos_atl}, one has
    \begin{enumerate}
        \item $\mathcal{D}_1\subset \mathcal{D}$ and the former is strongly and weakly dense in the latter.
        \item For $f\in L^1(\RR)$ a Gaussian, $E(f)\in\mathcal{D}_1$.
        \item For $x\in\mathcal{D}_1$, the function $t\mapsto \alpha(t,x):=U_R(t)x U_R(t)^*\in\mathcal{D}_1$ for $t\in\RR$ extends to a function $z\mapsto \alpha(z,x)\in\mathcal{D}_1$ for $z\in\CC$ that is weakly holomorphic, i.e.\ for each $\psi,\psi'\in\HH_R$ the function $z\mapsto \langle \psi,\alpha(z,x)\psi'\rangle$ is holomorphic.
    \end{enumerate}
\end{lemma}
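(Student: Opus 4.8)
The plan is to treat the three assertions in turn, exploiting throughout the facts established in Lem.~\ref{lem:pos_atl}: that $\mathcal{P}$ is a hereditary convex cone, that $\mathcal{N}=\{a\in B(\HH_R):a^*a\in\mathcal{P}\}$ is a left ideal, and that $\mathcal{D}$ is its linear span, so that by the standard weight machinery \cite[Lem.~VII.1.2]{takesaki2002} one has $\mathcal{D}=\mathcal{N}\cap\mathcal{N}^*$. For the inclusion $\mathcal{D}_1\subset\mathcal{D}$, since $\mathcal{D}$ is a $*$-algebra it suffices to show that each generator $x(f)$ with $x\in\mathcal{D}$ and $f$ a Gaussian lies in $\mathcal{D}$. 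First I would record that $\mathcal{P}$, and hence $\mathcal{N}$ and $\mathcal{D}$, is invariant under $\Ad U_R(s)$ with $C_{U_R(s)aU_R(s)^*}=C_a$, which follows from the substitution $t\mapsto t-s$ in $I_{a,\psi}$ together with $U_R(t)^*U_R(s)^*=U_R(s+t)^*$. The computational core is the estimate that for $x\in\mathcal{N}$ and $f\in L^1(\RR)$,
\begin{equation}
    C_{x(f)}\le \|f\|_{L^1(\RR)}\,C_x ,
\end{equation}
obtained by writing $x(f)U_R(s)^*\psi=\int_\RR f(t)U_R(t)xU_R(t+s)^*\psi\,\diff t$, applying Cauchy--Schwarz in $L^2(\RR,|f|\diff t)$ to bound $\|x(f)U_R(s)^*\psi\|^2$ by $\|f\|_{L^1(\RR)}\int|f(t)|\,\|xU_R(t+s)^*\psi\|^2\diff t$, integrating over $s$, and using Tonelli with $\int_\RR\|xU_R(u)^*\psi\|^2\diff u\le C_x^2\|\psi\|^2$. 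Applying this to $x$ and to $x^*$, and noting $x(f)^*=x^*(\bar f)$ with $\bar f$ again Gaussian, gives $x(f)\in\mathcal{N}\cap\mathcal{N}^*=\mathcal{D}$. For the density of $\mathcal{D}_1$ in $\mathcal{D}$, I would take an approximate identity of positive Gaussians $f_n(t)=\sqrt{n/\pi}\exp(-nt^2)$ with $\int f_n=1$; since $t\mapsto U_R(t)xU_R(t)^*$ is strong-operator continuous, $x(f_n)\to x$ strongly (hence weakly) for every $x\in\mathcal{D}$, and $x(f_n)\in\mathcal{D}_1$ by construction.

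For $E(f)\in\mathcal{D}_1$, the idea is to factor the Gaussian as a convolution. Using covariance $U_R(s)E(X)U_R(s)^*=E(X+s)$ one has $E(g)(h)=E(g*h)$ for integrable $g,h$. Given a Gaussian $f(t)=\gamma\exp(-\alpha(t-z_0)^2)$ with $\alpha>0$ real, I would choose a real positive Gaussian $g(t)=\exp(-at^2)$ with $a>\alpha$; then $E(g)\in\mathcal{P}\subset\mathcal{D}$ by Lem.~\ref{lem:pos_atl}, and on the Fourier side $\hat h=\hat f/\hat g$ is, precisely because $a>\alpha$, again a Gaussian with strictly positive quadratic coefficient, so its inverse transform $h$ is a genuine Gaussian in the sense of Def.~\ref{def:ana_pos_atl} with $f=g*h$. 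Hence $E(f)=E(g*h)=E(g)(h)$ is a generator of $\mathcal{D}_1$.

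The analyticity statement is the crux. For a single generator $x(f)$ I would use the identity $U_R(t)x(f)U_R(t)^*=x\bigl(f(\cdot-t)\bigr)$, obtained by the substitution $s\mapsto s-t$ inside the defining integral, and observe that $f(\cdot-z)$ remains a Gaussian for every complex $z$ (only its centre is shifted into $\CC$). This lets me \emph{define} $\alpha(z,x(f)):=x\bigl(f(\cdot-z)\bigr)\in\mathcal{D}_1$, extend it to products $x=\prod_i x_i$ of generators by $\alpha(z,x):=\prod_i\alpha(z,x_i)$, and extend linearly to all of $\mathcal{D}_1$, so that $\alpha(t,x)=U_R(t)xU_R(t)^*$ for real $t$. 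Weak holomorphy of a single factor follows from Morera's theorem applied to $z\mapsto\int_\RR f(s-z)\langle\psi,U_R(s)xU_R(s)^*\psi'\rangle\diff s$, where the bound $|f(s-z)|=|\gamma|\exp\!\bigl(\alpha(\Im(z+z_0))^2\bigr)\exp\!\bigl(-\alpha(s-\Re(z+z_0))^2\bigr)$ is Gaussian in $s$ and uniformly integrable for $z$ in compact sets; the same estimate gives the local norm bound $\|\alpha(z,x(f))\|\le\|x\|\,\|f(\cdot-z)\|_{L^1(\RR)}$.

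The main obstacle is passing from single generators to products. I would prove that a product of finitely many locally norm-bounded, weakly holomorphic $B(\HH_R)$-valued functions is again weakly holomorphic: local boundedness upgrades weak holomorphy to vector-valued holomorphy of $z\mapsto A(z)\xi$, and then the difference quotient of $A(z)B(z)\xi$ splits as $A(z)\tfrac{B(z)\xi-B(z_0)\xi}{z-z_0}+\tfrac{A(z)-A(z_0)}{z-z_0}B(z_0)\xi$, whose two terms converge to $A(z_0)B'(z_0)\xi$ and $A'(z_0)B(z_0)\xi$ respectively (using that $A(z)v_z\to A(z_0)v$ whenever $v_z\to v$, by local boundedness and strong convergence). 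Inducting on the number of factors then yields weak holomorphy of $\alpha(z,x)$ for every $x\in\mathcal{D}_1$. These arguments run parallel to those of \cite[Lem.~3.2]{Buchholz:1988fk}.
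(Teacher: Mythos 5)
Your treatment of parts 2 and 3 is sound, but part 1 contains a genuine gap at its final step: the identity $\mathcal{D}=\mathcal{N}\cap\mathcal{N}^*$ that you attribute to \cite[Lem.~VII.1.2]{takesaki2002} is false. That lemma yields $\mathcal{D}=\operatorname{span}\mathcal{P}=\mathcal{N}^*\mathcal{N}$, and the inclusion $\mathcal{N}^*\mathcal{N}\subset\mathcal{N}\cap\mathcal{N}^*$ is in general strict --- for the canonical trace on $B(\HH)$, $\mathcal{N}$ is the Hilbert--Schmidt class (a $*$-ideal, so $\mathcal{N}\cap\mathcal{N}^*=\mathcal{N}$) while $\mathcal{N}^*\mathcal{N}$ is the trace class. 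The failure occurs in exactly this ``$L^2$ versus $L^1$ in the time direction'' form in the present setting: take $g\geq 0$ bounded with $g\in L^2(\RR)\setminus L^1(\RR)$, e.g.\ $g(t)=(1+|t|)^{-1}$. Using the dilation $W$ of Thm.~\ref{thm:qref_embed}, one computes $\int_\RR\|E(g)U_R(t)^*\psi\|^2\,\diff t\leq \int_{\RR^2}|g(s)|^2\|(W\psi)(s+t)\|_\KK^2\,\diff s\,\diff t=\|g\|^2_{L^2(\RR)}\|\psi\|^2$, so $E(g)\in\mathcal{N}\cap\mathcal{N}^*$ (note $E(g)^*=E(g)$); yet the computation \eqref{eq:Ef_integral} in the proof of Lem.~\ref{lem:pos_atl} gives $\int_\RR\langle\psi,U_R(t)E(g)U_R(t)^*\psi\rangle\,\diff t=\|g\|_{L^1(\RR)}\|\psi\|^2=\infty$ for every $\psi\neq 0$, so $E(g)\notin\mathcal{P}=\mathcal{D}\cap B(\HH_R)^+$ and hence, being positive, $E(g)\notin\mathcal{D}$. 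Consequently, your (correct) estimates $C_{x(f)}\leq\|f\|_{L^1(\RR)}C_x$ and $C_{x(f)^*}\leq\|\bar f\|_{L^1(\RR)}C_{x^*}$ establish only $x(f)\in\mathcal{N}\cap\mathcal{N}^*$, which does not imply $x(f)\in\mathcal{D}$.

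The repair is immediate from your own Tonelli estimate and is how the paper argues: write the Gaussian $f$ as a combination of four positive bounded $L^1$ functions $f_j$ and, by Lem.~\ref{lem:pos_atl}, write $x\in\mathcal{D}$ as a combination of four elements $p\in\mathcal{P}$; then $p(f_j)\geq 0$ and $\sup_{\|\psi\|=1}\int_\RR\langle\psi,U_R(t)\,p(f_j)\,U_R(t)^*\psi\rangle\,\diff t\leq\|f_j\|_{L^1(\RR)}C_{\sqrt{p}}^2$, so $p(f_j)\in\mathcal{P}$ and $x(f)\in\mathcal{D}$ by linearity; the essential point the positive decomposition buys is re-entry into the \emph{cone} $\mathcal{P}$ rather than the ideal $\mathcal{N}$. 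With that step replaced, the rest stands: your direct strong convergence $x(f_n)\to x$ is marginally cleaner than the paper's weak limit plus the $*$-algebra upgrade; your Fourier-side factorisation $f=g*h$ with $a>\alpha$ is the paper's Gaussian-convolution tuning in disguise; and in part 3 your multiplicative definition $\alpha(z,\prod_i x_i(f_i))=\prod_i x_i(f_i(\cdot-z))$ agrees with the paper's $n$-fold integral representation by Fubini, with your product rule for locally bounded weakly holomorphic operator families substituting for the paper's differentiation under the multivariable integral. One sentence worth adding in either approach: well-definedness of the linear extension of $\alpha(z,\cdot)$ to $\mathcal{D}_1$ (an element may admit several representations as combinations of monomials) follows from the identity theorem, since any two candidate extensions are weakly holomorphic and agree with $\Ad U_R(t)$ on $\RR$.
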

\begin{proof}
    1. Consider $a\in B(\HH_R)$ such that $C_a<\infty$, i.e. $a^*a\in\mathcal{P}$. If $f:\RR\to\RR^+$ a positive $L^1$ function, then via Eq.~\eqref{eq:op_smear} $(a^*a)(f)\geq 0$ and
        \begin{align}
            \int_\RR\langle \psi,U_R(t)(a^*a)(f)U_R(t)^*\psi\rangle\diff t=&\int_\RR \int_\RR  f(s) \langle\psi,U_R(t+s)a^*aU_R(t+s)^*\psi\rangle\diff s\diff t\\
            \leq&\Vert f\Vert_{L^1(\RR)}C_a^2,
        \end{align}
    hence $(a^*a)(f)\in\mathcal{P}$. As one can write each element of $\mathcal{D}$ as a finite linear combination of elements of $\mathcal{P}$, and each Gaussian as a finite linear combination of bounded positive $L^1$ functions, concretely 
    \begin{align}
        f(t)=&\vert\max(\Re(f(t)),0)\vert-\vert\min(\Re(f(t)),0)\vert\nonumber\\
        &+i(\vert\max(\Im(f(t)),0)\vert-\vert\min(\Im(f(t)),0)\vert),
    \end{align} it follows that $x(f)\in\mathcal{D}$ for $x\in\mathcal{D}$ and Gaussian $f$. Since these elements generate $\mathcal{D}_1$, it follows that $\mathcal{D}_1\subset\mathcal{D}$. To show weak density, we use that $U_R$ is strongly continuous, which implies for each $\psi,\psi'$ that the function $t\mapsto\langle \psi,U_R(t)xU_R(t)^*\psi'\rangle$ is continuous. It follows that for a sequence of Gaussians $f_n$ approaching the $\delta$ distribution (pointwise when integrated against bounded continuous functions), the sequence $\langle\psi,x(f_n)\psi'\rangle$ approaches $\langle\psi,x\psi'\rangle$. Since $\mathcal{D}_1$ is a *-subalgebra of $B(\HH_R)$, the weak and strong closure of $\mathcal{D}_1$ coincide by similar arguments as in the proof of Lem.\ \ref{lem:pos_atl}.

    2. Consider $f_1,f_2\in L^1(\RR)$ Gaussians with $f_i(t)=\alpha_i\exp(-\gamma_i(t-z_i)^2)$. Note that $E(f_1)\subset\mathcal{D}$, so we have $E(f_1)(f_2)\in\mathcal{D}_1$, but by covariance of $E$ we have
    \begin{align}
        E(f_1)(f_2)=&\int_\RR f_2(t)U_R(t)E(f_1)U_R(t)^*\diff t\nonumber\\
        =&\int_\RR f_2(t)E(\lambda_\RR(t)f_1\lambda_\RR(t)^*)\diff t=E(f_1*f_2),
    \end{align}
    where $\lambda_\RR(t)f_1\lambda_\RR(t)^*\in L^1(\RR)\cap L^\infty(\RR)$ a Gaussian with with $(\lambda_\RR(t)f_1\lambda_\RR(t)^*)(s)=f_1(s-t)$ and $f_1*f_2$ the convolution given by
    \begin{equation}
        (f_1*f_2)(s)=\int_\RR f_1(s-t)f_2(t)\diff t=\gamma_1\gamma_2\sqrt{\frac{\pi}{\alpha_1+\alpha_2}}\exp\left(-\left(\frac{\alpha_1\alpha_2}{\alpha_1+\alpha_2}\right)(s-z_1-z_2)^2\right).
    \end{equation}
    One can always tune the parameters $\alpha_i,\gamma_i,z_i$ to construct arbitrary Gaussians $f$ via this convolution, hence $E(f)\in\mathcal{D}_1$.

    3. For $x\in \mathcal{D}$ and $f\in L^1(\RR)$ a Gaussian, the function $t\mapsto \alpha(t,x(f))$ is given by
    \begin{equation}
        \alpha(t,x(f))=\int_\RR f(s)U_R(s+t)xU_R(s+t)^*\diff s=\int_\RR f(s-t)U_R(s)xU_R(s)^*\diff s.
    \end{equation}
    Noting that a Gaussian $f$ can be analytically continued to a holomorphic function $f:\CC\to \CC$ and that the map $\RR\ni s\mapsto f(s-z)$ is still a Gaussian for any $z\in\CC$, it follows that the map $\CC\ni z\mapsto \alpha(z,x(f)):=\int_\RR f(s-z)U_R(s)xU_R(s)^*\diff s\in \mathcal{D}_1$ is well defined. Furthermore, for $z_0\in \CC$ some sufficiently small neighbourhood $U\subset\CC$, there are $c_1,c_2,c_3>0$ such that $\vert \partial_zf(t-z)\vert \leq c_1(\vert t\vert +c_2)\exp(c_3\vert t\vert)f(t)$ for every $t\in\RR$, $z\in\CC$, where as a function of $t$ this upper bound is again in $L^1(\RR)$. As for any $\psi,\psi'\in\HH_R$,  $t\mapsto \langle\psi,U_R(s)xU_R(s)^*\psi\rangle$ is a continuous (hence measurable) bounded function, i.e.\ an element of $L^\infty(\RR)$, we have
    \begin{equation}
        \partial_z\langle\psi,\alpha(z,x(f))\psi'\rangle=\int_\RR \partial_zf(s-z)\langle\psi,\alpha(s,x)\psi'\rangle\diff s,
    \end{equation}
    which is again well-defined. It follows that $\alpha(z,x(f))$ is a (weakly) holomorphic function of operators. We note that this argument directly generalises to products $x_1(f_1)...x_n(f_n)$ for $x_1,...,x_n\in\mathcal{D}$, $f_1,...,f_n\in L^1(\RR)\cap L^2(\RR)$ Gaussians, as we have
    \begin{multline}
        U_R(t)x_1(f_1)...x_n(f_n)U_R(t)^*=\\
        \int_{\RR}\left(\prod_{i}f_i(s_i-t)\right)U_R(s_1)x_1U_R(s_1)^*...U_R(s_n)x_nU_R(s_n)^*\diff^n s,
    \end{multline}
   and once again for $\psi,\psi'\in\HH_R$, the function 
    \begin{equation}
        (s_1,...,s_2)\mapsto\langle\psi,U_R(s_1)x_1U_R(s_1)^*...U_R(s_n)x_nU_R(s_n)^*\psi'\rangle,
    \end{equation}
    is an $L^\infty(\RR^n)$ continuous function. It follows that $\alpha(z,x_1(f_1)...x_n(f_n))$ is holomorphic.
\end{proof}

\begin{lemma}
    \label{lem:KMS_F_int}
    Let $(U_R,E,\HH_R)$ be a QRF as in Def.\ \ref{def:pos_atl} satisfying Eq.\ \eqref{eq:type_change_condition} for some $\beta\in(0,\infty)$, $\varphi_\beta$ the weight defined in Def.\ \ref{def:beta_weight} and $\mathcal{D}_1$ the *-algebra defined in Def.\ \ref{def:ana_pos_atl}. Let for $x,y\in\mathcal{D}_1$, the function $F_{x,y}:S_\beta\to \CC$ be given by
    \begin{equation}
        F_{x,y}(z):=\varphi_\beta(y\alpha(z,x)),
    \end{equation}
    with $\alpha(z,x)$ as in Lem.\ \ref{lem:ana_pos_atl}. Then $F_{x,y}$ is continuous and bounded, and analytic on the interior on $S_\beta$. Furthermore, it satisfies the boundary conditions
\begin{equation}
    F_{x,y}(t)=\varphi_\beta(y\alpha(t,x)),\, F_{x,y}(t+i\beta)=\varphi_\beta(\alpha(t,x)y).
\end{equation}
\end{lemma}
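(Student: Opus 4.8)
The plan is to treat $F_{x,y}(z)=\varphi_\beta(y\alpha(z,x))$ as a genuine scalar function. It is well defined because, by Lem.~\ref{lem:ana_pos_atl}, $\alpha(z,x)\in\mathcal{D}_1$ for every $z\in\CC$ and $\mathcal{D}_1$ is a $*$-algebra, so $y\alpha(z,x)\in\mathcal{D}_1\subset\mathcal{D}$, on which $\varphi_\beta$ is the finite linear functional guaranteed by Lem.~\ref{lem:weight_type_change} under Eq.~\eqref{eq:type_change_condition}. The single structural fact I use about $\varphi_\beta$ is its continuity for the norm $\Vert\cdot\Vert_{\mathcal{D}}$: one has $|\varphi_\beta(w)|\le\Vert w\Vert_{\mathcal{D}}\int_\RR e^{-\beta\xi}m_{U_R}(\xi)\diff\xi$ for $w\in\mathcal{D}$, with the integral finite. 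Thus everything reduces to understanding $z\mapsto y\alpha(z,x)$ as a $\Vert\cdot\Vert_{\mathcal{D}}$-valued map; by bilinearity it suffices to take $x=x_1(g_1)\cdots x_n(g_n)$ and $y$ monomials of the generators of Def.~\ref{def:ana_pos_atl}.

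For the a priori bound I work from the integral representation of Lem.~\ref{lem:ana_pos_atl}(3): with $\alpha_s=\Ad U_R(s)$ and $A_0(s)=\alpha_{s_1}(x_1)\cdots\alpha_{s_n}(x_n)$, one has the weak integral $\alpha(z,x)=\int_{\RR^n}\big(\prod_i g_i(s_i-z)\big)A_0(s)\,\diff^n s$. Taking matrix elements and applying Tonelli gives, for any $k\in L^1(\RR^n)$, the estimate $\Vert\int_{\RR^n}k(s)\,yA_0(s)\,\diff^n s\Vert_{\mathcal{D}}\le\Vert k\Vert_{L^1(\RR^n)}\sup_s\Vert yA_0(s)\Vert_{\mathcal{D}}$, needing only the definition of $\Vert\cdot\Vert_{\mathcal{D}}$ rather than abstract vector-valued integration. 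The supremum is finite and $s$-independent: using the Cauchy--Schwarz estimate $\Vert AB\Vert_{\mathcal{D}}\le C_{A^*}C_B$ established before Def.~\ref{def:ana_pos_atl}, together with $C_{MN}\le\Vert M\Vert C_N$, $C_{\alpha_s(x_i)}=C_{x_i}$ and $\Vert\alpha_s(x_i)\Vert=\Vert x_i\Vert$, one gets $\Vert yA_0(s)\Vert_{\mathcal{D}}\le\big(\prod_{i<n}\Vert x_i\Vert\big)C_{y^*}C_{x_n}$. Since $\int_{\RR^n}\prod_i|g_i(s_i-z)|\,\diff^n s=\prod_i\Vert g_i(\cdot-z)\Vert_{L^1}$ has $z$-dependence only through a Gaussian factor in $\Im z$, which is bounded on $S_\beta$, this yields $\sup_{z\in S_\beta}\Vert y\alpha(z,x)\Vert_{\mathcal{D}}<\infty$ and hence the boundedness of $F_{x,y}$ on $S_\beta$.

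For analyticity I show $z\mapsto y\alpha(z,x)$ is holomorphic as a $\Vert\cdot\Vert_{\mathcal{D}}$-valued map and compose with the $\Vert\cdot\Vert_{\mathcal{D}}$-continuous functional $\varphi_\beta$. The difference quotient is $\int_{\RR^n}\Delta_h(s)\,yA_0(s)\,\diff^n s$ with $\Delta_h(s)=h^{-1}\big(\prod_i g_i(s_i-z-h)-\prod_i g_i(s_i-z)\big)$; since $\Delta_h\to\partial_z\prod_i g_i(\cdot-z)$ in $L^1(\RR^n)$ (dominated convergence for Gaussians), the estimate above gives $\Vert\,h^{-1}(y\alpha(z+h,x)-y\alpha(z,x))-y\,\partial_z\alpha(z,x)\,\Vert_{\mathcal{D}}\to0$. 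Applying $\varphi_\beta$ shows $F_{x,y}$ is complex-differentiable at every $z$, with $F_{x,y}'(z)=\varphi_\beta(y\,\partial_z\alpha(z,x))$; in particular $F_{x,y}$ is continuous on $S_\beta$ and holomorphic on its interior. Note this is exactly the point where the naive candidate before Def.~\ref{def:ana_pos_atl} failed, and where the Gaussian regularisation of $\mathcal{D}_1$ is decisive.

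It remains to compute the boundary values, which I expect to be the main obstacle, as this is the genuine KMS content and does not reduce to norm bounds. The lower boundary is immediate: $\alpha(t,x)=U_R(t)xU_R(t)^*$ for real $t$, so $F_{x,y}(t)=\varphi_\beta(y\alpha(t,x))$. For the upper boundary I expand $F_{x,y}(t+i\beta)=\varphi_\beta(y\alpha(t+i\beta,x))$ through the defining expression of $\varphi_\beta$ (Def.~\ref{def:beta_weight}), using $U_R(s)^*W^*\Psi_{K,m}^{(z')}=W^*\Psi_{K,m}^{(z'+2is)}$ to transfer the imaginary part $i\beta$ carried by $\alpha(t+i\beta,x)$ onto the weight vectors; inserting the Plancherel completeness relation Eq.~\eqref{eq:time_trace_completeness} of Lem.~\ref{lem:time_trace_completeness} splits the result into a product of two matrix-element integrals, reproducing precisely the boundary expression recorded before Def.~\ref{def:ana_pos_atl}. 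The interchange of the limiting operations $\lim_{K\uparrow\RR}\sum_m$ and $\lim_{K'\uparrow\RR}\sum_{m'}$ required to reassemble this as a single weight is exactly what Lem.~\ref{lem:F_int_lim} supplies, and it identifies the value with $\varphi_\beta(U_R(t)xU_R(t)^*y)=\varphi_\beta(\alpha(t,x)y)$, completing the boundary conditions and the proof.
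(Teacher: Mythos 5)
Your first three steps (well-definedness, the a priori bound, and analyticity) are sound and essentially coincide with the paper's own argument: the paper likewise reduces to monomials $x_1(f_1)\cdots x_n(f_n)$, uses the bound $\vert\varphi_\beta(w)\vert\leq \Vert w\Vert_{\mathcal{D}}\int_\RR \exp(-\beta\xi)m_{U_R}(\xi)\diff\xi$ together with $\Vert y\alpha(t_1,x_1)\cdots\alpha(t_n,x_n)\Vert_{\mathcal{D}}\leq C_{y^*}C_{x_n}\Vert x_1\Vert\cdots\Vert x_{n-1}\Vert$, and differentiates under the Gaussian integral; packaging this as $\Vert\cdot\Vert_{\mathcal{D}}$-valued holomorphy composed with a $\Vert\cdot\Vert_{\mathcal{D}}$-bounded functional is a harmless rearrangement of the same estimates.

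The gap is in the upper boundary value, which you yourself flag as the main obstacle but then dispatch in four lines. First, the identity $U_R(s)^*W^*\Psi_{K,m}^{(z')}=W^*\Psi_{K,m}^{(z'+2is)}$ holds only for \emph{real} $s$ and cannot by itself ``transfer the imaginary part $i\beta$'' onto the weight vectors: the vectors $W^*\Psi_{K,m}^{(z)}$ with complex parameter are not obtained by applying any operator to their real-time counterparts. What is actually needed is the statement that, for fixed $a\in B(\HH_R)$, $K,K',n,n'$, the function $z\mapsto \langle W^*\Psi_{K,n}^{(z_0+i\overline{z})},aW^*\Psi_{K',n'}^{(iz)}\rangle$ is entire, proved in the paper via an $L^2$ difference-quotient estimate with Cauchy bounds on the series of $\exp(-z'\xi/2)$, followed by uniqueness of analytic continuation along the real axis to obtain
\begin{equation*}
    \langle W^*\Psi_{K,n}^{(z_0)},\alpha(z,x)W^*\Psi_{K',n'}^{(0)}\rangle=\langle W^*\Psi_{K,n}^{(z_0+2i\overline{z})},xW^*\Psi_{K',n'}^{(2iz)}\rangle;
\end{equation*}
this occupies the bulk of the paper's proof and is entirely absent from yours. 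Second, you omit the symmetric splitting: before inserting the completeness relation \eqref{eq:time_trace_completeness}, the paper first establishes $\varphi_\beta(y\alpha(z,x))=\varphi_\beta(\alpha(-\tfrac{z}{2},y)\alpha(\tfrac{z}{2},x))$ (invariance of $\varphi_\beta$ under $\Ad U_R$ at real times, then analytic continuation using the holomorphy from the first half) and only then sets $z=i\beta$. This is not cosmetic. If you transfer the full $i\beta$ directly in $\varphi_\beta(y\alpha(i\beta,x))$ after inserting the resolution of Lem.~\ref{lem:time_trace_completeness}, the transfer identity yields parameters $2\beta+2is'$ and $-\beta+2is$, so the bra and ket of the two factors no longer carry matching parameters: the resulting double integral is not of the $\overline{g}\,g$ form handled by Lem.~\ref{lem:F_int_lim} (whose proof depends on rewriting the integrals through the positive operators $p_{K,n}$ and $\tilde{p}^{(\beta)}_{K',n'}$, i.e.\ on the parameters being exactly $(2is)$ and $(\beta+2is')$ in both factors), and the $\exp(+\beta\xi)$ growth attached to the $-\beta$ slot obstructs the outer limit. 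Only the symmetric distribution $\alpha(\mp i\beta/2,\cdot)$ places the damping entirely on the primed vectors and produces the expression that Lem.~\ref{lem:F_int_lim} then converts into $\varphi_\beta(xy)$. As written, your boundary step would fail at both of these points.
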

\begin{proof}
For $x,y\in\mathcal{D}_1$, $z\in \CC$, $y\alpha(z,x)\in\mathcal{D}_1\subset\mathcal{D}$ by Lem.\ \ref{lem:ana_pos_atl}, hence $F_{x,y}$ is well-defined. Let us assume $x=x_1(f_1)...x_n(f_n)$ with $x_1,...,x_n\in\mathcal{D}$, $f_1,...,f_n\in L^\infty(\RR)$ Gaussian, then 
\begin{align}
    F_{x,y}(z)=&\varphi_\beta\left(\int_{\RR^n} f_1(t_1-z)...f_n(t_n-z)y\alpha(t_1,x_1)...\alpha(t_n,x_n)\diff^n t\right)\nonumber\\
    =&\int_{\RR^n} f_1(t_1-z)...f_n(t_n-z)\varphi_\beta\left(y\alpha(t_1,x_1)...\alpha(t_n,x_n)\right)\diff^n t,
\end{align}
where we have used the Fubini-Tonelli theorem to interchange the integral in the definition of $\varphi_\beta$ (Def.~\ref{def:beta_weight}) with the integral of $\RR^n$ above and use Lebesgue's dominated convergence theorem to show to interchange the remaining limiting operations with the integral over $\RR^n$. This can be applied as 
\begin{equation}
    \vert\varphi_\beta\left(y\alpha(t_1,x_1)...\alpha(t_n,x_n)\right)\vert\leq\Vert y\alpha(t_1,x_1)...\alpha(t_n,x_n)\Vert_\mathcal{D}\int_{\RR}\exp(-\beta\xi)m_{U_R}(\xi)\diff\xi,
\end{equation}
with 
\begin{equation}
    \Vert y\alpha(t_1,x_1)...\alpha(t_n,x_n)\Vert_\mathcal{D}\leq C_{y^*}C_{x_n}\Vert x_1\Vert...\Vert x_{n-1}\Vert.
\end{equation}
Hence in particular $(t_1,...,t_n)\mapsto\varphi_\beta\left(y\alpha(t_1,x_1)...\alpha(t_n,x_n)\right)$ is a bounded function on $\RR^n$. It is also measurable, as we can construct this function from a limit of a sequence of continuous functions. Similarly as in the proof of property 3 of Lem.\ \ref{lem:ana_pos_atl} this implies that $F_{x,y}(z)$ is holomorphic. Since 
\begin{equation}
    \gamma\mapsto \int_\RR \vert f_i(t+i\gamma)\vert\diff t\text{ and }\gamma\mapsto \int_\RR \vert f_i'(t+i\gamma)\vert\diff t
\end{equation} are bounded on $\gamma\in[0,\beta]$, it follows that $F_{x,y}:S_\beta\to\CC$ is a bounded continuous function that is holomorphic on the interior of $S_\beta$. By linearity, these results now generalise to arbitrary $x,y\in\mathcal{D}_1$.

To now show that $F_{x,y}$ satisfies the KMS boundary conditions on $\mathcal{D}_1$, it is sufficient to show that for $x,y\in\mathcal{D}_1$, we have $\varphi_\beta(y\alpha(i\beta,x))=\varphi_\beta(xy)$. Here we use that as for $t\in\RR$ we have $\varphi_\beta(y\alpha(t,x))=\varphi_\beta(\alpha(-\frac{t}{2},y)\alpha(\frac{t}{2},x))$, it follows by analytic continuation that  \begin{equation}
    \varphi_\beta(y\alpha(z,x))=\varphi_\beta\left(\alpha\left(-\frac{z}{2},y\right)\alpha\left(\frac{z}{2},x\right)\right),
\end{equation}
for $z\in \CC$ and thus in particular $\varphi_\beta(y\alpha(i\beta,x))=\varphi_\beta(\alpha(-\frac{i\beta}{2},y)\alpha(\frac{i\beta}{2},x))$. We can now use Lem.\ \ref{lem:time_trace_completeness} to write
\begin{align}
    \label{eq:KMS_partway}
    \varphi_\beta(y\alpha(i\beta,x))=\frac{1}{2\pi}\lim_{K\uparrow\RR}\sum_{n=1}^\infty \lim_{K'\uparrow\RR}\sum_{n'=1}^\infty\int_{\RR^2}& \langle W^*\Psi_{K,n}^{(\beta+2is)},\alpha\left(-\frac{i\beta}{2},y\right)W^*\Psi_{K',n'}^{(2is')}\rangle\nonumber\\
    &\times\langle W^*\Psi_{K',n'}^{(2is')},\alpha\left(\frac{i\beta}{2},x\right)W^*\Psi_{K,n}^{(\beta+2is)}\rangle\diff s\diff s'
\end{align}
To show that the expression above equals $\varphi_\beta(xy)$, we first note that for $a\in\mathcal{B}(\HH_R)$, $n,n'\in\mathbb{N}$ and $K,K'\subset\RR$ compact, $z_0\in\CC$
the function \begin{equation}
    z\mapsto \langle W^*\Psi_{K,n}^{(z_0+i\overline{z})}, a W^*\Psi_{K',n'}^{(iz)}\rangle,
\end{equation} is holomorphic. This we show as follows. The (complex) derivative of this function is given by
\begin{align}
    \lim_{z'\to 0}&\frac{\langle W^*\Psi_{K,n}^{(z_0+i(\overline{z+z'}))}, a W^*\Psi_{K',n'}^{(i(z+z'))}\rangle-\langle W^*\Psi_{K,n}^{(z_0+i\overline{z})}, a W^*\Psi_{K',n'}^{(iz)}\rangle}{z'}\nonumber\\
    =\lim_{z'\to 0}&\left\langle W^*\left(\frac{\Psi_{K,n}^{(z_0+i(\overline{z+z')})}-\Psi_{K,n}^{(z_0+i\overline{z})}}{\overline{z'}}\right), a W^*\Psi_{K',n'}^{(z+z')}\right\rangle\nonumber\\
    &+\left\langle W^*\Psi_{K,n}^{(z_0+i\overline{z}))}, a W^*\left(\frac{\Psi_{K',n'}^{(i(z+z'))}-\Psi_{K',n'}^{(iz)}}{z'}\right)\right\rangle
\end{align}
Recall that
\begin{equation}
    \Psi_{K,n}^{(z)}=\mathcal{F}^*f_K^{(z)}\otimes\psi_n,
\end{equation}
with $f_K^{(z)}(\xi)=\chi_K(\xi)\exp(-\frac{z\xi}{2})$.
Clearly pointwise \begin{equation}
    \lim_{z'\to 0}\frac{1}{z'}\left(f_K^{(z+z')}(\xi)-f_K^{(z)}(\xi)\right)=-\frac{\xi}{2}f_K^{(z)}(\xi),
\end{equation} but as a limit of functions of $\xi$, this limit also holds in the $L^2$ topology, as for $\vert z'\vert<1$
\begin{align}
    \int_K \left\vert\frac{1}{z'}\left(\exp\left(-\frac{(z+z')\xi}{2}\right)-\exp\left(-\frac{z\xi}{2}\right)\right)+\frac{\xi}{2}\exp\left(-\frac{z\xi}{2}\right)\right\vert^2\diff\xi\\
    \leq\left(\sup_{\xi\in K}\left\vert\exp\left(-\frac{z\xi}{2}\right)\right\vert^2\right)\int_K \left\vert\frac{1}{z'}\left(\exp\left(-\frac{z'\xi}{2}\right)-1\right)+\frac{\xi}{2}\right\vert^2\diff\xi\nonumber\\
    \leq\frac{\vert z'\vert^2}{(1-\vert z'\vert)^2}\left(\sup_{\xi\in K}\left\vert\exp\left(-\frac{z\xi}{2}\right)\right\vert^2\right)\int_K \exp(\vert\xi\vert)\diff\xi,
\end{align}
where we have used the Cauchy estimates on series expansion of $\exp\left(-\frac{z'\xi}{2}\right)$ in terms of $z'$. It follows that 
\begin{align}
    \lim_{z'\to 0}&\frac{\langle W^*\Psi_{K,n}^{(z_0+i(\overline{z+z'}))}, a W^*\Psi_{K',n'}^{(i(z+z'))}\rangle-\langle W^*\Psi_{K,n}^{(z_0+i\overline{z})}, a W^*\Psi_{K',n'}^{(z)}\rangle}{z'}\nonumber\\
    =&\langle W^*\partial_{\overline{z}}\Psi_{K,n}^{(z_0+i\overline{z})}, a W^*\Psi_{K',n'}^{(iz)}\rangle+\langle W^*\Psi_{K,n}^{(z_0+i\overline{z})}, a W^*\partial_{z}\Psi_{K',n'}^{(iz)}\rangle,
\end{align}
with 
\begin{equation}
    \partial_z\Psi_{K,n}^{(z)}=\mathcal{F}^*\partial_zf_K^{(z)}\otimes\psi_n,
\end{equation}
where $\partial_zf_K^{(z)}\in L^2(\RR)$. Hence $z\mapsto \langle W^*\Psi_{K,n}^{(z_0+i\overline{z})}, a W^*\Psi_{K',n'}^{(iz)}\rangle$ is complex differentiable at every $z\in \CC$, i.e.\ holomorphic on $\CC$. Such functions are uniquely specified by their value on any continuous nonconstant curve $\RR\to \CC$, and since 
\begin{equation}
    \langle W^*\Psi_{K,n}^{(z_0+2it)}, a W^*\Psi_{K',n'}^{(2it)}\rangle=\langle W^*\Psi_{K,n}^{(z_0)}, \alpha(t,a) W^*\Psi_{K',n'}^{(0)}\rangle,
\end{equation}
we have for $x\in\mathcal{D}_1$
\begin{equation}
   \langle W^*\Psi_{K,n}^{(z_0)}, \alpha(z,x) W^*\Psi_{K',n'}^{(0)}\rangle=\langle W^*\Psi_{K,n}^{(z_0+2i\overline{z})},x W^*\Psi_{K',n'}^{(2iz)}\rangle.
\end{equation}
One can use this to rewrite Eq.~\eqref{eq:KMS_partway}
\begin{align}
    \varphi_\beta(y\alpha(i\beta,x))=\frac{1}{2\pi}\lim_{K\uparrow\RR}\sum_{n=1}^\infty \lim_{K'\uparrow\RR}\sum_{n'=1}^\infty\int_{\RR^2}& \langle W^*\Psi_{K,n}^{(2is)},yW^*\Psi_{K',n'}^{(\beta+2is')}\rangle\nonumber\\
    &\times\langle W^*\Psi_{K',n'}^{(\beta+2is')},xW^*\Psi_{K,n}^{(2is)}\rangle\diff s\diff s',
\end{align}
which by Lem.\ \ref{lem:F_int_lim} yields
\begin{equation}
    \varphi_\beta(y\alpha(i\beta,x))=\varphi_\beta(xy),
\end{equation}
verifying the KMS condition.
\end{proof}






\bibliography{bib_springer}

\end{document}